\newcommand{\typeof}{1} %
\newcommand{\longv}[1]{\ifthenelse{\equal{\typeof}{0}}{}{#1}}
\newcommand{\shortv}[1]{\ifthenelse{\equal{\typeof}{0}}{#1}{}}
\newcommand{\longshortv}[2]{\ifthenelse{\equal{\typeof}{0}}{#2}{#1}}
\newif\ifshowtikz
\newenvironment{tk}[1]
{\begin{tikzpicture}#1}{\end{tikzpicture}}
\tikzset{
  obj/.style={scale=0.66}
}
\newtheorem{definition}{Definition}[section]
\newtheorem{lemma}{Lemma}[section]
\newtheorem{proposition}{Proposition}[section]
\newtheorem{theorem}{Theorem}[section]
\newtheorem{corollary}{Corollary}[section]
\newtheorem{remark}{Remark}[section]
\newtheorem{convention}{Convention}[section]
\newcommand{\PCFSS}{\mathbf{PCFSS}}
\newcommand{\ANGLICAN}{\ensuremath{\textbf{ANGLICAN}}}
\newcommand{\CHURCH}{\ensuremath{\textbf{CHURCH}}}
\newcommand{\ttreal}{\mathtt{Real}}
\newcommand{\ttunit}{\mathtt{Unit}}
\newcommand{\sem}[1]{\llbracket #1 \rrbracket}
\newcommand{\psem}[1]{\llparenthesis #1 \rrparenthesis}
\newcommand{\bsem}[1]{\llbracket #1 \rrbracket_{\mathrm{d}}}
\newcommand{\Bsem}[1]{\llparenthesis #1 \rrparenthesis_{\mathrm{d}}}
\newcommand{\letin}[3]{\mathtt{let}\;#1\;\mathtt{be}\;#2\;\mathtt{in}\;#3}
\newcommand{\id}{\mathrm{id}}
\newcommand{\ifterm}[3]{\mathtt{ifz}(#1, #2, #3)} 
\newcommand{\arity}[1]{\vert #1\vert} 
\newcommand{\detred}{\stackrel{\mathrm{red}}{\longrightarrow}}
\newcommand{\nulldot}{\makebox[0pt]{\quad .}} 
\newcommand{\nullcomma}{\makebox[0pt]{\quad ,}} 
\newcommand{\state}[1]{S_{#1}} 
\newcommand{\tran}[1]{\tau_{#1}} 
\newcommand{\init}[1]{s_{#1}} 
\newcommand{\hh}{\bullet} 
\newcommand{\mm}{\circ} 
\newcommand{\midd}{\; \; \mbox{\Large{$\mid$}}\;\;} 
\newcommand{\cons}{\mathbin{::}}
\newenvironment{varitemize}
{
\begin{list}{\labelitemi}
{\setlength{\itemsep}{0pt}
 \setlength{\topsep}{0pt}
 \setlength{\parsep}{0pt}
 \setlength{\partopsep}{0pt}
 \setlength{\leftmargin}{15pt}
 \setlength{\rightmargin}{0pt}
 \setlength{\itemindent}{0pt}
 \setlength{\labelsep}{5pt}
 \setlength{\labelwidth}{10pt}
}}
{
 \end{list} 
}
\newcounter{numberone}
\begin{document}

\title{The Geometry of Bayesian Programming}

\shortv{ \author{\IEEEauthorblockN{Ugo Dal Lago}
    \IEEEauthorblockA{University of Bologna, Italy \& INRIA Sophia Antipolis, France \\
      Email: ugo.dallago@unibo.it} \and \IEEEauthorblockN{Naohiko
      Hoshino}
    \IEEEauthorblockA{Kyoto University, Japan \\
      Email: naophiko@kurims.kyoto-u.ac.jp} } } \longv{ \author{Ugo
    Dal Lago \and Naohiko Hoshino} }

\shortv{
\IEEEoverridecommandlockouts
\IEEEpubid{\makebox[\columnwidth]{978-1-7281-3608-0/19/\$31.00~
    \copyright2019 IEEE \hfill} \hspace{\columnsep}\makebox[\columnwidth]{ }}
}

\maketitle

\begin{abstract}
  We give a geometry of interaction model for a typed
  $\lambda$-calculus endowed with operators for sampling from a
  continuous uniform distribution and soft conditioning, namely a
  paradigmatic calculus for higher-order Bayesian programming. The
  model is based on the category of measurable spaces and partial
  measurable functions, and is proved adequate with respect to both a
  distribution-based and a sampling based operational semantics.
\end{abstract}

\shortv{\IEEEpeerreviewmaketitle}

\section{Introduction}
Randomisation provides the most efficient algorithmic solutions, at
least concretely, in many different contexts. A typical example is the
one of primality testing, where the Miller-Rabin
test~\cite{Miller1976,Rabin1980} remains the preferred choice despite
polynomial time deterministic algorithms are available from many years
now~\cite{AKS2002}. Probability theory can be exploited even more
fundamentally in programming, by way of so-called probabilistic (or,
more specifically, Bayesian) programming, as popularized by languages
like, among others, \ANGLICAN~\cite{WMM2014} or
\CHURCH~\cite{GMRBT08}. This has stimulated research about
probabilistic programming languages and their
semantics~\cite{jones1990,DH2002,EPT2018}, together with type
systems~\cite{DLG2017,BDL2018}, equivalence
methodologies~\cite{DLSA2014,CDL2014}, and verification
techniques~\cite{SABGGH2019}.

Giving a satisfactory denotational semantics to higher-order
functional languages is already problematic in presence of
probabilistic choice~\cite{jones1990,JT1998}, and becomes even more
challenging when continuous distributions and scoring are present.
Recently, quasi-Borel spaces~\cite{hksy2017} have been proposed as a
way to give semantics to calculi with all these features, and only
very recently~\cite{VKS2019} this framework has been shown to be
adaptable to a fully-fledged calculus for probabilistic programming,
in which continuous distributions \emph{and} soft-conditioning are
present. Probabilistic coherent spaces~\cite{DE2011} are fully
abstract~\cite{EPT2018} for $\lambda$-calculi with discrete
probabilistic choice, and can, with some effort, be adapted to calculi
with sampling from continuous distributions~\cite{EPT2018POPL},
although without scoring.

A research path which has been studied only marginally, so far,
consists in giving semantics to Bayesian higher-order programming
languages through \emph{interactive forms} of semantics, e.g. game
semantics~\cite{HO2000,AJM2000} or the geometry of
interaction~\cite{girard1989}. One of the very first models for
higher-order calculi with discrete probabilistic choice was in fact a
game model, proved fully abstract for a probabilistic calculus with
global ground references~\cite{DH2002}. After more than ten years, a
parallel form of Geometry of Interaction (GoI) and some game models
have been introduced for $\lambda$-calculi with probabilistic
choice~\cite{DLFVY2017,CCPW2018,CP2018}, but in all these cases only
discrete probabilistic choice can be handled, with the exception of a
recent work on concurrent games and continuous
distributions~\cite{PW2018}.

In this paper, we will report on some results about GoI models of
higher-order Bayesian languages. The distinguishing features of the
introduced GoI model can be summarised as follows:
\begin{varitemize}
\item \textbf{Simplicity}. The category on which the model is defined
  is the one of measurable spaces and \emph{partial} measurable
  functions, so it is completely standard from a measure-theoretic
  perspective.
\item \textbf{Expressivity}. As is well-known, the GoI
  construction~\cite{jsv,ahs2002} allows to give semantics to calculi
  featuring higher-order functions and recursion. Indeed, our GoI
  model can be proved adequate for $\PCFSS$, a fully-fledged calculus
  for probabilistic programming.
\item \textbf{Flexibility}. The model we present is quite flexible, in
  the sense of being able to reflect the operational behaviour of
  programs as captured by \emph{both} the distribution-based and the
  sampling-based semantics.
\item \textbf{Intuitiveness}. GoI visualises the structure of programs
  in terms of graphs, from which dependencies between subprograms can
  be analyzed. Adequacy of our model provides diagrammatic reasoning
  principle about observational equivalence of $\PCFSS$.
\end{varitemize}
This paper's contributions, beside the model's definition, are two
adequacy results which precisely relate our GoI model to the
operational semantics, as expressed (following \cite{bdlgs2016}), in
both the distribution and sampling styles. As a corollary of our
adequacy results, we show that the distribution induced by
sampling-based operational semantics coincides with distribution-based
operational semantics.

\subsection{Turning Measurable Spaces into a GoI
  Model}\label{sec:turning}
Before entering into the details of our model, it is worthwhile to
give some hints about how the proposed model is obtained, and why it
differs from similar GoI models from the literature.

The thread of work the proposed model stems from is the one of
so-called memoryful geometry of interaction~\cite{hmh2014,mhh2016}.
The underlying idea of this paper is precisely the same: program
execution is modelled as an interaction between the program and its
environment, and memoisation takes place inside the program as a
result of the interaction.

In the previous work on memoryful GoI by the second author with Hasuo
and Muroya, the goal consisted in modelling a $\lambda$-calculus with
\emph{algebraic} effects. Starting from a monad together with some
algebraic effects, they gave an adequate GoI model for such a
calculus, which is applicable to wide range of algebraic effects.  In
principle, then, their recipe could be applicable to $\PCFSS$,
sinc sampling-based operational semantics enables us to see scoring and
sampling as algebraic effects acting on global states. However, the
that would not work for $\PCFSS$, since the category
$\mathbf{Meas}$ of measurable spaces\footnote{We need to work on
  $\mathbf{Meas}$ because we want to give adequacy for
  distribution-based semantics.} is not cartesian closed, and we thus
\emph{cannot} define a state monad by way of the exponential $S
\Rightarrow S \times (-)$.

In this paper, we side step this issue by a series of translations, to
be described in Section~\ref{sec:towards} below. Instead of looking
for a state monad on $\mathbf{Meas}$, we embed $\mathbf{Meas}$ into
the category $\mathbf{Mealy}$ of $\mathbf{Int}$-objects and Mealy
machines (Section~\ref{sec:mealy}) and use a state monad \emph{on
  this} category. This is doable because $\mathbf{Mealy}$ is a
\emph{compact closed category} given by the
\emph{$\mathbf{Int}$-construction}~\cite{ahs2002}. The use of such
compact closed categories (or, more generally, of traced monoidal
categories) is the way GoI models capture higher-order functions.

\subsection{Outline}

The rest of the paper is organised as follows. After giving some
necessary measure-theoretic preliminaries in Section~\ref{sec:measure}
below, we introduce in Section~\ref{sec:syntax} the language $\PCFSS$,
together with the two kinds of operational semantics we were referring
to above. In Section~\ref{sec:towards}, we introduce our GoI model
informally, while in Section~\ref{sec:mealy} a more rigorous treatment
of the involved concepts is given, together with the adequacy results.
We discuss in Section~\ref{sec:sfinite} an alternative way of giving a
GoI semantics to $\PCFSS$ based on s-finite kernels, and we conclude
in Section~\ref{sec:conclusion}. \shortv{More details on our GoI model
  and details proofs of most results in this paper are available in
  this paper's extended version~\cite{dlh2019}.}

\section{Measure-Theoretic Preliminaries}
\label{sec:measure}
We recall some basic notions in measure theory that will be needed in
the following. We also fix some useful notations. For more about
measure theory, see standard text books such as
\cite{billingsley1986}.

A \emph{$\sigma$-algebra} on a set $X$ is a family $\Sigma$ consisting
of subsets of $X$ such that $\emptyset \in \Sigma$; and if
$A \in \Sigma$, then the complement $X \setminus A$ is in $\Sigma$;
and for any family $\{A_{n}\in \Sigma\}_{n \in \mathbb{N}}$, the
intersection $\bigcap_{n \in \mathbb{N}} A_{n}$ is in $\Sigma$.
A \emph{measurable space} $X$ is a set $|X|$ equipped with a
$\sigma$-algebra $\Sigma_{X}$ on $|X|$. We often confuse a measurable
space $X$ with its underlying set $|X|$. For example, we simply write
$x \in X$ instead of $x \in |X|$. For measurable spaces $X$ and $Y$,
we say that a partial function $f \colon X \to Y$ (in this paper, we
use $\to$ for both partial functions and total functions) is
\emph{measurable} when for all $A \in \Sigma_{Y}$, the inverse image
\begin{equation*}
  \{x \in X : f(x) \textnormal{ is defined and is equal to
    an element of $A$}\}
\end{equation*}
is in $\Sigma_{X}$. A \emph{measurable function} from $X$ to $Y$ is a
totally defined partial measurable function. A (partial) measurable
function $f \colon X \to Y$ is \emph{invertible} when there is a
measurable function $g \colon Y \to X$ such that $g \circ f$ and
$f \circ g$ are identities. In this case, we say that $f$ is an
isomorphism from $X$ to $Y$ and say that $X$ is isomorphic to $Y$.

We denote a singleton set $\{\ast\}$ by $1$, and we regard the latter as a
measurable space by endowing it with the trivial $\sigma$-algebra. We
also regard the empty set $\emptyset$ as a measurable space in the
obvious way. In this paper, $\mathbb{N}$ denotes the measurable set of
all non-negative integers equipped with the $\sigma$-algebra
consisting of all subsets of $\mathbb{N}$, and $\mathbb{R}$ denotes
the measurable set of all real numbers equipped with the
$\sigma$-algebra consisting of \emph{Borel sets}, that is, the least
$\sigma$-algebra that contains all open subsets of $\mathbb{R}$. By
the definition of $\Sigma_{\mathbb{R}}$, a function
$f \colon \mathbb{R} \to \mathbb{R}$ is measurable whenever
$f^{-1}(U) \in \Sigma_{\mathbb{R}}$ for all open subsets
$U \subseteq \mathbb{R}$. Therefore, all continuous functions on
$\mathbb{R}$ are measurable.

When $Y$ is a subset of the underlying set of a measurable space $X$,
we can equip $Y$ with a $\sigma$-algebra
$\Sigma_{Y} = \{A \cap Y : A \in \Sigma_{X}\}$. This way, we regard
the unit interval and the set of all non-negative real numbers as measurable
spaces, and indicate them as follows:
\begin{equation*}
  \mathbb{R}_{[0,1]} = \{a \in \mathbb{R} : 0 \leq a \leq 1\},
  \quad
  \mathbb{R}_{\geq 0} = \{a \in \mathbb{R} : a \geq 0\}
\end{equation*}
For measurable spaces $X$ and $Y$, we define the product measurable
space $X \times Y$ and the coproduct measurable space $X + Y$ by
\begin{align*}
  |X \times Y|
  &= |X| \times |Y|, \\
  |X + Y|
  &= \{(\hh,x) : x \in X\} \cup \{(\mm,y) : y \in Y\}
\end{align*}
where the underlying $\sigma$-algebras are:
\begin{align*}
  \Sigma_{X \times Y}
  &= \textnormal{the least $\sigma$-algebra
    such that } A \times B \in \Sigma_{X \times Y} \\
  &\quad
    \textnormal{ for all } A \in \Sigma_{X} \textnormal{ and }
    B \in \Sigma_{Y}, \\
  \Sigma_{X + Y}
  &= \{\{\hh\} \times A \cup \{\mm\} \times B : A \in \Sigma_{X}
    \textnormal{ and }
    B \in \Sigma_{Y}\}.
\end{align*}
We assume that $\times$ has higher precedence than $+$, i.e., we write
$X + Y \times Z$ for $X + (Y \times Z)$. In this paper, we always
regard finite products $\mathbb{R}^{n}$ as the product measurable
space on $\mathbb{R}$. It is well-known that the $\sigma$-algebra
$\Sigma_{\mathbb{R}^{n}}$ is the set of all Borel sets, i.e.,
$\Sigma_{\mathbb{R}^{n}}$ is the least one that contains all open
subsets of $\mathbb{R}^{n}$. Partial measurable functions are closed
under compositions, products and coproducts.

Let $X$ be a measurable space. A \emph{measure} $\mu$ on $X$ is a
function from $\Sigma_{X}$ to $[0,\infty]$ that is the set of all
non-negative real numbers extended with $\infty$, such that
\begin{varitemize}
\item $\mu(\emptyset) = 0$; and
\item for any mutually disjoint family
  $\{A_{n} \in \Sigma_{X}\}_{n \in \mathbb{N}}$, we have
  $\sum_{n \in \mathbb{N}} \mu(A_{n}) = \mu\left(\bigcup_{n \in
      \mathbb{N}} A_{n}\right)$.
\end{varitemize}
We say that a measure $\mu$ on $X$ is \emph{finite} when $\mu(X)<\infty$ and
that it is \emph{$\sigma$-finite} if
$X = \bigcup_{n \in \mathbb{N}} X_{n}$ for some family
$\{X_{n} \in \Sigma_{X}\}_{n \in \mathbb{N}}$ satisfying
$\mu(X_{n}) < \infty$.


For a measurable space $X$, we write $\varnothing_{X}$ for a measure
on $X$ given by $\varnothing_{X}(A) = 0$ for all $A \in \Sigma_{X}$.
If $\mu$ is a measure on a measurable space $X$, then for any
non-negative real number $a$, the function
\begin{math}
 (a\,\mu)(A) = a (\mu(A))
\end{math}
is also a measure on $X$. The \emph{Borel measure}
$\mu_{\mathrm{Borel}}$ on $\mathbb{R}^{n}$ is the unique measure that
satisfies
\begin{equation*}
  \mu_{\mathrm{Borel}}([a_{1},b_{1}] \times \cdots \times [a_{n},b_{n}])
  = \prod_{1 \leq i \leq n} |a_{i} - b_{i}|.
\end{equation*}
We define the Borel measure $\mu_{\mathrm{Borel}}$ on $1$ by
$\mu_{\mathrm{Borel}}(1) = 1$.
For a measurable
function $f \colon \mathbb{R}^{n} \to \mathbb{R}$ and
a measurable subset $X \subseteq \mathbb{R}^{n}$, we
denote the integral of $f$ with respect to the Borel measure
restricted to $X$ by
\begin{equation*}
  \int_{X} f(u)\; \mathrm{d}u.
\end{equation*}
For a measurable space $X$ and for an element $x \in X$, a \emph{Dirac
  measure} $\delta_{x}$ on $X$ is given by
\begin{equation*}
  \delta_{x}(A) = [x \in A] =
  \begin{cases}
    1, & \textnormal{if } x \in A; \\
    0, & \textnormal{if } x \notin A.
  \end{cases}
\end{equation*}
The square bracket notation in the right hand side is called Iverson's
bracket. In general, for a proposition $P$, we have $[P] = 1$ when $P$
is true and $[P] = 0$ when $P$ is false.

\begin{proposition}
  For every $\sigma$-finite measures $\mu$ on a measurable space $X$
  and $\nu$ on a measurable space $Y$, there is a unique measure $\mu
  \times \nu$ on $X \times Y$ such that $(\mu \times \nu)(A \times B)
  = \mu(A) \nu(B)$ for all $A \in \Sigma_{X}$ and $B \in \Sigma_{Y}$.
\end{proposition}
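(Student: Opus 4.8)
The plan is to treat uniqueness and existence separately: the former by a Dynkin-system ($\pi$--$\lambda$) argument, the latter by defining the candidate measure through iterated integration of sections.

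For uniqueness, observe that the measurable rectangles $\mathcal{R} = \{A \times B : A \in \Sigma_{X},\ B \in \Sigma_{Y}\}$ form a $\pi$-system, since $(A \times B) \cap (A' \times B') = (A \cap A') \times (B \cap B')$ is again a rectangle. By the very definition of $\Sigma_{X \times Y}$ as the least $\sigma$-algebra containing all such rectangles, $\mathcal{R}$ generates $\Sigma_{X \times Y}$. Invoking $\sigma$-finiteness, I would fix exhausting families $X = \bigcup_{n} X_{n}$ and $Y = \bigcup_{m} Y_{m}$ with $\mu(X_{n}), \nu(Y_{m}) < \infty$, so that the sets $X_{n} \times Y_{m}$ form an increasing sequence of rectangles of finite prospective measure whose union is $X \times Y$. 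Any two measures agreeing on $\mathcal{R}$ then agree on each slice $X_{n} \times Y_{m}$, and the standard uniqueness-of-extension lemma (two measures that coincide on a generating $\pi$-system and on an exhausting sequence of sets of finite measure coincide on the generated $\sigma$-algebra) forces them to agree everywhere.

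For existence, for $E \in \Sigma_{X \times Y}$ and $x \in X$ write the section $E_{x} = \{y \in Y : (x,y) \in E\}$, and symmetrically $E^{y} = \{x \in X : (x,y) \in E\}$. First I would show, by a monotone-class argument, that every section is measurable: the family of those $E$ whose sections all lie in $\Sigma_{Y}$ (resp. $\Sigma_{X}$) is a $\sigma$-algebra and contains every rectangle, hence is all of $\Sigma_{X \times Y}$. Next I would prove that $x \mapsto \nu(E_{x})$ is measurable and set
\begin{equation*}
  (\mu \times \nu)(E) := \int_{X} \nu(E_{x})\; \mathrm{d}\mu(x).
\end{equation*}
For finite $\nu$ this measurability follows from a $\lambda$-system argument, as the collection of $E$ for which $x \mapsto \nu(E_{x})$ is measurable is closed under complements and countable increasing unions (the latter by monotone convergence) and contains the $\pi$-system $\mathcal{R}$; for $\sigma$-finite $\nu$ I would write $\nu$ as the increasing limit of the finite measures $\nu(\,\cdot\, \cap Y_{m})$ and pass to the limit. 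That $\mu \times \nu$ is a measure then follows from countable additivity of $\nu$ together with monotone convergence, which lets the integral commute with countable disjoint unions; and the rectangle identity is immediate, since $\nu((A \times B)_{x}) = [x \in A]\,\nu(B)$ yields $\int_{X} [x \in A]\,\nu(B)\; \mathrm{d}\mu(x) = \mu(A)\,\nu(B)$.

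The hard part will be the measurability of the section-measure map $x \mapsto \nu(E_{x})$ for arbitrary $E \in \Sigma_{X \times Y}$: this is precisely where $\sigma$-finiteness is indispensable, since the $\lambda$-system argument requires $\nu$ to be exhausted by sets of finite measure before the monotone-convergence passage to the limit can be carried out. Everything else---section measurability, countable additivity, and the rectangle formula---is routine once this technical core is in place.
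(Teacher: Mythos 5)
Your proof is correct, and it is the classical textbook argument: uniqueness via the $\pi$--$\lambda$ theorem applied to the $\pi$-system of measurable rectangles together with an exhausting sequence of finite-measure rectangles, and existence by integrating the section measures $x \mapsto \nu(E_{x})$, whose measurability is obtained by a Dynkin-system argument for finite $\nu$ and a monotone limit for $\sigma$-finite $\nu$. The paper itself gives no proof of this proposition---it is stated as standard background and deferred to textbooks such as Billingsley---so your argument is precisely the proof the paper implicitly relies on, with the roles of $\sigma$-finiteness (uniqueness for $\mu \times \nu$, measurability of the section map) correctly identified.
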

The measure $\mu \times \nu$ is called the \emph{product measure} of
$\mu$ and $\nu$. For example, the Borel measure on $\mathbb{R}^{2}$ is
the product measure of the Borel measure on $\mathbb{R}$. 

Finally, let us recall the notion of a kernel, which is a well-known
concept in the theory of stochastic processes. For measurable spaces
$X$ and $Y$, a \emph{kernel} from $X$ to $Y$ is a function $k \colon X
\times \Sigma_{Y} \to [0,\infty]$ such that for any $x \in X$, the
function $k(x,-)$ is a measure on $Y$, and for any $A \in \Sigma_{Y}$,
the function $k(-,A)$ is measurable. Notions of \emph{finite} and \emph{$\sigma$-finite}
kernels can be naturally given, following the emponymous constraint on
measures. Those kernels which can be expressed as the sum of countably
many finite kernels are said to be
\emph{s-finite}~\cite{staton2017}.
We use kernels to give semantics
for our probabilistic programming language, to be defined in the next
section.

\section{Syntax and Operational Semantics}\label{sec:syntax}
\subsection{Syntax and Type System}

Our language $\PCFSS$ for higher order Bayesian programming can be seen
as Plotkin's $\mathbf{PCF}$ endowed with real numbers, measurable
functions, sampling from the uniform distribution on
$\mathbb{R}_{[0,1]}$ and soft-conditioning. We first define types
$\mathtt{A},\mathtt{B},\ldots$, values $\mathtt{V},\mathtt{W},\ldots$
and terms $\mathtt{M},\mathtt{N},\ldots$ as follows:
\begin{align*}
  \mathtt{A},\mathtt{B}
  &::= \ttunit \midd \ttreal \midd \mathtt{A} \to \mathtt{B}, \\
  \mathtt{V},\mathtt{W}
  &::= \mathtt{skip} \midd \mathtt{x} \midd 
    \lambda \mathtt{x}^{\mathtt{A}}.\,\mathtt{M} \midd 
    \mathtt{r}_{a} \midd
    \mathtt{fix}_{\mathtt{A},\mathtt{B}}(\mathtt{f},\mathtt{x},\mathtt{M}), \\
  \mathtt{M}, \mathtt{N}
  &::= \mathtt{\mathtt{V}} \midd \mathtt{V}\,\mathtt{W} \midd
    \letin{\mathtt{x}}{\mathtt{M}}{\mathtt{N}} \midd
    \ifterm{\mathtt{V}}{\mathtt{M}}{\mathtt{N}} \\
  &\hspace{15pt}
    \midd \mathtt{F} (\mathtt{V}_{1},\dots, \mathtt{V}_{\arity{\mathtt{F}}})
    \midd \mathtt{sample}
    \midd \mathtt{score}(\mathtt{V}).
\end{align*}
Here, $\mathtt{x}$ varies over a countably infinite set of variable
symbols, and $a$ varies over the set $\mathbb{R}$ of all real numbers.
Each \emph{function identifier} $\mathtt{F}$ is associated with a
measurable function $\mathrm{fun}_{\mathtt{F}}$ from
$\mathbb{R}^{\vert \mathtt{F} \vert}$ to $\mathbb{R}$. For terms
$\mathtt{M}$ and $\mathtt{N}$, we write
$\mathtt{M}\{\mathtt{N}/\mathtt{x}\}$ for the capture-avoiding
substitution of $\mathtt{x}$ in $\mathtt{M}$ by $\mathtt{N}$.

Terms in $\PCFSS$ are restricted to be \emph{A-normal forms}, in order
to make some of the arguments on our semantics simpler. This
restriction is harmless for the language's expressive power, thanks to
the presence of $\mathtt{let}$-bindings.  For example, term
application $\mathtt{M}\,\mathtt{N}$ can be defined to be
$\letin{\mathtt{x}}{\mathtt{M}}{
  \letin{\mathtt{y}}{\mathtt{N}}{\mathtt{x}\,\mathtt{y}}}$.

The term constructor $\mathtt{score}$ and the constant
$\mathtt{sample}$ enable probabilistic programming in $\PCFSS$.
Evaluation of $\mathtt{score}(\mathtt{r}_{a})$ has the effect of
multiplying the weight of the current probabilistic branch by $|a|$,
this way enabling a form of \emph{soft-conditioning}. The constant $\mathtt{sample}$
generates a real number randomly drawn from the uniform distribution
on $\mathbb{R}_{[0,1]}$. Only one sampling mechanism is sufficient
because we can model sampling from other standard distributions by
composing $\mathtt{sample}$ with measurable functions \cite{wcgc2018}.

Terms can be typed in a natural way. A \emph{context}
$\mathtt{\Delta}$ is a finite sequence consisting of pairs of a
variable and a type such that every variable appears in
$\mathtt{\Delta}$ at most once. A \emph{type judgement} is a triple
$\mathtt{\Delta} \vdash \mathtt{M} : \mathtt{A}$ consisting of a
context $\mathtt{\Delta}$, a term $\mathtt{M}$ and a type
$\mathtt{A}$. We say that a type judgement
$\mathtt{\Delta} \vdash \mathtt{M} : \mathtt{A}$ is \emph{derivable}
when we can derive $\mathtt{\Delta} \vdash \mathtt{M} : \mathtt{A}$
from the typing rules in Figure~\ref{fig:types}. Here, the type of
$\mathtt{sample}$ is $\ttreal$, and the type of
$\mathtt{score}(\mathtt{V})$ is $\ttunit$ because $\mathtt{sample}$
returns a real number, and the purpose of scoring is its side effect.

In the sequel, we only consider derivable type judgements and \emph{typable}
closed terms, that is, closed terms $\mathtt{M}$ such that
$\vdash \mathtt{M} : \mathtt{A}$ is derivable for some type
$\mathtt{A}$.
\begin{figure}[t]
  \begin{center}
    \fbox{
      \begin{minipage}{.959\columnwidth}
        \centering
        $\begin{array}{c}
           \infer{\mathtt{\Delta} \vdash \mathtt{x} : \mathtt{A}}{
           \mathtt{x} : \mathtt{A} \in \mathtt{\Delta}}
           \hspace{8pt}
           \infer{\mathtt{\Delta} \vdash \mathtt{r}_{a} : \ttreal}{a \in \mathbb{R}}
           \hspace{8pt}
           \infer{\mathtt{\Delta}\vdash \mathtt{F}
           (\mathtt{V}_{1},\dots, \mathtt{V}_{\arity{\mathtt{F}}}) :\ttreal}
           {\mathtt{\Delta}\vdash \mathtt{V}_{i}:\ttreal
           \text{ for all $i\leq \arity{\mathtt{F}}$}
           }
           \\[6pt]
           \infer{\mathtt{\Delta} \vdash \mathtt{V}\,\mathtt{W} : \mathtt{B}}{
           \mathtt{\Delta} \vdash \mathtt{V} : \mathtt{A} \to \mathtt{B}
           &
             \mathtt{\Delta} \vdash \mathtt{W} :\mathtt{A}
             }
             \hspace{8pt}
             \infer{\mathtt{\Delta} \vdash \letin{\mathtt{x}}{\mathtt{M}}{\mathtt{N}} : \mathtt{A}}
             {
             \mathtt{\Delta} \vdash \mathtt{M} : \mathtt{B}
           &
             \mathtt{\Delta}, \mathtt{x} : \mathtt{B} \vdash \mathtt{N} : \mathtt{A}	
             }
           \\[6pt]
           \infer{\mathtt{\Delta} \vdash \lambda \mathtt{x}^{\mathtt{A}}.\,\mathtt{M}
           : \mathtt{A} \to \mathtt{B}}{
           \mathtt{\Delta},\, \mathtt{x} : \mathtt{A} \vdash \mathtt{M} : \mathtt{B}}
           \hspace{10pt}
           \infer{\mathtt{\Delta} \vdash
           \mathtt{fix}_{\mathtt{A},\mathtt{B}}(\mathtt{f},\mathtt{x},\mathtt{M})
           : \mathtt{A} \to \mathtt{B}}{
           \mathtt{\Delta},\, \mathtt{f} : \mathtt{A} \to \mathtt{B},\, \mathtt{x} : \mathtt{A}
           \vdash \mathtt{M} : \mathtt{B}}
           \\[6pt]
           \infer{\mathtt{\Delta} \vdash \mathtt{skip} : \ttunit}{}
           \hspace{10pt}
           \infer{\mathtt{\Delta}\vdash \ifterm{\mathtt{V}}{\mathtt{M}}{\mathtt{N}}: \mathtt{A}}{
           \mathtt{\Delta}\vdash \mathtt{V} :\ttreal
           &
             \mathtt{\Delta}\vdash \mathtt{M} :\mathtt{A}
           &
             \mathtt{\Delta}\vdash \mathtt{N}:\mathtt{A}
             }
           \\[6pt]
           \infer{\mathtt{\Delta} \vdash \mathtt{sample} : \ttreal}{}
           \quad
           \infer{\mathtt{\Delta}\vdash \mathtt{score}(\mathtt{V}) : \ttunit}{
           \mathtt{\Delta}\vdash \mathtt{V}:\ttreal}
         \end{array}$
      \end{minipage}}
  \end{center}
  \caption{Typing Rules}
  \label{fig:types}
\end{figure}

\subsection{Distribution-Based Operational Semantics}

We define distribution-based operational semantics following
\cite{bdlgs2016} where, however, a $\sigma$-algebra on the set of
terms is necessary so as to define evaluation results of terms to
be distributions (i.e. measures) over values. In this paper, we only consider
evaluation of terms of type $\ttreal$ and avoid introducing
$\sigma$-algebras on sets of closed terms, thus greatly simplifying
the overall development.

Distribution-based operational semantics is a function that sends a
closed term $\mathtt{M}:\ttreal$ to a measure $\mu$ on $\mathbb{R}$.
Because of the presence of $\mathtt{score}$, the measure may not be a
probabilistic measure, i.e., $\mu(\mathbb{R})$ may be larger than $1$,
but the idea of distribution-based operational semantics is precisely
that of associating each closed term of type $\ttreal$ with a measure
over $\mathbb{R}$.

As common in call-by-value programming languages, evaluation is
defined by way of \emph{evaluation contexts}:
\begin{equation*}
  \mathtt{E}[-] ::= [-] \midd \letin{\mathtt{x}}{\mathtt{E}[-]}{\mathtt{M}}.
\end{equation*}
The distribution-based operational semantics of $\PCFSS$ is a family
of binary relations $\{\Rightarrow_{n}\}_{n \in \mathbb{N}}$ between
closed terms of type $\ttreal$ and measures on $\mathbb{R}$
inductively defined by the evaluation rules in
Figure~\ref{fig:distribution-based} where the evaluation rule for
$\mathtt{score}$ is inspired from the one in \cite{staton2017}. The binary
relation $\detred$ in the precondition of the third rule in
Figure~\ref{fig:distribution-based} is called \emph{deterministic
  reduction} and is defined as follows as a relation on
closed terms:
\begin{align*}
  (\lambda \mathtt{x}^{\mathtt{A}}.\,\mathtt{M})\,\mathtt{V}
  &\detred \mathtt{M}\{\mathtt{V}/\mathtt{x}\},\\
  \letin{\mathtt{x}}{\mathtt{V}}{\mathtt{M}}
  &\detred \mathtt{M}\{\mathtt{V}/\mathtt{x}\},\\
  \mathtt{fix}_{\mathtt{A},\mathtt{B}}(\mathtt{f},\mathtt{x},\mathtt{M})\,\mathtt{V}
  &\detred \mathtt{M}\{\mathtt{fix}_{\mathtt{A},\mathtt{B}}
    (\mathtt{f},\mathtt{x},\mathtt{M})/\mathtt{f},\mathtt{V}/\mathtt{x}\}, \\
  \ifterm{\mathtt{r}_{a}}{\mathtt{M}}{\mathtt{N}}
  &\detred
    \begin{cases}
      \mathtt{M}, & \textnormal{if } a = 0, \\
      \mathtt{N}, & \textnormal{if } a \neq 0,
    \end{cases} \\
  \mathtt{F}(\mathtt{r}_{a}, \ldots \mathtt{r}_{b})
  &\detred
    \mathtt{r}_{\mathrm{fun}_{\mathtt{F}}(a,\ldots,b)}.
\end{align*}
The last evaluation rule in Figure~\ref{fig:distribution-based} makes
sense because $k$ in the precondition is a kernel from
$\mathbb{R}_{[0,1]}$ to $\mathbb{R}$:
\begin{lemma}\label{lem:distribution-based}
  For any $n \in \mathbb{N}$ and for any term
  \begin{equation*}
    \mathtt{x}_{1}:\ttreal, \ldots,\mathtt{x}_{m}:\ttreal \vdash
    \mathtt{M}:\ttreal, 
  \end{equation*}
  there is a finite kernel $k$ from $\mathbb{R}^{m}$ to $\mathbb{R}$ such
  that for any $u \in \mathbb{R}^{m}$ and for any measure $\mu$
  on $\mathbb{R}$,
  \begin{equation*}
    \mathtt{M}\{\mathtt{r}_{a_{1}}/\mathtt{x}_{1},
    \ldots,\mathtt{r}_{a_{m}}/\mathtt{x}_{m}\}
    \Rightarrow_{n} \mu
    \iff \mu = k(u,-)
  \end{equation*}
  where $u=(a_{1},\ldots,a_{m})$.
\end{lemma}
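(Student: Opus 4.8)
The plan is to prove the statement by induction on $n$, establishing \emph{simultaneously} three facts: that the relation $\Rightarrow_{n}$ is functional on closed terms of type $\ttreal$ (so that the right-hand measure is uniquely determined, giving the $\Rightarrow$ direction of the equivalence), that for each fixed $u$ the associated measure exists (the $\Leftarrow$ direction), and that the resulting assignment $u \mapsto \mu$ is a finite kernel. Packaging these into one statement is essential, because the inductive step for $\mathtt{sample}$ feeds the kernel property at level $n$ back into the construction at level $n+1$; this is exactly what makes the last evaluation rule of Figure~\ref{fig:distribution-based} well-posed. The base case $n=0$ should assign the constant zero kernel $k(u,-)=\varnothing_{\mathbb{R}}$, which is trivially finite and measurable. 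Since every premise in the evaluation rules sits at a strictly smaller index than its conclusion, induction on $n$ alone suffices and no separate structural induction on $\mathtt{M}$ is needed.

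For the inductive step I would perform a case analysis on the unique rule applicable to the closed term $\mathtt{M}\{\mathtt{r}_{a_{1}}/\mathtt{x}_{1},\dots,\mathtt{r}_{a_{m}}/\mathtt{x}_{m}\}$. Because every closed term of type $\ttreal$ is uniquely either a numeral $\mathtt{r}_{b}$ or of the form $\mathtt{E}[\mathtt{R}]$ for a unique evaluation context $\mathtt{E}[-]$ and head redex $\mathtt{R}$, exactly one rule fires, which yields functionality at once. The numeral case gives the Dirac kernel $k(u,A)=[b\in A]$, where $b$ is a constant or a coordinate $a_{i}$ of $u$, hence measurable in $u$ and finite. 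For a head redex reduced by $\detred$, I would use a substitution lemma showing that the parameter substitution commutes with $\detred$ up to a \emph{measurable reparametrisation} of the real constants involved, and then apply the induction hypothesis to the reduct, which is again a typable term all of whose free variables have type $\ttreal$; the kernel is inherited through this. Two sub-cases deserve care: a function identifier $\mathtt{F}(\mathtt{r}_{a_{i_{1}}},\dots)$ reduces to $\mathtt{r}_{g(u)}$ with $g(u)=\mathrm{fun}_{\mathtt{F}}(\dots)$ measurable in $u$, and $\ifterm{\mathtt{r}_{g(u)}}{\cdot}{\cdot}$ splits $\mathbb{R}^{m}$ into the measurable sets $\{u : g(u)=0\}$ and its complement, on which the two inherited kernels are glued — measurability being preserved in both instances. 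The $\mathtt{score}(\mathtt{r}_{a})$ case multiplies the kernel obtained for $\mathtt{E}[\mathtt{skip}]$ by the factor $|a|$, a measurable function of $u$, and the product of a measurable non-negative scalar function with a kernel is again a kernel.

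The main obstacle is the $\mathtt{sample}$ case, and it is here that strengthening the statement to arbitrary arity $m$ pays off. Given $\mathtt{E}[\mathtt{sample}]$, I would treat the sampled value as a fresh free variable $\mathtt{y}:\ttreal$ and apply the induction hypothesis at level $n$ to the term $\mathtt{E}[\mathtt{y}]$ with $m+1$ real parameters, obtaining a finite kernel $k'$ from $\mathbb{R}^{m+1}$ to $\mathbb{R}$ with $\mathtt{E}[\mathtt{r}_{v}]\{\dots\}\Rightarrow_{n}k'((u,v),-)$; the rule then sets $k(u,A)=\int_{[0,1]}k'((u,v),A)\,\mathrm{d}v$. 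Measurability of $u\mapsto k(u,A)$ follows from the kernel property of $k'$ together with the standard fact that integrating a jointly measurable non-negative function in one variable produces a measurable function of the remaining variables (Tonelli). The delicate point is finiteness of each $k(u,-)$: it must be read off from the induction hypothesis by controlling $\int_{[0,1]}k'((u,v),\mathbb{R})\,\mathrm{d}v$, using that $[0,1]$ carries finite Borel measure and that the mass accumulated within the $n$-step budget stays integrable in the sampled coordinate. I expect verifying this finiteness to be the technical crux; once it and measurability are in place, $k$ is a finite kernel and the equivalence holds at level $n+1$.
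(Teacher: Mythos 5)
Up to the finiteness question, your plan coincides with the paper's own proof: induction on $n$, the zero kernel at $n=0$, the case split between values and terms of the form $\mathtt{E}[\mathtt{R}]$, Dirac kernels for values, inheritance of the kernel across $\detred$ (with the measurable reparametrisation $u \mapsto (u,\mathrm{fun}_{\mathtt{F}}(\ldots))$ for function identifiers and measurable gluing of two kernels for $\mathtt{ifz}$), rescaling by $|a_{i}|$ for $\mathtt{score}$, and, for $\mathtt{sample}$, the induction hypothesis applied to $\mathtt{E}[\mathtt{y}]$ at arity $m+1$ followed by integration over $\mathbb{R}_{[0,1]}$, with measurability supplied by exactly the fact you cite (Billingsley's Theorem~18.3, whose proof uses $\sigma$-finiteness of the Borel measure).

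The step you defer, however, is a step that would fail: finiteness of $k(u,-)$ cannot be ``read off'' from the induction hypothesis, because it is false. Since $\mathrm{fun}_{\mathtt{F}}$ may be an arbitrary measurable function, take $\mathrm{fun}_{\mathtt{F}}(a) = 1/a$ for $a \neq 0$ (and $0$ at $a = 0$) and consider the closed ($m=0$) instance
\begin{equation*}
  \mathtt{M} = \letin{\mathtt{x}}{\mathtt{sample}}{\letin{\mathtt{y}}{\mathtt{F}(\mathtt{x})}{\letin{\mathtt{z}}{\mathtt{score}(\mathtt{y})}{\mathtt{r}_{0}}}}.
\end{equation*}
For $a \in (0,1]$ the term $\letin{\mathtt{x}}{\mathtt{r}_{a}}{\cdots}$ reduces deterministically in a few steps to $\letin{\mathtt{z}}{\mathtt{score}(\mathtt{r}_{1/a})}{\mathtt{r}_{0}}$ and then evaluates to $(1/a)\,\delta_{0}$ (and to the zero measure at $a=0$), so for all sufficiently large $n$ the $\mathtt{sample}$ rule yields $\mathtt{M} \Rightarrow_{n} \mu$ with
\begin{equation*}
  \mu(A) = \int_{\mathbb{R}_{[0,1]}} (1/a)\,\delta_{0}(A)\;\mathrm{d}a
  = \infty \cdot [0 \in A],
\end{equation*}
which is not a finite measure: the mass accumulated within the $n$-step budget does \emph{not} stay integrable in the sampled coordinate. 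You are in good company here — the paper's proof never addresses finiteness at all; it only constructs the kernels and checks measurability, so the qualifier ``finite'' in the statement is not established there either (and cannot be). The repair is to prove the lemma for plain kernels, which is what the paper's argument actually delivers: functionality of $\Rightarrow_{n}$ and all later uses need only kernel-ness, and the measurability of $u \mapsto \int_{\mathbb{R}_{[0,1]}} k'((u,v),A)\,\mathrm{d}v$ requires only joint measurability of $k'(-,A)$ and $\sigma$-finiteness of the Borel measure on $\mathbb{R}_{[0,1]}$, not finiteness of $k'$. With the finiteness claim dropped (or weakened to s-finiteness, proved by a separate argument), the rest of your plan goes through verbatim.
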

\longv{
  \begin{proof}
    Let $\mathtt{\Delta}$ be a context of the form
    $\mathtt{x}_{1}:\ttreal,\ldots,\mathtt{x}_{m}:\ttreal$. In this proof,
    for a finite sequence
    $u = (a_{1},\ldots,a_{n}) \in \mathbb{R}^{m}$, and for a term
    $\mathtt{\Delta} \vdash \mathtt{M} : \mathtt{A}$, we denote
    \begin{equation*}
      \mathtt{M}\{\mathtt{r}_{a_{1}}/\mathtt{x}_{1},\ldots,\mathtt{r}_{a_{m}}/\mathtt{x}_{m}\}
    \end{equation*}
    by $\mathtt{M}\{\mathtt{r}_{u}/\mathtt{\Delta}\}$.
    We prove the statement by induction on $n \in \mathbb{N}$. (Base
    case) Let $k$ be a kernel from $\mathbb{R}^{m}$ to $\mathbb{R}$
    given by
    \begin{equation*}
      k(u,A) = 0.
    \end{equation*}
    Then for any $u=(a_{1},\ldots,a_{m}) \in \mathbb{R}^{m}$,
    \begin{equation*}
      \mathtt{M}\{\mathtt{r}_{a_{1}}/\mathtt{x}_{1},\ldots,\mathtt{r}_{a_{m}}/\mathtt{x}_{m}\}
      \Rightarrow_{0} \mu
      \iff \mu = \varnothing_{\mathbb{R}}
      \iff \mu = k(u,-).
    \end{equation*}
    (Induction step) We define a \emph{redex} $\mathtt{R}$ by
    \begin{align*}
      \mathtt{R} ::=
      &
        \mathtt{score}(\mathtt{V})
        \midd
        \mathtt{sample}
        \midd
        (\lambda \mathtt{x}^{\mathtt{A}}.\,\mathtt{M})\,\mathtt{V}
        \midd
        \mathtt{fix}_{\mathtt{A},\mathtt{B}}(\mathtt{f},\mathtt{x},\mathtt{M})\,\mathtt{V} \\
      &
        \midd
        \mathtt{F}(\mathtt{V}, \ldots \mathtt{W})
        \midd
        \letin{\mathtt{x}}{\mathtt{V}}{\mathtt{M}}
        \midd
        \ifterm{\mathtt{r}_{a}}{\mathtt{M}}{\mathtt{N}}.
    \end{align*}
    We note that $\mathtt{V},\mathtt{W}$ in the above BNF can be
    variables. By induction on the size of type derivation, we can
    show that every term
    $\mathtt{\Delta} \vdash \mathtt{M}:\mathtt{A}$ is either a value
    or of the form $\mathtt{E}[\mathtt{R}]$ for some evaluation
    context $\mathtt{E}[-]$ and some redex $\mathtt{R}$. Given a term
    $\mathtt{\Delta} \vdash \mathtt{M} : \mathtt{A}$ where
    $\mathtt{\Delta} =
    \mathtt{x}_{1}:\ttreal,\ldots,\mathtt{x}_{m}:\ttreal$, we prove
    the induction step by case analysis.
    \begin{varitemize}
    \item If $\mathtt{\Delta} \vdash \mathtt{M} : \ttreal$ is a value,
      then $\mathtt{M}$ is either a variable $\mathtt{x}_{i}$ or a
      constant $\mathtt{r}_{a}$. When $\mathtt{M}$ is a variable
      $\mathtt{x}_{i}$, we have
      \begin{equation*}
        \mathtt{x}_{i}\{\mathtt{r}_{a_{1}}/\mathtt{x}_{1},\ldots,\mathtt{r}_{a_{m}}/\mathtt{x}_{m}\}
        \equiv \mathtt{r}_{a_{i}}
        \Rightarrow_{n+1} \mu
        \iff \mu = \delta_{a_{i}}.
      \end{equation*}
      When $\mathtt{M}$ is a constant $\mathtt{r}_{a}$, we have
      \begin{equation*}
        \mathtt{r}_{a}\{\mathtt{r}_{a_{1}}/\mathtt{x}_{1},\ldots,\mathtt{r}_{a_{m}}/\mathtt{x}_{m}\}
        \equiv \mathtt{r}_{a}
        \Rightarrow_{n+1} \mu
        \iff \mu = \delta_{a}.
      \end{equation*}
      Both
      $k,h \colon \mathbb{R}^{m} \times \Sigma_{\mathbb{R}} \to
      [0,\infty]$ given by
      \begin{equation*}
        k((a_{1},\ldots,a_{m}),A) = \delta_{a_{i}}(A),
        \qquad
        h((a_{1},\ldots,a_{m}),A) = \delta_{a}(A)
      \end{equation*}
      are kernels from $\mathbb{R}^{m}$ to $\mathbb{R}$.
    \item If $\mathtt{\Delta} \vdash \mathtt{M} : \ttreal$ is of the
      form $\mathtt{E}[\mathtt{sample}]$, then by induction
      hypothesis, there is a kernel from $\mathbb{R}^{m+1}$ to
      $\mathbb{R}$ such that for any $u \in \mathbb{R}^{m+1}$,
      \begin{equation*}
        \mathtt{E}[\mathtt{y}]
        \{\mathtt{r}_{u}/(\mathtt{\Delta},\mathtt{y}:\ttreal)\}
        \Rightarrow_{n} \mu
        \iff \mu = k(u,-).
      \end{equation*}
      We define a kernel $h$ from $\mathbb{R}^{m}$ to $\mathbb{R}$ by
      \begin{equation*}
        h((a_{1},\ldots,a_{m}),A) = \int_{\mathbb{R}_{[0,1]}}
        k((a_{1},\ldots,a_{m},a),A)\; \mathrm{d}a.
      \end{equation*}
      \longv{This is a kernel because if
        $f \colon \mathbb{R} \times \cdots \times \mathbb{R} \to
        \mathbb{R}$ is a non-negative measurable function, then
        \begin{equation*}
          (b,\ldots,c) \mapsto \int_{\mathbb{R}} f(a,b,\ldots,c)
          \; \mathrm{d}a
        \end{equation*}
        is measurable. See \cite[Theorem~18.3]{billingsley1986}.}
      Then, for any $u=(a_{1},\ldots,a_{m}) \in \mathbb{R}^{m}$,
      \begin{align*}
        \mathtt{E}[\mathtt{sample}]
        \{\mathtt{r}_{u}/\mathtt{\Delta}\}
        \Rightarrow_{n+1} \mu
        &\iff \mu = \int_{\mathbb{R}_{[0,1]}} k((a_{1},\ldots,a_{m},a),-) \; \mathrm{d}a \\
        &\iff \mu = h(u,-).
      \end{align*}
    \item If $\mathtt{\Delta} \vdash \mathtt{M} : \mathtt{B}$ is of
      the form $\mathtt{E}[\mathtt{score}(\mathtt{x}_{i})]$ for some
      $i \in \{1,2,\ldots,m\}$, then by induction hypothesis, there is
      a kernel $k$ from $\mathbb{R}^{m}$ to $\mathbb{R}$ such that for
      any $u \in \mathbb{R}^{m}$,
      \begin{equation*}
        \mathtt{E}[\mathtt{skip}]
        \{\mathtt{r}_{u}/\mathtt{\Delta}\}
        \Rightarrow_{n} \mu
        \iff \mu = k(u,-).
      \end{equation*}
      We define a kernel $h \colon \mathbb{R}^{m}$ to $\mathbb{R}$ by
      \begin{equation*}
        h((a_{1},\ldots,a_{m}),A) =
        |a_{i}|\,k((a_{1},\ldots,a_{m}),A).
      \end{equation*}
      Then, for any $u = (a_{1},\ldots,a_{m}) \in \mathbb{R}^{m}$,
      \begin{align*}
        \mathtt{E}[\mathtt{score}(\mathtt{x}_{i})]
        \{\mathtt{r}_{u}/\mathtt{\Delta}\}
        \Rightarrow_{n+1}
        \mu
        &\iff \mathtt{E}[\mathtt{skip}]
          \{\mathtt{r}_{u}/\mathtt{\Delta}\}
          \Rightarrow_{n}
          \nu
          \textnormal{ and }
          \mu = |a_{i}|\,\nu \\
        &\iff \mu = h(u,-).
      \end{align*}
    \item If $\mathtt{\Delta} \vdash \mathtt{M} : \mathtt{B}$ is of
      the form $\mathtt{E}[\mathtt{score}(\mathtt{r}_{a})]$ for some
      $a \in \mathbb{R}$, then by induction hypothesis, there is a
      kernel $k$ from $\mathbb{R}^{m}$ to $\mathbb{R}$ such that for
      any $u \in \mathbb{R}^{m}$,
      \begin{equation*}
        \mathtt{E}[\mathtt{skip}]
        \{\mathtt{r}_{u}/\mathtt{\Delta}\}
        \Rightarrow_{n} \mu
        \iff \mu = k(u,-).
      \end{equation*}
      We define a kernel $h \colon \mathbb{R}^{m}$ to $\mathbb{R}$ by
      \begin{equation*}
        h((a_{1},\ldots,a_{m}),A) =
        |a|\,k((a_{1},\ldots,a_{m}),A).
      \end{equation*}
      Then, for any $u = (a_{1},\ldots,a_{m}) \in \mathbb{R}^{m}$,
      \begin{align*}
        \mathtt{E}[\mathtt{score}(\mathtt{x}_{i})]
        \{\mathtt{r}_{u}/\mathtt{\Delta}\}
        \Rightarrow_{n+1}
        \mu
        &\iff \mathtt{E}[\mathtt{skip}]
          \{\mathtt{r}_{u}/\mathtt{\Delta}\}
          \Rightarrow_{n}
          \nu
          \textnormal{ and }
          \mu = |a|\,\nu \\
        &\iff \mu = h(u,-).
      \end{align*}
    \item If $\mathtt{\Delta} \vdash \mathtt{M} : \mathtt{B}$ is of
      the form
      $\mathtt{E}[(\lambda
      \mathtt{x}^{\mathtt{A}}.\,\mathtt{N})\,\mathtt{V}]$, then by
      induction hypothesis, there is a kernel $k$ from
      $\mathbb{R}^{m}$ to $\mathbb{R}$ such that for all
      $u \in \mathbb{R}^{m}$,
      \begin{equation*}
        \mathtt{E}[\mathtt{N}\,\{\mathtt{V}/\mathtt{x}\}]\{\mathtt{r}_{u}/\mathtt{\Delta}\}
        \Rightarrow_{n} \mu
        \iff \mu = k(u,-).
      \end{equation*}
      Hence,
      \begin{align*}
        \mathtt{E}[(\lambda \mathtt{x}^{\mathtt{A}}.\,\mathtt{N})\,\mathtt{V}]
        \{\mathtt{r}_{u}/\mathtt{\Delta}\}
        \Rightarrow_{n+1} \mu
        &\iff \mathtt{E}[\mathtt{N}\{\mathtt{V}/\mathtt{x}\}]
          \{\mathtt{r}_{u}/\mathtt{\Delta}\}
          \Rightarrow_{n} \mu \\
        &\iff \mu = k(u,-).
      \end{align*}
    \item If $\mathtt{\Delta} \vdash \mathtt{M} : \mathtt{B}$ is of
      the form
      $\mathtt{E}[\mathtt{fix}_{\mathtt{A},\mathtt{B}}(\mathtt{f},\mathtt{x},\mathtt{N})\,\mathtt{V}]$,
      then by induction hypothesis, there is a kernel $k$ from
      $\mathbb{R}^{m}$ to $\mathbb{R}$ such that for all
      $u \in \mathbb{R}^{m}$,
      \begin{equation*}
        \mathtt{E}[\mathtt{N}
        \{\mathtt{fix}_{\mathtt{A},\mathtt{B}}(\mathtt{f},\mathtt{x},\mathtt{N})/\mathtt{f},
        \mathtt{V}/\mathtt{x}\}]\{\mathtt{r}_{u}/\mathtt{\Delta}\}
        \Rightarrow_{n} \mu
        \iff \mu = k(u,-).
      \end{equation*}
      Hence,
      \begin{align*}
        \mathtt{E}[\mathtt{fix}_{\mathtt{A},\mathtt{B}}(\mathtt{f},\mathtt{x},\mathtt{N})\,\mathtt{V}]
        \Rightarrow_{n+1} \mu
        &\iff \mathtt{E}[\mathtt{N}
          \{\mathtt{fix}_{\mathtt{A},\mathtt{B}}(\mathtt{f},\mathtt{x},\mathtt{N})/\mathtt{f},
          \mathtt{V}/\mathtt{x}\}]\{\mathtt{r}_{u}/\mathtt{\Delta}\}
          \Rightarrow_{n} \mu \\
        &\iff \mu = k(u,-).
      \end{align*}
    \item If $\mathtt{\Delta} \vdash \mathtt{M} : \ttreal$ is of the
      form
      $\mathtt{E}[\mathtt{F}(\mathtt{V}_{1},\ldots,\mathtt{V}_{|\mathtt{F}|})]$,
      then $\mathtt{V}_{i}$ is equal to either a variable or a
      constant $\mathtt{r}_{a}$. For simplicity, we suppose that
      $|\mathtt{F}| = 2$ and $\mathtt{V}_{1} = \mathtt{x}_{i}$ and
      $\mathtt{V}_{2} = \mathtt{r}_{a}$. By induction hypothesis,
      there is a kernel from $\mathbb{R}^{m+1}$ to $\mathbb{R}$ such
      that for all $u \in \mathbb{R}^{m+1}$,
      \begin{equation*}
        \mathtt{E}[\mathtt{y}]
        \{\mathtt{r}_{u}/(\mathtt{\Delta},\mathtt{y}:\ttreal)\}
        \Rightarrow_{n} \mu
        \iff \mu = k(u,-).
      \end{equation*}
      We define a kernel $h$ from $\mathbb{R}^{m}$ to $\mathbb{R}$ by
      \begin{equation*}
        h((a_{1},\ldots,a_{m}),A)
        = k((a_{1},\ldots,a_{m},\mathrm{fun}_{\mathtt{F}}(a_{i},a)),A).
      \end{equation*}
      Then, for any $u=(a_{1},\ldots,a_{m}) \in \mathbb{R}^{m}$,
      \begin{align*}
        \mathtt{E}[\mathtt{F}[\mathtt{x}_{i},\mathtt{r}_{a}]]
        \{\mathtt{r}_{u}/\mathtt{\Delta}\}
        \Rightarrow_{n+1} \mu
        &\iff \mathtt{E}[\mathtt{y}]
          \{\mathtt{r}_{u}/\mathtt{\Delta},\mathtt{r}_{\mathrm{fun}_{\mathtt{F}}(a_{i},a)}/\mathtt{y}\}
          \Rightarrow_{n} \mu \\
        &\iff \mu = k((u,\mathrm{fun}_{\mathtt{F}}(a_{i},a)),-) = h(u,-).
      \end{align*}
    \item If $\mathtt{\Delta} \vdash \mathtt{M} : \ttreal$ is of the
      form $\letin{\mathtt{x}}{\mathtt{V}}{\mathtt{N}}$, then by
      induction hypothesis, there is a kernel $k$ from
      $\mathbb{R}^{m}$ to $\mathbb{R}$ such that for all
      $u \in \mathbb{R}^{m}$,
      \begin{equation*}
        \mathtt{E}[\mathtt{N}\,\{\mathtt{V}/\mathtt{x}\}]\{\mathtt{r}_{u}/\mathtt{\Delta}\}
        \Rightarrow_{n} \mu
        \iff \mu = k(u,-).
      \end{equation*}
      Hence,
      \begin{align*}
        \mathtt{E}[\letin{\mathtt{x}}{\mathtt{V}}{\mathtt{N}}]
        \{\mathtt{r}_{u}/\mathtt{\Delta}\}
        \Rightarrow_{n+1} \mu
        &\iff \mathtt{E}[\mathtt{N}\{\mathtt{V}/\mathtt{x}\}]
          \{\mathtt{r}_{u}/\mathtt{\Delta}\}
          \Rightarrow_{n} \mu \\
        &\iff \mu = k(u,-).
      \end{align*}
    \item If $\mathtt{\Delta} \vdash \mathtt{M} : \ttreal$ is of the
      form
      $\mathtt{E}[\ifterm{\mathtt{x}_{i}}{\mathtt{N}}{\mathtt{L}}]$
      for some $i \in \{1,2,\ldots,m\}$, then by induction hypothesis,
      there are kernels $k$ and $k'$ from $\mathbb{R}^{m}$ to
      $\mathbb{R}$ such that for any $u \in \mathbb{R}^{m}$,
      \begin{align*}
        \mathtt{E}[\mathtt{N}]\{\mathtt{r}_{u}/\mathtt{\Delta}\}
        \Rightarrow_{n} \mu
        &\iff \mu = k(u,-), \\
        \mathtt{E}[\mathtt{L}]\{\mathtt{r}_{u}/\mathtt{\Delta}\}
        \Rightarrow_{n} \mu
        &\iff \mu = k'(u,-).
      \end{align*}
      We define a kernel $h$ from $\mathbb{R}^{m}$ to $\mathbb{R}$ by
      \begin{equation*}
        h(u,A) =
        \begin{cases}
          k(u,A), & \textnormal{if } a_{i} = 0, \\
          k'(u,A), & \textnormal{if } a_{i} \neq 0
        \end{cases}
        \qquad
        \textnormal{ where }
        u = (a_{1},\ldots,a_{n}).
      \end{equation*}
      Then, for any $u \in \mathbb{R}^{m}$,
      \begin{align*}
        \mathtt{E}[\ifterm{\mathtt{x}_{i}}{\mathtt{N}}{\mathtt{L}}]
        \{\mathtt{r}_{u}/\mathtt{\Delta}\}
        \Rightarrow_{n+1} \mu
        &\iff \bigl(
          \mathtt{E}[\mathtt{N}]
          \{\mathtt{r}_{u}/\mathtt{\Delta}\}
          \Rightarrow_{n} \mu
          \textnormal{ and } a_{i} = 0
          \bigr) \\
        &\hspace{28pt} \textnormal{ or } \bigl(
          \mathtt{E}[\mathtt{L}]
          \{\mathtt{r}_{u}/\mathtt{\Delta}\}
          \Rightarrow_{n} \mu
          \textnormal{ and } a_{i} \neq 0
          \bigr) \\
        &\iff \mu = h(u,-).
      \end{align*}
    \item If $\mathtt{\Delta} \vdash \mathtt{M} : \ttreal$ is of the
      form
      $\mathtt{E}[\ifterm{\mathtt{r}_{0}}{\mathtt{N}}{\mathtt{L}}]$,
      then by induction hypothesis, there is a kernel $k$ from
      $\mathbb{R}^{m}$ to $\mathbb{R}$ such that for any
      $u \in \mathbb{R}^{m}$,
      \begin{equation*}
        \mathtt{E}[\mathtt{N}]\{\mathtt{r}_{u}/\mathtt{\Delta}\}
        \Rightarrow_{n} \mu
        \iff \mu = k(u,-).
      \end{equation*}
      Hence,
      \begin{align*}
        \mathtt{E}[\ifterm{\mathtt{r}_{0}}{\mathtt{N}}{\mathtt{L}}]
        \{\mathtt{r}_{u}/\mathtt{\Delta}\}
        \Rightarrow_{n+1} \mu
        &\iff
          \mathtt{E}[\mathtt{N}]
          \{\mathtt{r}_{u}/\mathtt{\Delta}\}
          \Rightarrow_{n} \mu \\
        &\iff \mu = k(u,-).
      \end{align*}
    \item If $\mathtt{\Delta} \vdash \mathtt{M} : \ttreal$ is of the
      form
      $\mathtt{E}[\ifterm{\mathtt{r}_{a}}{\mathtt{N}}{\mathtt{L}}]$
      for some real number $a \neq 0$, then by induction hypothesis,
      there is a kernel $k$ from $\mathbb{R}^{m}$ to $\mathbb{R}$ such
      that
      \begin{equation*}
        \mathtt{E}[\mathtt{L}]\{\mathtt{r}_{u}/\mathtt{\Delta}\}
        \Rightarrow_{n} \mu
        \iff \mu = k(u,-).
      \end{equation*}
      Hence,
      \begin{align*}
        \mathtt{E}[\ifterm{\mathtt{r}_{a}}{\mathtt{N}}{\mathtt{L}}]
        \{\mathtt{r}_{u}/\mathtt{\Delta}\}
        \Rightarrow_{n+1} \mu
        &\iff
          \mathtt{E}[\mathtt{L}]
          \{\mathtt{r}_{u}/\mathtt{\Delta}\}
          \Rightarrow_{n} \mu \\
        &\iff \mu = k(u,-).
      \end{align*}
    \end{varitemize}
  \end{proof}
} Lemma~\ref{lem:distribution-based} implies that the relations
$\Rightarrow_{n}$ can be seen as functions from the set of closed
terms of type $\ttreal$ to the set of measures on
$\mathbb{R}$. \shortv{We can prove Lemma~\ref{lem:distribution-based}
  by induction on $n \in \mathbb{N}$. In the proof, the following fact
  is crucial: if $f \colon \mathbb{R} \times \cdots \times \mathbb{R}
  \to \mathbb{R}$ is a non-negative measurable function, then
\begin{equation*}
  (b,\ldots,c) \mapsto \int_{\mathbb{R}} f(a,b,\ldots,c)
  \; \mathrm{d}a
\end{equation*}
is measurable (see \cite[Theorem~18.3]{billingsley1986}).
In the proof of this fact, $\sigma$-finiteness of the Borel measure is exploited
in an essential way.}

\begin{figure}[t]
  \begin{center}
    \fbox{
      \begin{minipage}{.959\columnwidth}
        \centering
        $\begin{array}{c}
           \infer{\mathtt{M} \Rightarrow_{0} \varnothing_{\mathbb{R}}}{}
           \quad
           \infer{\mathtt{r}_{a} \Rightarrow_{n} \delta_{a}}{n > 0}
           \quad
           \infer{\mathtt{E}[\mathtt{M}] \Rightarrow_{n+1} \mu}{
           \mathtt{M} \detred \mathtt{N}
           &
             \mathtt{E}[\mathtt{N}] \Rightarrow_{n} \mu}
           \\[4pt]
           \infer{\mathtt{E}[\mathtt{score}(\mathtt{r}_{a})]
           \Rightarrow_{n+1}
           |a|\;\mu}{
           \mathtt{E}[\mathtt{skip}] \Rightarrow_{n} \mu
           }
           \\[4pt]
           \infer{\mathtt{E}[\mathtt{sample}]
           \Rightarrow_{n+1}
           \int_{\mathbb{R}_{[0,1]}} k(a,-)\;
           \mathrm{d}a
           }{\mathtt{E}[\mathtt{r}_{a}]
           \Rightarrow_{n} k(a,-)
           &
             \textnormal{for all }a \in \mathbb{R}_{[0,1]}
             }
         \end{array}$
        \end{minipage}}
  \end{center}
  \caption{Evaluation Rules of Distribution-Based Operational Semantics}
  \label{fig:distribution-based}
\end{figure}

The step-indexed distribution-based operational semantics approximates
the evaluation of closed terms by restricting the number of reduction
steps. Thus, the limit of the step-indexed distribution-based
operational semantics represents the ``true'' result of evaluating the
underlying term.

\begin{definition}
  For a closed term $\mathtt{M}:\ttreal$
  and a measure $\mu$ on $\mathbb{R}$,
  we write $\mathtt{M} \Rightarrow_{\infty} \mu$
  when there is a family of measures
  $\{\mu_{n}\}_{n \in \mathbb{N}}$ on $\mathbb{R}$
  such that $\mathtt{M} \Rightarrow_{n} \mu_{n}$
  and for all $A \in \Sigma_{\mathbb{R}}$,
  \begin{equation*}
    \mu(A) = \sup_{n \in \mathbb{N}} \mu_{n}(A).
  \end{equation*}
\end{definition}

The binary relation $\Rightarrow_{\infty}$ is a function from the set
of closed terms of type $\ttreal$ to the set of measures on
$\mathbb{R}$. This follows from Lemma~\ref{lem:distribution-based} and
that the family of measures $\{\mu_{n}\}_{n \in \mathbb{N}}$ on
$\mathbb{R}$ such that $\mathtt{M} \Rightarrow_{n} \mu_{n}$ forms an
ascending chain $\mu_{0} \leq \mu_{1} \leq \cdots$ with respect to the
pointwise order. Moreover, it can be proved that
for any $\mathtt{x_1}:\ttreal,\ldots,\mathtt{x}_m:\ttreal \vdash \mathtt{M}:
\ttreal$, $k$ given by $\mathtt{M}\{\mathtt{r}_{a_1}/\mathtt{x_1},\ldots,\mathtt{r}_{a_m}/\mathtt{x}_m\}
\Rightarrow_{\infty} k((a_1,\ldots,a_m),-)$ is an s-finite kernel.

\subsection{Sampling-Based Operational Semantics}
\label{sec:sample}

$\PCFSS$ can be endowed with another form of operational semantics,
closer in spirit to inference algorithms, called the
\emph{sampling-based} operational semantics. The way we formulate it
is deeply inspired from the one in \cite{bdlgs2016}.

The idea behind sampling-based operational semantics is to give the
evaluation result of each probabilistic branch somehow independently.
We specify each probabilistic branch by two parameters: one is a
sequence of random draws, which will be consumed by $\mathtt{sample}$;
the other is a likelihood measure called weight, which will be
modified by $\mathtt{score}$.

\begin{definition}\label{def:configuration}
  A \emph{configuration} is a triple $(\mathtt{M},a,u)$ consisting of
  a closed term $\mathtt{M}:\ttreal$, a real number $a \geq 0$ called
  the configuration's \emph{weight}, and a finite sequence $u$ of real
  numbers in $\mathbb{R}_{[0,1]}$, called its \emph{trace}.
\end{definition}

Below, we write $\varepsilon$ for the empty sequence. For a real
number $a$ and a finite sequence $u$ consisting of real numbers, we
write $a \mathbin{::} u$ for the finite sequence obtained by putting
$a$ on the head of $u$. In Figure~\ref{fig:sampling-based}, we give
the evaluation rules of sampling-based operational semantics where
$\detred$ is the deterministic reduction relation introduced in the
previous section. We denote the reflective transitive closure of $\to$
by $\to^{\ast}$. Intuitively, $(\mathtt{M},1,u) \to^{\ast}
(\mathtt{r}_{a},b,\varepsilon)$ means that by evaluating $\mathtt{M}$,
we get the real number $a$ with weight $b$ consuming \emph{all} the
random draws in $u$.

\begin{figure}[t]
  \begin{center}
    \fbox{\begin{minipage}{.959\columnwidth}
        \centering
        $\begin{array}{c}
            \infer{(\mathtt{M},b,u) \to (\mathtt{N},b,u)}{\mathtt{M} \detred \mathtt{N}}
            \\[6pt]
            (\mathtt{E}[\mathtt{score}(\mathtt{r}_{a})],b,u) \to
            (\mathtt{E}[\mathtt{skip}],|a|\,b,u)
            \\[6pt]
            (\mathtt{E}[\mathtt{sample}], a,b \mathbin{::}u)
            \to (\mathtt{E}[\mathtt{r}_{b}],a,u)
          \end{array}$
      \end{minipage}}
  \end{center}
  \caption{Evaluation Rules of Sampling-Based Operational Semantics}
  \label{fig:sampling-based}
\end{figure}

\section{Towards Mealy Machine Semantics}
\label{sec:towards}

In this section, we give some intuitions about our GoI model, which we
also call \emph{Mealy machine semantics}. Giving Mealy machine
semantics for $\PCFSS$ requires translating $\PCFSS$ into the linear
$\lambda$-calculus. This is because GoI is a semantics for linear
logic, and is thus tailored for calculi in which terms are treated as
resources. Schematically, Mealy machine semantics for $\PCFSS$
translates terms in $\PCFSS$ into Mealy machines in the following way.
\begin{equation*}
  \xymatrix@R=7pt{
    \textnormal{$\PCFSS$}
    \ar[d]^-{(1) \; \textnormal{Moggi's translation}} \\
    \textnormal{Moggi's meta-language\,$+\mathtt{sample}+\mathtt{score}$
    }
    \ar[d]^-{(2) \; \textnormal{Girard translation}} \\
    \textnormal{the linear $\lambda$-calculus\,$+\mathtt{sample}+\mathtt{score}$}
    \ar[d]^-{(3)} \\
    \textnormal{proof structures$+\mathtt{sample}+\mathtt{score}$}
    \ar[d]^-{(4)} \\
    \textnormal{Mealy machines}
    \nulldot
  }
\end{equation*}
In Section~\ref{sec:linear}, we explain the first three steps. The
last step deserves to be explained in more detail, which we do in
Section~\ref{sec:pn}. For the sake of simplicity, we ignore the
translation of conditional branching and the fixed point operator.

\subsection{From $\PCFSS$ to Proof Structures}
\label{sec:linear}

\subsubsection{Moggi's Translation}

In the first step, we translate $\PCFSS$ into an extension of the
Moggi's meta-language by Moggi's translation \cite{moggi1991}. Here,
in order to translate scoring and sampling in $\PCFSS$,
we equip Moggi's meta-language with base types $\ttunit$ and
$\ttreal$ and the following terms:
\begin{equation*}
  \infer{\mathtt{\Delta} \vdash \mathtt{r}_{a} : \ttreal}{a \in \mathbb{R}},
  \hspace{5pt}
  \infer{\mathtt{score}(\mathtt{M}) : \mathtt{T} \, \ttunit}{
    \mathtt{\Delta} \vdash \mathtt{M} : \ttreal},
  \hspace{5pt}
  \infer{\mathtt{\Delta} \vdash \mathtt{sample} : \mathtt{T} \, \ttreal}{}
\end{equation*}
where $\mathtt{T}$ is the monad of Moggi's meta-language.
Any type 
$\mathtt{A}$ of $\PCFSS$ is translated into the type
$\mathtt{A}^{\sharp}$ defined as follows:
\begin{equation*}
  \ttunit^{\sharp}
  = \ttunit,
  \quad
  \ttreal^{\sharp}
  = \ttreal,
  \quad
  (\mathtt{A} \to \mathtt{B})^{\sharp}
  = \mathtt{A}^{\sharp} \to
  \mathtt{T}\,\mathtt{B}^{\sharp}.
\end{equation*}
Terms $\mathtt{sample}$ and $\mathtt{score}(-)$ in $\PCFSS$ are
translated into $\mathtt{sample}$ and $\mathtt{score}(-)$ in
Moggi's meta-language respectively. See \cite{moggi1991} for more
detail about Moggi's translation.

\subsubsection{Girard Translation}

We next translate the extended Moggi's meta-language into an extension
of the linear $\lambda$-calculus, by way of the so-called Girard
translation~\cite{girard1987}. Types are given by
\begin{equation*}
  \mathtt{A},\mathtt{B}::= \ttunit \midd \ttreal \midd \mathtt{State}
  \midd \mathtt{A}^{\bot}
  \midd \mathtt{A} \otimes \mathtt{B} \midd \mathtt{A} \mathbin{\wp} \mathtt{B}
  \midd \oc \mathtt{A}
\end{equation*}
where $\ttunit$, $\ttreal$ and $\mathtt{State}$ are base types, and
terms are generated by the standard term constructors of the linear
$\lambda$-calculus, plus the following rules:
\begin{equation*}
  \begin{array}{c}
    \infer{\mathtt{\Delta} \vdash \mathtt{r}_{a}:\ttreal}{a \in \mathbb{R}} \\[6pt]
    \infer{\mathtt{\Delta} \vdash
    \mathtt{score}(\mathtt{M}): \mathtt{State} \multimap
    \mathtt{State} \otimes \oc \ttunit}{\mathtt{\Delta} \vdash \mathtt{M}:\oc\ttreal}\\[6pt]
    \infer{\mathtt{\Delta} \vdash \mathtt{sample}:\mathtt{State} \multimap \mathtt{State}
    \otimes \oc \ttreal}{} 
  \end{array}
\end{equation*}
(as customary in linear logic, $\mathtt{A} \multimap \mathtt{B}$ is an abbreviation of
$\mathtt{A}^{\bot} \mathbin{\wp} \mathtt{B}$). These typing rules are
derived from the following translation $(-)^{\flat}$ of types of the
extended Moggi's meta-language into types of the extended linear
$\lambda$-calculus:
\begin{equation*}
  \begin{array}{c}
    \ttunit^{\flat}
    = \ttunit,
    \hspace{8pt}
    \ttreal^{\flat}
    = \ttreal,
    \hspace{8pt}
    (\mathtt{A} \to \mathtt{B})^{\flat}
    =
    \oc \mathtt{A}^{\flat} \multimap
    \mathtt{B}^{\flat},
    \\[3pt]
    (\mathtt{T}\,\mathtt{A})^{\flat}
    =
    \mathtt{State} \multimap
    \mathtt{State} \otimes \oc \mathtt{A}^{\flat}
  \end{array}
\end{equation*}
The definition of $(\mathtt{T}\,\mathtt{A})^{\flat}$ is motivated by
the following categorical observation: let $\mathcal{L}$ be the
syntactic category of the extended linear $\lambda$-calculus,
which is a symmetric monoidal closed category endowed with a comonad
$\oc \colon \mathcal{L} \to \mathcal{L}$ with certain coherence
conditions (see e.g. \cite{hs2003}), and let $\mathcal{L}_{\oc}$ be
the coKleisli category $\mathcal{L}_{\oc}$ of the comonad $\oc$. Then,
by composing the adjunction between $\mathcal{L}$ and
$\mathcal{L}_{\oc}$ with a state monad
$\mathsf{State} \multimap \mathsf{State} \otimes (-)$ on
$\mathcal{L}$, we obtain a monad on $\mathcal{L}_{\oc}$:
\begin{equation*}
  \xymatrix{
    \mathcal{L} \ar@/^{3mm}/[r] \ar@{}[r]|{\top}
    \ar@(dl,ul)[]^{\mathtt{State} \multimap \mathtt{State} \otimes (-)}&
    \mathcal{L}_{\oc} \ar@/^{3mm}/[l]
  },
\end{equation*}
which sends an object $\mathtt{A} \in \mathcal{L}_{\oc}$ to
$\mathtt{State} \multimap \mathtt{State} \otimes \oc \mathtt{A}$. This
use of the state monad is motivated by sampling-based operational
semantics: we can regard $\PCFSS$ as a call-by-value
$\lambda$-calculus with global states consisting of pairs of a
non-negative real number and a finite sequence of real numbers, and we
can regard $\mathtt{score}$ and $\mathtt{sample}$ as effectful
operations interacting with those states.

\subsubsection{The Third Step}

We translate terms in the extended linear $\lambda$-calculus into (an
extension of proof structures) \cite{lafont1995}, which are graphical
presentations of type derivation trees of linear $\lambda$-terms. We
can also understand proof structures as string diagrams for compact
closed categories \cite{selinger2011}. Operators of the pure, linear,
$\lambda$-calculus, can be translated as usual \cite{lafont1995}. For
example, type derivation trees
\begin{equation*}
  \infer{\mathtt{x}:\mathtt{A} \vdash \mathtt{x}:\mathtt{A}}{},
  \qquad
  \infer{\vdash \lambda \mathtt{x}^{\mathtt{A}}.\,
    \mathtt{x}:\mathtt{A} \multimap \mathtt{A}}{
    \mathtt{x}:\mathtt{A} \vdash \mathtt{x}:\mathtt{A}}
  \qquad
  \infer{\vdash \mathtt{M} \otimes \mathtt{N}:\mathtt{A} \otimes \mathtt{B}}{
    \vdash \mathtt{M}:\mathtt{A}&\vdash \mathtt{N}:\mathtt{B}}
\end{equation*}
are translated into proof structures
\begin{center}
  \begin{tikzpicture}[rounded corners]
    \begin{scope}[xshift=5.5cm]
      \node (M) [draw,circle,obj] at (0,0) {$\mathtt{M}$};
      \node (N) [draw,circle,obj] at (0,0.8) {$\mathtt{N}$};
      \node (ox) [draw,circle,obj] at (1,0.4) {$\otimes$};
      \draw (M) -- node[above,obj]{$\mathtt{A}$} ++ (0.5,0) -- (ox);
      \draw (N) -- node[above,obj]{$\mathtt{B}$} ++ (0.5,0) -- (ox);
      \draw (ox) -- node[above,obj]{$\mathtt{A} \otimes \mathtt{B}$} ++ (0.9,0);
    \end{scope}
    \begin{scope}[xshift=2.1cm]
      \node (wp) [draw,circle,obj] at (1.2,0.4) {$\wp$};
      \draw (0,0) -- node[above,obj] {$\mathtt{A}$}++ (0.7,0) -- (wp);
      \draw (0,0.8) -- node[above,obj] {$\mathtt{A}^{\bot}$}++ (0.7,0) -- (wp);
      \draw (0,0.8) arc(90:270:0.4);
      \draw (wp) -- node[above,obj] {$\mathtt{A} \multimap \mathtt{A}$} ++ (1.5,0);
    \end{scope}
    \draw (0,0.4) -- node[above,obj] {$\mathtt{A}$}++ (1.25,0);
  \end{tikzpicture}
\end{center}
respectively where nodes labelled with $\mathtt{M}$ and $\mathtt{N}$
are proof structures associated to type derivations of $\mathtt{M}$
and $\mathtt{N}$. Terms of the form  $\mathtt{r}_{a}$,
$\mathtt{sample}(\mathtt{M})$ and $\mathtt{score}$, require
new kinds of nodes:
\begin{center}
  \begin{tikzpicture}[rounded corners]
    \node (a) [draw,circle,obj] at (-3,0) {$\mathtt{r}_{a}$};
    \draw (a) -- node[above,obj] {$\ttreal$} ++(1.5,0);
    \node (sample) [draw,circle,obj] at (3,0) {$\mathtt{sa}$};
    \draw (sample) -- ++(0.4,-0.4) -- node[above,obj] {$\oc \ttreal$} ++(1,0);
    \draw (sample) -- ++(0.4,0.4) -- node[above,obj] {$\mathtt{State}^{\bot}$} ++(1,0);
    \draw (sample) -- node[above,obj] {$\mathtt{State}$} ++(1.5,0);
    \node (score) [draw,circle,obj] at (0.5,0) {$\mathtt{sc}$};
    \draw (score) -- node[above,obj] {$\mathtt{State}$} ++ (1.5,0);
    \draw (score) -- ++(0.4,0.4) -- node[above,obj] {$\mathtt{State}^{\bot}$} ++ (1,0);
    \draw (score) -- ++(0.4,-0.4) -- node[above,obj] {$\oc \ttunit$} ++ (1,0);
    \draw (score) -- node[above,obj] {$\oc\ttreal$} ++ (-1.5,0);
  \end{tikzpicture}
  .
\end{center}
This is not a direct adaptation of typing rules for $\mathtt{score}$
and $\mathtt{sample}$ in the linear $\lambda$-calculus, but the
correspondence can be recovered by way of multiplicatives:
\begin{center}
  \begin{tikzpicture}[rounded corners]
    \node (oxscore) [draw,circle,obj] at (2,-0.25) {$\otimes$};
    \node (wpscore) [draw,circle,obj] at (3.8,0.125) {$\wp$};
    \draw (0.4,0) -- node[above,obj] {$\mathtt{State}$} ++ (1.3,0) -- (oxscore);
    \draw (0.4,0.4) -- node[above,obj] {$\mathtt{State}^{\bot}$} ++ (3,0) -- (wpscore);
    \draw (0.4,-0.4) -- node[above,obj] {$\oc \mathtt{A}$} ++ (1.1,0) -- (oxscore);
    \draw (oxscore) -- ++(0.8,0) -- node[below,obj] {$\mathtt{State} \otimes \oc \mathtt{A}$}
    ++(0.7,0) -- (wpscore)
    -- node[above,obj] {$\mathtt{State} \multimap
      \mathtt{State} \otimes \oc \mathtt{A}$} ++ (2.9,0);
  \end{tikzpicture}
  .
\end{center}

\subsection{From Proof Structures to Mealy Machines}
\label{sec:pn}

The series of translations from $\PCFSS$ to proof structures is
agnostic as for the computational meaning of $\mathtt{score}$ and
$\mathtt{sample}$ in $\PCFSS$ because $\mathtt{score}$ and
$\mathtt{sample}$ introduced in these translations are just constant
symbols. In other words, the translation from $\PCFSS$ to the extended
proof structures is not sound with respect to either form of
operational semantics for $\PCFSS$. In the last translation step, we
assign proof structures a computational meaning, respecting the
operational semantics of the underlying $\PCFSS$ term.

We do this by associating proof structures with Mealy machines. A
\emph{Mealy machine} is an input/output-machine whose evolution may
depend on its current state. In this paper, for the sake of supporting
intuition and of enabling graphical reasoning, we depict a Mealy
machine $\mathsf{M}$ as a node with some input/output-ports:
\begin{center}
  \begin{tikzpicture}
    \node (M) [draw,circle,obj] at (0,0) {$\mathsf{M}$};
    \draw (M) -- ++ (1,0);
    \draw (M) -- ++ (-1,0);
    \begin{scope}[xshift=2.4cm]
      \node (M) [draw,circle,obj] at (0,0) {$\mathsf{M}$};
      \draw (M) -- ++ (1,0);
      \draw (M) -- ++ (-1,0);
      \draw[thick] (1,-0.1) node[right,obj] {$x$} -- ++(-0.7,0);
      \draw[<-,thick] (1,0.1) node[right,obj] {$y$} -- ++(-0.7,0)
      arc (90:270:0.1);
      \node [above,obj] at (M.north) {$s/t$};
    \end{scope}
    \begin{scope}[xshift=5.4cm]
      \node (M) [draw,circle,obj] at (0,0) {$\mathsf{M}$};
      \draw (M) -- ++ (1,0);
      \draw (M) -- ++ (-1,0);
      \draw[thick,->] (-1,-0.1) node[left,obj] {$z$} -- ++(2,0) node[right,obj]{$w$};
      \node [above,obj] at (M.north) {$s'/t'$};
    \end{scope}
  \end{tikzpicture}
  .
\end{center}
For example, the thick arrow in the middle diagram indicates that if
the current state is $s$ and the given input is $x$, then the Mealy
machine outputs $y$ and changes its state to $t$. In the GoI jargon,
data traveling along edges of proof structures are often called
\emph{tokens}.

For the standard proof structures, we can follow \cite{laurent2001}
where Mealy machines associated with proof structures are built up
from Mealy machines associated to each nodes. For example, the
following nodes
\begin{center}
  \begin{tikzpicture}[rounded corners]
    \node (x) [draw,circle,obj] at (0.8,0.4) {$\otimes$};
    \draw (0,0) -- node[above,obj] {$\mathtt{A}$} ++(0.4,0) -- (x);
    \draw (0,0.8) -- node[above,obj] {$\mathtt{B}$} ++(0.4,0) -- (x);
    \draw (x) -- node[above,obj] {$\mathtt{A} \otimes \mathtt{B}$} ++ (0.9,0);
    \begin{scope}[xshift=2cm]
      \node (v) [draw,circle,obj] at (0.8,0.4) {$\wp$};
      \draw (0,0) -- node[above,obj] {$\mathtt{A}$} ++(0.4,0) -- (v);
      \draw (0,0.8) -- node[above,obj] {$\mathtt{B}$} ++(0.4,0) -- (v);
      \draw (v) -- node[above,obj] {$\mathtt{A} \mathbin{\wp} \mathtt{B}$} ++ (0.9,0);
    \end{scope}
  \end{tikzpicture}
\end{center}
are both associated with a one-state Mealy machine that behaves in
the following manner:
\begin{center}
  \begin{tikzpicture}[rounded corners]
    \begin{scope}[xshift=3.2cm]
      \node (x) [draw,circle,obj] at (0.8,0.4) {};
      \draw (0,0) -- node[above,obj] {$\mathtt{A}$} ++(0.4,0) -- (x);
      \draw (0,0.8) -- node[above=4,obj] {$\mathtt{B}$} ++(0.4,0) -- (x);
      \draw (x) -- node[above=4,obj] {$\mathtt{A} \mathbin{\otimes} \mathtt{B}$} ++ (0.9,0);
      \draw[thick,->] (0,0.9) node[left,obj] {$b$}
      -- ++(0.4,0) -- (0.8,0.53) -- ++(0.9,0) node[right,obj] {$(\mm,b)$};
      \draw[thick,->] (0,-0.1) node[left,obj] {$a$}
      -- ++(0.4,0) -- (0.8,0.27) -- ++(0.9,0) node[right,obj] {$(\hh,a)$};
    \end{scope}
    \begin{scope}[xshift=5.8cm]
      \node (x) [draw,circle,obj] at (0.8,0.4) {};
      \draw (0,0) -- node[above,obj] {$\mathtt{A}$} ++(0.4,0) -- (x);
      \draw (0,0.8) -- node[above=4,obj] {$\mathtt{B}$} ++(0.4,0) -- (x);
      \draw (x) -- node[above=4,obj] {$\mathtt{A} \mathbin{\otimes} \mathtt{B}$} ++ (0.9,0);
      \draw[thick,<-] (0,0.9) node[left,obj] {$b$}
      -- ++(0.4,0) -- (0.8,0.53) -- ++(0.9,0) node[right,obj] {$(\hh,b)$};
      \draw[thick,<-] (0,-0.1) node[left,obj] {$a$}
      -- ++(0.4,0) -- (0.8,0.27) -- ++(0.9,0) node[right,obj] {$(\mm,a)$};
    \end{scope}
  \end{tikzpicture}
  .
\end{center}
Namely, the Mealy machine forwards each input from the left hand side
to the right hand side endowing it with a tag that tells where the
token came from. The Mealy machine handles inputs from the right
hand side in the reverse way.

Soundness of Mealy machine semantics states that if two (pure) linear
$\lambda$-terms are $\beta$-equivalent, then the behaviours of the Mealy
machines associated to these terms are the same. As an example, let us
consider a $\beta$-reduction step
\begin{math}
  (\lambda \mathtt{x}^{\mathtt{A}}.\,\mathtt{x})\, \mathtt{y} \to
  \mathtt{y}.
\end{math}
The proof structure associated to
$(\lambda \mathtt{x}^{\mathtt{A}}.\,\mathtt{x})\, \mathtt{y}$ is the
graph in the left hand side, and the arrow in the right hand side
illustrates a trace of a run of this Mealy machine for an input $a$ from
the right edge:
\begin{center}
  \begin{tikzpicture}[rounded corners]
    \node (wp)[draw,circle,obj] at (1,1) {$\wp$};
    \node (ox)[draw,circle,obj] at (1,0) {$\otimes$};
    \draw (-0.5,0.7) -- node[below,obj] {$\mathtt{A}$} ++(1,0) -- (wp);
    \draw (-0.5,1.3) -- node[below,obj] {$\mathtt{A}^{\bot}$} ++ (1,0) -- (wp);
    \draw (-0.5,1.3) arc(90:270:0.3);
    \draw (0,0.25) -- ++(0.5,0) -- (ox);
    \draw (ox) -- ++(0.5,0);
    \draw (wp) -- node[above right,obj] {$\mathtt{A}\multimap\mathtt{A}$} ++(0.5,0);
    \draw (1.5,1) arc(90:-90:0.5);
    \draw (0,-0.25) -- ++(0.5,0) -- (ox);
    \draw (0,-0.25) arc(90:270:0.1);
    \draw (0,-0.45) -- ++ (1,0) -- node[above right,obj] {$\mathtt{A}$} ++(1,0);
    \draw (0,0.25) -- node[below,obj] {$\mathtt{A}$} ++(-1,0);
    \begin{scope}[xshift=4cm]
      \node (wp)[draw,circle,obj] at (1,1) {$\wp$};
      \node (ox)[draw,circle,obj] at (1,0) {$\otimes$};
      \draw (-0.5,0.7) -- ++(1,0) -- (wp);
      \draw (-0.5,1.3) -- ++ (1,0) -- (wp);
      \draw (-0.5,1.3) arc(90:270:0.3);

      \draw (0,0.25) -- ++(0.5,0) -- (ox);
      \draw (ox) -- ++(0.5,0);
      \draw (wp) -- ++(0.5,0);
      \draw (1.5,1) arc(90:-90:0.5);
      \draw (0,-0.25) -- ++(0.5,0) -- (ox);
      \draw (0,-0.25) arc(90:270:0.1);
      \draw (0,-0.45) -- ++ (1,0) -- ++(1,0);
      \draw (0,0.25) -- ++(-1,0);

      \draw[thick] (0,-0.55) -- ++ (2,0) node[right,obj] {$a$};
      \draw[thick] (0,-0.55) arc(270:90:0.2);
      \draw[thick] (0,-0.15) -- ++(0.5,0) -- (1,-0.1) -- ++ (0.5,0);
      \draw[thick] (1.5,-0.1) arc(-90:90:0.6);
      \draw[thick] (1.5,1.1) -- (1,1.1) -- ++(-0.5,0.3) -- ++(-1,0);
      \draw[thick] (-0.5,1.4) arc(90:270:0.4);
      \draw[thick] (-0.5,0.6) -- ++(1,0) -- (1,0.9) -- ++(0.5,0);
      \draw[thick] (1.5,0.9) arc(90:-90:0.4);
      \draw[thick,->] (1.5,0.1) -- ++(-0.5,0) -- (0.5,0.35) -- ++(-0.3,0)
      -- ++(-1.25,0) node[left,obj] {$a$};
    \end{scope}
  \end{tikzpicture}
  .
\end{center}
This Mealy machine forwards any input from the right hand side to
the left hand side as indicated by the thick arrow, and it also
forwards any input from the left hand side to the right hand side.
Hence, the behaviour of this Mealy machine is equivalent to the
behaviour of the following trivial Mealy machine:
\begin{center}
  \begin{tikzpicture}
    \draw (0,0) -- node[above=4,obj] {$\mathtt{A}$} ++(2,0);
    \draw[thick,->] (0,0.1) node[left,obj] {$a$} -- ++(2,0) node [right,obj] {$a$};
    \draw[thick,->] (2,-0.1) node[right,obj] {$a$} -- ++(-2,0) node [left,obj] {$a$};
  \end{tikzpicture}
  ,
\end{center}
which is the interpretation of $\mathtt{y}:\mathtt{A} \vdash
\mathtt{y}:\mathtt{A}$. This is in fact a symptom of a general
phenomenon: Mealy machine semantics for the linear $\lambda$-calculus
captures $\beta$-reduction
$(\lambda\mathtt{x}^{\mathtt{A}}.\,\mathtt{x})\,\mathtt{y} \to
\mathtt{y}$.

But how can we extend this Mealy machine semantics to
$\mathtt{score}$ and $\mathtt{sample}$? Here, we borrow the idea from
Game semantics \cite{am1996} that models computation in terms of
interaction between programs and environments.
For scoring and sampling, we can infer
how they interact with the environment from sampling-based operational
semantics. For scoring, we associate $\mathtt{score}$
with a one-state Mealy machine that has the following
transitions:
\begin{center}
  \begin{tikzpicture}[rounded corners]
    \node (score) [draw,circle,obj] at (0.5,0) {$\mathtt{sc}$};
    \draw (score) -- node[above,obj] {$\mathtt{State}$} ++ (1.5,0);
    \draw (score) -- ++(0.5,0.5) -- node[above=4,obj] {$\mathtt{State}^{\bot}$} ++ (1,0);
    \draw (score) -- ++(0.5,-0.5) -- node[above,obj] {$\oc \ttunit$} ++ (1,0);
    \draw (score) -- node[above=4,obj] {$\oc\ttreal$} ++ (-1.5,0);

    \draw[thick,->] (2,0.65) node[right,obj] {$(a,u)$}
    -- ++(-1,0) -- ++(-0.5,-0.5) -- ++ (-1.5,0) node[left,obj] {$(a,u)$};
    \draw[thick,->] (-1,-0.15) node[left,obj] {$(a,b\mathbin{::}u)$}
    -- (2,-0.15) node[right,obj] {$(|b|\,a,u)$};
  \end{tikzpicture}
\end{center}
where $u$ is a finite sequence of real numbers and $a,b$ are real
numbers such that $a \geq 0$. We can read these transitions as
follows: for each ``configuration'' $(-,a,u)$, the Mealy machine sends
a query $(a,u)$ to environment in order to know the value of its
argument, and if environment answers that the value is $b$, i.e., if
the Mealy machine receives $(a,b\mathbin{::}u)$, then it outputs
$(|b|\,a,u)$, which is the evaluation result of
$(\mathtt{score}(\mathtt{r}_{b}),a,u)$.

For sampling, we associate $\mathtt{sample}$ with a Mealy machine that
has the following transitions:
\begin{center}
  \begin{tikzpicture}[rounded corners]
    \node (sample) [draw,circle,obj] at (3,0) {$\mathtt{sa}$};
    \node [obj] at (3,0.4) {$\ast/b$};

    \draw (sample) -- ++(0.5,-0.5) -- node[above,obj] {$\oc \ttreal$} ++(1,0);
    \draw (sample) -- node[above,obj] {$\mathtt{State}$} ++(1.5,0);
    \draw (sample) -- ++(0.5,0.5) -- node[above,obj] {$\mathtt{State}^{\bot}$} ++(1,0);

    \draw[thick,->] (4.5,0.4) node[right,obj] {$(a,b\mathbin{::}u)$}
    -- ++(-1,0) -- ++ (-0.5,-0.5) -- ++ (1.5,0)
    node[right,obj] {$(a,u)$};

    \begin{scope}[xshift=4cm]
      \node (sample) [draw,circle,obj] at (3,0) {$\mathtt{sa}$};
      \node [obj] at (3,0.4) {$b/b$};

      \draw (sample) -- ++(0.5,-0.5) -- node[above=4,obj] {$\oc \ttreal$} ++(1,0);
      \draw (sample) -- node[above,obj] {$\mathtt{State}$} ++(1.5,0);
      \draw (sample) -- ++(0.5,0.5) -- node[above,obj] {$\mathtt{State}^{\bot}$} ++(1,0);

      \draw[thick,->] (4.5,-0.4) node[right,obj] {\raisebox{5pt}{$(a,u)$}}
      -- ++(-0.95,0) -- ++(-0.4,0.4) -- ++ (-0.15,-0.15) -- ++ (0.45,-0.45) -- ++ (1.05,0)
      node[right,obj] {\raisebox{-5pt}{$(a,b\mathbin{::}u)$}};
    \end{scope}
  \end{tikzpicture}
\end{center}
where $u$ is a finite sequence of real numbers and $a,b$ are real
numbers such that $a \geq 0$. The first transition means that in the
initial state $\ast$, given a ``configuration''
$(-,a,b \mathbin{::} u)$, the Mealy machine pops the first element of
$b \mathbin{::} u$ and memorises the value $b$ by changing its state
from $\ast$ to $b$. After this transition, for any query $(a,u)$
asking the result of sampling, it answers the value memorised in the
first transition.

For example, a Mealy machine
\begin{center}
  \begin{tikzpicture}[rounded corners]
    \node (sa) [draw,circle,obj] at (0,0) {$\mathsf{sa}$};
    \node (sc) [draw,circle,obj] at (2.5,-0.4) {$\mathsf{sc}$};
    \draw (sa) -- ++(0.4,-0.4) -- node[below,obj]{$\oc\ttreal$} (sc);
    \draw (sa) -- ++(0.4,0.4) -- node[above,obj]{$\mathtt{State}^{\bot}$} ++(4.1,0);
    \draw (sa) -- node[above,obj]{$\mathtt{State}$}++ (3.5,0);
    \draw (3.5,0) arc(90:-90:0.1);
    \draw (3.5,-0.2) -- ++(-0.5,0) -- (sc);
    \draw (sc) -- ++(0.4,-0.2) -- node[below,obj]{$\oc\ttunit$}++(1.6,0);
    \draw (sc) -- ++(1,0) -- node[above,obj]{$\mathtt{State}$} ++(1,0);
  \end{tikzpicture}
  ,
\end{center}
which is a denotation of the term
\begin{equation*}
  \mathtt{M}=\letin{\mathtt{x}}{\mathtt{sample}}{\mathtt{score}(\mathtt{x})}, 
\end{equation*}
and behaves as follows:
\begin{center}
  \begin{tikzpicture}[rounded corners]
    \node (sa) [draw,circle,obj] at (0,0) {$\mathsf{sa}$};
    \node (sc) [draw,circle,obj] at (2.5,-0.4) {$\mathsf{sc}$};
    \draw (sa) -- ++(0.4,-0.4) -- (sc);
    \draw (sa) -- ++(0.4,0.4) -- ++(4.1,0);
    \draw (sa) -- ++ (3.5,0);
    \draw (3.5,0) arc(90:-90:0.1);
    \draw (3.5,-0.2) -- ++(-0.5,0) -- (sc);
    \draw (sc) -- ++(0.4,-0.3) -- ++(1.6,0);
    \draw (sc) -- ++(1,0) -- ++(1,0);

    \node (1) [obj] at (2,0.1) {\raisebox{5pt}{$(a,u)$}};
    \node (2) [obj] at (1.5,-0.3) {\raisebox{5pt}{$(a,u)$}};
    \node (3) [obj] at (1.5,-0.5) {\raisebox{-12pt}{$(a,b\mathbin{::}u)$}};
    \draw[thick,->] (4.5,0.5) node[right,obj] {$(a,b\mathbin{::}u)$}
    -- ++(-4.1,0) -- ++ (-0.4,-0.4) -- (1);
    \draw[thick] (1) -- ++ (1.5,0);
    \draw[thick] (3.5,0.1) arc(90:-90:0.2);
    \draw[thick,->] (3.5,-0.3)--++(-1,0)--++(-0.3,0) -- (2);
    \draw[thick,->] (2) -- ++ (-1.05,0) -- ++(-0.3,0.3)
    -- ++ (-0.15,-0.15) -- ++ (0.35,-0.35) -- (3);
    \draw[thick,->] (3) -- ++ (1,0) -- ++ (0.4,0) -- ++ (1.6,0) node[right,obj]{$(|b|\,a,u)$};

    \node [above,obj] at (sa.north) {$\ast/b$};
  \end{tikzpicture}
  .
\end{center}
Our adequacy theorem says that the evaluation result of a term coincides
with the execution result of the associated Mealy machine. In fact,
for this case, the outcome $(|b|\,a,u)$ of the above Mealy machine is
equal to the evaluation result of $(\mathtt{M},a,b\mathbin{::}u)$,
that is,
$(\mathtt{M},a,b\mathbin{::}u) \to^{\ast} (\mathtt{skip},|b|\,a,u)$.
In this interaction process, the memoisation mechanism of the
$\mathsf{sa}$-node is necessary, otherwise the $\mathsf{sa}$-node
can not tell the $\mathsf{sc}$-node that the result of sampling is
$b$.

\begin{remark}
  Two notions of \emph{state} (the one coming from
  the state monad and the one of the of the Mealy
  machine itself) are used for different purpose here: the first notion is needed
  to model the call-by-value evaluation strategy where we need to
  store intermediate effects that are invoked during the
  evaluation. The second notion of state is needed to model sampling.
  More concretely, each Mealy machine for sampling need to remember
  the already sampled values in the current probabilistic
  branch.
\end{remark}

\section{Mealy Machines and their Compositions}
\label{sec:mealy}

After having described Mealy machine semantics briefly and
informally, it is now time to get more formal. In this section, we introduce
the notion of a Mealy machine and some constructions on
Mealy machines. We also introduce a way of diagramatically presenting Mealy
machines which is behaviourally sound.

\subsection{Mealy Machines, Formally}
\label{sec:diagram}

In this paper, we call a pair of measurable spaces an
\emph{$\mathbf{Int}$-object}. We use sans-serif capital letters
$\mathsf{X}, \mathsf{Y},\mathsf{Z},\ldots$ to denote
$\mathbf{Int}$-objects, and we denote the positive/negative part of an
$\mathbf{Int}$-object by the same italic letter superscripted by
$+/-$. For example, $\mathsf{X}$ denotes an $\mathbf{Int}$-object
$(X^{+},X^{-})$ consisting of two measurable spaces $X^{+}$ and $X^{-}$.
The name ``$\mathbf{Int}$-object'' comes from the so-called
\emph{$\mathbf{Int}$-construction} \cite{jsv}.
Definition~\ref{def:pmm} and the definition of monoidal products in
Section~\ref{sec:const} are also motivated by
$\mathbf{Int}$-construction.

\begin{definition}\label{def:pmm}
  For $\mathbf{Int}$-objects $\mathsf{X}$ and $\mathsf{Y}$, a
  \emph{Mealy machine} $\mathsf{M}$ from $\mathsf{X}$ to
  $\mathsf{Y}$ consists of
  \begin{varitemize}
  \item a measurable space $\state{\mathsf{M}}$ called the \emph{state
      space} of $\mathsf{M}$;
  \item an element $\init{\mathsf{M}} \in \state{\mathsf{M}}$ called
    the \emph{initial state} of $\mathsf{M}$;
  \item a partial measurable function
    \begin{equation*}
      \tran{\mathsf{M}} \colon (X^{+} + Y^{-}) \times
      \state{\mathsf{M}} \to (Y^{+} + X^{-}) \times
      \state{\mathsf{M}} 
    \end{equation*}
    called the \emph{transition function}.
  \end{varitemize}
  If $\mathsf{M}$ is a Mealy machine from $\mathsf{X}$ to
  $\mathsf{Y}$, we write $\mathsf{M} \colon \mathsf{X} \multimap \mathsf{Y}$.
\end{definition}

The transition function $\tran{\mathsf{M}}$ of a
Mealy machine $\mathsf{M}$ describes a mapping between inputs and
outputs which can also alter the underlying state. For
$x \in X^{+} + Y^{-}$ and $s \in \state{\mathsf{M}}$,
$\tran{\mathsf{M}}(x,s) =(y,t)$ means that when the current state of
$\mathsf{M}$ is $s$, given an input $x$, there is an output $y$ and
the next state is $t$.

Readers may wonder why $X^{-}$ appears in the target and $Y^{-}$
appears in the source of the transition function of a Mealy machine
from $\mathsf{X}$ to $\mathsf{Y}$. In short, this is because we are
interested in Mealy machines that handle bidirectional computation.
The diagrammatic presentation of Mealy machines clarifies the meaning of
``bidirectional.'' Let
$\mathsf{M} \colon \mathsf{X} \multimap \mathsf{Y}$ be a Mealy
machine. In this paper, we depict $\mathsf{M}$ as follows:
\begin{center}
  \begin{tikzpicture}
    \node (M) [draw,circle,obj] at (0,0) {$\mathsf{M}$};
    \draw (M) -- node[above,obj] {$\mathsf{Y}$} ++ (2,0);
    \draw (M) -- node[above,obj] {$\mathsf{X}$} ++ (-2,0);
  \end{tikzpicture}
  .
\end{center}
Intuitively, each label on an edge indicates the type of data
traveling along the edge. Namely, on the $\mathsf{X}$-edge (on the
$\mathsf{Y}$-edge), elements in $X^{+}$ (in $Y^{+}$) go from left to
right, and elements in $X^{-}$ (in $Y^{-}$) go from right to left. For
example, we depict the following transitions
\begin{align*}
  \tran{\mathsf{M}}((\mm,y),s_{0}) = ((\mm,x),s_{1}),
                                     \hspace{5pt}
  \tran{\mathsf{M}}((\mm,y'),s_{0}) = ((\hh,y''),s_{2})
\end{align*}
for some $y,y' \in Y^{-}$, $x \in X^{-}$,
$y'' \in Y^{+}$ and $s_{0},s_{1},s_{2} \in \state{\mathsf{M}}$ as the
following thick arrows
\begin{center}
  \begin{tikzpicture}
    \node (M) [draw,circle,obj] at (0,0) {$\mathsf{M}$};
    \node [obj] at (0.02,0.45) {$s_{0}/s_{1}$};
    \draw (M) -- node[above,obj] {$\mathsf{Y}$} ++ (1.5,0);
    \draw (M) -- node[above,obj] {$\mathsf{X}$} ++ (-1.5,0);
    \draw[thick] (1.5,-0.15) node[right,label] {$y$} -- ++ (-1.5,0);
    \draw[thick,->] (0,-0.15) -- ++ (-1.5,0) node[left,label] {$x$};

    \begin{scope}[xshift=4cm]
      \node (M) [draw,circle,obj] at (0,0) {$\mathsf{M}$};
      \node [obj] at (0.02,0.45) {$s_{0}/s_{2}$};
      \draw (M) -- node[above=4,obj] {$\mathsf{Y}$} ++ (1.5,0);
      \draw (M) -- node[above,obj] {$\mathsf{X}$} ++ (-1.5,0);
      \draw[thick] (1.5,-0.15) node[right,label] {$y'$} -- ++ (-1.2,0);
      \draw[thick,<-] (1.5,0.15) node[right,label] {$y''$} -- ++ (-1.2,0);
      \draw[thick] (0.3,0.15) arc (90:270:0.15);
    \end{scope}
  \end{tikzpicture}
  .
\end{center}
(Recall that the white/black bullet indicates the left/right part of
the disjoint sum.) 
The expressions $s_{0}/s_{1}$ and $s_{0}/s_{2}$ on the Mealy machine
$\mathsf{M}$ stands for transitions of states. We omit states
transitions when we can infer them.

We will give some Mealy machines whose state spaces are trivial,
namely $1$. We call such a Mealy machine \emph{token machine}. Our
usage of the term token machine is along the lines of that in other
papers on GoI such as \cite{mackie1995,laurent2001}. Since we can
identify the transition function of a token machine $\mathsf{M} \colon
\mathsf{X} \multimap \mathsf{Y}$ with the following partial measurable
function
\begin{equation*}
  X^{+} + Y^{-} \cong (X^{+} + Y^{-}) \times 1
  \xrightarrow{\tran{\mathsf{M}}} (Y^{+} + X^{-}) \times 1
  \cong Y^{+} + X^{-},
\end{equation*}
giving partial measurable function of this type
is enough to specify a token machine.
\begin{convention}
  We define a token machine
  $\mathsf{M} \colon \mathsf{X} \multimap \mathsf{Y}$ by giving a
  partial measurable function from $X^{+} + Y^{-}$ to $Y^{+} + X^{-}$,
  and we also call this partial measurable function \emph{transition
    function} of $\mathsf{M}$. Abusing notation, we write
  $\tran{\mathsf{M}}$ for this transition function.
\end{convention}

\subsection{Behavioural Equivalence}
\label{sec:bhe}

We are now ready to give an equivalence relation between Mealy machines
which identifies machines which \emph{behave the same way}. Identifying
Mealy machines in terms of their behaviour is important to reason about
compositions of Mealy machines in the following part of this paper.
Here, we are inspired by behavioural equivalence from coalgebraic theory of
modelling transition systems \cite{jacobs2016}.

Let $\mathsf{M}$ and $\mathsf{N}$ be Mealy machines from $\mathsf{X}$
to $\mathsf{Y}$. We write
$\mathsf{M} \preceq_{\mathsf{X},\mathsf{Y}} \mathsf{N}$ when there is
a measurable function
$f \colon \state{\mathsf{M}} \to \state{\mathsf{N}}$ satisfying
$f(\init{\mathsf{M}}) = \init{\mathsf{N}}$ and
\begin{equation*}
  \xymatrix@C=10mm@R=5mm{
    (X^{+} + Y^{-}) \times \state{\mathsf{M}}
    \ar[r]^-{\id \times f}
    \ar[d]_{\tran{\mathsf{M}}}
    & (X^{+} + Y^{-}) \times \state{\mathsf{N}}
    \ar[d]^{\tran{\mathsf{N}}}
    \\
    (Y^{+} + X^{-}) \times \state{\mathsf{M}}
    \ar[r]^-{\id \times f}
    &
    (Y^{+} + X^{-}) \times \state{\mathsf{N}}
    \nulldot
  }
\end{equation*}
The definition means that if we have $\mathsf{M}
\preceq_{\mathsf{X},\mathsf{Y}} \mathsf{N}$, then no observer can
distinguish between $\mathsf{M}$ and $\mathsf{N}$ from
their input/output behaviour, although their internal structure can be
quite different. We define an equivalence relation
$\simeq_{\mathsf{X},\mathsf{Y}}$ to be the reflective symmetric
transitive closure of $\preceq_{\mathsf{X},\mathsf{Y}}$. Below, if we
can infer the subscript $\mathsf{X},\mathsf{Y}$ from
the context, we write $\simeq$ instead of $\simeq_{\mathsf{X},\mathsf{Y}}$.
\begin{definition}
  For Mealy machines
  $\mathsf{M},\mathsf{N} \colon \mathsf{X} \multimap \mathsf{Y}$, we
  say that $\mathsf{M}$ is \emph{behaviourally equivalent} to
  $\mathsf{N}$ when $\mathsf{M} \simeq \mathsf{N}$.
\end{definition}

For a Mealy machine $\mathsf{M} \colon \mathsf{X} \multimap \mathsf{Y}$,
we write $[\mathsf{M}]$ for its equivalence class with respect to
behavioural equivalence. We define a binary relation $\leq$ between
equivalence classes of Mealy machines from $\mathsf{X}$ to $\mathsf{Y}$
by $[\mathsf{M}] \leq [\mathsf{N}]$ if and only if
there are $\mathsf{M}' \simeq \mathsf{M}$ and
$\mathsf{N}' \simeq \mathsf{N}$ such that
$\state{\mathsf{M}'} = \state{\mathsf{N}'}$ and
$\init{\mathsf{M}'} = \init{\mathsf{N}'}$, and the graph
relation of $\tran{\mathsf{M}'}$ is a subset of the graph
relation of $\tran{\mathsf{N}'}$.
\begin{proposition}\label{prop:wcpo}
  The set of equivalence classes for $\simeq_{\mathsf{X},\mathsf{Y}}$
  with $\leq$ is a pointed $\omega$cpo.
\end{proposition}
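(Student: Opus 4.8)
The plan is to reduce the order $\leq$ on behavioural‑equivalence classes to the graph‑inclusion order on a concrete set of partial functions, for which the pointed‑$\omega$cpo structure is transparent. Write $I = X^{+}+Y^{-}$ and $O = Y^{+}+X^{-}$, so that $\tran{\mathsf{M}}$ is a partial measurable function $I \times \state{\mathsf{M}} \to O \times \state{\mathsf{M}}$. To each Mealy machine $\mathsf{M}\colon \mathsf{X}\multimap\mathsf{Y}$ I associate its \emph{behaviour}, a partial function $\beta_{\mathsf{M}}$ from the measurable space $I^{+}=\coprod_{n\geq 1} I^{n}$ of non‑empty input sequences to $O$: feeding $x_{1}\cdots x_{k}$ into $\mathsf{M}$ from $\init{\mathsf{M}}$ produces, when all intermediate transitions are defined, a last output in $O$, which is $\beta_{\mathsf{M}}(x_{1}\cdots x_{k})$. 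Measurability of $\beta_{\mathsf{M}}$ follows because the $n$‑step run map $I^{n}\to O\times\state{\mathsf{M}}$ is a finite composite of $\tran{\mathsf{M}}$ with projections and the insertion of the constant $\init{\mathsf{M}}$, hence is partial measurable, and partial measurable functions are closed under countable coproducts.

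The two easy halves of the reduction are the following. If $f$ witnesses $\mathsf{M}\preceq\mathsf{N}$, then an induction on the length of the input sequence, using the commuting square and $f(\init{\mathsf{M}})=\init{\mathsf{N}}$, shows $\beta_{\mathsf{M}}=\beta_{\mathsf{N}}$; hence $\beta_{(-)}$ is constant on $\simeq$‑classes. And if $\mathsf{M}'$ and $\mathsf{N}'$ share state space and initial state and the graph of $\tran{\mathsf{M}'}$ is contained in that of $\tran{\mathsf{N}'}$, then every successful run of $\mathsf{M}'$ is a successful run of $\mathsf{N}'$ with the same outputs, so $\beta_{\mathsf{M}'}\subseteq\beta_{\mathsf{N}'}$ as partial functions. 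Together these give $[\mathsf{M}]\leq[\mathsf{N}]\Rightarrow\beta_{\mathsf{M}}\subseteq\beta_{\mathsf{N}}$.

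The crucial step is the converse, obtained through a canonical \emph{history realization}. For a machine $\mathsf{M}$ let $D_{\mathsf{M}}\subseteq I^{*}=\coprod_{n\geq 0}I^{n}$ be the set of input sequences on which the run from $\init{\mathsf{M}}$ succeeds; it is prefix‑closed, contains the empty sequence $\varepsilon$, and is measurable because each $D_{\mathsf{M}}\cap I^{n}$ is the domain of a partial measurable run map. Define $\widetilde{\mathsf{M}}$ to have state space $D_{\mathsf{M}}$, initial state $\varepsilon$, and transition sending $(x,w)$ to $(\beta_{\mathsf{M}}(wx),\,wx)$ when $wx\in D_{\mathsf{M}}$ and undefined otherwise. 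The map sending a history to the state it reaches in $\mathsf{M}$ is a total measurable homomorphism $\widetilde{\mathsf{M}}\to\mathsf{M}$, so $\widetilde{\mathsf{M}}\simeq\mathsf{M}$, and by prefix‑closedness the behaviour of $\widetilde{\mathsf{M}}$ is again $\beta_{\mathsf{M}}$. If $\beta_{\mathsf{M}}\subseteq\beta_{\mathsf{N}}$ then $D_{\mathsf{M}}\subseteq D_{\mathsf{N}}$, and placing a copy of $\widetilde{\mathsf{M}}$ on the larger state space $D_{\mathsf{N}}$ (transition still undefined outside $D_{\mathsf{M}}$) yields $\mathsf{M}^{*}\simeq\mathsf{M}$ sharing state space and initial state with $\widetilde{\mathsf{N}}\simeq\mathsf{N}$ and whose transition graph is contained in that of $\widetilde{\mathsf{N}}$, since the two transitions agree on $D_{\mathsf{M}}$ by $\beta_{\mathsf{M}}\subseteq\beta_{\mathsf{N}}$. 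This gives the order isomorphism
\begin{equation*}
  [\mathsf{M}]\leq[\mathsf{N}]\iff\beta_{\mathsf{M}}\subseteq\beta_{\mathsf{N}}.
\end{equation*}

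With this characterization everything is routine. The biconditional makes $[\mathsf{M}]\mapsto\beta_{\mathsf{M}}$ an order embedding into the partial functions $I^{+}\rightharpoonup O$ ordered by graph inclusion, so $\leq$ is reflexive, transitive, and antisymmetric, i.e. a partial order; the nowhere‑defined machine realizes the empty behaviour and is therefore the least element. For an ascending chain $[\mathsf{M}_{0}]\leq[\mathsf{M}_{1}]\leq\cdots$ the behaviours form an increasing chain $\beta_{\mathsf{M}_{0}}\subseteq\beta_{\mathsf{M}_{1}}\subseteq\cdots$ whose union $\beta_{\infty}$ is again partial measurable, since its domain is a countable increasing union of measurable sets and the pieces agree on overlaps; realizing $\beta_{\infty}$ by the history construction on $D_{\infty}=\bigcup_{n}D_{\mathsf{M}_{n}}$ produces $\mathsf{M}_{\infty}$ with $\beta_{\mathsf{M}_{\infty}}=\beta_{\infty}$, and the characterization makes $[\mathsf{M}_{\infty}]$ the least upper bound. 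I expect the main obstacle to be the bookkeeping in the history realization — verifying measurability of $D_{\mathsf{M}}$, of $\beta_{\mathsf{M}}$, and of the homomorphism $\widetilde{\mathsf{M}}\to\mathsf{M}$, and that an increasing union of partial measurable functions is again partial measurable — rather than any conceptual difficulty; it is exactly here that closure of the $\sigma$‑algebra under countable unions is used.
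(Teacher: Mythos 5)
Your proof is correct and follows essentially the same route as the paper's own argument: both reduce the order $\leq$ on equivalence classes to graph inclusion of a canonical behaviour function $\beta_{\mathsf{M}}$ computed along input histories, realize each class canonically by a history/unfolding machine, characterize $[\mathsf{M}]\leq[\mathsf{N}]$ by inclusion of behaviours, and obtain least upper bounds of $\omega$-chains by taking unions of behaviours realized again as history machines. The only notable deviation is a mild technical simplification on your side: restricting the history machine's state space to the successful histories $D_{\mathsf{M}}$ gives you a direct homomorphism $\widetilde{\mathsf{M}}\to\mathsf{M}$, whereas the paper's history machine $\mathsf{M}^{\#}$ lives on all finite sequences and is related to $\mathsf{M}$ via a zig-zag $\mathsf{M}\preceq\mathsf{M}^{@}\succeq\mathsf{M}^{\#}$ through an auxiliary machine with an added sink state (and you also spell out pointedness, which the paper leaves implicit).
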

We can characterize interpretation of the fixed point operator in
$\PCFSS$ in terms of least fixed points, see \cite{dlh2019}. \longv{We
  give a proof of Proposition~\ref{prop:wcpo} in
  Section~\ref{sec:wcpo}.}

\longv{
  \subsection{Proof of Proposition~\ref{prop:wcpo}}
  \label{sec:wcpo}

  For a partially defined expressions $E$ and $E'$, we write
  $E \approx E'$ when $E$ is defined if and only if $E'$ is defined,
  and if both expressions are defined, then they are the same. For
  example, we have $(1 - x)^{-1} \approx \sum_{n = 0}^{\infty} x^{n}$
  for all $x \in \mathbb{R}_{[0,1]}$. For a measurable space $X$, we
  write $LX$ for the measurable space of all finite sequences over $X$
  endowed with the following $\sigma$-algebra:
  \begin{equation*}
    A \in \Sigma_{LX} \iff
    \textnormal{ for all } n \in \mathbb{N},\,
    A \cap X^{n} \in \Sigma_{X^{n}}.
  \end{equation*}
  We write $\varepsilon$ for the empty sequence. For $a \in X$ and
  $u \in LX$, we denote the list obtained by appending $a$ to $u$ by
  $a \cons u$. 

  Let $\mathsf{M} \colon \mathsf{X} \multimap \mathsf{Y}$ be a Mealy
  machine. We write $Z$ for $X^{+} + Y^{-}$ and $W$ for
  $Y^{+} + X^{-}$. Then, the transition function of $\mathsf{M}$ is of
  the form
  \begin{equation*}
    \tran{\mathsf{M}} \colon Z \times \state{\mathsf{M}}
    \to W \times \state{\mathsf{M}}.
  \end{equation*}
  We define partial measurable functions
  $\alpha_{\mathsf{M}} \colon LZ \to \state{\mathsf{M}}$ and
  $\beta_{\mathsf{M}} \colon Z \times LZ \to W$ by
  \begin{align*}
    \alpha_{\mathsf{M}}(\varepsilon)
    &= \init{\mathsf{M}},\\
    \alpha_{\mathsf{M}}(z \cons u)
    &= 
      \begin{cases}
        s,
        & \textnormal{if } \alpha_{\mathsf{M}}(u) \textnormal{ and }
        \tran{\mathsf{M}}(z,\alpha_{\mathsf{M}}(u)) \textnormal{ are defined and } \\
        & \tran{\mathsf{M}}(z,\alpha_{\mathsf{M}}(u)) = (w,s)
        \textnormal{ for some } w \in W, \\
        \textnormal{undefined},
        & \textnormal{otherwise},
      \end{cases} \\
    \beta_{\mathsf{M}}(z,u)
    &= 
      \begin{cases}
        w,
        & \textnormal{if } \alpha_{\mathsf{M}}(u) \textnormal{ and }
        \tran{\mathsf{M}}(z,\alpha_{\mathsf{M}}(u)) \textnormal{ are defined and } \\
        & \tran{\mathsf{M}}(z,\alpha_{\mathsf{M}}(u)) = (w,s)
        \textnormal{ for some } s \in \state{\mathsf{M}}, \\
        \textnormal{undefined},
        & \textnormal{otherwise}.
      \end{cases}
  \end{align*}
  Below, for $x \in W \times \state{\mathsf{M}}$, we write
  $\mathrm{fst}(x)$ for the first entry of $x$, and we write
  $\mathrm{snd}(x)$ for the second entry of $x$. By the definition of
  $\alpha_{\mathsf{M}}$ and $\beta_{\mathsf{M}}$, we have
  \begin{equation*}
    \alpha_{\mathsf{M}}(z \cons u)
    \approx
    \mathrm{snd}(\tran{\mathsf{M}}(z,\alpha_{\mathsf{M}}(u))),
    \qquad
    \beta_{\mathsf{M}}(z, u)
    \approx
    \mathrm{fst}(\tran{\mathsf{M}}(z,\alpha_{\mathsf{M}}(u))).
  \end{equation*}

  \begin{lemma}\label{lem:preceq->beta}
    If $\mathsf{M} \preceq \mathsf{N}$, then
    $\beta_{\mathsf{M}} = \beta_{\mathsf{N}}$.
  \end{lemma}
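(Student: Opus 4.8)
The plan is to reduce the statement to an auxiliary intertwining property of the state-reaching maps $\alpha$, and then read off the claim about $\beta$ from the fact that $\id \times f$ is the identity on the first component. Concretely, writing $Z = X^{+} + Y^{-}$ and $W = Y^{+} + X^{-}$, I would first prove that the witnessing function $f$ satisfies $f(\alpha_{\mathsf{M}}(u)) \approx \alpha_{\mathsf{N}}(u)$ for all $u \in LZ$, and then deduce $\beta_{\mathsf{M}} = \beta_{\mathsf{N}}$.

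First I would unwind $\mathsf{M} \preceq \mathsf{N}$: it supplies a measurable $f \colon \state{\mathsf{M}} \to \state{\mathsf{N}}$ with $f(\init{\mathsf{M}}) = \init{\mathsf{N}}$ such that the square for $\preceq$ commutes, i.e. $\tran{\mathsf{N}} \circ (\id \times f) \approx (\id \times f) \circ \tran{\mathsf{M}}$. Since $f$ is total, this equality of partial functions says exactly: for every $(z,s)$, the value $\tran{\mathsf{M}}(z,s)$ is defined if and only if $\tran{\mathsf{N}}(z,f(s))$ is, and whenever $\tran{\mathsf{M}}(z,s) = (w,s')$ we have $\tran{\mathsf{N}}(z,f(s)) = (w,f(s'))$. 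I would record this reformulation explicitly, since it is the only form of the hypothesis used afterwards.

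Next I would prove $f(\alpha_{\mathsf{M}}(u)) \approx \alpha_{\mathsf{N}}(u)$ by induction on the length of $u$. The base case $u = \varepsilon$ is just $f(\init{\mathsf{M}}) = \init{\mathsf{N}}$. For $u = z \cons u'$, I use $\alpha_{\mathsf{M}}(z \cons u') \approx \mathrm{snd}(\tran{\mathsf{M}}(z,\alpha_{\mathsf{M}}(u')))$. If $\alpha_{\mathsf{M}}(u')$ is undefined, then by the induction hypothesis (and totality of $f$) so is $\alpha_{\mathsf{N}}(u')$, and both sides are undefined. If $\alpha_{\mathsf{M}}(u') = s$, the hypothesis gives $\alpha_{\mathsf{N}}(u') = f(s)$, and the reformulated commutation transports $\tran{\mathsf{M}}(z,s) = (w,s')$ to $\tran{\mathsf{N}}(z,f(s)) = (w,f(s'))$ with matching definedness, so $f(\alpha_{\mathsf{M}}(z \cons u')) = f(s') = \alpha_{\mathsf{N}}(z \cons u')$. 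Finally, for $\beta$ I split on the same two cases: when $\alpha_{\mathsf{M}}(u)$ is undefined both $\beta$-values are undefined, and when $\alpha_{\mathsf{M}}(u) = s$ I substitute $\alpha_{\mathsf{N}}(u) = f(s)$ into $\beta_{\mathsf{N}}(z,u) \approx \mathrm{fst}(\tran{\mathsf{N}}(z,f(s)))$ and read off that its first component equals $w = \mathrm{fst}(\tran{\mathsf{M}}(z,s)) \approx \beta_{\mathsf{M}}(z,u)$.

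The only real care point is bookkeeping around partiality: one must treat the $\preceq$-square as an equality of \emph{partial} functions (so that definedness is part of the conclusion), and exploit that $f$ is total so that $f \circ \alpha_{\mathsf{M}}$ is defined exactly where $\alpha_{\mathsf{M}}$ is. With that convention fixed, the case analyses close cleanly; there is no measure-theoretic subtlety, since we only need equality of the underlying partial maps and not measurability of $\beta$.
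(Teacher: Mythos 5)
Your proposal is correct and follows essentially the same route as the paper's proof: both exploit the commuting square defining $\preceq$ (read as an equality of partial functions, so that definedness transfers), establish the intertwining $f(\alpha_{\mathsf{M}}(u)) \approx \alpha_{\mathsf{N}}(u)$ by induction on $u$, and then read off $\beta_{\mathsf{M}} = \beta_{\mathsf{N}}$ from the fact that the $W$-component is untouched by $\id \times f$. The only cosmetic difference is that the paper carries the $\alpha$- and $\beta$-claims through the induction simultaneously, whereas you prove the $\alpha$-intertwining first and derive the $\beta$-equality afterwards, which is a harmless reorganization since the $\beta$-claim at stage $u$ depends only on the $\alpha$-claim at the same stage.
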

  \begin{proof}
    Let $h \colon \state{\mathsf{M}} \to \state{\mathsf{N}}$ be a
    measurable function that realizes $\mathsf{M} \preceq \mathsf{N}$.
    We show $h(\alpha_{\mathsf{M}}(u)) \approx \alpha_{\mathsf{N}}(u)$
    and $\beta_{\mathsf{M}}(z,u) \approx \beta_{\mathsf{N}}(z,u)$ by
    induction on the size of $u$. (Base case)
    \begin{align*}
      h(\alpha_{\mathsf{M}}(\varepsilon))
      &= h(\init{\mathsf{M}}) \\
      &= \init{\mathsf{N}} \\
      &= \alpha_{\mathsf{N}}(\varepsilon)
    \end{align*}
    \begin{align*}
      \beta_{\mathsf{M}}(z,\varepsilon)
      &\approx \mathrm{fst}(\tran{\mathsf{M}}(z,\alpha_{\mathsf{M}}(\varepsilon))) \\
      &\approx \mathrm{fst}((W \times h)(\tran{\mathsf{M}}(z,\alpha_{\mathsf{M}}(\varepsilon)))) \\
      &\approx \mathrm{fst}(\tran{\mathsf{N}}(z,h(\alpha_{\mathsf{M}}(\varepsilon)))) \\
      &\approx \mathrm{fst}(\tran{\mathsf{N}}(z,\alpha_{\mathsf{N}}(\varepsilon))) \\
      &\approx \beta_{\mathsf{N}}(z,\varepsilon).
    \end{align*}
    (Induction step)
    \begin{align*}
      h(\alpha_{\mathsf{M}}(z \cons u))
      &\approx
        h(\mathrm{snd}(\tran{\mathsf{M}}(z,\alpha_{\mathsf{M}}(u)))) \\
      &\approx
        \mathrm{snd}((W \times h)(\tran{\mathsf{M}}(z,\alpha_{\mathsf{M}}(u)))) \\
      &\approx
        \mathrm{snd}(\tran{\mathsf{N}}(z,h(\alpha_{\mathsf{M}}(u)))) \\
      &\approx
        \mathrm{snd}(\tran{\mathsf{N}}(z,\alpha_{\mathsf{N}}(u))) \\
      &\approx
        \alpha_{\mathsf{N}}(z \cons u).
    \end{align*}
    \begin{align*}
      \beta_{\mathsf{M}}(z',z \cons u)
      &\approx \mathrm{fst}(\tran{\mathsf{M}}(z',\alpha_{\mathsf{M}}(z \cons u))) \\
      &\approx \mathrm{fst}((W \times h)(\tran{\mathsf{M}}(z',\alpha_{\mathsf{M}}(z \cons u)))) \\
      &\approx \mathrm{fst}(\tran{\mathsf{N}}(z',h(\alpha_{\mathsf{M}}(z \cons u)))) \\
      &\approx \mathrm{fst}(\tran{\mathsf{N}}(z',\alpha_{\mathsf{N}}(z \cons u))) \\
      &\approx \beta_{\mathsf{N}}(z',z \cons u).
    \end{align*}
  \end{proof}

  For a Mealy machine
  $\mathsf{M} \colon \mathsf{X} \multimap \mathsf{Y}$, we define Mealy
  machines $\mathsf{M}^{\#},\mathsf{M}^{@} \colon \mathsf{X} \multimap \mathsf{Y}$ by
  \begin{varitemize}
  \item $\state{\mathsf{M}^{\#}} = LZ$,
  \item $\init{\mathsf{M}^{\#}} = \varepsilon$,
  \item
    \begin{math}
      \tran{\mathsf{M}^{\#}}(z,u)
      =
      \begin{cases}
        (\beta_{\mathsf{M}}(z,u),z \cons u)
        , & \textnormal{if } \beta_{\mathsf{M}}(z,u)
        \textnormal{ is defined,} \\
        \textnormal{undefined}, & \textnormal{otherwise}, \\
      \end{cases}
    \end{math}
  \end{varitemize}
  and
  \begin{varitemize}
  \item
    $\state{\mathsf{M}^{@}} = \{\diamond\} \cup \state{\mathsf{M}}$,
  \item $\init{\mathsf{M}^{@}} = \init{\mathsf{M}}$,
  \item
    \begin{math}
      \tran{\mathsf{M}^{@}}(z,s) =
      \begin{cases}
        \tran{\mathsf{M}}(z,s) , & \textnormal{if } s \in
        \state{\mathsf{M}} \textnormal{ and } \tran{\mathsf{M}}(z,s)
        \textnormal{ is defined,} \\
        \textnormal{undefined}, & \textnormal{if } u = \diamond, \\
        \textnormal{undefined}, & \textnormal{otherwise}.
      \end{cases}
    \end{math}
  \end{varitemize}
  Here, the $\sigma$-algebra of $\state{\mathsf{M}^{@}}$ is the one
  induced by $\Sigma_{1 + \state{\mathsf{M}}}$ via the obvious
  bijection between $1 + \state{\mathsf{M}}$ and
  $\{\diamond\} \cup \state{\mathsf{M}}$.
  \begin{lemma}\label{lem:sharp}
    $\mathsf{M} \preceq \mathsf{M}^{@} \succeq \mathsf{M}^{\#}$.
  \end{lemma}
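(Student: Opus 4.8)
The plan is to establish the two comparisons $\mathsf{M} \preceq \mathsf{M}^{@}$ and $\mathsf{M}^{\#} \preceq \mathsf{M}^{@}$ separately, in each case exhibiting a measurable map between state spaces that preserves the initial state and makes the defining square of $\preceq$ commute (as partial functions, in the sense of $\approx$). For the left comparison the witnessing map will be the obvious coprojection $S_{\mathsf{M}} \hookrightarrow \{\diamond\} \cup S_{\mathsf{M}} = \state{\mathsf{M}^{@}}$; for the right comparison it will be the map $g \colon LZ \to \{\diamond\} \cup S_{\mathsf{M}}$ sending a history $u$ to the state $\alpha_{\mathsf{M}}(u)$ it drives $\mathsf{M}$ into, and to $\diamond$ when that run gets stuck. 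Intuitively $\mathsf{M}^{@}$ is just $\mathsf{M}$ with an added dead state, $\mathsf{M}^{\#}$ replays the whole input history at each step, and $g$ collapses a history to the state it reaches.

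For $\mathsf{M} \preceq \mathsf{M}^{@}$, I would take $f$ to be the inclusion, which is measurable and satisfies $f(\init{\mathsf{M}}) = \init{\mathsf{M}^{@}}$ since $\init{\mathsf{M}^{@}} = \init{\mathsf{M}}$. Commutativity of the square is immediate from the definition of $\tran{\mathsf{M}^{@}}$: on states lying in $S_{\mathsf{M}}$ it restricts to $\tran{\mathsf{M}}$ (with the same domain of definition), and since $\tran{\mathsf{M}}$ always lands in $W \times S_{\mathsf{M}}$, postcomposing with $\id \times f$ changes nothing. Hence $\tran{\mathsf{M}^{@}} \circ (\id \times f) \approx (\id \times f) \circ \tran{\mathsf{M}}$.

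For $\mathsf{M}^{\#} \preceq \mathsf{M}^{@}$, the verification of the square is the crux. The key observation I would record first is that, because $\alpha_{\mathsf{M}}(z \cons u) \approx \mathrm{snd}(\tran{\mathsf{M}}(z,\alpha_{\mathsf{M}}(u)))$ and $\beta_{\mathsf{M}}(z,u) \approx \mathrm{fst}(\tran{\mathsf{M}}(z,\alpha_{\mathsf{M}}(u)))$, the three partial expressions $\beta_{\mathsf{M}}(z,u)$, $\alpha_{\mathsf{M}}(z \cons u)$, and $\tran{\mathsf{M}}(z,\alpha_{\mathsf{M}}(u))$ share the same domain of definition. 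When they are defined, $\tran{\mathsf{M}^{@}}(z,g(u)) = \tran{\mathsf{M}}(z,\alpha_{\mathsf{M}}(u)) = (\beta_{\mathsf{M}}(z,u),\alpha_{\mathsf{M}}(z \cons u))$, which matches $(\id \times g)(\tran{\mathsf{M}^{\#}}(z,u)) = (\beta_{\mathsf{M}}(z,u), g(z \cons u))$ because $g(z \cons u) = \alpha_{\mathsf{M}}(z \cons u)$ there. When they are undefined, the right-bottom composite is undefined (as $\tran{\mathsf{M}^{\#}}$ is), and so is $\tran{\mathsf{M}^{@}}(z,g(u))$: either $\alpha_{\mathsf{M}}(u)$ is already undefined, so $g(u) = \diamond$ and $\tran{\mathsf{M}^{@}}(z,\diamond)$ is undefined, or $\alpha_{\mathsf{M}}(u)$ is defined but $\tran{\mathsf{M}}(z,\alpha_{\mathsf{M}}(u))$ is not, so $\tran{\mathsf{M}^{@}}(z,g(u)) = \tran{\mathsf{M}}(z,\alpha_{\mathsf{M}}(u))$ is undefined. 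Together with $g(\varepsilon) = \alpha_{\mathsf{M}}(\varepsilon) = \init{\mathsf{M}} = \init{\mathsf{M}^{@}}$, this gives $\mathsf{M}^{\#} \preceq \mathsf{M}^{@}$.

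The step I expect to be the main obstacle is the measurability of $g$. I would first argue that $\alpha_{\mathsf{M}}$ is a partial measurable function by induction on list length: its restriction to $Z^{n}$ is obtained from $\tran{\mathsf{M}}$ by finitely many compositions and products together with the projection $\mathrm{snd}$, all of which preserve partial measurability, and the $\sigma$-algebra on $LZ$ is precisely the one for which a set is measurable iff its intersection with each $Z^{n}$ is measurable, so these restrictions assemble into a partial measurable $\alpha_{\mathsf{M}}$ (and likewise $\beta_{\mathsf{M}}$). Measurability of $g$ then follows by inspecting the coproduct $\sigma$-algebra of $1 + S_{\mathsf{M}}$: a measurable subset of $\{\diamond\} \cup S_{\mathsf{M}}$ is either some $B \in \Sigma_{S_{\mathsf{M}}}$, whose $g$-preimage is $\alpha_{\mathsf{M}}^{-1}(B)$, or $\{\diamond\} \cup B$, whose $g$-preimage is the complement of the (measurable) domain of definition of $\alpha_{\mathsf{M}}$ united with $\alpha_{\mathsf{M}}^{-1}(B)$; both are measurable. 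This is the only point requiring genuine care, the rest being bookkeeping of definedness.
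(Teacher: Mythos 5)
Your proof is correct, and it rests on the same two witnessing maps as the paper's own proof: the inclusion $\state{\mathsf{M}} \to \{\diamond\}\cup\state{\mathsf{M}}$ for $\mathsf{M}\preceq\mathsf{M}^{@}$, and the history-collapse map (your $g$, the paper's $h$) for $\mathsf{M}^{\#}\preceq\mathsf{M}^{@}$. Where you genuinely depart from the paper is in how the commuting square for $\mathsf{M}^{\#}\preceq\mathsf{M}^{@}$ is verified: the paper argues by induction on the length of $u\in LZ$, running a long chain of $\approx$-steps and invoking the induction hypothesis midway, whereas you observe that the identities $\beta_{\mathsf{M}}(z,u)\approx\mathrm{fst}(\tran{\mathsf{M}}(z,\alpha_{\mathsf{M}}(u)))$ and $\alpha_{\mathsf{M}}(z\cons u)\approx\mathrm{snd}(\tran{\mathsf{M}}(z,\alpha_{\mathsf{M}}(u)))$ already package the recursion, so a single case split on definedness of $\tran{\mathsf{M}}(z,\alpha_{\mathsf{M}}(u))$ suffices; your two undefined subcases ($g(u)=\diamond$ versus $g(u)\in\state{\mathsf{M}}$ with the transition undefined) are exactly what is needed, so the induction is dispensable and your argument is cleaner. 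You also supply a proof that $g$ is measurable -- via partial measurability of $\alpha_{\mathsf{M}}$ restricted to each $Z^{n}$, the defining property of $\Sigma_{LZ}$, and inspection of the coproduct $\sigma$-algebra on $\{\diamond\}\cup\state{\mathsf{M}}$ -- a point the paper passes over silently when it introduces $h$ as ``a measurable function''. Both routes establish the lemma; yours trades the paper's mechanical induction for a sharper use of the defining equations and is the more self-contained of the two.
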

  \begin{proof}
    It is straightforward to check that the embedding
    $e \colon \state{\mathsf{M}} \to \{\diamond\}\cup
    \state{\mathsf{M}}$ is a measurable function that realizes
    $\mathsf{M} \preceq \mathsf{M}^{@}$. It remains to show
    $\mathsf{M}^{\#} \preceq \mathsf{M}^{@}$. We define a measurable
    function $h \colon LZ \to \{\diamond\} \cup \state{\mathsf{M}}$ by
    \begin{equation*}
      h(u)
      =
      \begin{cases}
        \alpha_{\mathsf{M}}(u), & \textnormal{if }
        \alpha_{\mathsf{M}}(u)
        \textnormal{ is defined}, \\
        \diamond, & \textnormal{otherwise}.
      \end{cases}
    \end{equation*}
    We show that for any $(z,u) \in Z \times LZ$,
    \begin{equation*}
      \tran{\mathsf{M}^{@}} ((Z \times h)(z,u))
      \approx
      (W \times h)(\tran{\mathsf{M}^{\#}}(z,u))
    \end{equation*}
    by induction on $u \in LZ$. (Base case)
    \begin{align*}
      \tran{\mathsf{M}^{@}}(z,h(\varepsilon))
      = \tran{\mathsf{M}^{@}}(z,\init{\mathsf{M}})
      = \tran{\mathsf{M}}(z,\init{\mathsf{M}})
      &\approx (\mathrm{fst}(\tran{\mathsf{M}}(z,\alpha_{\mathsf{M}}(\varepsilon))),
        \mathrm{snd}(\tran{\mathsf{M}}(z,\alpha_{\mathsf{M}}(\varepsilon)))) \\
      &\approx (\mathrm{fst}(\tran{\mathsf{M}}(z,\alpha_{\mathsf{M}}(\varepsilon))),
        \alpha_{\mathsf{M}}(z \cons \varepsilon)) \\
      &\approx (\beta_{\mathsf{M}}(z,\varepsilon),h(z \cons \varepsilon)) \\
      &\approx (W \times h)(\tran{\mathsf{M}^{\#}}(z,\varepsilon)).
    \end{align*}
    (Induction step)
    \begin{align*}
      \tran{\mathsf{M}^{@}}((Z \times h)(z,z' \cons u))
      &\approx
        \tran{\mathsf{M}}(z,\alpha_{\mathsf{M}}(z' \cons u)) \\
      &\approx
        \tran{\mathsf{M}}(z,\mathrm{snd}(\tran{\mathsf{M}}(z',\alpha_{\mathsf{M}}(u)))) \\
      &\approx
        \tran{\mathsf{M}^{@}}(z,\mathrm{snd}(\tran{\mathsf{M}}(z',\alpha_{\mathsf{M}}(u)))) \\
      &\approx
        \tran{\mathsf{M}^{@}}(z,\mathrm{snd}(\tran{\mathsf{M}^{@}}(z',h(u)))) \\
      &\approx
        \tran{\mathsf{M}^{@}}(z,h(\mathrm{snd}(\tran{\mathsf{M}^{\#}}(z',u)))) \\
      &\approx
        \tran{\mathsf{M}}(z,\alpha_{\mathsf{M}}(\mathrm{snd}(\tran{\mathsf{M}^{\#}}(z',u)))) \\
      &\approx
        \tran{\mathsf{M}}(z,\alpha_{\mathsf{M}}(z' \cons u)) \\
      &\approx
        (\beta_{\mathsf{M}}(z,z' \cons u),\alpha_{\mathsf{M}}(z \cons z' \cons u)) \\
      &\approx
        (\beta_{\mathsf{M}}(z,z' \cons u),h(z \cons z' \cons u)) \\
      &\approx
        (W \times h)(\tran{\mathsf{M}^{\#}}(z,z' \cons u)).
    \end{align*}
  \end{proof}
  \begin{proposition}
    For all Mealy machines $\mathsf{M},\mathsf{N} \colon \mathsf{X}
    \multimap \mathsf{Y}$, we have
    $\mathsf{M} \simeq \mathsf{N}$
    if and only if $\beta_{\mathsf{M}} = \beta_{\mathsf{N}}$.
  \end{proposition}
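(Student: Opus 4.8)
The plan is to prove the two directions separately, leaning on the two lemmas just established. For the forward implication, I would argue that $\beta$ is an invariant of the relation $\simeq$. By definition $\simeq_{\mathsf{X},\mathsf{Y}}$ is the reflexive symmetric transitive closure of $\preceq_{\mathsf{X},\mathsf{Y}}$, so $\mathsf{M} \simeq \mathsf{N}$ yields a finite zig-zag chain of machines in which any two consecutive ones are related by $\preceq$ in one direction or the other. Lemma~\ref{lem:preceq->beta} tells us that each single $\preceq$-step preserves $\beta$, and since equality of the $\beta$-functions is symmetric the orientation of each step is irrelevant. Reading $\beta$ off along the chain therefore gives $\beta_{\mathsf{M}} = \beta_{\mathsf{N}}$.

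For the converse, the key observation is that the machine $\mathsf{M}^{\#}$ defined before Lemma~\ref{lem:sharp} depends on $\mathsf{M}$ \emph{only through} $\beta_{\mathsf{M}}$: its state space is $LZ$, its initial state is $\varepsilon$, and its transition function is $\tran{\mathsf{M}^{\#}}(z,u) = (\beta_{\mathsf{M}}(z,u), z \cons u)$ whenever $\beta_{\mathsf{M}}(z,u)$ is defined and undefined otherwise. Hence the hypothesis $\beta_{\mathsf{M}} = \beta_{\mathsf{N}}$ forces $\mathsf{M}^{\#}$ and $\mathsf{N}^{\#}$ to be literally the \emph{same} Mealy machine, so in particular $\mathsf{M}^{\#} \simeq \mathsf{N}^{\#}$.

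It then remains to bridge each machine to its canonical form. Lemma~\ref{lem:sharp} supplies $\mathsf{M} \preceq \mathsf{M}^{@} \succeq \mathsf{M}^{\#}$; since $\simeq$ contains $\preceq$ and is symmetric and transitive, this yields $\mathsf{M} \simeq \mathsf{M}^{@} \simeq \mathsf{M}^{\#}$, hence $\mathsf{M} \simeq \mathsf{M}^{\#}$, and symmetrically $\mathsf{N} \simeq \mathsf{N}^{\#}$. Chaining these with the equality $\mathsf{M}^{\#} = \mathsf{N}^{\#}$ gives $\mathsf{M} \simeq \mathsf{M}^{\#} = \mathsf{N}^{\#} \simeq \mathsf{N}$, as required.

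As for difficulty, the argument is essentially bookkeeping once the two lemmas are in place: the real content has already been discharged in Lemma~\ref{lem:preceq->beta} (invariance of $\beta$ under $\preceq$) and in Lemma~\ref{lem:sharp} (the construction of the intermediary $\mathsf{M}^{@}$ and of the canonical $\mathsf{M}^{\#}$). The one point that deserves care is confirming that $\mathsf{M}^{\#}$ genuinely forgets everything about $\mathsf{M}$ beyond $\beta_{\mathsf{M}}$ --- that no residual dependence on $\state{\mathsf{M}}$ or $\init{\mathsf{M}}$ survives in its defining data --- since it is exactly this that lets $\mathsf{M}^{\#}$ serve as a canonical normal form indexed by $\beta$.
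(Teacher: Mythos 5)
Your proof is correct and follows essentially the same route as the paper's: the forward direction is the invariance of $\beta$ under $\preceq$-steps (Lemma~\ref{lem:preceq->beta}) extended along the zig-zag closure, and the converse observes that $(-)^{\#}$ depends only on $\beta$, so $\beta_{\mathsf{M}} = \beta_{\mathsf{N}}$ forces $\mathsf{M}^{\#} = \mathsf{N}^{\#}$, after which Lemma~\ref{lem:sharp} bridges each machine to its canonical form. Your write-up merely makes explicit the bookkeeping that the paper leaves implicit.
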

  \begin{proof}
    If $\mathsf{M} \simeq \mathsf{N}$, then we can show that
    $\beta_{\mathsf{M}} = \beta_{\mathsf{N}}$ by using
    Lemma~\ref{lem:preceq->beta}. If
    $\beta_{\mathsf{M}} = \beta_{\mathsf{N}}$, then we have
    $\mathsf{M}^{\#} = \mathsf{N}^{\#}$ by the definition of $(-)^{\#}$.
    Because we have $\mathsf{M} \simeq \mathsf{M}^{\#}$ and
    $\mathsf{N} \simeq \mathsf{N}^{\#}$ (Lemma~\ref{lem:sharp}), we see
    that $\mathsf{M}$ is behaviourally equivalent to $\mathsf{N}$.
  \end{proof}

  Hence, each equivalence class $[\mathsf{M}]$ of behavioural equivalence
  is represented by $\mathsf{M}^{\#}$, and $\mathsf{M}^{\#}$
  is independent of choice of $\mathsf{M}$. We extend this
  correspondence to order theoretic structure of Mealy machines.
  \begin{lemma}\label{lem:wcpo}
    Let $\mathsf{M},\mathsf{M}$ be Mealy machines from $\mathsf{X}$ to
    $\mathsf{Y}$ such that $\state{\mathsf{M}} = \state{\mathsf{N}}$.
    If $\tran{\mathsf{M}} \leq \tran{\mathsf{N}}$ and
    $\init{\mathsf{M}} = \init{\mathsf{N}}$, then
    $\tran{\mathsf{M}^{\#}} \leq \tran{\mathsf{N}^{\#}}$.
  \end{lemma}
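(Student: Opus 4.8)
The plan is to unfold the definition of $(-)^{\#}$ and reduce the claim to an ordering statement about the auxiliary partial functions $\alpha$ and $\beta$. Since $\mathsf{M}$ and $\mathsf{N}$ are both of type $\mathsf{X} \multimap \mathsf{Y}$, the sets $Z = X^{+} + Y^{-}$ and $W = Y^{+} + X^{-}$ coincide for the two machines, hence so does $LZ$. As $\state{\mathsf{M}} = \state{\mathsf{N}}$ by hypothesis, we get $\state{\mathsf{M}^{\#}} = LZ = \state{\mathsf{N}^{\#}}$ and $\init{\mathsf{M}^{\#}} = \varepsilon = \init{\mathsf{N}^{\#}}$. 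Thus $\tran{\mathsf{M}^{\#}} \leq \tran{\mathsf{N}^{\#}}$ is, by the meaning of $\leq$ on transition functions, just the assertion that the graph of $\tran{\mathsf{M}^{\#}}$ is contained in that of $\tran{\mathsf{N}^{\#}}$. Reading off $\tran{\mathsf{M}^{\#}}(z,u) = (\beta_{\mathsf{M}}(z,u), z \cons u)$, and noting that the second component $z \cons u$ does not depend on the machine, this inclusion holds as soon as $\beta_{\mathsf{M}} \leq \beta_{\mathsf{N}}$ (graph inclusion of partial functions). So it suffices to prove $\beta_{\mathsf{M}} \leq \beta_{\mathsf{N}}$.

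The crucial intermediate step is the analogous statement for $\alpha$: whenever $\alpha_{\mathsf{M}}(u)$ is defined, $\alpha_{\mathsf{N}}(u)$ is defined and equal to it. First I would prove this by induction on the length of $u \in LZ$. In the base case both values are $\init{\mathsf{M}} = \init{\mathsf{N}}$ by hypothesis. In the inductive step, assume $\alpha_{\mathsf{M}}(z \cons u)$ is defined; by the recursive clause this forces $\alpha_{\mathsf{M}}(u)$ to be defined with $\tran{\mathsf{M}}(z, \alpha_{\mathsf{M}}(u)) = (w,s)$ and $\alpha_{\mathsf{M}}(z \cons u) = s$. The induction hypothesis gives $\alpha_{\mathsf{N}}(u) = \alpha_{\mathsf{M}}(u)$ defined, and the assumption $\tran{\mathsf{M}} \leq \tran{\mathsf{N}}$ transports the value of the transition function, so $\tran{\mathsf{N}}(z, \alpha_{\mathsf{N}}(u)) = \tran{\mathsf{M}}(z, \alpha_{\mathsf{M}}(u)) = (w,s)$, whence $\alpha_{\mathsf{N}}(z \cons u) = s = \alpha_{\mathsf{M}}(z \cons u)$. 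Once $\alpha_{\mathsf{M}} \leq \alpha_{\mathsf{N}}$ is in hand, $\beta_{\mathsf{M}} \leq \beta_{\mathsf{N}}$ follows immediately by the same two ingredients: if $\beta_{\mathsf{M}}(z,u)$ is defined then $\alpha_{\mathsf{M}}(u)$ is defined and $\tran{\mathsf{M}}(z, \alpha_{\mathsf{M}}(u)) = (w,s)$ with value $w$; the $\alpha$-inclusion together with $\tran{\mathsf{M}} \leq \tran{\mathsf{N}}$ then give $\tran{\mathsf{N}}(z, \alpha_{\mathsf{N}}(u)) = (w,s)$, so $\beta_{\mathsf{N}}(z,u) = w$ as well.

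I expect the only delicate point to be the partiality bookkeeping rather than any genuine difficulty. The ordering $\leq$ is an order of \emph{extension} of partial functions, so at every stage one must track not merely agreement of values but the growth of domains of definition, i.e. that definedness of the $\mathsf{M}$-side expression entails definedness of the $\mathsf{N}$-side expression. The recursive definitions of $\alpha$ and $\beta$ are arranged precisely so that this propagates cleanly: each is defined exactly when its immediate subexpressions are, and monotonicity of $\tran{(-)}$ under $\leq$ carries both definedness and value along the recursion. Note that no measurability argument is needed here, since we are only comparing the graph relations of the already-constructed measurable transition functions, not producing new ones.
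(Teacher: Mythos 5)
Your proof is correct and takes essentially the same route as the paper's own (much terser) argument: an induction on the length of $u \in LZ$ showing that definedness of $\alpha_{\mathsf{M}}(u)$ implies $\alpha_{\mathsf{N}}(u)$ is defined and equal to it, after which the claim follows from the definition of $(-)^{\#}$. Your explicit reduction of $\tran{\mathsf{M}^{\#}} \leq \tran{\mathsf{N}^{\#}}$ to the graph inclusion $\beta_{\mathsf{M}} \leq \beta_{\mathsf{N}}$ merely spells out the step the paper leaves implicit.
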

  \begin{proof}
    By induction on the size of $u \in LZ$, we can show that if
    $\alpha_{\mathsf{M}}(u)$ is defined, then $\alpha_{\mathsf{N}}(u)$
    is defined and they are the same. Then
    $\tran{\mathsf{M}^{\#}} \leq \tran{\mathsf{N}^{\#}}$ follows from the
    definition of $(-)^{\#}$.
  \end{proof}
  \begin{theorem}\label{thm:leq}
    For Mealy machines $\mathsf{M},\mathsf{N}
    \colon \mathsf{X} \multimap \mathsf{Y}$,
    \begin{equation*}
      [\mathsf{M}] \leq [\mathsf{N}]
      \iff
      \tran{\mathsf{M}^{\#}} \leq \tran{\mathsf{N}^{\#}}.
    \end{equation*}
  \end{theorem}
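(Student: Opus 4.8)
The plan is to reduce both implications to two facts already available: that the normal form $\mathsf{M}^{\#}$ depends on $\mathsf{M}$ only through the partial function $\beta_{\mathsf{M}}$, and that $\beta_{\mathsf{M}}$ is a complete invariant for behavioural equivalence, so that $\mathsf{M} \simeq \mathsf{N}$ holds exactly when $\beta_{\mathsf{M}} = \beta_{\mathsf{N}}$. Throughout, recall that $\leq$ on partial transition functions means graph inclusion, in agreement with the definition of $[\mathsf{M}] \leq [\mathsf{N}]$.

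For the right-to-left direction I would simply exhibit $\mathsf{M}^{\#}$ and $\mathsf{N}^{\#}$ as the witnesses required by the definition of $[\mathsf{M}] \leq [\mathsf{N}]$. Since $\mathsf{M}$ and $\mathsf{N}$ share the type $\mathsf{X} \multimap \mathsf{Y}$, both normal forms have state space $LZ$ with $Z = X^{+} + Y^{-}$ and initial state $\varepsilon$, so $\state{\mathsf{M}^{\#}} = \state{\mathsf{N}^{\#}}$ and $\init{\mathsf{M}^{\#}} = \init{\mathsf{N}^{\#}}$. By Lemma~\ref{lem:sharp} we have $\mathsf{M} \simeq \mathsf{M}^{\#}$ and $\mathsf{N} \simeq \mathsf{N}^{\#}$, while the hypothesis $\tran{\mathsf{M}^{\#}} \leq \tran{\mathsf{N}^{\#}}$ supplies the required graph inclusion. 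Thus $[\mathsf{M}] \leq [\mathsf{N}]$ follows essentially by inspection.

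For the left-to-right direction I would unfold $[\mathsf{M}] \leq [\mathsf{N}]$ to obtain witnesses $\mathsf{M}' \simeq \mathsf{M}$ and $\mathsf{N}' \simeq \mathsf{N}$ with a common state space, a common initial state, and $\tran{\mathsf{M}'} \leq \tran{\mathsf{N}'}$. Applying Lemma~\ref{lem:wcpo} at once yields $\tran{{\mathsf{M}'}^{\#}} \leq \tran{{\mathsf{N}'}^{\#}}$. The key step is then to identify these with $\mathsf{M}^{\#}$ and $\mathsf{N}^{\#}$: from $\mathsf{M}' \simeq \mathsf{M}$ the $\beta$-characterisation gives $\beta_{\mathsf{M}'} = \beta_{\mathsf{M}}$, and since the $(-)^{\#}$ construction is built purely out of $\beta$, $LZ$ and $\varepsilon$, one obtains the literal equality ${\mathsf{M}'}^{\#} = \mathsf{M}^{\#}$, and likewise ${\mathsf{N}'}^{\#} = \mathsf{N}^{\#}$. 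Substituting concludes the argument.

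The main obstacle is isolated entirely in Lemma~\ref{lem:wcpo}, which is what makes graph inclusion of raw transition functions descend to the normal forms; its proof runs by induction on the length of $u \in LZ$, showing that whenever $\alpha_{\mathsf{M}}(u)$ is defined it agrees with $\alpha_{\mathsf{N}}(u)$. Granting that lemma, the theorem becomes bookkeeping: both directions turn on the observation that $(-)^{\#}$ produces canonical representatives that depend only on the behavioural invariant $\beta$ and that carry a common state space and initial state across all machines of a fixed type.
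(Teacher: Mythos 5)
Your proposal is correct and follows essentially the same route as the paper's own proof: the right-to-left direction uses $\mathsf{M}^{\#}, \mathsf{N}^{\#}$ as canonical witnesses (via Lemma~\ref{lem:sharp}), and the left-to-right direction unfolds the definition of $[\mathsf{M}] \leq [\mathsf{N}]$ and applies Lemma~\ref{lem:wcpo}, using representative-independence of $(-)^{\#}$. If anything, you make explicit the step ${\mathsf{M}'}^{\#} = \mathsf{M}^{\#}$ (via the $\beta$-invariant) that the paper leaves implicit.
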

  \begin{proof}
    If $\tran{\mathsf{M}^{\#}} \leq \tran{\mathsf{N}^{\#}}$, then
    because $\mathsf{M}^{\#}$ and $\mathsf{N}^{\#}$ are representatives
    of $[\mathsf{M}]$ and $[\mathsf{N}]$ respectively, we have
    $[\mathsf{M}] \leq [\mathsf{N}]$. If
    $[\mathsf{M}] \leq [\mathsf{N}]$, then there are
    $\mathsf{M}' \simeq \mathsf{M}$ and $\mathsf{N}' \simeq \mathsf{N}$
    such that
    \begin{varitemize}
    \item $\state{\mathsf{M}'} = \state{\mathsf{N}'}$ and $\init{\mathsf{M}'} = \init{\mathsf{N}'}$,
    \item the graph relation of $\tran{\mathsf{M}'}$ is a subset of the
      graph relation of $\tran{\mathsf{N}'}$.
    \end{varitemize}
    By Lemma~\ref{lem:wcpo}, we see that
    $\tran{{\mathsf{M}'}^{\#}} \leq \tran{{\mathsf{N}'}^{\#}}$.
  \end{proof}

  \begin{theorem}
    The set of equivalence classes
    of Mealy machines from $\mathsf{X}$ to $\mathsf{Y}$
    with the partial order $\leq$ is an $\omega$-cpo.
  \end{theorem}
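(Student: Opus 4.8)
The plan is to reduce the statement to a concrete order-theoretic fact about partial measurable functions, using the canonical representatives $\mathsf{M}^{\#}$ constructed above. By Theorem~\ref{thm:leq}, the assignment $[\mathsf{M}] \mapsto \tran{\mathsf{M}^{\#}}$ translates the order $\leq$ on equivalence classes into graph inclusion of transition functions; since $\mathsf{M}^{\#}$ is a canonical representative of $[\mathsf{M}]$ (Lemma~\ref{lem:sharp} together with the fact that $\simeq$ coincides with equality of $\beta$), this assignment is injective, and Theorem~\ref{thm:leq} makes it an order-isomorphism onto its image. Writing $Z = X^{+}+Y^{-}$ and $W = Y^{+}+X^{-}$, every $\tran{\mathsf{M}^{\#}}$ has the special shape $(z,u) \mapsto (\beta_{\mathsf{M}}(z,u), z \cons u)$, so the poset of equivalence classes is isomorphic to a set of partial measurable functions $Z \times LZ \to W$ under graph inclusion. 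First I would therefore show that $\omega$-chains in this image have least upper bounds, computed as unions of graphs; this yields chain-completeness, while the empty-graph machine provides the bottom element matching Proposition~\ref{prop:wcpo}. Antisymmetry of $\leq$ also follows from Theorem~\ref{thm:leq}, since mutual graph inclusion of $\tran{\mathsf{M}^{\#}}$ and $\tran{\mathsf{N}^{\#}}$ forces $\beta_{\mathsf{M}} = \beta_{\mathsf{N}}$, hence $\mathsf{M} \simeq \mathsf{N}$.

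Given an $\omega$-chain $[\mathsf{M}_{0}] \leq [\mathsf{M}_{1}] \leq \cdots$, Theorem~\ref{thm:leq} turns it into an ascending chain $\tran{\mathsf{M}_{0}^{\#}} \leq \tran{\mathsf{M}_{1}^{\#}} \leq \cdots$ of partial measurable functions, all sharing the state space $LZ$ and initial state $\varepsilon$ and having nested graphs. I would set $\tau$ to be the union of these graphs. Nestedness keeps $\tau$ single-valued, hence a partial function, and it is measurable because for any measurable $B$ one has $\tau^{-1}(B) = \bigcup_{n} \tran{\mathsf{M}_{n}^{\#}}^{-1}(B)$, a countable union of measurable sets (on common domains the functions agree). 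Moreover, since each $\tran{\mathsf{M}_{n}^{\#}}(z,u) = (\beta_{\mathsf{M}_{n}}(z,u), z \cons u)$ carries the same second component, $\tau$ retains the canonical shape $(z,u) \mapsto (\beta(z,u), z \cons u)$ with $\beta = \bigcup_{n} \beta_{\mathsf{M}_{n}}$. Let $\mathsf{M}$ be the Mealy machine with state space $LZ$, initial state $\varepsilon$, and transition function $\tau$.

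The hard part is to verify that $\tau$ is already in canonical form, i.e. $\tran{\mathsf{M}^{\#}} = \tau$; without this, the union would only be known to lie below $\tran{\mathsf{M}^{\#}}$ (by Lemma~\ref{lem:wcpo}, say), and the \emph{least}-upper-bound property could fail. The key ingredient is a \emph{suffix-closure} property of canonical transition functions: unfolding the definitions of $\alpha$ and $\beta$ shows that $\beta_{\mathsf{M}^{\#}}(z, z' \cons u')$ is defined only if $\beta_{\mathsf{M}^{\#}}(z', u')$ is defined. This passes to the chain union $\beta$, since if $\beta(z, z' \cons u')$ is defined then $\beta_{\mathsf{M}_{n}}(z, z' \cons u')$ is defined for some $n$, whence $\beta_{\mathsf{M}_{n}}(z', u')$, and a fortiori $\beta(z', u')$, is defined. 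Using suffix-closure one then checks by induction on $u \in LZ$ that $\alpha_{\mathsf{M}}(u) = u$ wherever defined, and that $\alpha_{\mathsf{M}}(u)$ is defined whenever $\beta(z,u)$ is defined for some $z$; consequently $\beta_{\mathsf{M}}(z,u) = \mathrm{fst}(\tran{\mathsf{M}}(z,\alpha_{\mathsf{M}}(u))) = \beta(z,u)$, so $\beta_{\mathsf{M}} = \beta$ and $\tran{\mathsf{M}^{\#}} = \tau$.

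Finally I would read off the least-upper-bound property from Theorem~\ref{thm:leq}. For the upper bound, $\tran{\mathsf{M}_{n}^{\#}} \leq \tau = \tran{\mathsf{M}^{\#}}$ gives $[\mathsf{M}_{n}] \leq [\mathsf{M}]$ for every $n$. For minimality, if $[\mathsf{N}]$ satisfies $[\mathsf{M}_{n}] \leq [\mathsf{N}]$ for all $n$, then $\tran{\mathsf{M}_{n}^{\#}} \leq \tran{\mathsf{N}^{\#}}$ for all $n$, so their union $\tran{\mathsf{M}^{\#}} = \tau$ is contained in the graph of $\tran{\mathsf{N}^{\#}}$, and Theorem~\ref{thm:leq} yields $[\mathsf{M}] \leq [\mathsf{N}]$. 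Hence $[\mathsf{M}]$ is the least upper bound of the chain, completing the proof that the poset is an $\omega$-cpo. The main obstacle is precisely the canonical-form identity $\tran{\mathsf{M}^{\#}} = \tau$ of the third paragraph: measurability and functoriality of the union are routine, but establishing that the union lands back in the image of $(-)^{\#}$ (rather than merely below it) is what makes the least upper bound computable as a graph union.
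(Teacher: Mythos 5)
Your proof is correct, but it takes a genuinely different route from the paper's, and in particular your ``hard part'' --- the canonicity identity $\tran{\mathsf{M}^{\#}} = \tau$ --- is something the paper never needs. Given the chain, the paper simply defines a machine $\mathsf{L}$ with $\state{\mathsf{L}} = \state{\mathsf{M}_{1}^{\#}}$, $\init{\mathsf{L}} = \init{\mathsf{M}_{1}^{\#}}$ and $\tran{\mathsf{L}} = \bigvee_{n}\tran{\mathsf{M}_{n}^{\#}}$, and then argues directly from the definition of $\leq$, which is \emph{existential} over representatives: since $\mathsf{L}$, each $\mathsf{M}_{n}^{\#}$, and $\mathsf{N}^{\#}$ (for any upper bound $[\mathsf{N}]$) all share the state space $LZ$ and the initial state $\varepsilon$, the graph inclusions $\tran{\mathsf{M}_{n}^{\#}} \leq \tran{\mathsf{L}} \leq \tran{\mathsf{N}^{\#}}$ already witness $[\mathsf{M}_{n}] \leq [\mathsf{L}] \leq [\mathsf{N}]$; Theorem~\ref{thm:leq} is invoked only to convert the hypotheses $[\mathsf{M}_{n}] \leq [\mathsf{N}]$ into inclusions of canonical transition functions, never in the converse direction for $\mathsf{L}$ itself. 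Your decision to route \emph{both} directions through Theorem~\ref{thm:leq} is exactly what forces you to prove that the union $\tau$ lands back in the image of $(-)^{\#}$, which you do correctly: the suffix-closure property of $\beta$ and the induction giving $\alpha_{\mathsf{M}}(u) = u$ and $\beta_{\mathsf{M}} = \beta$ are sound. What each approach buys: the paper's is shorter, exploiting the flexibility of the order's definition; yours establishes a stronger structural fact --- that canonical transition functions are closed under unions of $\omega$-chains, so the poset of equivalence classes is concretely order-isomorphic to a family of partial measurable functions under graph inclusion, with suprema computed as unions --- at the cost of that extra canonicity induction. Your side remarks on antisymmetry and on the bottom element (matching Proposition~\ref{prop:wcpo}) are also correct, and are points the paper's proof leaves implicit.
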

  \begin{proof}
    Let $[\mathsf{N}]$ be an upper bound of an $\omega$-chain
    \begin{equation*}
      [\mathsf{M}_{1}] \leq [\mathsf{M}_{2}] \leq \cdots.
    \end{equation*}
    By Theorem~\ref{thm:leq}, we have
    \begin{equation*}
      \tran{\mathsf{M}_{1}^{\#}} \leq
      \tran{\mathsf{M}_{2}^{\#}} \leq \cdots \leq
      \tran{\mathsf{N}^{\#}}.
    \end{equation*}
    We define a Mealy machine
    $\mathsf{L} \colon \mathsf{X} \multimap \mathsf{Y}$ by
    \begin{varitemize}
    \item $\state{\mathsf{L}} = \state{\mathsf{M}_{1}^{\#}}$,
    \item $\init{\mathsf{L}} = \init{\mathsf{M}_{1}^{\#}}$,
    \item
      $\tran{\mathsf{L}} = \bigvee_{n \in
        \mathbb{N}}\tran{\mathsf{M}_{n}^{\#}}$.
    \end{varitemize}
    Because $\mathsf{M}_{n} \simeq \mathsf{M}_{n}^{\#}$, the
    equivalence class $[\mathsf{L}]$ is an upper bound of the
    $\omega$-chain
    $[\mathsf{M}_{1}] \leq [\mathsf{M}_{2}] \leq \cdots$. We also have
    $[\mathsf{L}] \leq [\mathsf{N}]$ because
    $\tran{\mathsf{L}} \leq \tran{\mathsf{M}_{n}^{\#}}$.
  \end{proof}
}

\subsection{Constructions on Mealy Machines}
\label{sec:const}
It is now time to give some constructions which are the basic building
blocks of our Mealy machine semantics. This section consists of three
parts. The first part (from Section~\ref{sec:mon} to
Section~\ref{sec:oc}) is related to the linear $\lambda$-calculus and
is serves to model the purely functional features of $\PCFSS$, such as
$\lambda$-abstraction and function application. In the second part
(Section~\ref{sec:num} and Section~\ref{sec:mf}), we give Mealy
machines modelling real numbers and measurable functions. In the
last part (from Section~\ref{sec:state} to
Section~\ref{sec:sampling}), we introduce a state monad and associate
the monad with Mealy machines modelling $\mathtt{score}$ and
$\mathtt{sample}$.

\subsubsection{Composition}

Let $\mathsf{X}$, $\mathsf{Y}$ and $\mathsf{Z}$ be
$\mathbf{Int}$-objects, and let
$\mathsf{M} \colon \mathsf{X} \multimap \mathsf{Y}$,
$\mathsf{N} \colon \mathsf{Y} \multimap \mathsf{Z}$ be Mealy machines.
We can now define their composition
$\mathsf{N} \circ \mathsf{M} \colon \mathsf{X} \multimap \mathsf{Z}$.
Before giving a precise definition, some intuitive explanation
about $\mathsf{N} \circ \mathsf{M}$ is in order. The main idea is to define
$\mathsf{N} \circ \mathsf{M}$ as a Mealy machine obtained by
connecting $\mathsf{N}$ and $\mathsf{M}$ in the following manner:
\begin{center}
  \begin{tikzpicture}[rounded corners]
    \node (M) [draw,circle,obj] at (0,0) {$\mathsf{M}$};
    \node (N) [draw,circle,obj] at (2,0) {$\mathsf{N}$};
    \draw (M) -- node[above,obj] {$\mathsf{X}$} ++(-1.5,0);
    \draw (M) -- node[above,obj] {$\mathsf{Y}$} (N);
    \draw (N) -- node[above,obj] {$\mathsf{Z}$} ++(1.5,0);
  \end{tikzpicture}
  .
\end{center}
The following series of thick arrows
\begin{center}
  \begin{tikzpicture}[rounded corners]
    \node (M) [draw,circle,obj] at (0,0) {$\mathsf{M}$};
    \node (N) [draw,circle,obj] at (2,0) {$\mathsf{N}$};
    
    \node (0) [obj] at (1,0.3) {$y_{0}$};
    \node (1) [obj] at (1,0.1) {$y_{1}$};
    \node (2) [obj] at (1,-0.1) {$y_{2}$};
    \node (3) [obj] at (1,-0.3) {$y_{3}$};
    \draw (M) -- ++(-1.5,0);
    \draw (M) -- (N);
    \draw (N) -- ++(1.5,0);
    
    \draw[thick,->] (3.5,0.1) node[right,obj] {$z$} -- ++(-1,0) to[out=180,in=0] (0);
    \draw[thick,<-] (3.5,-0.1) node[right,obj] {$z'$} -- ++(-1,0) to[out=180,in=0] (3);
    \draw[thick] (0) -- ++(-0.75,0);
    \draw[thick,<-] (1) -- ++(-0.75,0);
    \draw[thick] (1) -- ++(0.75,0);
    \draw[thick] (2) -- ++(-0.75,0);
    \draw[thick,<-] (2) -- ++(0.75,0);
    \draw[thick,<-] (3) -- ++(-0.75,0);
    \draw[thick] (0.25,0.3) arc(90:270:0.1);
    \draw[thick] (1.75,0.1) arc(90:-90:0.1);
    \draw[thick] (0.25,-0.1) arc(90:270:0.1);
  \end{tikzpicture}
\end{center}
illustrates an execution of the obtained Mealy machine. Given an input from an
edge, $\mathsf{M}$ and $\mathsf{N}$ engage in some interactive communication,
and at some point, some output is  produced. Because
$\mathsf{N} \circ \mathsf{M}$ performs  ``parallel composition plus
connecting,'' the state space of $\mathsf{N} \circ \mathsf{M}$ should
be $\state{\mathsf{M}} \times \state{\mathsf{N}}$, and the initial
state should be $(\init{\mathsf{M}},\init{\mathsf{N}})$. The
transition function of $\mathsf{N} \circ \mathsf{M}$ should be given
by the collection of all possible interaction paths between $\mathsf{M}$
and $\mathsf{N}$.

Let us give a precise definition. For Mealy machines
$\mathsf{M} \colon \mathsf{X} \multimap \mathsf{Y}$ and
$\mathsf{N} \colon \mathsf{Y} \multimap \mathsf{Z}$, we define the
state space and the initial states of $\mathsf{N} \circ \mathsf{M}$ by
$\state{\mathsf{N} \circ \mathsf{M}} = \state{\mathsf{M}} \times
\state{\mathsf{N}}$,
$\init{\mathsf{N} \circ \mathsf{M}} =
(\init{\mathsf{M}},\init{\mathsf{N}})$ and we define
the transition function
$\tran{\mathsf{N} \circ \mathsf{M}}$ by
\longshortv{
  \begin{equation*}
    \tran{\mathsf{N} \circ \mathsf{M}} = f_{X^{+} , Z^{-},Z^{+} , X^{-}}
    \vee {} \\
    \bigvee_{n \in \mathbb{N}} f_{Y^{+} , Y^{-},Z^{+} , X^{-}} \circ
    f_{Y^{+} , Y^{-},Y^{+} , Y^{-}}^{n} \circ f_{X^{+} , Z^{-},Y^{+} ,
      Y^{-}}
  \end{equation*}}{
  \begin{multline*}
    \tran{\mathsf{N} \circ \mathsf{M}} = f_{X^{+} , Z^{-},Z^{+} , X^{-}}
    \vee {} \\
    \bigvee_{n \in \mathbb{N}} f_{Y^{+} , Y^{-},Z^{+} , X^{-}} \circ
    f_{Y^{+} , Y^{-},Y^{+} , Y^{-}}^{n} \circ f_{X^{+} , Z^{-},Y^{+} ,
      Y^{-}}    
  \end{multline*}
}
where the $f_{A,B,C,D} \colon (A + B) \times \state{\mathsf{N} \circ \mathsf{M}}
\to (C + D) \times \state{\mathsf{N} \circ \mathsf{M}}$ are restrictions of
the following partial measurable function
\begin{equation*}
  \xymatrix@R=3mm{
    (X^{+} + Z^{-} + Y^{+} + Y^{-}) \times
    \state{\mathsf{N} \circ \mathsf{M}}
    \ar[d]^-{\cong} \\
    (X^{+} + Y^{-}) \times \state{\mathsf{M}} \times \state{\mathsf{N}}
    +
    (Y^{+} + Z^{-}) \times \state{\mathsf{N}} \times \state{\mathsf{M}}
    \ar[d]^-{\tran{\mathsf{M}} \times \state{\mathsf{N}}
      + \tran{\mathsf{N}} \times \state{\mathsf{M}}} \\
    (Y^{+} + X^{-}) \times \state{\mathsf{M}} \times \state{\mathsf{N}}
    +
    (Z^{+} + Y^{-}) \times \state{\mathsf{N}} \times \state{\mathsf{M}}
    \ar[d]^-{\cong} \\
    (Z^{+} + X^{-} + Y^{+} + Y^{-}) \times
    \state{\mathsf{N} \circ \mathsf{M}}
    \nullcomma
  }
\end{equation*}
and the above join is with respect to the inclusion order between
graph relations. The above join is measurable because measurable sets
are closed under countable joins. It is tedious but doable to check
that the above join always exists and that the composition is
compatible with behavioural equivalence and satisfies associativity
modulo behavioural equivalence. We define a Mealy machine
$\mathsf{id}_{\mathsf{X}} \colon \mathsf{X} \multimap \mathsf{X}$ by
$\tran{\mathsf{id}_{\mathsf{X}}} = \id_{X^{+} + X^{-}}$. This is the
unit of the composition modulo behavioural equivalence. \shortv{This
  can also be checked by the underlying category theory \cite{hmh2014}.}

\subsubsection{Monoidal Products}
\label{sec:mon}

\paragraph{Monoidal Products of Int-objects}

We introduce monoidal products of $\mathbf{Int}$-objects and their
diagrammatic presentation. For $\mathbf{Int}$-objects $\mathsf{X}$ and
$\mathsf{Y}$, we define a $\mathbf{Int}$-object
$\mathsf{X} \otimes \mathsf{Y}$ by
\begin{equation*}
  \mathsf{X} \otimes \mathsf{Y} = (X^{+} + Y^{+},Y^{-} + X^{-}).
\end{equation*}
We define an $\mathbf{Int}$-object $\mathsf{I}$ to be
$(\emptyset,\emptyset)$. We write
$\mathsf{X} \otimes \mathsf{Y} \otimes \cdots$
for $\mathsf{X} \otimes (\mathsf{Y} \otimes \cdots )$.

Let
$\mathsf{X}_{1},\ldots,\mathsf{X}_{n},
\mathsf{Y}_{1},\ldots,\mathsf{Y}_{m}$ be $\mathbf{Int}$-object. We
depict a Mealy machine $\mathsf{M}$ from
$\mathsf{X}_{1} \otimes \cdots \otimes \mathsf{X}_{n}$ to
$\mathsf{Y}_{1} \otimes \cdots \otimes \mathsf{Y}_{m}$ as a node with
edges labeled by $\mathsf{X}_{1},\ldots,\mathsf{X}_{n}$ on the left
hand side and edges labeled by $\mathsf{Y}_{1},\ldots,\mathsf{Y}_{m}$
on the right hand side:
\begin{center}
  \begin{tikzpicture}[rounded corners]
    \node (M) [draw,circle,obj] at (0,0) {$\mathsf{M}$};
    \draw (M) -- ++ (0.4,0.4) -- node[above,obj] {$\mathsf{Y}_{m}$} ++ (1,0);
    \node at (0.5,0.1) {$\vdots$};
    \draw (M) -- ++ (0.4,-0.4) -- node[above,obj] {$\mathsf{Y}_{1}$} ++ (1,0);
    \draw (M) -- ++ (-0.4,0.4) -- node[above,obj] {$\mathsf{X}_{n}$} ++ (-1,0);
    \node at (-0.5,0.1) {$\vdots$};
    \draw (M) -- ++ (-0.4,-0.4) -- node[above,obj] {$\mathsf{X}_{1}$} ++ (-1,0);
  \end{tikzpicture}
  .
\end{center}
We do not draw any edges on the left/right hand side when
the domain/codomain of $\mathsf{M}$ is $\mathsf{I}$:
\begin{center}
  \begin{tikzpicture}[rounded corners]
    \node (M) [draw,circle,obj] at (0,0) {$\mathsf{M}$};
    \draw (M) -- ++ (0.4,0.4) -- node[above,obj] {$\mathsf{Y}_{m}$} ++ (1,0);
    \node at (0.5,0.1) {$\vdots$};
    \draw (M) -- ++ (0.4,-0.4) -- node[above,obj] {$\mathsf{Y}_{1}$} ++ (1,0);
    \begin{scope}[xshift=4cm]
      \node (M) [draw,circle,obj] at (0,0) {$\mathsf{M}$};
      \draw (M) -- ++ (-0.4,0.4) -- node[above,obj] {$\mathsf{X}_{n}$} ++ (-1,0);
      \node at (-0.5,0.1) {$\vdots$};
      \draw (M) -- ++ (-0.4,-0.4) -- node[above,obj] {$\mathsf{X}_{1}$} ++ (-1,0);
    \end{scope}
  \end{tikzpicture}
\end{center}
The diagrammatic presentation of monoidal products allows for an intuitive
description of transition functions. For example, we can depict
transitions
\begin{align*}
  \tran{\mathsf{M}}((\mm,(\mm,y)),s)
  &= ((\mm,(\mm,x')),t), \\
  \tran{\mathsf{M}}((\hh,(\hh,x)),s)
  &= ((\hh,(\mm,\cdots(\mm,y'))),t'),
\end{align*}
for some $y \in Y_{1}^{-}$, $x \in X_{1}^{+}$,
$x' \in X_{1}^{-}$, $y' \in Y_{m}^{+}$ and
$s,t,t' \in \state{\mathsf{M}}$
as follows:
\begin{center}
  \begin{tikzpicture}[rounded corners]
    \node (M) [draw,circle,obj] at (0,0) {$\mathsf{M}$};
    \draw (M) -- ++ (0.5,0.5) -- node[above,obj] {$\mathsf{Y}_{m}$} ++ (1,0);
    \node at (0.7,0.1) {$\vdots$};
    \draw (M) -- ++ (0.5,-0.5) -- node[above,obj] {$\mathsf{Y}_{1}$} ++ (1,0);
    \draw (M) -- ++ (-0.5,0.5) -- node[above,obj] {$\mathsf{X}_{n}$} ++ (-1,0);
    \node at (-0.7,0.1) {$\vdots$};
    \draw (M) -- ++ (-0.5,-0.5) -- node[above,obj] {$\mathsf{X}_{1}$} ++ (-1,0);
    \draw[thick,->] (1.5,-0.6) node[right,obj] {$y$}
    -- ++(-1.05,0) -- ++(-0.45,0.45) -- ++ (-0.45,-0.45) -- ++(-1.05,0) node[left,obj] {$x'$};
    \node [obj] at (0,0.4) {$s/t$};

    \begin{scope}[xshift=4cm]
      \node (M) [draw,circle,obj] at (0,0) {$\mathsf{M}$};
      \draw (M) -- ++ (0.5,0.5) -- node[above,obj] {$\mathsf{Y}_{m}$} ++ (1,0);
      \node at (0.7,0.1) {$\vdots$};
      \draw (M) -- ++ (0.5,-0.5) -- node[above,obj] {$\mathsf{Y}_{1}$} ++ (1,0);
      \draw (M) -- ++ (-0.5,0.5) -- node[above,obj] {$\mathsf{X}_{n}$} ++ (-1,0);
      \node at (-0.7,0.1) {$\vdots$};
      \draw (M) -- ++ (-0.5,-0.5) -- node[above,obj] {$\mathsf{X}_{1}$} ++ (-1,0);
      \draw[thick,->] (-1.5,-0.6) node[left,obj] {$x$}
      -- ++ (1.05,0) -- ++ (1,1) -- ++ (0.95,0) node[right,obj] {$y'$};
      \node [above,obj] at (M.north) {$s/t'$};
  \end{scope}
  \end{tikzpicture}
\end{center}
\longv{
  We note that there are several ways to present a Mealy machine
  $\mathsf{M} \colon \mathsf{X}_{1} \otimes \cdots \otimes
  \mathsf{X}_{n} \multimap \mathsf{Y}_{1} \otimes \cdots \otimes \mathsf{Y}_{m}$
  such as
  \begin{center}
    \begin{tikzpicture}[rounded corners]
      \node (M) [draw,circle,obj] at (0,0) {$\mathsf{M}$};
      \draw (M) -- ++ (0.4,0.4) -- node[above,obj] {$\mathsf{Y}_{m}$} ++ (1,0);
      \node at (0.5,0.1) {$\vdots$};
      \draw (M) -- ++ (0.4,-0.4) -- node[above,obj] {$\mathsf{Y}_{1}$} ++ (1,0);
      \draw (M) -- ++ (-0.4,0.4) -- node[above,obj] {$\mathsf{X}_{n}$} ++ (-1,0);
      \node at (-0.5,0.1) {$\vdots$};
      \draw (M) -- ++ (-0.4,-0.4) -- node[above,obj] {$\mathsf{X}_{1}$} ++ (-1,0);
    \end{tikzpicture}
    ,
    \qquad
    \begin{tikzpicture}[rounded corners]
      \node (M) [draw,circle,obj] at (0,0) {$\mathsf{M}$};
      \draw (M) -- ++ (0.4,0.4) -- node[above,obj] {$\mathsf{Y}_{m}$} ++ (1,0);
      \node at (0.5,0.1) {$\vdots$};
      \draw (M) -- ++ (0.4,-0.4) -- node[above,obj] {$\mathsf{Y}_{1}$} ++ (1,0);
      \draw (M) -- ++ (-0.4,0.4) -- node[above,obj] {$\mathsf{X}_{n}$} ++ (-2,0);
      \draw (M) -- ++ (-0.4,-0.4) --
      node[above,obj] {$\mathsf{X}_{1} \otimes \cdots \otimes \mathsf{X}_{n-1}$} ++ (-2,0);
    \end{tikzpicture}
    ,
    \qquad
    \begin{tikzpicture}[rounded corners]
      \node (M) [draw,circle,obj] at (0,0) {$\mathsf{M}$};
      \draw (M) -- ++ (0.4,0) -- node[above,obj] {$\mathsf{Y}_{1} \otimes \cdots \otimes
        \mathsf{Y}_{m}$} ++ (2,0);
      \draw (M) -- ++ (-0.4,0.4) -- node[above,obj] {$\mathsf{X}_{n}$} ++ (-1,0);
      \node at (-0.5,0.1) {$\vdots$};
      \draw (M) -- ++ (-0.4,-0.4) -- node[above,obj] {$\mathsf{X}_{1}$} ++ (-1,0);
    \end{tikzpicture}
    $\cdots$.
  \end{center}
}
\paragraph{Monoidal Product of Mealy Machines}

We give monoidal products of Mealy machines.
For Mealy machines $\mathsf{M} \colon \mathsf{X} \multimap \mathsf{Z}$
and $\mathsf{N} \colon \mathsf{Y} \multimap \mathsf{W}$, we define a
Mealy machine $\mathsf{M} \otimes \mathsf{N}\colon
\mathsf{X} \otimes \mathsf{Y} \multimap \mathsf{Z} \otimes \mathsf{W}$ by:
$\state{\mathsf{M} \otimes \mathsf{N}} = \state{\mathsf{M}} \times
\state{\mathsf{N}}$,
$\init{\mathsf{M} \otimes \mathsf{N}} =
(\init{\mathsf{M}},\init{\mathsf{N}})$ and
$\tran{\mathsf{M} \otimes \mathsf{N}}$ is given by
\begin{equation*}
  \xymatrix@R=3mm{
    ((X^{+} + Y^{+}) + (W^{-} + Z^{-})) \times
    \state{\mathsf{M} \otimes \mathsf{N}}
    \ar[d]^-{\cong} \\
    (X^{+} + Z^{-}) \times \state{\mathsf{M}} \times \state{\mathsf{N}}
    +
    (Y^{+} + W^{-}) \times \state{\mathsf{N}} \times \state{\mathsf{M}}
    \ar[d]^-{\tran{\mathsf{M}} \times \state{\mathsf{N}}
      + \tran{\mathsf{N}} \times \state{\mathsf{M}}} \\
    (Z^{+} + X^{-}) \times \state{\mathsf{M}} \times \state{\mathsf{N}}
    +
    (W^{+} + Y^{-}) \times \state{\mathsf{N}} \times \state{\mathsf{M}}
    \ar[d]^-{\cong} \\
    ((Z^{+} + W^{+}) + (Y^{-} + X^{-})) \times
    \state{\mathsf{M} \otimes \mathsf{N}}
    \nulldot
  }
\end{equation*}
It is not difficult to check that the monoidal product is
compatible with behavioural equivalence.

We depict
$\mathsf{M} \otimes \mathsf{N} \colon (\mathsf{X} \otimes \mathsf{Y})
\multimap (\mathsf{Z} \otimes \mathsf{W})$
as follows:
\begin{center}
  \begin{tikzpicture}
    \node (M) [draw,circle,obj] at (0,0) {$\mathsf{M}$};
    \node (N) [draw,circle,obj] at (0,0.6) {$\mathsf{N}$};
    \draw (M) -- node[above,obj]{$\mathsf{Z}$} ++(1.5,0);
    \draw (M) -- node[above,obj]{$\mathsf{X}$} ++(-1.5,0);
    \draw (N) -- node[above,obj]{$\mathsf{W}$} ++(1.5,0);
    \draw (N) -- node[above,obj]{$\mathsf{Y}$} ++(-1.5,0);
  \end{tikzpicture}
\end{center}
As indicated by the above diagram, $\mathsf{M} \otimes \mathsf{N}$
consists of two sub-machines $\mathsf{M}$ and $\mathsf{N}$ working
independently. For example, if we have
\begin{center}
  \begin{tikzpicture}
    \node (M) [draw,circle,obj] at (0,0) {$\mathsf{M}$};
    \node (N) [draw,circle,obj] at (4,0) {$\mathsf{N}$};
    \draw (M) -- node[above=4,obj]{$\mathsf{Z}$} ++(1.5,0);
    \draw (M) -- node[above,obj]{$\mathsf{X}$} ++(-1.5,0);
    \draw (N) -- node[above,obj]{$\mathsf{W}$} ++(1.5,0);
    \draw (N) -- node[above,obj]{$\mathsf{Y}$} ++(-1.5,0);
    \node [obj] at (0,0.4) {$s_{0}/s_{1}$};
    \node [obj] at (4,0.4) {$t_{0}/t_{1}$};

    \draw[thick] (1.5,-0.125) node[right,obj] {$z$} -- ++(-1.2,0);
    \draw[thick,<-] (1.5,0.125) node[right,obj] {$z'$} -- ++(-1.2,0);
    \draw[thick] (0.3,0.125) arc(90:270:0.125);
    \draw[thick,->] (5.5,-0.125) node[right,obj] {$w$} -- ++(-3,0) node[left,obj] {$y$};
  \end{tikzpicture}
\end{center}
then $\mathsf{M} \otimes \mathsf{N}$ has the following transitions:
\begin{center}
  \begin{tikzpicture}
    \node (M) [draw,circle,obj] at (0,0) {$\mathsf{M}$};
    \node (N) [draw,circle,obj] at (0,0.6) {$\mathsf{N}$};
    \draw (M) -- node[above=4,obj]{$\mathsf{Z}$} ++(1.5,0);
    \draw (M) -- node[above,obj]{$\mathsf{X}$} ++(-1.5,0);
    \draw (N) -- node[above,obj]{$\mathsf{W}$} ++(1.5,0);
    \draw (N) -- node[above,obj]{$\mathsf{Y}$} ++(-1.5,0);
    \node [obj] at (0,-0.4) {$s_{0}/s_{1}$};
    \node [obj,above] at (N.north) {$t/t$};
    \draw[thick] (1.5,-0.125) node[right,obj] {$z$} -- ++(-1.2,0);
    \draw[thick,<-] (1.5,0.125) node[right,obj] {$z'$} -- ++(-1.2,0);
    \draw[thick] (0.3,0.125) arc(90:270:0.125);
    \begin{scope}[xshift=4cm]
      \node (M) [draw,circle,obj] at (0,0) {$\mathsf{M}$};
      \node (N) [draw,circle,obj] at (0,0.6) {$\mathsf{N}$};
      \draw (M) -- node[above,obj]{$\mathsf{Z}$} ++(1.5,0);
      \draw (M) -- node[above,obj]{$\mathsf{X}$} ++(-1.5,0);
      \draw (N) -- node[above,obj]{$\mathsf{W}$} ++(1.5,0);
      \draw (N) -- node[above,obj]{$\mathsf{Y}$} ++(-1.5,0);
      \node [obj] at (0,1) {$t_{0}/t_{1}$};
      \node [obj,below] at (M.south) {$s/s$};
      \draw[thick,->] (1.5,0.475) node[right,obj] {$w$} -- ++(-3,0) node[left,obj] {$y$};
    \end{scope}
  \end{tikzpicture}
\end{center}
for all $t \in \state{\mathsf{N}}$ and for all $s \in \state{\mathsf{M}}$.

\begin{convention}\label{conv:ox}
  We do the following identification:
  \begin{varitemize}
  \item We identity
    $\mathsf{X} \otimes (\mathsf{Y} \otimes \mathsf{Z})$ with
    $(\mathsf{X} \otimes \mathsf{Y}) \otimes \mathsf{Z}$ by
    the canonical isomorphism
    \begin{math}
      X + (Y + Z) \cong (X + Y) + Z.
    \end{math}
  \item We identify $\mathsf{I} \otimes \mathsf{X}$ and
    $\mathsf{X} \otimes \mathsf{I}$ with $\mathsf{X}$ by the unit laws
    $X^{+} + \emptyset \cong X^{+}$ and
    $\emptyset + X^{-} \cong X^{-}$.
  \end{varitemize}
\end{convention}

\subsubsection{Axiom Link and Cut Link}
\label{sec:id}
For an $\mathbf{Int}$-object $\mathsf{X}$, we define
$\mathsf{X}^{\bot}$ to be $(X^{-},X^{+})$, and we define token
machines
\begin{equation*}
  \mathsf{unit}_{\mathsf{X}} \colon \mathsf{I} \multimap \mathsf{X}
  \otimes \mathsf{X}^{\bot},
  \quad
  \mathsf{counit}_{\mathsf{X}} \colon \mathsf{X}^{\bot} \otimes
  \mathsf{X} \multimap \mathsf{I}
\end{equation*}
by
$\tran{\mathsf{unit}_{\mathsf{X}}} = \id_{X^{+} + X^{-}}$ and
$\tran{\mathsf{counit}_{\mathsf{X}}} = \mathrm{id}_{{X^{-} + X^{+}}}$.
We depict them by single edges
\begin{center}
  \begin{tikzpicture}
    \begin{scope}[xshift=3cm]
      \draw (0,0) -- node[above,obj] {$\mathsf{X}$} ++(1.5,0);
      \draw (0,0.4) -- node[above,obj] {$\mathsf{X}^{\bot}$} ++(1.5,0);
      \draw (0,0.4) arc(90:270:0.2);
    \end{scope}
    \begin{scope}[xshift=6cm]
      \draw (0,0) -- node[above,obj] {$\mathsf{X}^{\bot}$} ++(1.5,0);
      \draw (0,0.4) -- node[above,obj] {$\mathsf{X}$} ++(1.5,0);
      \draw (1.5,0) arc(-90:90:0.2);
    \end{scope}
  \end{tikzpicture}
\end{center}
respectively. This is compatible with behaviour of these Mealy machines:
if we give an input to an edge, then we will get the same value from
the other end of the edge. For example, for any $x \in X^{+}$, we have
\begin{center}
  \begin{tikzpicture}
    \begin{scope}[xshift=3cm]
      \draw (0,0) -- ++(1.5,0);
      \draw (0,0.4) --  ++(1.5,0);
      \draw (0,0.4) arc(90:270:0.2);

      \draw[->,thick] (0,-0.1) -- ++(1.5,0) node[right,obj] {$x$};
      \draw[thick] (0,0.5) -- ++(1.5,0) node[right,obj] {$x$};
      \draw[thick] (0,0.5) arc(90:270:0.3);
    \end{scope}
    \begin{scope}[xshift=6cm]
      \draw (0,0) --  ++(1.5,0);
      \draw (0,0.4) -- ++(1.5,0);
      \draw (1.5,0) arc(-90:90:0.2);

      \draw[thick] (0,-0.1) node[left,obj] {$x$} -- ++(1.5,0);
      \draw[thick,<-] (0,0.5) node[left,obj] {$x$} -- ++(1.5,0);
      \draw[thick] (1.5,-0.1) arc(-90:90:0.3);
    \end{scope}
  \end{tikzpicture}
  .
\end{center}

\subsubsection{Symmetry}

Let $\mathsf{X}$ and $\mathsf{Y}$ be $\mathbf{Int}$-objects. We define
a token machine
$\mathsf{sym}_{\mathsf{X},\mathsf{Y}} \colon \mathsf{X} \otimes
\mathsf{Y} \multimap \mathsf{Y} \otimes \mathsf{X}$ by
letting its transition function be the canonical isomorphism
\begin{equation*}
  (X^{+} + Y^{+}) + (X^{-} + Y^{-})
  \xrightarrow{\cong}
  (Y^{+} + X^{+}) + (Y^{-} + X^{-}).
\end{equation*}
We depict $\mathsf{sym}_{\mathsf{X},\mathsf{Y}}$ by a crossing:
\begin{center}
  \begin{tikzpicture}[rounded corners]
    \draw (0,0)-- ++(1,0)-- ++(0.4,0.4) -- node[above,obj]{$\mathsf{X}$} ++(1,0);
    \draw (0,0.4) -- node[above,obj]{$\mathsf{Y}$} ++(1,0)-- ++(0.4,-0.4) -- ++(1,0);
    \begin{scope}[xshift=4cm]
      \draw (0,0) -- ++(1,0)-- ++(0.4,0.4) -- node[above,obj]{$\mathsf{X}$} ++(1,0);
      \draw (0,0.4) -- node[above,obj]{$\mathsf{Y}$} ++(1,0)-- ++(0.4,-0.4) -- ++(1,0);
      \draw[thick,<->] (0,-0.1) node[left,obj] {$x$}
      -- ++(1.05,0)-- ++(0.4,0.4) -- ++(0.95,0) node[right,obj] {$x$};
      \draw[thick,<->] (0,0.3) node[left,obj] {$y$}
      -- ++(0.95,0)-- ++(0.4,-0.4) -- ++(1.05,0) node[right,obj] {$y$};
    \end{scope}
  \end{tikzpicture}
\end{center}
As arrows in the right hand side indicate, given an input from an edge
in one side, $\mathsf{sym}_{\mathsf{X},\mathsf{Y}}$ outputs the same
value to the corresponding edge on other side.

\subsubsection{A Modal Operator}
\label{sec:oc}
We give a constructor on Mealy machines that corresponds to
the resource modality in linear logic.
For an $\mathbf{Int}$-object $\mathsf{X}$, we define
an $\mathbf{Int}$-object $\oc \mathsf{X}$ by
\begin{equation*}
  \oc \mathsf{X} = (\mathbb{N} \times X^{+}, \mathbb{N} \times X^{-}).
\end{equation*}
We can informally regard $\oc \mathsf{X}$ as a countable monoidal power
$\bigotimes_{n \in \mathbb{N}} \mathsf{X} \approx \mathsf{X} \otimes
\mathsf{X} \otimes \cdots$. Following this intuition, we extend the
action of $\oc(-)$ to Mealy machines. Let
$\mathsf{M} \colon \mathsf{X} \multimap \mathsf{Y}$ be a Mealy
machine. We define a Mealy machine
$\oc \mathsf{M} \colon \oc \mathsf{X} \multimap \oc \mathsf{Y}$ by:
the state space of $\oc \mathsf{M}$ is defined to be
$|\mathsf{M}|^{\mathbb{N}}$ associated with the least $\sigma$-algebra
such that for all $A_{1},A_{2},\ldots \in \Sigma_{\mathsf{M}}$,
\begin{equation*}
  A_{1} \times A_{2} \times \cdots \in \Sigma_{\oc\mathsf{M}};
\end{equation*}
the initial state $\init{\oc\mathsf{M}}$ is
$(\init{\mathsf{M}},\init{\mathsf{M}},\ldots)$; the transition
function $\tran{\oc \mathsf{M}}$ is the unique
partial measurable function satisfying
\longshortv{
  \begin{equation*}
    \xymatrix@C=70pt{
      (X^{+} + Y^{-})
      \times \state{\mathsf{M}} \times \state{\mathsf{M}}^{\mathbb{N}}
      \ar[r]^-{(\mathrm{inj}_{n}+\mathrm{inj}_{n}) \times \mathrm{ins}_{n}}
      \ar[d]_-{\tran{\mathsf{M}} \times \state{\mathsf{M}}^{\mathbb{N}}}
      &
      (\mathbb{N} \times X^{+} + \mathbb{N} \times Y^{-})
      \times \state{\mathsf{M}}^{\mathbb{N}}
      \ar[d]^-{\tran{\oc\mathsf{M}}} \\
      (Y^{+} + X^{-})
      \times \state{\mathsf{M}} \times \state{\mathsf{M}}^{\mathbb{N}}
      \ar[r]^-{(\mathrm{inj}_{n}+\mathrm{inj}_{n}) \times \mathrm{ins}_{n}}
      &
      (\mathbb{N} \times Y^{+} + \mathbb{N} \times X^{-})
      \times \state{\mathsf{M}}^{\mathbb{N}}
    }
  \end{equation*}}{
  \begin{equation*}
    \xymatrix@R=10pt@C=16pt{
      (X^{+} + Y^{-})
      \times \state{\mathsf{M}} \times \state{\mathsf{M}}^{\mathbb{N}}
      \ar[r]^-{\raisebox{6pt}{$\scriptstyle(\mathrm{inj}_{n}+\mathrm{inj}_{n}) \times \mathrm{ins}_{n}$}}
      \ar[d]_-{\tran{\mathsf{M}} \times \state{\mathsf{M}}^{\mathbb{N}}}
      &
      (\mathbb{N} \times X^{+} + \mathbb{N} \times Y^{-})
      \times \state{\mathsf{M}}^{\mathbb{N}}
      \ar[d]^-{\tran{\oc\mathsf{M}}} \\
      (Y^{+} + X^{-})
      \times \state{\mathsf{M}} \times \state{\mathsf{M}}^{\mathbb{N}}
      \ar[r]^-{\raisebox{6pt}{$\scriptstyle(\mathrm{inj}_{n}+\mathrm{inj}_{n}) \times \mathrm{ins}_{n}$}}
      &
      (\mathbb{N} \times Y^{+} + \mathbb{N} \times X^{-})
      \times \state{\mathsf{M}}^{\mathbb{N}}
    }
  \end{equation*}
} for all $n \in \mathbb{N}$. Here,
$\mathrm{inj}_{n} \colon (-) \to \mathbb{N} \times (-)$ are the $n$th
injections, and
$\mathrm{ins}_{n} \colon \state{\mathsf{M}} \times
\state{\mathsf{M}}^{\mathbb{N}} \to \state{\mathsf{M}}^{\mathbb{N}}$
sends $(s,\{s_{n}\}_{n \in \mathbb{N}})$ to
$(s_{0},\ldots,s_{n-1},s,s_{n},s_{n+1},\ldots)$.


      
As $\oc(-)$ is defined to be a countable monoidal power,
$\oc\mathsf{M}$ behaves as a parallel composition of countably
infinite copies of $\mathsf{M}$. For example, if we have
\begin{center}
  \begin{tikzpicture}
    \node (M) [draw,circle,obj] at (0,0) {$\mathsf{M}$};
    \draw (M) -- node[above,obj] {$\mathsf{Y}$} ++(1.5,0);
    \draw (M) -- node[above,obj] {$\mathsf{X}$} ++(-1.5,0);
    \node [obj] at (0.05,0.4) {$s/s'$};
    \draw [thick,<-] (1.5,-0.15) node[right,obj] {$y$} -- ++ (-3,0)
    node [left,obj] {$x$};
  \end{tikzpicture}
\end{center}
then for all $n \in \mathbb{N}$
and $t_{1},t_{2},\ldots \in \state{\mathsf{M}}$, we have
\begin{center}
  \begin{tikzpicture}[rounded corners]
      \node (M) [draw,circle,obj] at (0,0) {$\oc\mathsf{M}$};
    \draw (M) -- node[above,obj] {$\oc\mathsf{Y}$} ++(1.5,0);
    \draw (M) -- node[above,obj] {$\oc\mathsf{X}$} ++(-1.5,0);
    \node [obj] at (0.05,0.6) {$(t_{1},\ldots,t_{n-1},s,t_{n},t_{n+1},\ldots)/
      (t_{1},\ldots,t_{n-1},s',t_{n},t_{n+1},\ldots)$};
    \draw [thick,<-] (1.5,-0.15) node[right,obj] {$(n,y)$} -- ++ (-3,0)
    node [left,obj] {$(n,x)$};
  \end{tikzpicture}
  .
\end{center}
In other words, given an input whose first entry is $n$, then the
$n$th copy of $\mathsf{M}$ handles the input, and there is no side
effect to the other copies of $\mathsf{M}$.

\longv{
  \begin{proposition}\label{prop:oc_f}
    The operator $\oc(-)$ is compatible with
    the behavioral equivalence and is functorial. Namely,
    \begin{itemize}
    \item for all Mealy machines
      $\mathsf{M},\mathsf{N} \colon \mathsf{X} \multimap \mathsf{Y}$, if
      $\mathsf{M} \simeq \mathsf{M}'$, then
      $\oc \mathsf{M} \simeq \oc
      \mathsf{N}$; and
    \item for all Mealy machines
      $\mathsf{M} \colon \mathsf{X} \multimap \mathsf{Y}$
      and $\mathsf{N} \colon \mathsf{Y} \multimap \mathsf{Z}$,
      \begin{equation*}
        \oc (\mathsf{N} \circ \mathsf{M})
        \simeq
        \oc \mathsf{N} \circ \oc \mathsf{M};
      \end{equation*}
    \item $\oc \mathsf{id}_{\mathsf{X}}
      \simeq \mathsf{id}_{\oc \mathsf{X}}$.
    \end{itemize}
  \end{proposition}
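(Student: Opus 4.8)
The plan is to reduce the whole proposition to elementary reasoning about transition functions, exploiting that $\simeq$ is generated by $\preceq$ and that $\oc(-)$ acts on state spaces by the countable power $(-)^{\mathbb{N}}$. For the first item, since $\simeq$ is the reflexive symmetric transitive closure of $\preceq$, it suffices to show that $\mathsf{M}\preceq\mathsf{M}'$ implies $\oc\mathsf{M}\preceq\oc\mathsf{M}'$. Given a measurable $f\colon\state{\mathsf{M}}\to\state{\mathsf{M}'}$ realizing $\mathsf{M}\preceq\mathsf{M}'$, I would take $f^{\mathbb{N}}\colon\state{\mathsf{M}}^{\mathbb{N}}\to\state{\mathsf{M}'}^{\mathbb{N}}$ acting componentwise. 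Its measurability is immediate from the defining $\sigma$-algebra on countable powers (preimages of the generating products are products of preimages), and it sends $\init{\oc\mathsf{M}}$ to $\init{\oc\mathsf{M}'}$. The naturality square for $\tran{\oc\mathsf{M}}$ and $\tran{\oc\mathsf{M}'}$ then follows by precomposing with each pair $(\mathrm{inj}_n+\mathrm{inj}_n)\times\mathrm{ins}_n$: the defining diagram of $\oc$ reduces the square to the naturality square for $\mathsf{M}\preceq\mathsf{M}'$ in the $n$th coordinate, which holds by hypothesis, while $f^{\mathbb{N}}$ leaves the other coordinates untouched. Since $\tran{\oc\mathsf{M}'}$ is the unique partial measurable function making all these diagrams commute, the square closes.

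For the identity, observe that $\state{\mathsf{id}_{\mathsf{X}}}=1$, so $\state{\oc\mathsf{id}_{\mathsf{X}}}=1^{\mathbb{N}}\cong1=\state{\mathsf{id}_{\oc\mathsf{X}}}$, and both machines are token machines. Unwinding the defining diagram of $\oc$ on the identity-forwarding transition of $\mathsf{id}_{\mathsf{X}}$ shows that $\oc\mathsf{id}_{\mathsf{X}}$ sends $(n,x)$ to $(n,x)$ for every tag $n$, that is $\tran{\oc\mathsf{id}_{\mathsf{X}}}=\id_{\mathbb{N}\times X^+ + \mathbb{N}\times X^-}=\tran{\mathsf{id}_{\oc\mathsf{X}}}$, whence the equivalence via the canonical $1^{\mathbb{N}}\cong1$.

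The crux is the middle item, $\oc(\mathsf{N}\circ\mathsf{M})\simeq\oc\mathsf{N}\circ\oc\mathsf{M}$. Here the two state spaces are $(\state{\mathsf{M}}\times\state{\mathsf{N}})^{\mathbb{N}}$ and $\state{\mathsf{M}}^{\mathbb{N}}\times\state{\mathsf{N}}^{\mathbb{N}}$, related by the canonical \emph{unzip} isomorphism $u$ that interchanges a sequence of pairs with a pair of sequences; both $u$ and $u^{-1}$ are measurable and match the initial states. I would show that $u$ realizes the equivalence in both directions at once, so it is enough to verify that the two transition functions agree under $u$. The guiding intuition is that in $\oc\mathsf{N}\circ\oc\mathsf{M}$ an input tagged with $n$ is routed to the $n$th copies of $\mathsf{M}$ and $\mathsf{N}$, and that this tag is preserved along the entire interaction between $\oc\mathsf{M}$ and $\oc\mathsf{N}$, so that the interaction path carrying tag $n$ is a faithful copy of the interaction path between $\mathsf{M}$ and $\mathsf{N}$ that defines $\tran{\mathsf{N}\circ\mathsf{M}}$.

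The main obstacle is precisely the bookkeeping that matches the two joins: both $\tran{\oc(\mathsf{N}\circ\mathsf{M})}$ and $\tran{\oc\mathsf{N}\circ\oc\mathsf{M}}$ are given as joins $\bigvee_k f^k$ over interaction paths, and these are assembled in a different order (taking $\oc$ after composing, versus composing after $\oc$). I would prove by induction on the path length $k$ that the tag is invariant under each elementary step of the composite $\oc\mathsf{N}\circ\oc\mathsf{M}$, and that the $k$-step composite restricted to tag $n$ coincides, under $u$, with the $k$-step interaction defining the $n$th copy inside $\tran{\oc(\mathsf{N}\circ\mathsf{M})}$. Taking the join over $k$ and over the (single active) tag $n$ then identifies the two transition functions, and since measurable sets are closed under countable joins the join on each side exists and the identification is measurable. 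This yields $u$ as a behavioural equivalence, completing the functoriality claim.
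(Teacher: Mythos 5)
Your proposal is correct, but there is no proof in the paper to compare it with: this proposition appears only in the long version and is stated without proof, the authors evidently treating it as routine, so your argument supplies details the paper omits. The route you take is the natural one and it works. For the first item, lifting a realizer $f$ of $\mathsf{M}\preceq\mathsf{M}'$ to the componentwise map $f^{\mathbb{N}}$ and checking the square coordinatewise is exactly right; the one presentational quibble is that your closing appeal to ``uniqueness'' is not literally an instance of the uniqueness clause in the definition of $\oc$ (the two composites you compare, $\tran{\oc\mathsf{M}'}\circ(\id\times f^{\mathbb{N}})$ and $(\id\times f^{\mathbb{N}})\circ\tran{\oc\mathsf{M}}$, are not of the type of a transition function of $\oc\mathsf{M}'$), but rather an appeal to the fact that underwrites that clause: the maps $(\mathrm{inj}_{n}+\mathrm{inj}_{n})\times\mathrm{ins}_{n}$ are jointly surjective, so two partial measurable functions agreeing after precomposition with all of them are equal. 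For the second item, the unzip isomorphism $(\state{\mathsf{M}}\times\state{\mathsf{N}})^{\mathbb{N}}\cong\state{\mathsf{M}}^{\mathbb{N}}\times\state{\mathsf{N}}^{\mathbb{N}}$ combined with tag invariance is precisely the key point: every elementary step of the interaction defining $\tran{\oc\mathsf{N}\circ\oc\mathsf{M}}$ preserves the tag $n$ and touches only the $n$th coordinates of the two state sequences, so the length-$k$ interaction composites on the two sides correspond bijectively under the unzip map, and the two joins (which exist and are measurable because measurable sets are closed under countable unions, as the paper itself notes when defining composition) then coincide; definedness is also matched, since the correspondence is a bijection of graphs. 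The third item is indeed trivial via $1^{\mathbb{N}}\cong 1$. In short: no gap, and your sketch of the path-length induction is the right way to discharge the bookkeeping that the paper silently skips.
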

}

\begin{convention}\label{conv:oc}
  For the sake of legibility and due to lack of space, we sometimes
  implicitly identify $\oc(\mathsf{X} \otimes \mathsf{Y})$ with
  $\oc \mathsf{X} \otimes \oc \mathsf{Y}$ by the canonical isomorphism
  \begin{math}
    \mathbb{N} \times (X + Y) \cong \mathbb{N} \times X + \mathbb{N}
    \times Y.
  \end{math}
\end{convention}

\longv{
  Under the above convention, for Mealy macines
  $\mathsf{M} \colon \oc (\mathsf{X} \otimes \mathsf{Y}) \multimap
  \mathsf{Z}$ and $\mathsf{N} \colon \mathsf{W} \multimap \mathsf{X}$,
  we can simply write
  $\mathsf{M} \circ (\oc \mathsf{N} \otimes \mathsf{Y}_{\oc
    \mathsf{Y}}) \colon \oc \mathsf{W} \otimes \oc \mathsf{Y}
  \multimap \mathsf{Z}$. It is not difficult to see that
  when $\mathsf{Z} = \oc \mathsf{Z}'$ and
  $\mathsf{M} = \oc \mathsf{M}'$ for some $\mathsf{M}' \colon \mathsf{X} \otimes \mathsf{Y}
  \multimap \mathsf{Z}'$, we have
  $\mathsf{M} \circ (\oc \mathsf{N} \otimes \mathsf{id}_{\oc \mathsf{Y}})
  \simeq \oc (\mathsf{M} \circ (\mathsf{N} \otimes \mathsf{id}_{\mathsf{Y}}))$.
}

\shortv{ We can construct token machines related to inference rules
  for resource modality in linear logic. Their definitions are
  essentially the same with the one given in \cite{laurent2001},
  and they work as we expect modulo behavioural equivalence.
  \begin{proposition}\label{prop:oc}
    For every $\mathbf{Int}$-object $\mathsf{X}$,
    there are token machines
    \begin{equation*}
      \mathsf{d} \colon \oc \mathsf{X} \multimap \mathsf{X},
      \hspace{5pt}
      \mathsf{dg} \colon \oc \mathsf{X} \multimap \oc\oc\mathsf{X},
      \hspace{5pt}
      \mathsf{c} \colon \oc \mathsf{X} \multimap \oc \mathsf{X}
      \otimes \oc \mathsf{X},
      \hspace{5pt}
      \mathsf{w} \colon \oc \mathsf{X} \multimap \mathsf{I}
    \end{equation*}
    such that for any Mealy machine $\mathsf{M} \colon \mathsf{I}
    \multimap \mathsf{X}$,
    \begin{equation*}
      \mathsf{d} \circ \oc \mathsf{M} \simeq \mathsf{M},
      \hspace{5.5pt}
      \mathsf{dg} \circ \oc \mathsf{M} \simeq \oc\oc\mathsf{M},
      \hspace{5.5pt}
      \mathsf{c} \circ \oc \mathsf{M} \simeq \oc \mathsf{M} \otimes \oc \mathsf{M},
      \hspace{5.5pt}
      \mathsf{w} \circ \oc \mathsf{M} \simeq \mathsf{id}_{\mathsf{I}}.
    \end{equation*}
  \end{proposition}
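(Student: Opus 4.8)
The plan is to define the four token machines as pure index-rewirings, exactly as in the standard GoI construction of \cite{laurent2001}, and then to establish each behavioural equivalence by exhibiting an explicit measurable map on state spaces witnessing $\preceq$, exploiting the fact that all four machines are stateless. Fix once and for all a pairing bijection $q \colon \mathbb{N} \xrightarrow{\cong} \mathbb{N} \times \mathbb{N}$ and a copairing bijection $p \colon \mathbb{N} \xrightarrow{\cong} \mathbb{N} + \mathbb{N}$. Writing $\mathsf{X} = (X^{+},X^{-})$, I define $\mathsf{d} \colon \oc\mathsf{X} \multimap \mathsf{X}$ by $\tran{\mathsf{d}}(\hh,(0,x)) = (\hh,x)$ and $\tran{\mathsf{d}}(\mm,x) = (\mm,(0,x))$, leaving it undefined on $(\hh,(n,x))$ for $n \neq 0$ (dereliction reads only copy $0$); $\mathsf{w} \colon \oc\mathsf{X} \multimap \mathsf{I}$ as the nowhere-defined function $\mathbb{N} \times X^{+} \to \mathbb{N} \times X^{-}$ (there is nothing to connect to, since $I^{+} = I^{-} = \emptyset$); $\mathsf{c} \colon \oc\mathsf{X} \multimap \oc\mathsf{X} \otimes \oc\mathsf{X}$ by rerouting copy $n$ of the source into the left or right tensor factor according to $p(n)$, and symmetrically on the backward wires; and $\mathsf{dg} \colon \oc\mathsf{X} \multimap \oc\oc\mathsf{X}$ by relabelling copy $n$ of the source as copy $q(n)$ of $\oc\oc\mathsf{X}$, and symmetrically backward. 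Each transition function is assembled from $p$, $q$ and identities, hence is measurable, since $\Sigma_{\mathbb{N}}$ is the full powerset.

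Next I would record the observation that trivialises the composites. Since $\oc\mathsf{I} = \mathsf{I}$, for $\mathsf{M} \colon \mathsf{I} \multimap \mathsf{X}$ we have $\oc\mathsf{M} \colon \mathsf{I} \multimap \oc\mathsf{X}$ with state space $\state{\mathsf{M}}^{\mathbb{N}}$. Because each of $\mathsf{d},\mathsf{w},\mathsf{c},\mathsf{dg}$ is stateless and performs a single relabelling step with no internal loop, a token entering any composite $\mathsf{N} \circ \oc\mathsf{M}$ from the right makes exactly one round trip: it is relabelled by the token machine, routed by $\oc\mathsf{M}$ into one copy of $\mathsf{M}$ where a single $\tran{\mathsf{M}}$-step fires, and the answer is relabelled back. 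Concretely, in the composition formula of Section~\ref{sec:const} all iterated middle-interaction summands with $n \geq 1$ are empty, so $\tran{\mathsf{N} \circ \oc\mathsf{M}}$ collapses to its $n = 0$ summand, which is precisely the routing function described informally in Sections~\ref{sec:pn} and~\ref{sec:oc}.

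With the composites pinned down, each equivalence follows by choosing the forced map $f$. For $\mathsf{d}$, take $f \colon \state{\mathsf{M}} \to \state{\mathsf{M}}^{\mathbb{N}}$ with $f(s) = (s,\init{\mathsf{M}},\init{\mathsf{M}},\ldots)$, placing $s$ in coordinate $0$; it is measurable, preserves initial states, and, since $\mathsf{d}$ only ever touches copy $0$, makes the square commute, giving $\mathsf{M} \preceq \mathsf{d} \circ \oc\mathsf{M}$ and hence $\mathsf{M} \simeq \mathsf{d} \circ \oc\mathsf{M}$. For $\mathsf{w}$, both $\mathsf{w} \circ \oc\mathsf{M}$ and $\mathsf{id}_{\mathsf{I}}$ are machines $\mathsf{I} \multimap \mathsf{I}$, so both transition functions are the empty function $\emptyset \to \emptyset$; any measurable map sending $\init{\mathsf{id}_{\mathsf{I}}}$ to $\init{\oc\mathsf{M}}$ witnesses $\mathsf{id}_{\mathsf{I}} \preceq \mathsf{w} \circ \oc\mathsf{M}$, so they are behaviourally equivalent. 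For $\mathsf{c}$ and $\mathsf{dg}$, the bijections $p$ and $q$ induce measurable isomorphisms $\state{\mathsf{M}}^{\mathbb{N}} \xrightarrow{\cong} \state{\mathsf{M}}^{\mathbb{N}} \times \state{\mathsf{M}}^{\mathbb{N}} = \state{\oc\mathsf{M} \otimes \oc\mathsf{M}}$ and $\state{\mathsf{M}}^{\mathbb{N}} \xrightarrow{\cong} (\state{\mathsf{M}}^{\mathbb{N}})^{\mathbb{N}} = \state{\oc\oc\mathsf{M}}$; measurability in both directions holds because the generating product boxes of the power $\sigma$-algebra are carried to product boxes. Taking $f$ to be the relevant isomorphism witnesses $\preceq$, and since $f$ is invertible the reverse $\preceq$ holds as well, yielding $\simeq$ in both cases.

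The technical heart, and the main obstacle, is the collapse observation of the second paragraph: one must genuinely justify that the infinite join defining composition reduces to a single round trip and that the surviving $n=0$ summand is exactly the expected index-routing function. Once this is done the maps $f$ are forced and the square-commutation checks are mechanical. A secondary, purely bookkeeping difficulty is keeping the coproduct/product reindexings straight — the identification $\oc(\mathsf{X} \otimes \mathsf{Y}) \cong \oc\mathsf{X} \otimes \oc\mathsf{Y}$ of Convention~\ref{conv:oc} and the associativity/unit identifications of Convention~\ref{conv:ox} — so that the $\mathbf{Int}$-objects on the two sides of each $\simeq$ literally coincide; I would fix all the relevant bijections explicitly at the outset to avoid ambiguity. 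One could alternatively derive the statement abstractly from the linear-exponential comonad structure on the $\mathbf{Int}$-construction as in \cite{hmh2014}, but the concrete verification above is self-contained and matches the diagrammatic intuition already developed for $\oc(-)$.
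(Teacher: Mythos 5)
Your proposal is correct and follows essentially the same route as the paper: the paper likewise defines $\mathsf{d}$, $\mathsf{dg}$, $\mathsf{c}$, $\mathsf{w}$ as stateless index-rewiring token machines in the style of Laurent (dereliction by dropping/inserting an index, digging by a pairing bijection, contraction by an even/odd split, weakening empty), and verifies the equivalences by the evident measurable witness maps on state spaces, handling weakening exactly as you do, by noting that $\mathsf{id}_{\mathsf{I}}$ is the only Mealy machine $\mathsf{I} \multimap \mathsf{I}$ up to behavioural equivalence. One minor correction to your ``collapse'' step: in the composition formula the surviving summand is the $n=1$ one (enter via the token machine's relabelling, one $\tran{\mathsf{M}}$-step in the middle, exit via the relabelling), while the direct summand, the $n=0$ summand, and all $n \geq 2$ summands are empty; your prose describes this round trip correctly but mislabels it as the $n=0$ summand.
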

}
\longv{
  \paragraph{Dereliction}
  \label{sec:der}

  For an $\mathbf{Int}$-object $\mathsf{X}$, we define a token machine
  $\mathsf{d}_{\mathsf{X}} \colon \oc\mathsf{X} \multimap \mathsf{X}$ by
  defining
  $\tran{\mathsf{d}_{\mathsf{X}}} \colon (\mathbb{N} \times X^{+}) +
  X^{-} \to X^{+} + (\mathsf{N} \times X^{-})$ by
  \begin{equation*}
    \tran{\mathsf{d}_{\mathsf{X}}}
    (\hh,(n,x)) = (\hh,x),
    \qquad
    \tran{\mathsf{d}_{\mathsf{X}}}
    (\mm,x) = (\mm,(0,x)).
  \end{equation*}
  The Mealy machine $\mathsf{d}_{\mathsf{X}}$ pops/pushes indices with
  probability $1$. Namely, we have
  \begin{center}
    \begin{tikzpicture}
      \node (d) [draw,circle,obj] at (0,0) {$\mathsf{d}_{\mathsf{X}}$};
      \draw (d) -- node[above,obj] {$\mathsf{X}$} ++ (1.5,0);
      \draw (d) -- node[above,obj] {$\oc\mathsf{X}$} ++ (-1.5,0);
      \draw[thick,<-] (1.5,-0.15) node[right,obj]{$x$}
      -- ++ (-3,0) node[left,obj]{$(n,x)$};
    \end{tikzpicture}
    \qquad
    \begin{tikzpicture}
      \node (d) [draw,circle,obj] at (0,0) {$\mathsf{d}_{\mathsf{X}}$};
      \draw (d) -- node[above,obj] {$\mathsf{X}$} ++ (1.5,0);
      \draw (d) -- node[above,obj] {$\oc\mathsf{X}$} ++ (-1.5,0);
      \draw[thick,->] (1.5,-0.15) node[right,obj]{$x$}
      -- ++ (-3,0) node[left,obj]{$(0,x)$};
    \end{tikzpicture}
  \end{center}
  for all $n \in \mathbb{N}$, $x \in X^{+}$ and $x' \in X^{-}$. Hence,
  for any Mealy machine
  $\mathsf{M} \colon \mathsf{I} \multimap \mathsf{X}$, if we have
  \begin{center}
    \begin{tikzpicture}[rounded corners]
      \node (M) [draw,circle,obj] at (0,0) {$\mathsf{M}$};
      \draw (M) -- node[above=4pt,obj]{$\mathsf{X}$} ++(1.5,0);
      \draw[thick,->] (1.5,-0.15) node[right,obj]{$x$}
      -- ++ (-1.3,0) -- ++ (0,0.3) -- ++ (1.3,0) node[right,obj]{$x'$};
      \node [above,obj] at (M.north) {$s/s'$};
    \end{tikzpicture}
  \end{center}
  for some $x \in X^{-}$, $x' \in X^{+}$ and
  $s,s' \in \state{\mathsf{M}}$, then
  $\mathsf{d}_{\mathsf{X}} \circ \oc \mathsf{M}$ has the following
  transition:
  \begin{center}
    \begin{tikzpicture}[rounded corners]
      \node (M) [draw,circle,obj] at (0,0) {$\oc \mathsf{M}$};
      \draw (M) -- node[above=4pt,obj]{$\oc\mathsf{X}$} ++(1.5,0);
      \node (x) [obj] at (1.5,-0.15) {$(0,x)$};
      \node (x') [obj] at (1.5,0.15) {$(0,x')$};
      \draw[thick,->] (x) -- ++ (-1.3,0) -- ++ (0,0.3) -- (x');
      \node [left,obj] at (M.west)
      {$(s,s_{1},s_{2},\ldots)/(s',s_{1},s_{2},\ldots)$};
      \begin{scope}[xshift=3cm]
        \node (d) [draw,circle,obj] at (0,0) {$\mathsf{d}_{\mathsf{X}}$};
        \draw (d) -- node[above=4pt,obj] {$\mathsf{X}$} ++ (1.5,0);
        \draw (d) -- ++(-1.5,0);
        \draw[thick,->] (1.5,-0.15) node[right,obj]{$x$} -- (x);
        \draw[thick,<-] (1.5,0.15) node[right,obj]{$x'$} -- (x');
      \end{scope}
    \end{tikzpicture}
  \end{center}
  for all $s_{1},s_{2},\ldots \in \state{\mathsf{M}}$.
  \begin{proposition}\label{prop:d}
    For any Mealy machine
    $\mathsf{M} \colon \mathsf{I} \multimap \mathsf{X}$,
    \begin{equation*}
      \mathsf{d}_{\mathsf{X}} \circ \oc \mathsf{M}
      \simeq \mathsf{M}.
    \end{equation*}
    Diagrammatically, we have
    \begin{center}
      \begin{tikzpicture}[rounded corners]
        \node (M) [draw,circle,obj] at (0,0) {$\oc \mathsf{M}$};
        \node (d) [draw,circle,obj] at (2,0) {$\mathsf{d}_{\mathsf{X}}$};
        \draw (M) -- node[above,obj]{$\oc\mathsf{X}$} (d);
        \draw (d) -- node[above,obj] {$\mathsf{X}$} ++ (1.5,0);
        \node at (4,0) {$\simeq$};

        \node (N) [draw,circle,obj] at (4.75,0) {$\mathsf{M}$};
        \draw (N) -- node[above,obj]{$\mathsf{X}$} ++(1.5,0);
      \end{tikzpicture}
      .
    \end{center}
  \end{proposition}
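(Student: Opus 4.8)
The plan is to exhibit a single simulation witnessing $\mathsf{M} \preceq \mathsf{d}_{\mathsf{X}} \circ \oc \mathsf{M}$; since $\simeq$ is by definition the equivalence closure of $\preceq$, this immediately yields $\mathsf{M} \simeq \mathsf{d}_{\mathsf{X}} \circ \oc\mathsf{M}$. First I would record the shapes involved. Because $\oc\mathsf{I} = (\mathbb{N}\times\emptyset,\mathbb{N}\times\emptyset) = \mathsf{I}$, the machine $\oc\mathsf{M}$ goes from $\mathsf{I}$ to $\oc\mathsf{X}$, so the composite $\mathsf{d}_{\mathsf{X}} \circ \oc\mathsf{M}$ again goes from $\mathsf{I}$ to $\mathsf{X}$, exactly like $\mathsf{M}$. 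By the definition of composition its state space is $\state{\oc\mathsf{M}} \times \state{\mathsf{d}_{\mathsf{X}}} = \state{\mathsf{M}}^{\mathbb{N}} \times 1 \cong \state{\mathsf{M}}^{\mathbb{N}}$, with initial state $(\init{\mathsf{M}},\init{\mathsf{M}},\ldots)$.

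Next I would compute the transition function of the composite from the join defining composition. Since the two components of $\mathsf{I}$ are empty, an external input is a negative token $x \in X^{-}$ arriving on the right edge of $\mathsf{d}_{\mathsf{X}}$. By $\tran{\mathsf{d}_{\mathsf{X}}}(\mm,x) = (\mm,(0,x))$ it is forwarded to $\oc\mathsf{X}$ as $(0,x)$, which by the defining commuting square of $\oc(-)$ is handled by the copy of $\mathsf{M}$ indexed by $0$: if $\tran{\mathsf{M}}(x,s_{0}) = (x',s_{0}')$, this copy returns $(0,x')$, and $\tran{\mathsf{d}_{\mathsf{X}}}(\hh,(0,x')) = (\hh,x')$ strips the index to produce the external output $x' \in X^{+}$. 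I would then check that the remaining summands of the join contribute nothing: the summand that does not pass through $\oc\mathsf{M}$ is empty because $\tran{\mathsf{d}_{\mathsf{X}}}$ maps the external input into $\oc\mathsf{X}$ rather than directly to $X^{+}$, and the summands involving two or more interactions with $\oc\mathsf{M}$ vanish because the first round trip already delivers the token to the external $\mathsf{X}$-edge. Hence, at state $(s_{0},s_{1},s_{2},\ldots)$ and input $x$, the composite outputs $x'$ and moves to $(s_{0}',s_{1},s_{2},\ldots)$ whenever $\tran{\mathsf{M}}(x,s_{0}) = (x',s_{0}')$, and is undefined precisely when $\tran{\mathsf{M}}(x,s_{0})$ is.

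I would then define $g \colon \state{\mathsf{M}} \to \state{\mathsf{M}}^{\mathbb{N}}$ by $g(s) = (s,\init{\mathsf{M}},\init{\mathsf{M}},\ldots)$. This is measurable for the product $\sigma$-algebra, since the preimage of a generating box $A_{0} \times A_{1} \times \cdots$ equals $A_{0}$ when $\init{\mathsf{M}} \in A_{n}$ for all $n \geq 1$ and is $\emptyset$ otherwise, and it sends $\init{\mathsf{M}}$ to the initial state $(\init{\mathsf{M}},\init{\mathsf{M}},\ldots)$. Using the transition computed above, applying $\id \times g$ and then the composite transition to $(x,s)$ gives $(x',g(s_{0}'))$ whenever $\tran{\mathsf{M}}(x,s) = (x',s_{0}')$, which coincides with first applying $\tran{\mathsf{M}}$ and then $\id \times g$; both routes are undefined together. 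This makes the simulation square commute, establishing $\mathsf{M} \preceq \mathsf{d}_{\mathsf{X}} \circ \oc\mathsf{M}$ and hence $\mathsf{M} \simeq \mathsf{d}_{\mathsf{X}} \circ \oc\mathsf{M}$.

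The main obstacle is the second step: extracting the composite's transition from the infinite join and confirming that only the single round-trip summand survives, i.e. that dereliction always routes through the copy indexed by $0$ and that the interaction terminates after exactly one bounce so that no genuine iteration over $n$ takes place. Once this transition is in hand, measurability of $g$ and commutativity of the square are routine verifications.
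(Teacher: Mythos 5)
Your proposal is correct and follows essentially the same route as the paper: the paper's (informal) argument likewise computes the transition function of $\mathsf{d}_{\mathsf{X}} \circ \oc\mathsf{M}$ from the composition formula, observes that dereliction routes every external token through the copy of $\mathsf{M}$ indexed by $0$ while the other copies stay idle, and concludes behavioural equivalence. The only difference is that the paper leaves the witnessing map implicit, whereas you make it explicit as $s \mapsto (s,\init{\mathsf{M}},\init{\mathsf{M}},\ldots)$ realizing $\mathsf{M} \preceq \mathsf{d}_{\mathsf{X}} \circ \oc\mathsf{M}$, together with the (correct) measurability and dead-summand checks.
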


  \paragraph{Digging and Contraction}

  For natural numbers $n,m \in \mathbb{N}$, we write
  $\langle n,m \rangle$ for the Cantor pairing $n+(n+m)(n+m+1)/2$, and
  we write $n|_{0}$ and $n|_{1}$ for unique natural numbers such that
  $n = \langle n|_{0},n|_{1} \rangle$. For an $\mathbf{Int}$-object
  $\mathsf{X}$, let
  $\mathsf{dg}_{\mathsf{X}} \colon \oc\mathsf{X} \multimap \oc
  \oc\mathsf{X}$ and
  $\mathsf{con}_{\mathsf{X}} \colon \oc \mathsf{X} \multimap \oc
  \mathsf{X} \otimes \oc \mathsf{X}$ be stateless deterministic Mealy
  machines whose transition functions
  \begin{align*}
    \tran{\mathsf{dg}_{\mathsf{X}}}
    &\colon
      \mathbb{N} \times X^{+} +
      \mathbb{N} \times \mathbb{N} \times X^{-}
      \to
      \mathbb{N} \times \mathbb{N} \times X^{+}
      + \mathbb{N} \times X^{-}, \\
    \tran{\mathsf{con}_{\mathsf{X}}} 
    &\colon
      \mathbb{N} \times X^{+} +
      (\mathbb{N} \times X^{-} + \mathbb{N} \times X^{-})
      \to
      (\mathbb{N} \times X^{+}
      + \mathbb{N} \times X^{+}) + \mathbb{N} \times X^{-}.
  \end{align*}
  are given by
  \begin{align*}
    \tran{\mathsf{dg}_{\mathsf{X}}}(\hh,(\langle n,m \rangle,x))
    &= (\hh,(n,m,x)) \\
    \tran{\mathsf{dg}_{\mathsf{X}}}(\mm,(n,m,x))
    &= (\mm,(\langle n,m \rangle,x)), \\
    \tran{\mathsf{con}_{\mathsf{X}}}(\hh,(n,x))
    &= 
      \begin{cases}
        (\hh,(\hh,(n/2,x)))
        , & \textnormal{if } n \textnormal{ is even}, \\
        (\hh,(1,((n-1)/2,x)))
        , & \textnormal{if } n \textnormal{ is odd}, \\
      \end{cases} \\
    \tran{\mathsf{con}_{\mathsf{X}}}(\mm,(\hh,(u,x)))
    &= (\mm,(2n+1,x)), \\
    \tran{\mathsf{con}_{\mathsf{X}}}(\mm,(\mm,(u,x)))
    &= (\mm,(2n,x)).
  \end{align*}
  These stateless Mealy machines $\mathsf{dg}_{\mathsf{X}}$ and
  $\mathsf{con}_{\mathsf{X}}$ behave as follows: for all
  $n,m \in \mathbb{N}$,
  \begin{center}
    \begin{tikzpicture}
      \node (dg) [draw,circle,obj] at (0,0) {$\mathsf{dg}_{\mathsf{X}}$};
      \draw (dg) -- node[above,obj]{$\oc\oc\mathsf{X}$} ++(1.5,0);
      \draw (dg) -- node[above,obj]{$\oc\mathsf{X}$} ++(-1.5,0);
      \draw[thick,<->] (1.5,-0.15) node[right,obj]{$(n,(m,x))$}
      -- ++(-3,0) node[left,obj]{$(\langle n,m\rangle,x)$};
    \end{tikzpicture}
  \end{center}
  \begin{center}
    \begin{tikzpicture}[rounded corners]
      \node(c) [draw,circle,obj] at (0,0) {$\mathsf{c}_{\mathsf{X}}$};
      \draw (c) -- node[above=4pt,obj]{$\oc\mathsf{X}$} ++(-1.5,0);
      \draw (c) -- ++ (0.4,0.4) -- node[above=4pt,obj]{$\oc\mathsf{X}$}++ (1,0);
      \draw (c) -- ++ (0.4,-0.4) -- node[above,obj]{$\oc\mathsf{X}$}++ (1,0);
      \draw[thick,<->] (-1.5,0.15) node[left,obj]{$(2n,x)$}
      -- ++ (1.45,0) -- ++ (0.4,0.4) -- ++ (1.05,0) node[right,obj]{$(n,x)$};
      \draw[thick,<->] (-1.5,-0.15) node[left,obj]{$(2n+1,x)$}
      -- ++ (1.45,0) -- ++ (0.4,-0.4) -- ++ (1.05,0) node[right,obj]{$(n,x)$};
    \end{tikzpicture}
  \end{center}

  \begin{proposition}\label{prop:dg_and_con}
    For any Mealy machine $\mathsf{M} \colon \mathsf{X} \multimap
    \mathsf{Y}$,
    \begin{equation*}
      \mathsf{dg}_{\mathsf{Y}} \circ \oc \mathsf{M}
      \simeq \oc \oc \mathsf{M}
      \circ \mathsf{dg}_{\mathsf{X}},
      \qquad
      \mathsf{con}_{\mathsf{Y}} \circ \oc \mathsf{M}
      \simeq
      (\oc \mathsf{M} \otimes \oc \mathsf{M}) \circ \mathsf{con}_{\mathsf{X}}.
    \end{equation*}
    Diagrammatically, we have
    \begin{center}
      \begin{tikzpicture}[rounded corners]
        \node (M)[draw,circle,obj] at (0,0) {$\oc \mathsf{M}$};
        \node (dg)[draw,circle,obj] at (1.5,0) {$\mathsf{dg}_{\mathsf{Y}}$};
        \draw (M) --node[above,obj]{$\oc\mathsf{Y}$} (dg);
        \draw (M)--node[above,obj]{$\oc\mathsf{X}$}++(-1.5,0);
        \draw (dg)--node[above,obj]{$\oc\oc\mathsf{Y}$}++(1.5,0);
        \node at (3.75,0) {$\simeq$};
        \begin{scope}[xshift=6cm]
          \node (M)[draw,circle,obj] at (0,0) {$\mathsf{dg}_{\mathsf{Y}}$};
          \node (dg)[draw,circle,obj] at (1.5,0) {$\oc\oc\mathsf{M}$};
          \draw (M) --node[above,obj]{$\oc\mathsf{Y}$} (dg);
          \draw (M)--node[above,obj]{$\oc\mathsf{X}$}++(-1.5,0);
          \draw (dg)--node[above,obj]{$\oc\oc\mathsf{Y}$}++(1.5,0);
        \end{scope}
      \end{tikzpicture}
    \end{center}
    \begin{center}
      \begin{tikzpicture}[rounded corners]
        \node (M) [draw,circle,obj] at (-0.5,0) {$\oc \mathsf{M}$};
        \node (c) [draw,circle,obj] at (1,0) {$\mathsf{c}_{\mathsf{X}}$};
        \draw (M) -- node[above,obj]{$\oc\mathsf{Y}$}(c);
        \draw (M) -- node[above,obj]{$\oc\mathsf{X}$}++(-1.5,0);
        \draw (c) -- ++(0.4,0.4) -- node[above,obj]{$\oc\mathsf{Y}$}++(1,0);
        \draw (c) -- ++(0.4,-0.4) -- node[above,obj]{$\oc\mathsf{Y}$}++(1,0);
        \node at (3,0) {$\simeq$};
        \begin{scope}[xshift=5cm]
          \node (c) [draw,circle,obj] at (0,0) {$\mathsf{c}_{\mathsf{X}}$};
          \node (M0) [draw,circle,obj] at (2,0.4) {$\oc \mathsf{M}$};
          \node (M1) [draw,circle,obj] at (2,-0.4) {$\oc \mathsf{M}$};
          \draw (M0) -- node[above,obj]{$\oc\mathsf{Y}$} ++(1.5,0);
          \draw (M1) -- node[above,obj]{$\oc\mathsf{X}$} ++(1.5,0);
          \draw (c) -- ++(0.4,0.4) -- node[above,obj]{$\oc\mathsf{Y}$} (M0);
          \draw (c) -- ++(0.4,-0.4) -- node[above,obj]{$\oc\mathsf{Y}$} (M1);
          \draw (c) -- node[above,obj]{$\oc\mathsf{X}$}++(-1.5,0);
        \end{scope}
      \end{tikzpicture}
      .
    \end{center}
  \end{proposition}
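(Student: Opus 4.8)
The plan is to prove each equivalence by exhibiting an explicit measurable bijection between the state spaces of the two composites and checking that it witnesses $\preceq$ in both directions, which by definition of $\simeq$ yields the desired behavioural equivalence. The only states carried by either composite come from the $\oc(-)$-copies of $\mathsf{M}$, because $\mathsf{dg}_{(-)}$ and $\mathsf{con}_{(-)}$ are \emph{token machines}: composing with them does not enlarge the state space, up to the unit-law identifications $S \times 1 \cong S \cong 1 \times S$ of Convention~\ref{conv:ox}. Moreover, their transition functions are total rewirings that always push a token towards an external port and never reflect it back onto the shared $\oc$-wire, so no unbounded internal loop can form and the countable join $f_{X^{+},Z^{-},Z^{+},X^{-}} \vee \bigvee_{n} f \circ f^{n} \circ f$ defining the composite transition (Section~\ref{sec:const}) collapses to finitely many nonzero terms. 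The net effect is that the transition function of each composite is exactly the transition function of the relevant copy-machine ($\oc\mathsf{M}$, $\oc\oc\mathsf{M}$, or $\oc\mathsf{M} \otimes \oc\mathsf{M}$) with the $\mathbb{N}$-labels on the shared wire relabelled, at every crossing, by the token machine.

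For the digging equation the left composite $\mathsf{dg}_{\mathsf{Y}} \circ \oc\mathsf{M}$ has state space $\state{\mathsf{M}}^{\mathbb{N}}$, while the right composite $\oc\oc\mathsf{M} \circ \mathsf{dg}_{\mathsf{X}}$ has state space $\state{\mathsf{M}}^{\mathbb{N} \times \mathbb{N}}$. I would define $\phi \colon \state{\mathsf{M}}^{\mathbb{N}} \to \state{\mathsf{M}}^{\mathbb{N} \times \mathbb{N}}$ by $\phi(\{s_{k}\}_{k})_{(n,m)} = s_{\langle n,m\rangle}$, using the Cantor pairing. Since $\langle -,-\rangle$ is a bijection with inverse $k \mapsto (k|_{0},k|_{1})$, the map $\phi$ and its inverse only reindex coordinates, hence are measurable; and $\phi$ sends the constant family $(\init{\mathsf{M}},\init{\mathsf{M}},\ldots)$ to the constant family, so it carries the initial state of the left composite to that of the right. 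The naturality square then commutes because $\mathsf{dg}$ relabels an index $\langle n,m\rangle$ on the $\oc$-wire into the pair $(n,m)$ on the $\oc\oc$-wire precisely as $\phi$ transports the state held by the $\langle n,m\rangle$-th copy of $\mathsf{M}$ on the left to the $(n,m)$-th copy on the right; this is verified by tracing a token entering at each of the ports $\oc\mathsf{X}^{+}$ and $\oc\oc\mathsf{Y}^{-}$ and following its $\hh/\mm$ tags through the (at most two) crossings of the shared wire. As $\phi$ has a measurable inverse, it witnesses $\preceq$ both ways, whence $\mathsf{dg}_{\mathsf{Y}} \circ \oc\mathsf{M} \simeq \oc\oc\mathsf{M} \circ \mathsf{dg}_{\mathsf{X}}$.

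The contraction equation is handled identically, replacing Cantor pairing by the even/odd splitting used to define $\mathsf{con}$. Here the left composite $\mathsf{con}_{\mathsf{Y}} \circ \oc\mathsf{M}$ has state space $\state{\mathsf{M}}^{\mathbb{N}}$ and the right composite $(\oc\mathsf{M} \otimes \oc\mathsf{M}) \circ \mathsf{con}_{\mathsf{X}}$ has state space $\state{\mathsf{M}}^{\mathbb{N}} \times \state{\mathsf{M}}^{\mathbb{N}}$. I would take the measurable bijection $\psi(\{s_{k}\}_{k}) = (\{s_{2n}\}_{n},\{s_{2n+1}\}_{n})$, which again preserves initial states, and check that the square commutes because $\mathsf{con}$ routes the even index $2n$ (resp. the odd index $2n+1$) to the first (resp. second) copy of $\oc\mathsf{Y}$ exactly as $\psi$ places the $2n$-th (resp. $(2n+1)$-th) copy of $\mathsf{M}$ into the first (resp. second) component of the pair.

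The step I expect to be the main obstacle is the verification that the composite transition functions really are the claimed relabelings: one must justify that the countable join collapses to a finite join, using that neither token machine ever reflects a token and that $\oc\mathsf{M}$ contributes at most one internal reversal, and then carry out the case-exhaustive bookkeeping of the $\hh/\mm$ tags and $\mathbb{N}$-indices as a token is traced through each port and each direction. Every individual case is routine, but there are several ports and orientations to check, and measurability of the relabelled transition functions and of $\phi,\psi$ and their inverses must be kept in view throughout; the latter is immediate, since all maps involved merely permute or regroup coordinates and so preserve the product $\sigma$-algebras defining the state spaces of $\oc\mathsf{M}$ and $\oc\oc\mathsf{M}$.
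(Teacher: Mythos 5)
Your proof is correct. The paper in fact offers no proof of this proposition — it is stated bare in the long version, and the short version only asserts (citing Laurent) that these token machines behave as expected modulo behavioural equivalence — so your explicit verification is precisely the routine argument the paper leaves implicit: the Cantor-pairing reindexing bijection for digging and the even/odd interleaving bijection for contraction, each preserving initial states and intertwining the composite transition functions, together with the finite-interaction observation (the token machines never reflect, and $\oc\mathsf{M}$ reverses a token at most once, so the countable join defining the composite collapses and composition with $\mathsf{dg}$ or $\mathsf{con}$ is a pure relabelling of $\mathbb{N}$-indices). One cosmetic slip: the identification of $\state{\oc\mathsf{M}} \times 1$ with $\state{\oc\mathsf{M}}$ is not an instance of Convention~\ref{conv:ox}, which concerns monoidal products of $\mathbf{Int}$-objects rather than state spaces; this affects nothing, since that isomorphism is itself a measurable bijection realizing behavioural equivalence.
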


  \paragraph{Weakening}

  We define a token machine $\mathsf{w}_{X} \colon \mathsf{X} \to \mathsf{I}$
  by
  \begin{equation*}
    \tran{\mathsf{w}_{X}}
    = \textnormal{the empty partial function}.
  \end{equation*}
  Because the identity is the only Mealy machine from $\mathsf{I}$ to $\mathsf{I}$
  (up to behavioural equivalence),
  we see that for any Mealy machine $\mathsf{M} \colon \mathsf{I} \multimap \mathsf{X}$,
  \begin{equation*}
    \mathsf{w}_{\mathsf{X}} \circ \mathsf{M}
    \simeq \mathsf{id}_{\mathsf{I}}.
  \end{equation*}
  This behavioural equivalence means that we can remove
  \begin{center}
    \begin{tikzpicture}
      \node (M) [draw,circle,obj] at (0,0) {$\mathsf{M}$};
      \node (w) [draw,circle,obj] at (1,0) {$\mathsf{w}_{\mathsf{X}}$};
      \draw (M) --node[above,obj] {$\mathsf{X}$} (w);
    \end{tikzpicture}
  \end{center}
  from any diagram.
}

\subsubsection{Real Numbers}
\label{sec:num}
We define an $\mathbf{Int}$-object
$\mathsf{R}$ to be 
\begin{math}
  (\mathbb{S},\mathbb{S})
\end{math}
where $\mathbb{S}$ is the measurable space of all finite
sequences of real numbers endowed with the following $\sigma$-algebra
\begin{equation*}
  A \in \Sigma_{\mathbb{S}}
  \iff
  A \cap \mathbb{R}^{n} \in \Sigma_{\mathbb{R}^{n}}
  \textnormal{ for all } n \in \mathbb{N}.
\end{equation*}
For $a \in \mathbb{R}$, we define a token machine
$\mathsf{r}_{a} \colon \mathsf{I} \multimap \mathsf{R}$ by
\begin{equation*}
  \raisebox{0.2cm}{$\tran{\mathsf{r}_{a}}(\mm,u) = (\hh,a \mathbin{::} u)$}
  \qquad
  \begin{tikzpicture}
    \node (r) [draw,circle,obj] at (0,0) {$\mathsf{r}_{a}$};
    \draw (r) -- node[above=4,obj] {$\mathsf{R}$} ++ (1.5,0);
    \draw[thick] (1.5,-0.15) node[right,obj] {$u$} -- ++ (-1.2,0);
    \draw[thick,<-] (1.5,0.15) node[right,obj] {$a\mathbin{::}u$} -- ++ (-1.2,0);
    \draw[thick] (0.3,0.15) arc(90:270:0.15);
  \end{tikzpicture}
  .
\end{equation*}
The transition means that given a query $u$ from environment,
$\mathsf{r}_{a}$ answers its value $a$ by appending $a$ to $u$. We
will use $u$ as a stack. \longshortv{See Section~\ref{sec:mf} and
  Section~\ref{sec:score}.}{See Section~\ref{sec:score}.}

\subsubsection{Measurable Functions}
\label{sec:mf}
We associate a measurable function
$f \colon \mathbb{R}^{n} \to \mathbb{R}$ with a token machine
$\mathsf{fn}_{f} \colon \mathsf{R}^{\otimes n} \multimap \mathsf{R}$.
For simplicity, we define $\mathsf{fn}_{f}$
\longshortv{for $n = 1$ and $n = 2$.}{for $n = 1$.}
When $n = 1$, the transition function
\begin{math}
  \tran{\mathsf{fn}_{f}} \colon
  \mathbb{S} + \mathbb{S}
  \to
  \mathbb{S} + \mathbb{S}
\end{math}
is given by
\begin{align*}
  \tran{\mathsf{fn}_{f}}(\mm,u)
  &= (\mm,u), \\
  \tran{\mathsf{fn}_{f}}(\hh,u)
  &= 
    \begin{cases}
      (\hh,f(a)\mathbin{::}u')
      , & \textnormal{if } u = a\mathbin{::}u', \\
      \textnormal{undefined}
      , & \textnormal{otherwise}.
    \end{cases}
\end{align*}
We explain how $\mathsf{fn}_{f}$ simulates $f$ by describing execution
of $\mathsf{fn}_{f} \circ \mathsf{r}_{a}$ for a real number
$a \in \mathbb{R}$. As in the following diagram, given an input
$u \in \mathbb{S}$ from the right $\mathsf{R}$-edge, $\mathsf{fn}_{f}$
sends $u$ to the left $\mathsf{R}$-edge in order to obtain the value
of its argument. The return value to $\mathsf{fn}_{f}$ from
$\mathsf{r}_{a}$ is $a\mathbin{::}u$, by which $\mathsf{fn}_{f}$ sees
that its argument is $a$. Then, $\mathsf{fn}_{f}$ outputs
$f(a)\mathbin{::}u$. As a whole, the following Mealy machine is
behaviourally equivalent to $\mathsf{r}_{f(a)}$.
\begin{center}
  \begin{tikzpicture}
    \node (f) [draw,circle,obj] at (0,0) {$\mathsf{fn}_{f}$};
    \node (r) [draw,circle,obj] at (-2.5,0) {$\mathsf{r}_{a}$};
    \node (u) [obj] at (-1.2,-0.15) {$u$} ;
    \node (au) [obj] at (-1.2,0.15) {$a\mathbin{::}u$} ;
    \draw (r) -- node[above=4,obj] {$\mathsf{R}$} ++ (1,0) -- (f);
    \draw (f) -- node[above=4,obj] {$\mathsf{R}$} ++(1.5,0);

    \draw[thick,->] (1.5,-0.15) node[right,obj] {$u$} -- (u);
    \draw[thick] (u) -- ++(-1,0);
    \draw[thick] (-2.2,0.15) arc(90:270:0.15);
    \draw[thick,->] (-2.2,0.15) -- (au);
    \draw[thick,<-] (1.5,0.15) node[right,obj] {$f(a)\mathbin{::}u$} -- (au);
  \end{tikzpicture}
\end{center}
\longv{
  When $n = 2$, the transition function of $\mathsf{fn}_{f} \colon \mathsf{R}
  \otimes \mathsf{R} \multimap \mathsf{R}$ is
  $\tran{\mathsf{fn}_{f}} \colon (\mathbb{S} + \mathbb{S}) + \mathbb{S}
  \to \mathbb{S} + (\mathbb{S} + \mathbb{S})$ given by
  \begin{align*}
    \tran{\mathsf{fn}_{f}}(\hh,(\hh,u))
    &= (\mm,(\hh,u)), \\
    \tran{\mathsf{fn}_{f}}(\hh,(\mm,u))
    &= 
      \begin{cases}
        (\hh,f(a,b) \cons v)
        , & \textnormal{if } u = a \cons b \cons v, \\
        \textnormal{undefined}
        , & \textnormal{otherwise}, \\
      \end{cases} \\
    \tran{\mathsf{fn}_{f}}(\mm,u)
    &= (\mm,(\mm,u)).
  \end{align*}
  As in the following diagram, given an input $u \in \mathbb{S}$ from
  the right $\mathsf{R}$-edge, $\mathsf{fn}_{f}$ first sends $u$ to
  the lower $\mathsf{R}$-edge in the left hand side in order to obtain
  the value of its first argument. The return value to
  $\mathsf{fn}_{f}$ from $\mathsf{r}_{a}$ is $a \cons u$. Next,
  $\mathsf{fn}_{f}$ sends $a \cons u$ to the upper
  $\mathsf{R}$-edge in the left hand side. Then $\mathsf{r}_{b}$
  returns $b \cons a \cons u$. Now, $\mathsf{fn}_{f}$ sees that
  its first argument is $a$ and its second argument is $b$. Finally,
  $\mathsf{fn}_{f}$ outputs $f(a,b) \cons u$.
  \begin{center}
    \begin{tikzpicture}[rounded corners]
      \node (t) [draw,circle,obj] at (0,0) {$\mathsf{fn}_{f}$};
      \node (a) [draw,circle,obj] at (-2.5,-0.4) {$\mathsf{r}_{a}$};
      \node (b) [draw,circle,obj] at (-2.5,0.4) {$\mathsf{r}_{b}$};
      \draw (a) -- ++ (2,0) -- (t);
      \draw (b) -- ++ (2,0) -- (t);
      \draw (1.5,0) -- ++ (-1,0) -- (t);

      \node (u) at (-1.5,-0.55) [obj] {$u$};
      \node (au) at (-1.5,-0.25) [obj] {$a \cons u$};
      \node (au') at (-1.5,0.25) [obj] {$a \cons u$};
      \node (bau) at (-1.5,0.55) [obj] {$b \cons a \cons u$};

      \draw[<-,thick] (u) -- ++ (1.05,0) -- ++ (0.5,0.4) -- ++ (1.45,0) node[right,obj] {$u$};
      \draw[->,thick] (bau) -- ++ (1.05,0) -- ++ (0.5,-0.4) -- ++ (1.45,0)
      node[right,obj] {$f(a,b) \cons u$};
      \draw[thick] (u) -- ++(-0.7,0);
      \draw[thick] (-2.2,-0.55) arc(270:90:0.15);
      \draw[thick,->] (-2.2,-0.25) -- (au);
      \draw[thick] (au) -- ++ (1,0);
      \draw[thick,<-] (au') -- ++ (1,0);
      \draw[thick] (-0.5,-0.25) arc(-90:90:0.25);

      \draw[thick] (au') -- ++(-0.7,0);
      \draw[thick] (-2.2,0.25) arc(270:90:0.15);
      \draw[thick,->] (-2.2,0.55) -- (bau);
    \end{tikzpicture}
  \end{center}
  For general cases, $f$ may have more arguments, and $\mathsf{fn}_{f}$
  sequentially sends queries to its arguments storing partial
  information about its arguments on finite sequences of real numbers.
}

\longv{
  \subsubsection{Conditional Branching}
  \label{sec:cond}

  For an $\mathbf{Int}$-object $\mathsf{X}$ such that $X^{-}$ is a
  measurable subspace of $\mathbb{S}$, 
  we define
  \begin{equation*}
    \mathsf{cd} \colon
    \mathsf{R} \otimes (\mathsf{X} \otimes
    \mathsf{X}) \to \mathsf{X} 
  \end{equation*}
  to be a token machine whose transition function
  \begin{equation*}
    \tran{\mathsf{cd}_{\mathsf{X}}} \colon
    (\mathbb{S} + (X^{+} + X^{+})) + X^{-}
    \to
    X^{+} + ((X^{-} + X^{-}) + \mathbb{S})
  \end{equation*}
  is given by
  \begin{align*}
    \tran{\mathsf{cd}_{\mathsf{X}}}
    (\hh,(\hh,u))
    &=
      \begin{cases}
        (\mm,(\hh,(\mm,v)))
        , & \textnormal{if } u = 0 \cons v \textnormal{ and }
        v \in X^{-}, \\
        (\mm,(\hh,(\hh,v)))
        , & \textnormal{if } u = a \cons v \textnormal{ and } a \neq 0 \textnormal{ and }
        v \in X^{-}, \\
        \textnormal{undefined}
        , & \textnormal{otherwise}, \\
      \end{cases} \\
    \tran{\mathsf{cd}_{\mathsf{X}}}
    (\hh,(\mm,(\hh,x)))
    &= (\hh,x), \\
    (\hh,(\mm,(\mm,x)))
    &= (\hh,x), \\
    \tran{\mathsf{cd}_{\mathsf{X}}}
    (\mm,u)
    &= (\mm,(\mm,u)).
  \end{align*}
  For a real number $a \in \mathbb{R}$ and Mealy machines
  $\mathsf{M},\mathsf{N} \colon \mathsf{I} \multimap \mathsf{X}$, we
  describe execution of
  \begin{math}
    \mathsf{cd}_{\mathsf{X}} \circ (\mathsf{r}_{a} \otimes \mathsf{M}
    \otimes \mathsf{N}). 
  \end{math}
  Given an input $u \in X^{-}$, then $\mathsf{cd}_{\mathsf{X}}$ tries
  to check whether $a$ is zero or not by sending $u$ to the
  $\mathsf{R}$-edge. There are two cases: (i) if $a$ is $0$, then
  $\mathsf{r}_{a}$ returns $0 \cons u$, and $\mathsf{cd}_{\mathsf{X}}$
  forwards $u$ to the middle $\mathsf{X}$-edge; (ii) if $a$ is not
  $0$, say $1$, then $\mathsf{r}_{a}$ returns $1 \cons u$, and
  $\mathsf{cd}_{\mathsf{X}}$ forwards $u$ to the upper
  $\mathsf{X}$-edge:
  \begin{center}
    \begin{tikzpicture}[rounded corners]
      \node (cd) [draw,circle,obj] at (0,0) {$\mathsf{cd}$};
      \node (r) [draw,circle,obj] at (-5,-1) {$\mathsf{r}_{0}$};
      \node (M) [draw,circle,obj] at (-5,0) {$\mathsf{M}$};
      \node (N) [draw,circle,obj] at (-5,1) {$\mathsf{N}$};

      \draw (r) -- ++(2,0) -- node[above=4pt,obj]{$\mathsf{R}$} ++ (1.5,0) -- (cd);
      \draw (M) -- ++(2,0) -- node[above=4pt,obj]{$\mathsf{X}$} ++ (1.5,0) -- (cd);
      \draw (N) -- ++(2,0) -- node[above=4pt,obj]{$\mathsf{X}$} ++ (1.5,0) -- (cd);
      \draw (cd) -- node[above=4pt,obj]{$\mathsf{X}$} ++ (1.5,0);

      \node (1) [obj] at (-3.25,-1.15) {$u$};
      \node (2) [obj] at (-3.25,-0.85) {$0 \cons u$};
      \node (3) [obj] at (-3.25,-0.15) {$u$};
      \node (4) [obj] at (-3.25,0.15) {$x$};
      \node (5) [obj,right] at (1.5,-0.15) {$u$};
      \node (6) [obj,right] at (1.5,0.15) {$x$};

      \draw[thick,->] (5) -- ++ (-1.6,0) -- ++ (-1.5,-1) -- (1);
      \draw[thick,->] (1) -- ++ (-1.7,0) -- ++ (0,0.3) -- (2);
      \draw[thick,->] (2) -- ++ (1.7,0) -- ++(1.05,0.7) -- ++ (3);
      \draw[thick,->] (3) -- ++ (-1.7,0) -- ++ (0,0.3) -- (4);
      \draw[thick,->] (4) -- (6);

      \node at (-7,0) {(i)};
    \end{tikzpicture}
  \end{center}
  \begin{center}
    \begin{tikzpicture}[rounded corners]
      \node (cd) [draw,circle,obj] at (0,0) {$\mathsf{cd}$};
      \node (r) [draw,circle,obj] at (-5,-1) {$\mathsf{r}_{0}$};
      \node (M) [draw,circle,obj] at (-5,0) {$\mathsf{M}$};
      \node (N) [draw,circle,obj] at (-5,1) {$\mathsf{N}$};

      \draw (r) -- ++(2,0) -- node[above=4pt,obj]{$\mathsf{R}$} ++ (1.5,0) -- (cd);
      \draw (M) -- ++(2,0) -- node[above=4pt,obj]{$\mathsf{X}$} ++ (1.5,0) -- (cd);
      \draw (N) -- ++(2,0) -- node[above=4pt,obj]{$\mathsf{X}$} ++ (1.5,0) -- (cd);
      \draw (cd) -- node[above=4pt,obj]{$\mathsf{X}$} ++ (1.5,0);

      \node (1) [obj] at (-3.25,-1.15) {$u$};
      \node (2) [obj] at (-3.25,-0.85) {$0 \cons u$};
      \node (3) [obj] at (-3.25,0.85) {$u$};
      \node (4) [obj] at (-3.25,1.15) {$x$};
      \node (5) [obj,right] at (1.5,-0.15) {$u$};
      \node (6) [obj,right] at (1.5,0.15) {$x$};

      \draw[thick,->] (5) -- ++ (-1.6,0) -- ++ (-1.5,-1) -- (1);
      \draw[thick,->] (1) -- ++ (-1.7,0) -- ++ (0,0.3) -- (2);
      \draw[thick,->] (2) -- ++ (1.7,0) -- ++(1.3,0.85) --
      ++ (-1.3,0.85) -- ++ (3);
      \draw[thick,->] (3) -- ++ (-1.7,0) -- ++ (0,0.3) -- (4);
      \draw[thick,<-] (6) -- ++ (-1.6,0) -- ++ (-1.5,1) -- (4);

      \node at (-7,0) {(ii)};
    \end{tikzpicture}
  \end{center}
  Because in both cases, all outputs from $\mathsf{M}$ and $\mathsf{N}$
  are sent to the $\mathsf{X}$-edge in the right hand, we see
  that $\mathsf{cd}_{\mathsf{X}} \circ (\mathsf{r}_{a} \otimes
  \mathsf{M} \otimes \mathsf{N})$ simulates $\mathsf{M}$
  when $a = 0$ and simulates $\mathsf{N}$ when $a \neq 0$.
  \begin{proposition}
    For $a \in \mathbb{R}$ and for Mealy machines
    $\mathsf{M},\mathsf{N} \colon \mathsf{I} \to \mathsf{X}$, we have
    \begin{align*}
      \mathsf{cd}_{\mathsf{X}} \circ (\mathsf{r}_{a} \otimes
      \mathsf{M} \otimes \mathsf{N})
      &\simeq
        \begin{cases}
          \mathsf{M}, & \textnormal{if } a = 0, \\
          \mathsf{N}, & \textnormal{if } a \neq 0.
        \end{cases}
    \end{align*}
    Diagrammatically, we have
    \begin{center}
      \begin{tikzpicture}[rounded corners]
        \node (cd) [draw,circle,obj] at (0,0) {$\mathsf{cd}$};
        \node (r) [draw,circle,obj] at (-3,-1) {$\mathsf{r}_{0}$};
        \node (M) [draw,circle,obj] at (-3,0) {$\mathsf{M}$};
        \node (N) [draw,circle,obj] at (-3,1) {$\mathsf{N}$};

        \draw (r) -- node[above,obj]{$\mathsf{R}$} ++ (1.5,0) -- (cd);
        \draw (M) -- node[above,obj]{$\mathsf{X}$} ++ (1.5,0) -- (cd);
        \draw (N) -- node[above,obj]{$\mathsf{X}$} ++ (1.5,0) -- (cd);
        \draw (cd) -- node[above,obj]{$\mathsf{X}$} ++ (1.5,0);

        \node at (2.5,0) {$\simeq$};

        \node (L) [draw,circle,obj] at (3.5,0) {$\mathsf{M}$};
        \draw (L) -- node[above,obj] {$\mathsf{X}$} ++ (1.5,0);
      \end{tikzpicture}
    \end{center}
    and for any $a \neq 0$,
    \begin{center}
      \begin{tikzpicture}[rounded corners]
        \node (cd) [draw,circle,obj] at (0,0) {$\mathsf{cd}$};
        \node (r) [draw,circle,obj] at (-3,-1) {$\mathsf{r}_{a}$};
        \node (M) [draw,circle,obj] at (-3,0) {$\mathsf{M}$};
        \node (N) [draw,circle,obj] at (-3,1) {$\mathsf{N}$};

        \draw (r) -- node[above,obj]{$\mathsf{R}$} ++ (1.5,0) -- (cd);
        \draw (M) -- node[above,obj]{$\mathsf{X}$} ++ (1.5,0) -- (cd);
        \draw (N) -- node[above,obj]{$\mathsf{X}$} ++ (1.5,0) -- (cd);
        \draw (cd) -- node[above,obj]{$\mathsf{X}$} ++ (1.5,0);

        \node at (2.5,0) {$\simeq$};

        \node (L) [draw,circle,obj] at (3.5,0) {$\mathsf{N}$};
        \draw (L) -- node[above,obj] {$\mathsf{X}$} ++ (1.5,0);
      \end{tikzpicture}
    \end{center}
  \end{proposition}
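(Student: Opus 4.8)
The plan is to prove each behavioural equivalence by exhibiting a single measurable map witnessing the generating relation $\preceq$ of Section~\ref{sec:bhe}; since $\simeq$ is by definition the symmetric--transitive closure of $\preceq$, a one-sided witness suffices in each case. Write $\mathsf{P} = \mathsf{r}_{a} \otimes \mathsf{M} \otimes \mathsf{N} \colon \mathsf{I} \multimap \mathsf{R} \otimes (\mathsf{X} \otimes \mathsf{X})$ and $\mathsf{C} = \mathsf{cd}_{\mathsf{X}} \circ \mathsf{P} \colon \mathsf{I} \multimap \mathsf{X}$. By the definitions of monoidal product and composition, $\state{\mathsf{C}} \cong \state{\mathsf{M}} \times \state{\mathsf{N}}$ (all trivial $1$-factors contributed by $\mathsf{r}_{a}$ and $\mathsf{cd}_{\mathsf{X}}$ being absorbed), with initial state $(\init{\mathsf{M}},\init{\mathsf{N}})$, and the external interface of $\tran{\mathsf{C}}$ has type $X^{-} \times \state{\mathsf{C}} \to X^{+} \times \state{\mathsf{C}}$ since $I^{\pm} = \emptyset$.

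First I would compute $\tran{\mathsf{C}}$ explicitly by tracing the token along the internal $\mathsf{R} \otimes (\mathsf{X} \otimes \mathsf{X})$-edge using the composition formula of Section~\ref{sec:const}. An external query $x \in X^{-} \subseteq \mathbb{S}$ is forwarded by $\mathsf{cd}_{\mathsf{X}}$ to the $\mathsf{R}$-edge via $\tran{\mathsf{cd}_{\mathsf{X}}}(\mm,x) = (\mm,(\mm,x))$; then $\mathsf{r}_{a}$ answers $a \cons x$; and the first clause of $\tran{\mathsf{cd}_{\mathsf{X}}}$ inspects the head. For $a = 0$ it produces $(\mm,(\hh,(\mm,x)))$, routing the residual query $x$ to the negative copy $(\mm,\cdot)$ which — owing to the order reversal $Y^{-}+X^{-}$ in the negative part of $\otimes$ — is exactly the branch carrying $\mathsf{M}$; then $\mathsf{M}$ transitions $\tran{\mathsf{M}}(x,s) = (x',s')$, and its answer $x'$ arrives on the positive copy $(\hh,\cdot)$ and is forwarded unchanged to the external edge by $\tran{\mathsf{cd}_{\mathsf{X}}}(\hh,(\mm,(\hh,x'))) = (\hh,x')$. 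The net effect is $\tran{\mathsf{C}}(x,(s,t)) = (x',(s',t))$, defined precisely when $\tran{\mathsf{M}}(x,s)$ is, leaving the $\mathsf{N}$-component $t$ untouched. For $a \neq 0$ the symmetric clause routes to the $\mathsf{N}$-branch, giving $\tran{\mathsf{C}}(x,(s,t)) = (x',(s,t'))$ with $\tran{\mathsf{N}}(x,t) = (x',t')$ and $s$ untouched.

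With $\tran{\mathsf{C}}$ in hand both equivalences are immediate. For $a = 0$, the map $f \colon \state{\mathsf{M}} \to \state{\mathsf{M}} \times \state{\mathsf{N}}$, $f(s) = (s,\init{\mathsf{N}})$, is measurable (its two projections are $\id$ and a constant), satisfies $f(\init{\mathsf{M}}) = \init{\mathsf{C}}$, and makes the defining square of $\preceq$ commute: both composites send $(x,s)$ to $(x',(s',\init{\mathsf{N}}))$ where $\tran{\mathsf{M}}(x,s) = (x',s')$, with matching domains of definition. Hence $\mathsf{M} \preceq \mathsf{C}$ and therefore $\mathsf{M} \simeq \mathsf{C}$. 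For $a \neq 0$ the symmetric witness $g(t) = (\init{\mathsf{M}},t)$ yields $\mathsf{N} \simeq \mathsf{C}$.

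The main obstacle is the first step, and it is bookkeeping rather than conceptual: one must keep straight the nested coproduct injections in $\mathsf{R} \otimes (\mathsf{X} \otimes \mathsf{X})$ — in particular the reversal of the negative component of $\otimes$, which is precisely what makes the $a = 0$ clause land on $\mathsf{M}$ and the $a \neq 0$ clause on $\mathsf{N}$ — and one must verify that the countable join $\bigvee_{n}$ defining $\tran{\mathsf{C}}$ collapses to the single terminating interaction path traced above. Concretely, I would check that the bare term $f_{X^{+},Z^{-},Z^{+},X^{-}}$ and the $n=0$ summand are undefined on an external query (a $\mathsf{cd}_{\mathsf{X}}$-query always first emits an internal $\mathsf{R}$-token, never an external answer), that only the $n = 3$ summand (corresponding to the $\mathsf{r}_{a}$-answer, the $\mathsf{cd}_{\mathsf{X}}$-routing, and the $\mathsf{M}$- or $\mathsf{N}$-answer) contributes, and that each intermediate restriction is defined, so that the join genuinely equals the partial function computed above.
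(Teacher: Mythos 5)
Your proposal is correct and takes essentially the same approach as the paper: both proofs rest on the computation that the composite machine's transition function acts exactly as $\tran{\mathsf{M}}$ (for $a=0$; as $\tran{\mathsf{N}}$ for $a \neq 0$) on the relevant state component while the other machine is never engaged, and then conclude by exhibiting a single measurable witness for the generating relation $\preceq$. The only difference is the direction of that witness: the paper uses the projection from the composite's state space $1 \times 1 \times \state{\mathsf{M}} \times \state{\mathsf{N}}$ onto $\state{\mathsf{M}}$ (resp.\ $\state{\mathsf{N}}$), i.e.\ it shows $\mathsf{cd}_{\mathsf{X}} \circ (\mathsf{r}_{a} \otimes \mathsf{M} \otimes \mathsf{N}) \preceq \mathsf{M}$, whereas you use the embedding $s \mapsto (s,\init{\mathsf{N}})$ in the opposite direction --- both are equally valid one-sided witnesses given your (correct) explicit trace of the token through $\mathsf{cd}_{\mathsf{X}}$.
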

  \begin{proof}
    When $a = 0$, the first behavioral equivalence is realized by the
    first projection from
    $1 \times 1 \times \state{\mathsf{M}} \times \state{\mathsf{M}}
    \cong \state{\mathsf{M}} \times \state{\mathsf{M}}$ to
    $\state{\mathsf{M}}$. When $a \neq 0$, the first behavioral
    equivalence is realized by the second projection from
    $1 \times 1 \times \state{\mathsf{M}} \times \state{\mathsf{M}}
    \cong \state{\mathsf{M}} \times \state{\mathsf{M}}$ to
    $\state{\mathsf{N}}$. The second behavioral equivalence is realized
    by the obvious measurable function from
    $1 \times 1 \times \state{\mathsf{M}} \times \state{\mathsf{N}}$ to
    $1$.
  \end{proof}
}

\subsubsection{A State Monad}
\label{sec:state}
Let $\mathbb{T}$ be the subspace of $\mathbb{S}$ consisting of
all finite sequences of real numbers in $\mathbb{R}_{[0,1]}$. Recall that
$\mathbb{R}_{\geq 0} \times \mathbb{T}$ is ``the set of states'' in
sampling-based operational semantics and our idea is to model
$\mathtt{score}$ and $\mathtt{sample}$ by a state monad. In this
section, we give a state monad that we use in our Mealy machine
semantics. We define $\mathbf{Int}$-objects $\mathsf{S}_{0}$ and
$\mathsf{S}$ by
\begin{equation*}
  \mathsf{S}_{0} = (\mathbb{R}_{\geq 0} \times \mathbb{T},\emptyset),
  \qquad
  \mathsf{S} = (\mathbb{R}_{\geq 0} \times \mathbb{T},\mathbb{R}_{\geq 0} \times \mathbb{T}).
\end{equation*}
Then $\mathsf{S} \otimes (-)$ is a state monad (on $\mathbf{Mealy}$)
because for any $\mathbf{Int}$-object $\mathsf{X}$, we have
\begin{math}
  \mathsf{S} \otimes \mathsf{X}
  = ((\mathsf{S}_{0} \otimes \mathsf{X})^{\bot} \otimes \mathsf{S}_{0})^{\bot}.
\end{math}
The unit and the multiplication of this monad are:
\begin{align*}
  \mathsf{e} \otimes \mathsf{X}
  \colon \mathsf{X} \multimap \mathsf{S} \otimes \mathsf{X},
  \qquad
  \mathsf{m} \otimes \mathsf{X}
  \colon \mathsf{S} \otimes \mathsf{S} \otimes \mathsf{X} \multimap
  \mathsf{S} \otimes \mathsf{X}
\end{align*}
where $\mathsf{e} = \mathsf{unit}_{\mathsf{S}_{0}}$ and
$\mathsf{m} = \mathsf{S}_{0} \otimes \mathsf{counit}_{\mathsf{S}_{0}} \otimes
\mathsf{S}_{0}^{\bot}$. Note that $\mathsf{S}$ is equal to
$\mathsf{S}_{0} \otimes \mathsf{S}_{0}^{\bot}$.
We can depict the unit and the multiplication as
follows:
\begin{center}
  \begin{tikzpicture}[rounded corners,yscale=0.7]
    \draw (-1,0.7) -- node[above,obj] {$\mathsf{X}$} ++ (2,0);
    \node (h) [draw,circle,obj] at (0,0.1) {$\mathsf{e}$};
    \draw (h) -- node[above,obj]{$\mathsf{S}$}++ (1,0);
    \begin{scope}[xshift=3cm]
      \draw (0,0.9) -- node[above,obj] {$\mathsf{X}$} ++ (2.5,0);
      \node (mu) [draw,circle,obj] at (1.3,0.2) {$\mathsf{m}$};
      \draw (mu) -- node[above,obj]{$\mathsf{S}$}++(1.25,0);
      \draw (mu) --++(-0.4,-0.25) -- node[above,obj]{$\mathsf{S}$}++(-0.85,0);
      \draw (mu) --++(-0.4,0.25) -- node[above,obj]{$\mathsf{S}$}++(-0.85,0);
    \end{scope}
  \end{tikzpicture}
  .
\end{center}

\subsubsection{Scoring}
\label{sec:score}

We define $\mathsf{sc}$ to be a token machine from
$\mathsf{R}$ to $\mathsf{S}$ whose transition function
\begin{math}
  \tran{\mathsf{sc}} \colon
  \mathbb{S} + \mathbb{R}_{\geq 0} \times \mathbb{T}
  \to \mathbb{R}_{\geq 0} \times \mathbb{T} + \mathbb{S}
\end{math}
is given by
\begin{align*}
  &\tran{\mathsf{sc}}(\mm,(a,u))
  = (\mm,a\mathbin{::}u), \\
  &\tran{\mathsf{sc}}(\hh,u)
  =
  \begin{cases}
    (\hh,(|ab|,u')), & \textnormal{if } u = a\mathbin{::}b\mathbin{::}u' \textnormal{ and }
    u' \in \mathbb{T}, \\
    \textnormal{undefined}, & \textnormal{otherwise}.
  \end{cases}
\end{align*}
The token machine simulates scoring
$(\mathtt{score}(\mathtt{r}_{a}),b,u)
\to (\mathtt{skip},|a|\,b,u)$ as follows:
\begin{center}
  \begin{tikzpicture}
    \node (r) [draw,circle,obj] at (0,0) {$\mathsf{r}_{a}$};
    \node (sc) [draw,circle,obj] at (2.5,0) {$\mathsf{sc}$};
    \draw (r) -- node[above=4,obj] {$\mathsf{R}$} ++(0.8,0) -- (sc);
    \draw (sc) -- node[above=4,obj] {$\mathsf{S}$} ++ (1.5,0);
    \node (au) [obj] at (1.25,-0.15) {$b\mathbin{::}u$};
    \node (bau) [obj] at (1.25,0.15) {$a\mathbin{::}b\mathbin{::}u$};
    \draw[thick,<-] (au) -- (4,-0.15) node[right,obj] {$(b,u)$};
    \draw[thick,->] (bau) -- (4,0.15) node[right,obj] {$(|a|b,u)$};
    \draw[thick] (au) -- ++(-1,0);
    \draw[thick,<-] (bau) -- ++(-1,0);
    \draw[thick] (0.25,0.15) arc(90:270:0.15);    
  \end{tikzpicture}
  .
\end{center}

\subsubsection{Sampling}
\label{sec:sampling}
We define a Mealy machine $\mathsf{sa} \colon \mathsf{I}
\multimap \mathsf{S} \otimes \oc \mathsf{R}$ by: the state space
$\state{\mathsf{sa}}$ is defined to be $\{\ast\} \cup \mathbb{R}_{[0,1]}$, and
the initial state $\init{\mathsf{sa}}$ is $\ast$, and the
transition function
\longshortv{
  \begin{equation*}
    \tran{\mathsf{sa}} \colon
    (\emptyset + (\mathbb{N} \times \mathbb{S} + \mathbb{R}_{\geq 0} \times \mathbb{T}))
    \times \state{\mathsf{sa}}
    \to \\
    ((\mathbb{R}_{\geq 0} \times \mathbb{T} + \mathbb{N} \times \mathbb{S}) + \emptyset)
    \times \state{\mathsf{sa}}
  \end{equation*}
}{
  \begin{multline*}
    \tran{\mathsf{sa}} \colon
    (\emptyset + (\mathbb{N} \times \mathbb{S} + \mathbb{R}_{\geq 0} \times \mathbb{T}))
    \times \state{\mathsf{sa}}
    \to \\
    ((\mathbb{R}_{\geq 0} \times \mathbb{T} + \mathbb{N} \times \mathbb{S}) + \emptyset)
    \times \state{\mathsf{sa}}
  \end{multline*}
}
is given by
\longshortv{
  \begin{align*}
    \tran{\mathsf{sa}}((\mm,(\hh,(n,u))),s)
    &=
      \begin{cases}
        \textnormal{undefined}
        , & \textnormal{if } s = \ast, \\
        ((\hh,(\mm,(n,s\mathbin{::}u))),s)
        , & \textnormal{if } s \in \mathbb{R}, \\
      \end{cases} \\
    \tran{\mathsf{sa}}((\mm,(\mm,(a,u))),s)
    &=
      \begin{cases}
        ((\hh,(\hh,(a,v))),b)
        , & \textnormal{if } s=\ast \textnormal{ and }
        u = b\mathbin{::}v \\
        \textnormal{undefined}
        , & \textnormal{otherwise}.
      \end{cases}
  \end{align*}
}{
  \begin{align*}
    &\tran{\mathsf{sa}}((\mm,(\hh,(n,u))),s)
      = \\
    & \qquad
      \begin{cases}
        \textnormal{undefined}
        , & \textnormal{if } s = \ast, \\
        ((\hh,(\mm,(n,s\mathbin{::}u))),s)
        , & \textnormal{if } s \in \mathbb{R}, \\
      \end{cases} \\
    &\tran{\mathsf{sa}}((\mm,(\mm,(a,u))),s)
      = \\
    &\qquad
      \begin{cases}
        ((\hh,(\hh,(a,v))),b)
        , & \textnormal{if } s=\ast \textnormal{ and }
        u = b\mathbin{::}v \\
        \textnormal{undefined}
        , & \textnormal{otherwise}.
      \end{cases}
  \end{align*}
} As we explained in Section~\ref{sec:pn}, the Mealy machine
$\mathsf{sa}$ simulates the evaluation rule
$(\mathtt{sample},a,b\mathbin{::}u) \to (b,a,u)$:
\begin{center}
  \begin{tikzpicture}[rounded corners]
    \node (sa) [draw,circle,obj] at (0,0) {$\mathsf{sa}$};
    \draw (sa) -- ++ (0.4,0.4) -- node[above,obj] {$\oc \mathsf{R}$}++ (1.5,0);
    \draw (sa) -- ++ (0.4,-0.4) -- node[above=4,obj] {$\mathsf{S}$}++ (1.5,0);
    \node[obj] at (0,0.4) {$\ast/b$};
    \draw[thick,->] (1.9,-0.55) node[right,obj] {$(a,b\mathbin{::}u)$}
    -- ++ (-1.55,0) -- ++(-0.35,0.35) -- ++ (0.2,0.2)
    -- ++ (0.25,-0.25) -- ++ (1.45,0) node[right,obj] {$(a,u)$};
    \begin{scope}[xshift=4cm]
      \node (sa) [draw,circle,obj] at (0,0) {$\mathsf{sa}$};
      \draw (sa) -- ++ (0.4,0.4) -- node[above=4,obj] {$\oc \mathsf{R}$}++ (1.5,0);
      \draw (sa) -- ++ (0.4,-0.4) -- node[above,obj] {$\mathsf{S}$}++ (1.5,0);
      \node[obj] at (-0.1,0.4) {$b/b$};
      \draw[thick,->] (1.9,0.55) node[right,obj] {$(n,u)$}
      -- ++ (-1.55,0) -- ++(-0.35,-0.35) -- ++ (0.2,-0.2)
      -- ++ (0.25,0.25) -- ++ (1.45,0) node[right,obj] {$(n,b\mathbin{::}u)$};
    \end{scope}
  \end{tikzpicture}
  .
\end{center}
Namely, $\mathsf{sa}$ pops $b$ from the trace, and then $\mathsf{sa}$
answers queries $(n,u)$ that the result of sampling is $b$.

\subsection{Diagrammatic Reasoning}
\label{sec:dr}

We now give a brief remark on diagrammatic presentation of Mealy
machines. The diagrammatic presentation of a Mealy machine is not only
for intuitive explanation, but also for rigorous reasoning about
behavioural equivalence. This follows from some categorical
observation. Let $\mathbf{Mealy}$ be the category of
$\mathbf{Int}$-objects and behavioural equivalence classes of Mealy
machines, where composition is induced by the composition of Mealy
machines. \longshortv{We will give a proof of the followng proposition
  in the next section.}{ The next proposition follows from
  some categorical argument on Mealy machines. See \cite{dlh2019}.}
\begin{proposition}\label{prop:cpt}
  The category $\mathbf{Mealy}$ is a compact closed category. The dual
  of an $\mathbf{Int}$-object $\mathsf{X}$ is $\mathsf{X}^{\bot}$. The
  unit and the counit arrows are $\mathsf{unit}_{\mathsf{X}}$ and
  $\mathsf{counit}_{\mathsf{X}}$.
\end{proposition}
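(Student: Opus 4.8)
The plan is to verify the two ingredients of compact closure in turn: first that $\mathbf{Mealy}$ is a symmetric monoidal category under $\otimes$ with unit $\mathsf{I}$, and then that for every $\mathbf{Int}$-object $\mathsf{X}$ the pair $(\mathsf{unit}_{\mathsf{X}},\mathsf{counit}_{\mathsf{X}})$ exhibits $\mathsf{X}^{\bot}$ as a dual of $\mathsf{X}$. For the monoidal structure, Convention~\ref{conv:ox} already makes the associator and unitors into identities on objects, so the associativity and unit coherences are trivial; what remains is to check, up to behavioural equivalence, that $\otimes$ is a bifunctor --- i.e. $\mathsf{id}_{\mathsf{X}} \otimes \mathsf{id}_{\mathsf{Y}} \simeq \mathsf{id}_{\mathsf{X} \otimes \mathsf{Y}}$ and the interchange law $(\mathsf{M}' \circ \mathsf{M}) \otimes (\mathsf{N}' \circ \mathsf{N}) \simeq (\mathsf{M}' \otimes \mathsf{N}') \circ (\mathsf{M} \otimes \mathsf{N})$ --- and that $\mathsf{sym}$ is a natural symmetry with $\mathsf{sym}_{\mathsf{Y},\mathsf{X}} \circ \mathsf{sym}_{\mathsf{X},\mathsf{Y}} \simeq \mathsf{id}$ together with the hexagon. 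Each of these is an equality of transition functions after unwinding the explicit formulae for $\tran{\mathsf{M} \otimes \mathsf{N}}$ and $\tran{\mathsf{sym}_{\mathsf{X},\mathsf{Y}}}$; the interchange law is the only one involving state, and it holds because in $\mathsf{M} \otimes \mathsf{N}$ the two submachines act on disjoint halves of the input and on independent factors of the product state space, so both composites carry the same transitions on $\state{\mathsf{M}} \times \state{\mathsf{N}} \times \state{\mathsf{M}'} \times \state{\mathsf{N}'}$ modulo the evident reindexing isomorphism.

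For the duality, the task is to establish the two snake equations. Writing $\eta = \mathsf{unit}_{\mathsf{X}}$ and $\varepsilon = \mathsf{counit}_{\mathsf{X}}$, and using that the unitors are strict, I would show
\[
(\mathsf{id}_{\mathsf{X}} \otimes \varepsilon) \circ (\eta \otimes \mathsf{id}_{\mathsf{X}}) \simeq \mathsf{id}_{\mathsf{X}}, \qquad (\varepsilon \otimes \mathsf{id}_{\mathsf{X}^{\bot}}) \circ (\mathsf{id}_{\mathsf{X}^{\bot}} \otimes \eta) \simeq \mathsf{id}_{\mathsf{X}^{\bot}}.
\]
Since $\eta$, $\varepsilon$ and the identities are all \emph{token} machines, their monoidal products and composites are again token machines, so by the convention for token machines introduced in Section~\ref{sec:diagram} it suffices to compute the single composite partial measurable function and check that it equals $\id_{X^{+} + X^{-}}$ (respectively $\id_{X^{-} + X^{+}}$). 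Because $\tran{\eta} = \tran{\varepsilon} = \id$, the countable join defining composition collapses to the one-step interaction path: a token $x \in X^{+}$ entering the $\mathsf{X}$-edge on the left is forwarded by $\eta$ onto the dual edge, turned around by $\varepsilon$, and emerges unchanged on the right, and symmetrically a token in $X^{-}$ traverses the zig-zag in the reverse direction. This is exactly the axiom/cut elimination already pictured in Section~\ref{sec:id}, and it yields $\tau = \id$ on the nose.

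A cleaner conceptual route, which I would use to organise the bookkeeping, is to observe that all of this is an instance of the $\mathbf{Int}$-construction of Joyal--Street--Verity \cite{jsv}: the category whose objects are single measurable spaces and whose morphisms are behavioural-equivalence classes of Mealy machines, with feedback given by the countable-join fixpoint underlying the composition formula, is a \emph{traced symmetric monoidal category}, and $\mathbf{Mealy}$ is precisely its $\mathbf{Int}$-category, whence compact closure together with the stated dual, unit and counit is formal (cf.\ \cite{hmh2014}). Under either route, the genuinely laborious step is not the snake equations --- which are trivial cut-eliminations on stateless machines --- but the verification, already flagged as ``tedious but doable'' in Section~\ref{sec:const}, that the composition of Mealy machines is well defined, associative, and compatible with behavioural equivalence: one must check that the countable joins of the interaction-path transition functions exist and remain \emph{partial measurable}, and that the resulting trace satisfies the trace axioms (naturality, vanishing, superposing and yanking) up to $\simeq$. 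Once that measure-theoretic groundwork is in place, Proposition~\ref{prop:cpt} is immediate.
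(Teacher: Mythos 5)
Your proposal is correct, and the ``cleaner conceptual route'' you ultimately recommend is precisely the paper's own proof: the paper forms the wide subcategory $\mathbf{Mealy}_{+}$ of one-way machines (those $\mathbf{Int}$-objects with empty negative part, i.e.\ essentially single measurable spaces), equips it with a trace operator lifted from the uniform trace on the category $\mathbf{pMeas}$ of partial measurable functions, and then observes that $\mathbf{Mealy} \cong \mathbf{Int}(\mathbf{Mealy}_{+})$, so compact closure with dual $\mathsf{X}^{\bot}$, unit $\mathsf{unit}_{\mathsf{X}}$ and counit $\mathsf{counit}_{\mathsf{X}}$ is exactly the Joyal--Street--Verity theorem. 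Your direct token-chasing verification of the snake equations is a sound alternative for that final step, and you correctly locate the genuine labour in the well-definedness, measurability and trace axioms for the countable-join composition, which the paper discharges via uniformity of the $\mathbf{pMeas}$ trace together with its compatibility with the product $W \times (-)$ of state spaces.
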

Therefore, as a consequence of the coherence theorem for compact closed
categories \cite{kl1980,selinger2011}, we see that graph isomorphism
preserves behavioural equivalence.
\begin{proposition}\label{prop:beh_eq}
  If two Mealy machines have the same diagrammatic presentation modulo
  some rearrangement of edges and nodes, then they are behaviourally
  equivalent.
\end{proposition}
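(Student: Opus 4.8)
The plan is to derive the statement directly from Proposition~\ref{prop:cpt} together with the coherence theorem for compact closed categories \cite{kl1980,selinger2011}. The crucial observation is that behavioural equivalence is \emph{by construction} equality of morphisms in $\mathbf{Mealy}$: since the objects of $\mathbf{Mealy}$ are $\mathbf{Int}$-objects and its morphisms are the $\simeq$-classes $[\mathsf{M}]$, we have $\mathsf{M} \simeq \mathsf{N}$ if and only if $[\mathsf{M}] = [\mathsf{N}]$ in $\mathbf{Mealy}$. Hence it suffices to show that the morphism denoted by a diagram depends only on the diagram up to rearrangement of its wires and boxes.

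First I would make precise what a diagrammatic presentation is in this setting. Each elementary node occurring in our diagrams --- the tensor and par links, the axiom and cut links $\mathsf{unit}_{\mathsf{X}}$ and $\mathsf{counit}_{\mathsf{X}}$, the symmetry $\mathsf{sym}_{\mathsf{X},\mathsf{Y}}$, the structural morphisms for $\oc(-)$, and the ``effectful'' generators $\mathsf{r}_{a}$, $\mathsf{fn}_{f}$, $\mathsf{sc}$, $\mathsf{sa}$, $\mathsf{e}$, $\mathsf{m}$ --- names a fixed morphism of $\mathbf{Mealy}$. A diagram is then a term built from these generators using composition and monoidal product, possibly inserting the canonical coherence isomorphisms (associators, unitors, the braiding, and the unit/counit witnessing duality) that Conventions~\ref{conv:ox} and~\ref{conv:oc} allow us to leave implicit. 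Reading such a term in $\mathbf{Mealy}$ yields a well-defined morphism, its denotation.

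The heart of the argument is the soundness of the graphical calculus. I would let $\mathcal{F}$ be the free compact closed category generated by the signature whose objects are the edge-labels of our diagrams and whose basic morphisms are the elementary nodes above, each with its declared source and target. By Proposition~\ref{prop:cpt}, $\mathbf{Mealy}$ is compact closed with $\mathsf{X}^{\bot}$ dual to $\mathsf{X}$ and $\mathsf{unit}_{\mathsf{X}}$, $\mathsf{counit}_{\mathsf{X}}$ as the duality data, so the assignment of each generator to its intended Mealy machine extends uniquely to a strict compact-closed functor $\sem{-}\colon \mathcal{F} \to \mathbf{Mealy}$. The Kelly--Laplaza coherence theorem \cite{kl1980} (in the form presented in \cite{selinger2011}) states that two morphisms of $\mathcal{F}$ are equal exactly when their string diagrams are isomorphic as labelled, oriented graphs --- that is, equal up to the rearrangement of edges and nodes mentioned in the statement. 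Consequently, if $\mathsf{M}$ and $\mathsf{N}$ arise as denotations of diagrams that agree up to such rearrangement, the underlying morphisms of $\mathcal{F}$ coincide, and applying $\sem{-}$ gives $[\mathsf{M}] = [\mathsf{N}]$, i.e.\ $\mathsf{M} \simeq \mathsf{N}$.

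The main obstacle is not the compact-closed coherence itself but verifying that the \emph{additional} structure we use is faithfully covered by treating it as generating boxes. In particular, the identifications licensed by Convention~\ref{conv:oc}, which collapse $\oc(\mathsf{X} \otimes \mathsf{Y})$ with $\oc\mathsf{X} \otimes \oc\mathsf{Y}$, and the functoriality and structural laws for $\oc(-)$ established in Proposition~\ref{prop:oc_f}, go beyond the bare compact-closed signature; I would therefore have to check that every rearrangement permitted in our diagrams is either a compact-closed coherence isomorphism (handled by \cite{kl1980}) or one of these $\oc$-coherences, and that the latter hold in $\mathbf{Mealy}$ modulo $\simeq$. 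Since Proposition~\ref{prop:oc_f} already validates exactly these laws up to behavioural equivalence, the functor $\sem{-}$ remains well defined on the enlarged signature, and the coherence argument carries over without change.
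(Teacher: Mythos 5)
Your proposal is correct and follows essentially the same route as the paper: the paper obtains this proposition directly from Proposition~\ref{prop:cpt} together with the coherence theorem for compact closed categories \cite{kl1980,selinger2011}, which is exactly the argument you spell out via the free compact closed category and the induced functor into $\mathbf{Mealy}$. Your extra care in treating the $\oc$-structure and the identifications of Convention~\ref{conv:oc} as additional generators whose laws are validated (up to behavioural equivalence) by Proposition~\ref{prop:oc_f} is a sensible elaboration of a point the paper leaves implicit, but it does not change the substance of the argument.
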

\longv{For example, for all Mealy machines
$\mathsf{M} \colon \mathsf{X} \otimes \mathsf{Y} \multimap \mathsf{Z}
\otimes \mathsf{W}$ and
$\mathsf{N} \colon \mathsf{W} \multimap \mathsf{Y}$, we have
\begin{center}
  \begin{tikzpicture}[rounded corners]
    \node (M) [draw,circle,obj] at (0,0) {$\mathsf{M}$};
    \node (N) [draw,circle,obj] at (1,0.4) {$\mathsf{N}$};
    \draw (N) -- ++(-0.5,0) -- node[above,obj]{$\mathsf{W}$} (M);
    \draw (N) -- ++ (0.5,0);
    \draw (2.5,0.4) arc(-90:90:0.2);
    \draw (2.5,0.4) -- ++(-0.3,0) -- ++(-0.4,0.4) -- ++(-0.3,0);
    \draw (2.5,0.8) -- ++(-0.3,0) -- ++(-0.4,-0.4) -- ++(-0.3,0);
    \draw (1.5,0.8) -- ++(-2.5,0);
    \draw (-1,0.8) arc(90:270:0.2);
    \draw (-1,0.4) -- ++(0.5,0) -- node[above,obj]{$\mathsf{Y}$} (M);
    \draw (M) -- ++(0.4,-0.4) -- node[above,obj]{$\mathsf{Z}$} ++(2.3,0);
    \draw (M) -- ++(-0.4,-0.4) -- node[above,obj]{$\mathsf{X}$} ++(-0.8,0);

    \node at (2.9,0) {$\simeq$};
    \begin{scope}[xshift=2.9cm]
      \node (M) [draw,circle,obj] at (2,0) {$\mathsf{M}$};
      \node (N) [draw,circle,obj] at (1,0.4) {$\mathsf{N}$};
      \draw (N) -- ++(0.5,0) --node[above,obj]{$\mathsf{Y}$} (M);
      \draw (N) -- ++ (-0.5,0);
      \draw (0.5,0.4) arc(270:90:0.2);
      \draw (0.5,0.8) -- ++(2.5,0);
      \draw (4,0.4) arc(-90:90:0.2);
      \draw (4,0.4) -- ++(-0.3,0) -- ++(-0.4,0.4) -- ++(-0.3,0);
      \draw (4,0.8) -- ++(-0.3,0) -- ++(-0.4,-0.4) -- ++(-0.3,0);
      \draw (3,0.4) -- node[above,obj]{$\mathsf{W}$} ++(-0.5,0) -- (M);
      \draw (M) -- ++(-0.4,-0.4) -- node[above,obj]{$\mathsf{X}$} ++(-1.3,0);
      \draw (M) -- ++(0.4,-0.4) -- node[above,obj]{$\mathsf{Z}$} ++(1.8,0);
    \end{scope}
  \end{tikzpicture}
  .
\end{center}}

\longv{
\subsection{Proof of Proposition~\ref{prop:cpt}}
\label{sec:proof_cpt}

\subsubsection{The Category of Partial Measurable Functions}
\label{sec:category}

For some basic categorical notions, see standard text books such as
\cite{maclane1998}. We define
$\mathbf{pMeas}$ to be the category of measurable spaces and partial
measurable functions. In $\mathbf{pMeas}$, the empty space $\emptyset$ is the initial
object, and the coproduct space $X + Y$ is the coproduct of $X$ and
$Y$ in the categorical sense. We write
\begin{equation*}
  \mathrm{inl}_{X,Y} \colon X \to X + Y,
  \qquad
  \mathrm{inr}_{X,Y} \colon Y \to X + Y
\end{equation*}
for the left/right injections. For partial measurable functions
$f \colon X \to Y$ and $g \colon Z \to Y$, we define
$[f,g] \colon X + Z \to Y$ to be the cotupling of $f$ and $g$.
For partial measurable functions
$f \colon X \to Y$ and $g \colon Z \to W$, we define
partial measurable functions $f + g \colon X + Z \to Y + W$ and
$f \times g \colon X \times Z \to Y \times W$ by
\begin{align*}
  (f + g)(\hh,x)
  &=
    \begin{cases}
      (\hh,y), & \textnormal{if } f(x) \textnormal{ is defined and is equal to } y, \\
      \textnormal{undefined}, & \textnormal{otherwise},
    \end{cases} \\
  (f + g)(\mm,z)
  &=
    \begin{cases}
      (\mm,w), & \textnormal{if } g(z) \textnormal{ is defined and is equal to } w, \\
      \textnormal{undefined}, & \textnormal{otherwise},
    \end{cases} \\
  (f \times g)(x,z)
  &=
    \begin{cases}
      (y,w), & \textnormal{if } f(x) \textnormal{ is defined and is equal to } y \\
      &\textnormal{and } g(w) \textnormal{ is defined and is equal to } z, \\
      \textnormal{undefined}, & \textnormal{otherwise}.
    \end{cases}
\end{align*}
We note that $(\mathbf{pMeas},1,\times)$ and
$(\mathbf{pMeas},\emptyset,+)$ are symmetric monoidal categories. We
also note that $X \times (-)$ distributes over the coproducts, i.e.,
the canonical arrow
\begin{equation*}
  \mathrm{dst}_{X,Y,Z} = [X \times \mathrm{inl}_{Y,Z},X \times \mathrm{inr}_{Y,Z}]
  \colon X \times Y + X \times Z
  \to X \times (Y + Z)  
\end{equation*}
is an isomorphism.

The notion of trace introduced by Joyal, Street and Verity \cite{jsv}
plays important role in this section.
\begin{definition}
  Let $(\mathcal{C},I,\otimes)$ be a symmetric monoidal
  category. A \emph{trace operator} on $(\mathcal{C},I,\otimes)$
  is a family
  $\left\{\mathbf{tr}_{X,Y}^{Z}\right\}_{X,Y,Z \in \mathcal{C}}$
  satisfying the following axioms:
  \begin{varitemize}
  \item (Dinaturality) For all $\mathcal{C}$-arrows
    $f \colon X \otimes Z \to Y \otimes Z$,
    $g \colon X' \to X$ and
    $h \colon Y \to Y'$,
    \begin{equation*}
      h \circ \mathbf{tr}_{X,Y}^{Z}(f) \circ g
      = \mathbf{tr}_{X',Y'}^{Z}((h \otimes Z) \circ f \circ (g \otimes Z)).
    \end{equation*}
  \item (Sliding) For all $\mathcal{C}$-arrows
    $f \colon X \otimes Z \to Y \otimes W$,
    $g \colon W \to Z$,
    \begin{equation*}
      \mathbf{tr}_{X,Y}^{Z}((Y \otimes g) \circ f)
      = \mathbf{tr}_{X,Y}^{Z}(f \circ (X \otimes g)).
    \end{equation*}
  \item (Vanishing I) For all $\mathcal{C}$-arrows
    $f \colon X \otimes I \to Y \otimes I$,
    \begin{equation*}
      \mathbf{tr}_{X,Y}^{I}(f) = f.
    \end{equation*}
  \item (Vanishing II) For all $\mathcal{C}$-arrows
    $f \colon X \otimes Z \otimes W \to Y \otimes Z \otimes W$,
    \begin{equation*}
      \mathbf{tr}_{X,Y}^{Z \otimes W}(f)
      = \mathbf{tr}_{X,Y}^{Z}\left(
      \mathbf{tr}_{X \otimes Z, Y \otimes Z}^{W}(f)\right).
    \end{equation*}
  \item (Superposing)
    For all $\mathcal{C}$-arrows
    $f \colon X \otimes Z \to Y \otimes Z$,
    \begin{equation*}
      W \otimes \mathbf{tr}_{X,Y}^{Z}(f)
      = \mathbf{tr}_{W \otimes X, W \otimes Y}^{W \otimes Z}(W \otimes f).
    \end{equation*}
  \item (Yanking) For all $X \in \mathcal{C}$,
    \begin{equation*}
      \mathbf{tr}_{X,X}^{X}(\sigma_{X,X})
      = \mathrm{\id}_{X}
    \end{equation*}
    where $\sigma_{X,Y} \colon X \otimes Y \to Y \otimes X$
    is the brading.
  \end{varitemize}
  A symmetric monoidal category $(\mathcal{C},I,\otimes)$
  endowed with a trace operator $\mathbf{tr}$ is called
  a \emph{traced symmetric monoidal category}.
\end{definition}

We give a trace operator on $(\mathbf{pMeas},\emptyset,+)$. The
symmetric monoidal category $(\mathbf{pMeas},\emptyset,+)$ is enriched
over $\omega\mathbf{Cppo}$, which is the cartesian category of pointed
$\omega$-cpos and continuous functions. The partial order on a hom-set
$\mathbf{pMeas}(X,Y)$ is given by
\begin{equation*}
  f \leq g \iff \textnormal{for all $x \in X$, if $f(x)$ is defined,
  then $g(x)$ is defined, and $f(x) = g(x)$.}
\end{equation*}
The least arrow $\bot_{X,Y} \colon X \to Y$ is the empty partial
measurable function. The $\omega\mathbf{Cppo}$-enrichment induces an
iterator
\begin{equation*}
  \mathbf{iter}_{X,Y} \colon \mathbf{pMeas}(X,Y + X)
  \to \mathbf{pMeas}(X, Y)
\end{equation*}
given by
\begin{equation*}
  \mathbf{iter}_{X,Y}(f) =
  \textnormal{the least fixed point of }
  \biggl(
  g \colon X \to Y
  \longmapsto
  [\id_{Y},g] \circ f \colon X \to Y
  \biggr).
\end{equation*}
The operator $\mathbf{iter}$ induces another operator
\begin{equation*}
  \mathbf{tr}_{X,Y}^{Z} \colon
  \mathbf{pMeas}(X + Z, Y + Z) \to \mathbf{pMeas}(X, Y)
\end{equation*}
given by
\begin{equation*}
  \mathbf{tr}_{X,Y}^{Z}(f) =
  [\id_{Y},\mathbf{iter}_{Z,Y}(f \circ \mathrm{inr}_{X,Z})]
  \circ f \circ \mathrm{inl}_{X,Z}.
\end{equation*}
Concretely, for a partial measurable function
$f \colon X + Z \to Y + Z$,
\begin{equation*}
  \mathbf{tr}_{X,Y}^{Z}(f)(x)
  \textnormal{ is defined and is equal to } y
\end{equation*}
if and only if either $f(\hh,x) = (\hh,y)$
or there is a finite sequence $z_{1},\ldots,z_{n} \in Z$ such that
\begin{equation*}
  f(\hh,x) = (\mm,z_{1}),
  \qquad
  f(\mm,z_{1}) = (\mm,z_{2}),
  \qquad
  \cdots
  \qquad
  f(\mm,z_{n-1}) = (\mm,z_{n}),
  \qquad
  f(\mm,z_{n}) = (\hh,y).
\end{equation*}
\begin{proposition}\label{prop:tr}
  The family of operators
  $\left\{\mathbf{tr}_{X,Y}^{Z}\right\}_{X,Y,Z \in \mathbf{pMeas}}$ is
  a trace operator of the symmetric monoidal category
  $(\mathbf{pMeas},\emptyset,+)$. Furthermore, the trace operator is
  \emph{uniform} \cite{hasegawa2003} with respect to partial
  measurable functions : for all partial measurable functions
  $f \colon X + Z \to Y + Z$, $f \colon X + W \to Y + W$ and
  $h \colon Z \to W$, if
  \begin{equation*}
    \xymatrix{
      X + Z
      \ar[r]^-{f}
      \ar[d]_-{X + h}
      &
      Y + Z
      \ar[d]^-{Y + h}
      \\
      X + W
      \ar[r]_-{g}
      &
      Y + W
    }
  \end{equation*}
  commutes, then
  \begin{equation*}
    \mathbf{tr}_{X,Y}^{Z}(f) = \mathbf{tr}_{X,Y}^{W}(g).
  \end{equation*}
\end{proposition}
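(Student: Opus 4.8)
The plan is to exploit the $\omega\mathbf{Cppo}$-enrichment of $(\mathbf{pMeas},\emptyset,+)$ together with the explicit ``chain'' description of $\mathbf{tr}$ recorded just before the statement, and to reduce the uniformity clause to Plotkin's uniformity principle for least fixed points. First I would pin down the enrichment facts precisely: each hom-set $\mathbf{pMeas}(X,Y)$ is a pointed $\omega$-cpo under the extension order, with least element the empty partial function $\bot_{X,Y}$, while composition $\circ$, the coproduct functor $+$, and cotupling $[-,-]$ are all continuous in each argument; moreover every $\mathbf{pMeas}$-morphism is \emph{strict}, since composing with the empty function on either side again yields the empty function. Continuity of $g \mapsto [\id_Y,g]\circ f$ then guarantees, by Kleene's theorem, that $\mathbf{iter}_{X,Y}(f)$ is a well-defined least fixed point, and unfolding the Kleene chain $\bot \sqsubseteq [\id,\bot]\circ f \sqsubseteq \cdots$ recovers exactly the chain characterization displayed above. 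This characterization is the main computational tool for everything that follows.

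Next I would verify the six trace axioms, organised by difficulty. Vanishing~I is immediate because $Z = \emptyset$ leaves no feedback summand to iterate, so $\mathbf{tr}_{X,Y}^{\emptyset}(f) = f$. Yanking reduces to the observation that feeding $x$ into $\sigma_{X,X} \colon X + X \to X + X$ produces a length-one chain returning $x$. Superposing holds because the extra summand $W$ never enters the feedback object $Z$, so no chain is affected. Dinaturality follows from the chain description by pre- and post-composing chains with $g$ and $h$ (equivalently, from continuity together with the fixed-point equation defining $\mathbf{iter}$).

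Sliding, $\mathbf{tr}_{X,Y}^{Z}((Y + g)\circ f) = \mathbf{tr}_{X,Y}^{W}(f \circ (X + g))$ for $g \colon W \to Z$, I would handle by the bijective reindexing of chains along $g$: a run through $Z$ of the first function corresponds to a run through $W$ of the second, with the final feedback step absorbed by $g$. Vanishing~II is the crux and the step I expect to be hardest. It asserts that a single feedback over $Z + W$ equals an outer feedback over $W$ of an inner feedback over $Z$; at the level of least fixed points this is precisely Beki\'c's lemma on decomposing a simultaneous fixed point into nested ones, and at the level of the chain description it amounts to reorganising a chain through $Z + W$ into blocks of consecutive $Z$-steps separated by $W$-steps. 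I would prove it via the fixed-point route, showing that $\mathbf{iter}_{Z + W, Y}$ factors through the inner and outer iterators using continuity and uniqueness of least fixed points, rather than by the more tedious direct chain bookkeeping.

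Finally, uniformity is the cleanest part. If $h \colon Z \to W$ makes the displayed square commute, then because post-composition with any $\mathbf{pMeas}$-morphism is strict and continuous, $Y + h$ carries the least fixed point defining $\mathbf{iter}_{Z,Y}(f \circ \mathrm{inr})$ onto that defining $\mathbf{iter}_{W,Y}(g \circ \mathrm{inr})$; chasing this through the definition of $\mathbf{tr}$ gives $\mathbf{tr}_{X,Y}^{Z}(f) = \mathbf{tr}_{X,Y}^{W}(g)$. Strictness of \emph{every} morphism of $\mathbf{pMeas}$ is exactly what lets $h$ range over all partial measurable functions rather than only total ones. I would remark that the whole statement is an instance of the general fact that an $\omega\mathbf{Cppo}$-enriched category with coproducts carries a canonical uniform trace (cf.\ \cite{hasegawa2003,jsv}), so the axioms could be imported wholesale; the explicit verification through the chain description is worth keeping only because that description will be reused in the sequel.
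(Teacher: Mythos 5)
Your plan is correct, and it is genuinely more self-contained than what the paper provides: the paper's entire proof of this proposition is the single sentence ``It is straightforward to adapt the argument in \cite[Section~A]{hoshino2014}'', i.e.\ a deferral to an external argument, whereas you actually carry out (in outline) the verification that citation gestures at. Your route is the standard one and all of its steps go through: the $\omega\mathbf{Cppo}$-enrichment with continuous composition, $+$ and cotupling is exactly what the paper asserts just before defining $\mathbf{iter}$ and $\mathbf{tr}$; Kleene iteration of the continuous functional $g \mapsto [\id_{Y},g]\circ f$ recovers the displayed run/chain description; Vanishing~I, Yanking, Superposing, Dinaturality and Sliding then follow by direct comparison of runs; Vanishing~II is Beki\'c's lemma for nested least fixed points; and uniformity is a Plotkin-style fixed-point transfer, which applies to \emph{arbitrary} partial measurable $h$ precisely because every $\mathbf{pMeas}$-morphism is strict, as you note. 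What your approach buys is a readable, reference-free proof that isolates the structure actually used (enriched coproducts plus strictness); what the paper's approach buys is brevity and reuse of an argument checked elsewhere. Two small phrasings to repair when you write it up: in Sliding the correspondence of runs along $g$ is not a \emph{bijective} reindexing ($g$ need not be injective), but rather the step-by-step match between the unique run of $(Y+g)\circ f$ through $Z$ and the unique run of $f\circ(X+g)$ through $W$; and in uniformity the fixed point is transferred by \emph{pre}-composition with $h$, namely one proves $\mathbf{iter}_{W,Y}(g\circ\mathrm{inr}_{X,W})\circ h = \mathbf{iter}_{Z,Y}(f\circ\mathrm{inr}_{X,Z})$ by induction on the Kleene chains, using $[\id_{Y},v]\circ(Y+h)=[\id_{Y},v\circ h]$ and naturality of the injections, rather than ``$Y+h$ carrying one least fixed point onto the other.''
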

\begin{proof}
  It is straightforward to adapt the argument in
  \cite[Section~A]{hoshino2014}.
\end{proof}

We will use the next proposition to construct a trace operator
for Mealy machines.

\begin{proposition}\label{prop:distributes}
  For any partial measurable function $f \colon X + Z \to Y + Z$ and a
  measurable space $W$,
   \begin{equation*}
     W \times \mathbf{tr}_{X,Y}^{Z}(f)
     =
     \mathbf{tr}_{W \times X,W \times Y}^{W \times Z}
     \left(
       \mathrm{dst}_{W,Y,Z}^{-1} \circ
       (W \times f) \circ \mathrm{dst}_{W,X,Z}
     \right).
  \end{equation*}
\end{proposition}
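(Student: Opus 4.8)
The plan is to reduce the statement to the fact that the functor $W \times (-)$ preserves the least-fixed-point operator $\mathbf{iter}$ out of which $\mathbf{tr}$ is built, and then to let the distributivity isomorphisms cancel. First I would record two structural facts about $W \times (-)$ viewed as an operation on hom-posets of $\mathbf{pMeas}$. Since $(W \times f)(w,a) = (w,f(a))$ whenever $f(a)$ is defined and is undefined otherwise, the assignment $f \mapsto W \times f$ is monotone for the graph order, preserves the least element (i.e.\ $W \times \bot_{A,B} = \bot_{W \times A, W \times B}$), and is $\omega$-continuous, that is $W \times \bigvee_{n} f_{n} = \bigvee_{n} (W \times f_{n})$ for every $\omega$-chain $\{f_{n}\}$; all three are checked pointwise. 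Measurability of $W \times f$ follows from the fact, recalled in Section~\ref{sec:measure}, that partial measurable functions are closed under products.

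Next I would establish the conjugation identity
\begin{equation*}
  W \times [\id_{Y}, g] = [\id_{W \times Y}, W \times g] \circ \mathrm{dst}_{W,Y,Z}^{-1}
\end{equation*}
for every $g \colon Z \to Y$, which is immediate upon precomposing both sides with the isomorphism $\mathrm{dst}_{W,Y,Z}$ and using that $\mathrm{dst}$ is the cotupling of $W \times \mathrm{inl}_{Y,Z}$ and $W \times \mathrm{inr}_{Y,Z}$. Abbreviating $g = \mathrm{dst}_{W,Y,Z}^{-1} \circ (W \times f) \circ \mathrm{dst}_{W,X,Z}$ as in the statement, and writing $k = f \circ \mathrm{inr}_{X,Z}$ and $g_{r} = g \circ \mathrm{inr}_{W \times X, W \times Z}$, a short computation with $\mathrm{dst}$ (using $\mathrm{dst}_{W,X,Z} \circ \mathrm{inr}_{W \times X, W \times Z} = W \times \mathrm{inr}_{X,Z}$) gives $g_{r} = \mathrm{dst}_{W,Y,Z}^{-1} \circ (W \times k)$. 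Combining this with the conjugation identity shows that the Kleene functionals $\Psi_{k}(h) = [\id_{Y},h] \circ k$ and $\Psi_{g_{r}}(h') = [\id_{W \times Y},h'] \circ g_{r}$ satisfy $\Psi_{g_{r}}(W \times h) = W \times \Psi_{k}(h)$. An induction on $n$ then yields $\Psi_{g_{r}}^{n}(\bot) = W \times \Psi_{k}^{n}(\bot)$, and taking the join over $n$ together with $\omega$-continuity of $W \times (-)$ gives the desired preservation
\begin{equation*}
  \mathbf{iter}_{W \times Z, W \times Y}(g_{r}) = W \times \mathbf{iter}_{Z,Y}(k).
\end{equation*}

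Finally I would substitute into the defining formula for the trace. Using the analogue $g \circ \mathrm{inl}_{W \times X, W \times Z} = \mathrm{dst}_{W,Y,Z}^{-1} \circ (W \times (f \circ \mathrm{inl}_{X,Z}))$, the preservation of $\mathbf{iter}$ just obtained, and the conjugation identity rearranged as $[\id_{W \times Y}, W \times \mathbf{iter}_{Z,Y}(k)] = (W \times [\id_{Y}, \mathbf{iter}_{Z,Y}(k)]) \circ \mathrm{dst}_{W,Y,Z}$, the pair $\mathrm{dst}_{W,Y,Z} \circ \mathrm{dst}_{W,Y,Z}^{-1}$ cancels and one is left with
\begin{equation*}
  \mathbf{tr}_{W \times X, W \times Y}^{W \times Z}(g) = W \times \bigl([\id_{Y}, \mathbf{iter}_{Z,Y}(k)] \circ f \circ \mathrm{inl}_{X,Z}\bigr) = W \times \mathbf{tr}_{X,Y}^{Z}(f),
\end{equation*}
which is exactly the claim. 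The main obstacle is not any single step but the careful bookkeeping of the distributivity isomorphisms $\mathrm{dst}$: one must keep straight onto which summand each injection lands and verify that $W \times (-)$ commutes with cotupling through $\mathrm{dst}$, since it is precisely the cancellation $\mathrm{dst}_{W,Y,Z} \circ \mathrm{dst}_{W,Y,Z}^{-1} = \id$ that powers the final simplification. As a robustness check I would also verify the identity pointwise from the explicit description of $\mathbf{tr}$ given after Proposition~\ref{prop:tr}: an iteration path $z_{1}, \dots, z_{n}$ for $f$ at $x$ is carried by $g$ to the path $(w,z_{1}), \dots, (w,z_{n})$ with the first coordinate held fixed, so that $\mathbf{tr}(g)(w,x) = (w, \mathbf{tr}(f)(x))$, which is exactly $(W \times \mathbf{tr}_{X,Y}^{Z}(f))(w,x)$.
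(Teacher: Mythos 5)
Your proof is correct, but it takes a genuinely different route from the paper's. The paper never unfolds $\mathbf{tr}$ into its Kleene iteration: it fixes a point $w \in W$, regarded as an arrow from $1$ to $W$, records the single equation
\begin{equation*}
  \mathrm{dst}_{W,Y,Z}^{-1} \circ (W \times f) \circ \mathrm{dst}_{W,X,Z} \circ (w \times X + w \times Z)
  = (w \times Y + w \times Z) \circ f,
\end{equation*}
and then invokes two properties already established for the trace operator in Proposition~\ref{prop:tr} --- uniformity with respect to partial measurable functions (applied to the sliding arrow $w \times \id_{Z} \colon Z \to W \times Z$) followed by dinaturality (to absorb $w \times \id_{X}$ and $w \times \id_{Y}$) --- concluding that $\mathbf{tr}_{W\times X,W\times Y}^{W\times Z}(g) \circ (w \times \id_{X}) = w \times \mathbf{tr}_{X,Y}^{Z}(f)$ for every $w$, whence the identity holds pointwise. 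You instead work under the hood of the construction: you show $W \times (-)$ is a bottom-preserving, $\omega$-continuous operation on hom-posets, that the conjugation identity $W \times [\id_{Y},g] = [\id_{W\times Y},W\times g]\circ\mathrm{dst}_{W,Y,Z}^{-1}$ makes it intertwine the Kleene functionals and hence commute with $\mathbf{iter}$, after which the $\mathrm{dst}$ isomorphisms cancel in the defining formula for $\mathbf{tr}$. The paper's route buys brevity and reusability: once uniformity and dinaturality are on record, the argument is a few lines and applies verbatim to any uniform trace, however constructed. Your route buys self-containedness and independence from those axioms: it could be run before Proposition~\ref{prop:tr} is proved, and it avoids testing morphisms against points $w \colon 1 \to W$, a step that is legitimate in $\mathbf{pMeas}$ only because partial functions are determined by their values and would not be available in an abstract traced category. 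The one step you leave implicit is that $\mathbf{iter}$ coincides with the supremum of the Kleene chain, which requires continuity of $h \mapsto [\id_{Y},h]\circ k$; this is immediate from the $\omega\mathbf{Cppo}$-enrichment of $\mathbf{pMeas}$, so it is a cosmetic omission at most, and your closing check via iteration paths is a sound concrete shadow of the paper's pointwise conclusion.
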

\begin{proof}
  For any $w \in W$, we show that
  \begin{equation*}
    \mathbf{tr}_{W \times X,W \times Y}^{W \times Z}
    \left(
      \mathrm{dst}_{W,Y,Z}^{-1} \circ
      (W \times f) \circ \mathrm{dst}_{W,X,Z}
    \right) \circ (w \times \id_{X})
    = w \times \mathbf{tr}_{X,Y}^{Z}(f)
  \end{equation*}
  where we identify $w$ with the arrow from $1=\{\ast\}$ to $W$ that
  sends $\ast$ to $w$. Because
  \begin{equation*}
    \mathrm{dst}_{W,Y,Z}^{-1} \circ
    (W \times f) \circ \mathrm{dst}_{W,X,Z}
    \circ (w \times X + w \times Z)
    = 
    (w \times Y + w \times Z) \circ f,
  \end{equation*}
  it follows from uniformity that
  \begin{equation*}
    \mathbf{tr}_{X,W \times Y}^{W \times Z}
    \left(
      \mathrm{dst}_{W,Y,Z}^{-1} \circ
      (W \times f) \circ \mathrm{dst}_{W,X,Z}
      \circ (w \times X + W \times Z)
    \right)
    = 
    \mathbf{tr}_{W \times X,Y}^{Z}((w \times Y + Z) \circ f).
  \end{equation*}
  By dinaturality, we obtain
  \begin{equation*}
    \mathbf{tr}_{W \times X,W \times Y}^{W \times Z}
    \left(
      \mathrm{dst}_{W,Y,Z}^{-1} \circ
      (W \times f) \circ \mathrm{dst}_{W,X,Z}
    \right) \circ (w \times \id_{X})
    = w \times \mathbf{tr}_{X,Y}^{Z}(f).
  \end{equation*}
  Since this is true for any $w \in W$,
  we see that $W \times \mathbf{tr}_{X,Y}^{Z}(f)$ is equal to
  \begin{equation*}
    \mathbf{tr}_{W \times X,W \times Y}^{W \times Z}
    \left(
      \mathrm{dst}_{W,Y,Z}^{-1} \circ
      (W \times f) \circ \mathrm{dst}_{W,X,Z}
    \right).
  \end{equation*}
\end{proof}

\subsubsection{The Category of Mealy Machines}
\label{sec:category_of_mealy_machines}

\begin{definition}
  We define a category $\mathbf{Mealy}$ by:
  \begin{varitemize}
  \item objects are $\mathbf{Int}$-objects $\mathsf{X}$; and
  \item arrows $f \colon \mathsf{X} \multimap \mathsf{Y}$ are
    behavioural equivalence classes of Mealy machines from
    $\mathsf{X}$ to $\mathsf{Y}$.
  \end{varitemize}
  We denote a wide subcategory of $\mathbf{Mealy}$ consisting of
  $\mathbf{Int}$-object $\mathsf{X}$ such $X^{-} = \emptyset$ by
  $\mathbf{Mealy}_{+}$.
\end{definition}
Intuitively, while arrows in $\mathbf{Mealy}$ are bidirectional Mealy
machines, arrows in $\mathbf{Mealy}_{+}$ are ``one-way'' Mealy
machines. We consider the wide subcategory $\mathbf{Mealy}_{+}$
because categorical structure of $\mathbf{Mealy}_{+}$ is easier to
describe than that of $\mathbf{Mealy}$, and categorical structure of
$\mathbf{Mealy}$ is induced by that of $\mathbf{Mealy}_{+}$.

The identity arrow and the composition of 
$\mathbf{Mealy}_{+}$ is given by the identity Mealy machine
$[\mathsf{id}_{\mathsf{X}}]$ and the composition of Mealy machine:
\begin{equation*}
  [\mathsf{M}] \circ [\mathsf{N}] = [\mathsf{M} \circ \mathsf{N}].
\end{equation*}
Concrete description of the composition of Mealy machines between
$\mathbf{Mealy}_{+}$-objects is easy: for $\mathbf{Mealy}_{+}$-objects
$\mathsf{X}$ and $\mathsf{Y}$, and for Mealy machines
$\mathsf{M} \colon \mathsf{X} \multimap \mathsf{Y}$ and
$\mathsf{N} \colon \mathsf{Y} \multimap \mathsf{Z}$, the composition
$\mathsf{N} \circ \mathsf{M} \colon \mathsf{X} \multimap \mathsf{Y}$
consists of:
\begin{itemize}
\item
  $\state{\mathsf{N} \circ \mathsf{M}} = \state{\mathsf{N}} \times
  \state{\mathsf{M}}$;
\item
  $\init{\mathsf{N} \circ \mathsf{M}} =
  (\init{\mathsf{N}},\init{\mathsf{M}})$;
\item
  $\tran{\mathsf{N} \circ \mathsf{M}}$ given by
  \begin{multline*}
    (X^{+} + \emptyset) \times \state{g} \times \state{f}
    \xrightarrow{\cong}
    (X^{+} + \emptyset) \times \state{f} \times \state{g}
    \xrightarrow{\tran{f} \times \state{g}} {} \\
    (Y^{+} + \emptyset) \times \state{f} \times \state{g}
    \xrightarrow{\cong}
    (Y^{+} + \emptyset) \times \state{g} \times \state{f}
    \xrightarrow{\tran{g} \times \state{f}}
    (Z^{+} + \emptyset) \times \state{g} \times \state{f}.
  \end{multline*}
\end{itemize}
The composition of transition functions makes sense because
$X^{-} = Z^{-} = \emptyset$. From this concrete description, it is
easy to check that the composition of $\mathbf{Mealy}_{+}$-arrows is
well-defined. In fact, if
$h \colon \state{\mathsf{M}} \to \state{\mathsf{M}'}$ realizes
$\mathsf{M} \preceq \mathsf{M}'$ and
$h' \colon \state{\mathsf{N}} \to \state{\mathsf{N}'}$ realizes and
$\mathsf{N} \preceq \mathsf{N}'$, then $h' \times h$ realizes
$\mathsf{N} \circ \mathsf{M} \preceq \mathsf{N}' \circ \mathsf{M}'$.
Therefore, the symmetric transitive closure $\simeq$ is compatible
with the composition.
\begin{proposition}
  The category $\mathbf{Mealy}_{+}$ with $(\mathsf{I},\otimes)$
  is a symmetric monoidal category where the monoidal product
  of $\mathbf{Mealy}_{+}$-arrows
  $[\mathsf{M}] \colon \mathsf{X} \multimap \mathsf{Y}$
  and $[\mathsf{N}] \colon \mathsf{Z} \multimap \mathsf{W}$
  is given by
  \begin{equation*}
    [\mathsf{M}] \otimes [\mathsf{N}]
    = [\mathsf{M} \otimes \mathsf{N}].
  \end{equation*}
\end{proposition}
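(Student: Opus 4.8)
The plan is to inherit the symmetric monoidal structure of $\mathbf{Mealy}_{+}$ directly from that of $(\mathbf{pMeas},\emptyset,+)$, exploiting the crucial simplification that on $\mathbf{Mealy}_{+}$-objects the negative parts vanish, so composition takes the trace-free form recalled just above the statement: the transition function of $\mathsf{N}\circ\mathsf{M}$ is merely $\tau_{\mathsf{M}}$ followed by $\tau_{\mathsf{N}}$ on the product state space $\state{\mathsf{N}}\times\state{\mathsf{M}}$. First I would fix the structural isomorphisms as \emph{stateless} token machines whose transition functions are the evident canonical isomorphisms of partial measurable functions: the associator from $X^{+}+(Y^{+}+Z^{+})\cong(X^{+}+Y^{+})+Z^{+}$, the unitors from $\emptyset+X^{+}\cong X^{+}\cong X^{+}+\emptyset$, and the symmetry $\mathsf{sym}_{\mathsf{X},\mathsf{Y}}$ (defined above, restricting to $X^{+}+Y^{+}\cong Y^{+}+X^{+}$ since the negative parts are empty). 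Under Convention~\ref{conv:ox} the associators and unitors may in fact be taken to be identities, so the only genuinely non-trivial structural datum is $\mathsf{sym}$. Each such token machine has state space $1$ and is given by a measurable bijection, so the token machine built from the inverse bijection is a two-sided inverse — composition of stateless machines has state space $1\times 1\cong 1$ and is just ordinary composition of the underlying partial measurable functions — whence all these arrows are genuine isomorphisms in $\mathbf{Mealy}_{+}$.

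The core step is to show $\otimes$ is a bifunctor. Well-definedness on behavioural-equivalence classes is already recorded, and $\mathsf{id}_{\mathsf{X}}\otimes\mathsf{id}_{\mathsf{Y}}\simeq\mathsf{id}_{\mathsf{X}\otimes\mathsf{Y}}$ is immediate. The substantial point is the interchange law
\[
(\mathsf{M}'\circ\mathsf{M})\otimes(\mathsf{N}'\circ\mathsf{N})
\simeq
(\mathsf{M}'\otimes\mathsf{N}')\circ(\mathsf{M}\otimes\mathsf{N}).
\]
Here I would write out both transition functions from the concrete $\mathbf{Mealy}_{+}$-composition: the left-hand machine has state space $(\state{\mathsf{M}'}\times\state{\mathsf{M}})\times(\state{\mathsf{N}'}\times\state{\mathsf{N}})$ while the right-hand machine has state space $(\state{\mathsf{M}'}\times\state{\mathsf{N}'})\times(\state{\mathsf{M}}\times\state{\mathsf{N}})$. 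The reordering bijection between these two products is a measurable isomorphism carrying one initial state to the other, and under it the two transition functions coincide, because on an input in $X^{+}$ both machines run $\tau_{\mathsf{M}}$ then $\tau_{\mathsf{M}'}$ while leaving the $\mathsf{N}$-states untouched, and symmetrically on an input in $Z^{+}$. This reordering therefore witnesses $\preceq$ in both directions, yielding the required behavioural equivalence.

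Next I would verify naturality of the structural families. Because every structural arrow is stateless, pre- and post-composing it with a Mealy machine only relabels the input/output ports through the underlying canonical bijection, leaving the state space and its dynamics unchanged. Consequently each naturality square in $\mathbf{Mealy}_{+}$ reduces, after factoring out the common state space, to the corresponding naturality equation for the canonical isomorphisms of $(\mathbf{pMeas},\emptyset,+)$, which holds there. The same reduction disposes of coherence: the pentagon, the triangle identities, the hexagon, and the involutivity $\mathsf{sym}_{\mathsf{Y},\mathsf{X}}\circ\mathsf{sym}_{\mathsf{X},\mathsf{Y}}\simeq\mathsf{id}$ are all assertions about composites of stateless token machines, hence about composites of the underlying partial measurable bijections. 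Since $(\mathbf{pMeas},\emptyset,+)$ is already a symmetric monoidal category, all these diagrams commute on the nose, and the equalities lift to behavioural equivalences in $\mathbf{Mealy}_{+}$ (compatibility of $\simeq$ with composition and with $\otimes$ having been established via the realizer argument given just above).

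I expect the main obstacle to be the bookkeeping in the interchange law and in naturality: one must keep track of the nested product state spaces and exhibit the precise measurable reindexing isomorphism realizing $\preceq$ in each direction. This is routine but error-prone. Everything else is comparatively soft, precisely because the vanishing of negative parts in $\mathbf{Mealy}_{+}$ eliminates the trace from composition and thereby lets the coherence data be imported wholesale from $(\mathbf{pMeas},\emptyset,+)$.
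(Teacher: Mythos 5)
Your proposal is correct and follows essentially the same route as the paper: the paper's own proof simply observes that $\mathbf{Mealy}_{+}$-objects are closed under $\otimes$ and that, ``thanks to simplicity of the composition of $\mathbf{Mealy}_{+}$-arrows'' (i.e.\ the trace-free, purely compositional form you exploit since all negative parts are $\emptyset$), compatibility with behavioural equivalence and the symmetric monoidal axioms can be checked directly. What you have written is just a fleshed-out version of that argument --- stateless structural token machines, the state-reordering realizers for interchange, and the reduction of naturality and coherence to $(\mathbf{pMeas},\emptyset,+)$ --- all of which the paper leaves implicit.
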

\begin{proof}
  It is easy to see that objects in $\mathbf{Mealy}_{+}$ are closed
  under the monoidal product of $\mathbf{Int}$-objects. Thanks to
  simplicity of the composition of $\mathbf{Mealy}_{+}$-arrows, we can
  easily check that the monoidal product of Mealy machines between
  $\mathbf{Mealy}_{+}$-objects is compatible with behavioural
  equivalence and that $(\mathbf{Mealy}_{+},I,\otimes)$ is a symmetric
  monoidal category.
\end{proof}
Furthermore, $\mathbf{Mealy}_{+}$ inherits the trace operator of
$\mathbf{pMeas}$. For a $\mathbf{Mealy}_{+}$-arrow
$[\mathsf{M}] \colon \mathsf{X} \otimes \mathsf{Z} \multimap \mathsf{Y}
\otimes \mathsf{Z}$, we define a $\mathbf{Mealy}_{+}$-arrow
$\mathsf{Tr}_{\mathsf{X},\mathsf{Y}}^{\mathsf{Z}}[\mathsf{M}] \colon
\mathsf{X} \multimap \mathsf{Y}$ to be
the equivalence class of a Mealy machine
$\mathsf{N} \colon \mathsf{X} \multimap \mathsf{Y}$ given by
\begin{equation*}
  \state{\mathsf{N}} = \state{\mathsf{M}},
  \qquad
  \init{\mathsf{N}}
  = \init{\mathsf{M}}
\end{equation*}
and
\begin{equation*}
  \tran{\mathsf{N}} =
  \mathbf{tr}_{(X^{+} + \emptyset) \times \state{f},
    (Y^{+} + \emptyset) \times \state{f}}^{(Z^{+} + \emptyset)\times \state{f}}
  \left(
    \vcenter{
      \xymatrix{
        (X^{+} + \emptyset) \times \state{f} + (Z^{+} + \emptyset) \times \state{f}
        \ar[d]^{\cong}
        \\
        ((X^{+} + Z^{+}) + \emptyset) \times \state{f}
        \ar[d]^{\tran{\mathsf{M}}}
        \\
        ((Y^{+} + Z^{+}) + \emptyset) \times \state{f}
        \ar[d]^{\cong}
        \\
        (Y^{+} + \emptyset) \times \state{f} + (Z^{+} + \emptyset) \times \state{f}
      }
    }
  \right)
  .
\end{equation*}

\begin{proposition}
  The family of operators
  $\{\mathsf{Tr}_{\mathsf{X},\mathsf{Y}}^{\mathsf{Z}}\}_{
    \mathsf{X},\mathsf{Y},\mathsf{Z} \in \mathbf{Mealy}_{+}}$ is a
  trace operator on the symmetric monoidal category
  $(\mathbf{Mealy}_{+},\mathsf{I},\otimes)$.
\end{proposition}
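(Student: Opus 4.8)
The plan is to verify the six trace axioms (Dinaturality, Sliding, Vanishing~I, Vanishing~II, Superposing and Yanking) for $\mathsf{Tr}$ one at a time, in each case reducing the required identity of $\mathbf{Mealy}_{+}$-arrows to the corresponding identity for the trace operator $\mathbf{tr}$ on $(\mathbf{pMeas},\emptyset,+)$, which is available by Proposition~\ref{prop:tr}. What makes such a reduction possible is that $\mathsf{Tr}_{\mathsf{X},\mathsf{Y}}^{\mathsf{Z}}[\mathsf{M}]$ is defined so as to leave $\state{\mathsf{M}}$ and $\init{\mathsf{M}}$ untouched and to act only on the transition function, applying $\mathbf{tr}$ with the feedback taken along $(Z^{+}+\emptyset)\times\state{\mathsf{M}}$. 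Consequently, once the state spaces and initial states are matched on the two sides of an axiom, the whole equation becomes an equation between two partial measurable functions built from $\tran{\mathsf{M}}$, and it can be discharged at the level of $\mathbf{pMeas}$; measurability of the resulting transition function is automatic, since it is inherited from $\mathbf{tr}$.

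Before treating the axioms I would first check that $\mathsf{Tr}$ is well defined on behavioural equivalence classes. If $h\colon\state{\mathsf{M}}\to\state{\mathsf{N}}$ realizes $\mathsf{M}\preceq\mathsf{N}$, then $\tran{\mathsf{M}}$ and $\tran{\mathsf{N}}$ intertwine along $\id\times h$, which is exactly the hypothesis of uniformity of the $\mathbf{pMeas}$-trace (Proposition~\ref{prop:tr}) with respect to $\id_{Z^{+}}\times h$ on the feedback object; hence the same $h$ realizes $\mathsf{Tr}[\mathsf{M}]\preceq\mathsf{Tr}[\mathsf{N}]$, and taking symmetric transitive closure gives compatibility with $\simeq$. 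The two indispensable tools for the axioms themselves are Proposition~\ref{prop:distributes}, which lets a state-space factor $W$ be moved in and out of a $\mathbf{pMeas}$-trace, and uniformity again, which lets one replace a trace taken over $(Z^{+}+\emptyset)\times\state{\mathsf{M}}$ by a trace taken over $Z^{+}+\emptyset$ alone whenever the state component of the feedback is merely carried along. These are precisely what is needed to compensate for the fact that composition and the monoidal product on $\mathbf{Mealy}_{+}$ multiply state spaces, so that unfolding an axiom produces extra state factors on the feedback object that are absent from the bare $\mathbf{tr}$ axiom.

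Concretely, Yanking and Vanishing~I are immediate, since there the $\mathbf{pMeas}$-trace degenerates: over $\emptyset$ it is the identity, and the symmetry machine is stateless. Dinaturality and Sliding follow from their $\mathbf{pMeas}$ counterparts together with the observation that pre- and post-composition by a Mealy machine amounts, on transition functions, to pre- and post-composition by a partial measurable function tensored with the relevant state space, which Proposition~\ref{prop:distributes} and uniformity allow us to commute past the trace. I expect the genuinely laborious cases to be Vanishing~II and Superposing, where the traced object is itself a monoidal product and the interplay between the two feedback channels and the product state spaces must be tracked carefully; the strategy there is to unfold $\mathsf{Tr}$ into the nested $\mathbf{pMeas}$-trace via Vanishing~II of $\mathbf{tr}$ and then repeatedly apply Proposition~\ref{prop:distributes} to reorganise the state-space factors until the two sides coincide as partial measurable functions. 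The main obstacle is exactly this bookkeeping: the $\mathbf{pMeas}$ axioms are phrased for fixed objects, whereas in $\mathbf{Mealy}_{+}$ the feedback object always arrives tensored with an a priori different product of state spaces on each side, so the bulk of the work lies in reconciling these via distributivity of $X\times(-)$ over coproducts, Proposition~\ref{prop:distributes}, and uniformity of $\mathbf{tr}$, rather than in any new conceptual input.
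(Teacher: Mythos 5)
Your proposal is correct and follows essentially the same route as the paper: well-definedness on behavioural equivalence classes via uniformity of the $\mathbf{pMeas}$-trace, and each trace axiom reduced to the corresponding axiom of $\mathbf{tr}$ on $(\mathbf{pMeas},\emptyset,+)$, with Proposition~\ref{prop:distributes} handling the state-space factors that composition injects into the feedback object. The only (inessential) difference is in where the bookkeeping burden is placed: the paper localises the use of Proposition~\ref{prop:distributes} to Dinaturality, since that is the axiom where pre- and post-composition multiply state spaces into the traced object, whereas you anticipate the labour in Vanishing~II and Superposing, which in fact only need the canonical distributivity isomorphisms over a fixed state space.
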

\begin{proof}
  Well-definedness of
  $\mathsf{Tr}_{\mathsf{X},\mathsf{Y}}^{\mathsf{Z}}(-)$ follows from
  uniformity of the trace operator on $\mathbf{pMeas}$. Sliding,
  vanishing I, vanishing II, superposing and yanking for $\mathsf{Tr}$
  follow from that of $\mathbf{tr}$. Dinaturality for $\mathsf{Tr}$
  follows from dinaturality of $\mathbf{tr}$ and
  Proposition~\ref{prop:distributes}.
\end{proof}

We recall the notions of \emph{$\mathbf{Int}$-construction} \cite{jsv}
and \emph{compact closed category}.

\begin{definition}[$\mathbf{Int}$-construction]
  Let $(\mathcal{C},I,\otimes,\mathbf{tr})$ be a traced symmetric
  monoidal category. We define a category $\mathbf{Int}(\mathcal{C})$
  by:
  \begin{varitemize}
  \item objects are pairs $(X^{+},X^{-})$ of $\mathcal{C}$-objects;
  \item arrows from $(X^{+},X^{-})$ to $(Y^{+},Y^{-})$ are
    $\mathcal{C}$-arrows from $X^{+} \otimes Y^{-}$ to
    $Y^{+} \otimes X^{-}$.
  \end{varitemize}
  The identity on $(X^{+},X^{-})$ is given by the identity on
  $X^{+} \otimes X^{-}$, and the composition of
  $\mathbf{Int}(\mathcal{C})$-arrows
  $f \colon (X^{+},X^{-}) \to (Y^{+},Y^{-})$ and
  $g \colon (Y^{+},Y^{-}) \to (Z^{+},Z^{-})$ is given by
  \begin{equation*}
    \mathbf{tr}_{X^{+} \otimes Z^{-},Z^{+} \otimes X^{-}}^{Y^{-}}
    \left(
      (X^{+} \otimes \sigma_{Z^{-},Y^{-}}) \circ
      (f \otimes Z^{-}) \circ
      (Y^{+} \otimes \sigma_{X^{-},Z^{-}}) \circ
      (g \otimes X^{-}) \circ
      (Z^{+} \otimes \sigma_{Y^{-},X^{-}})
    \right).
  \end{equation*}
  Here, we omit some coherence isomorphisms.
\end{definition}

\begin{definition}
  A \emph{compact closed category} is a symmetric monoidal category
  $(\mathcal{C},I,\otimes)$ with a function
  $(-)^{\bot} \colon \mathrm{obj}(\mathcal{C}) \to \mathrm{obj}(\mathcal{C})$
  and families of $\mathcal{C}$-arrows
  \begin{equation*}
    \{\eta_{X} \colon I \to X \otimes X^{\bot}\}_{X \in \mathcal{C}},
    \qquad
    \{\epsilon_{X} \colon X^{\bot} \otimes X \to I\}_{X \in \mathcal{C}}
  \end{equation*}
  such that
  \begin{equation*}
    (\epsilon_{X} \otimes X^{\bot})\circ (X^{\bot} \otimes \eta_{X})
    = \id_{X^{\bot}},
    \qquad
    (X \otimes \epsilon_{X}) \circ (\eta_{X} \otimes X)
    = \id_{X}.
  \end{equation*}
  For $X \in \mathcal{C}$, the object $X^{\bot}$ is called the \emph{dual object}
  of $X$.
\end{definition}

\begin{theorem}[\cite{jsv}]
  The category $\mathbf{Int}(\mathcal{C})$
  is a compact closed category. The unit and the monoidal
  product are given by
  \begin{equation*}
    (I,I), \qquad
    (X^{+},X^{-}) \otimes (Y^{+},Y^{-})
    = (X^{+} \otimes Y^{+},Y^{-} \otimes X^{-}).
  \end{equation*}
  The dual object of $(X^{+},X^{-})$ is $(X^{-},X^{+})$.
  The unit arrow $\eta_{(X^{+},X^{-})}$ and
  the counit arrow $\epsilon_{(X^{+},X^{-})}$ are given by
  \begin{equation*}
    \eta_{(X^{+},X^{-})} = \id_{X^{+} \otimes X^{-}},
    \qquad
    \epsilon_{(X^{+},X^{-})} = \id_{X^{-} \otimes X^{+}}.
  \end{equation*}
\end{theorem}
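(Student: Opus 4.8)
The plan is to treat $\mathcal{C}$ as an arbitrary traced symmetric monoidal category and to verify the three layers of structure in turn: that $\mathbf{Int}(\mathcal{C})$ is a category, that it is symmetric monoidal, and finally that it is compact closed. The only tool genuinely available is the trace operator $\mathbf{tr}$ together with its six axioms, so the whole argument is an exercise in trace bookkeeping. The cleanest way to organise it is to read each composite occurring in the definition of $\mathbf{Int}(\mathcal{C})$ as a string diagram in $\mathcal{C}$ in which the traced wire is a feedback loop, and then to appeal to the coherence theorem for traced monoidal categories so that every required equation reduces to a planar isotopy of such diagrams.

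First I would check the categorical axioms. The identity on $(X^{+},X^{-})$ is $\id_{X^{+}\otimes X^{-}}$ and composition is the traced expression in the definition; well-definedness (type-checking of the composite and independence of the suppressed coherence isomorphisms) is immediate. The two unit laws follow from a short computation using Yanking and Vanishing~I together with naturality of the symmetry $\sigma$. The genuinely delicate point, which I expect to be the main obstacle, is associativity of composition: a triple composite produces two nested traces, one over $Y^{-}$ and one over $Z^{-}$, and to see that the two bracketings agree one must merge them into a single trace over $Y^{-}\otimes Z^{-}$ by Vanishing~II and then rearrange the interior using Dinaturality and Sliding until the two expressions become literally equal. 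This is precisely where the diagrammatic reading pays off, since both bracketings picture the same diagram carrying two feedback loops, so associativity is planar isotopy rather than a long algebraic computation.

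Next I would equip $\mathbf{Int}(\mathcal{C})$ with its monoidal structure. On objects the tensor is $(X^{+}\otimes Y^{+},\,Y^{-}\otimes X^{-})$ with unit $(I,I)$; on arrows it is defined by superposing, and compatibility of the tensor with composition is exactly the Superposing axiom together with Dinaturality. The associator, unitors and braiding are inherited from those of $\mathcal{C}$, and the pentagon, triangle and hexagon identities in $\mathbf{Int}(\mathcal{C})$ reduce to the corresponding coherence diagrams in $\mathcal{C}$ once the intervening traces are trivial; no new content arises here beyond a careful transport of the $\mathcal{C}$-coherence data.

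Finally I would establish compact closure. Setting $(X^{+},X^{-})^{\bot}=(X^{-},X^{+})$ and taking $\eta_{(X^{+},X^{-})}=\id_{X^{+}\otimes X^{-}}$ and $\epsilon_{(X^{+},X^{-})}=\id_{X^{-}\otimes X^{+}}$ as arrows of the required types, it remains to verify the two triangle (zig-zag) identities. Expanding the composites through the definition of $\mathbf{Int}$-composition, each zig-zag becomes a traced expression whose feedback wire, after one use of Sliding and Dinaturality, reduces to a single crossing; Yanking then identifies it with the identity. I would record these two verifications diagrammatically and note that coherence for compact closed categories makes them conclusive. Applying the resulting theorem to the traced symmetric monoidal category $\mathbf{Mealy}_{+}$ constructed in Section~\ref{sec:category_of_mealy_machines} then yields Proposition~\ref{prop:cpt} directly, since $\mathbf{Mealy}=\mathbf{Int}(\mathbf{Mealy}_{+})$ and the dual, unit and counit specialise to $\mathsf{X}^{\bot}$, $\mathsf{unit}_{\mathsf{X}}$ and $\mathsf{counit}_{\mathsf{X}}$.
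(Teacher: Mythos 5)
You should first be aware that the paper does not prove this statement at all: it is imported verbatim from Joyal--Street--Verity \cite{jsv}, and the only thing the paper actually proves in this section is the subsequent corollary, namely that $\mathbf{Mealy}$ is isomorphic to $\mathbf{Int}(\mathbf{Mealy}_{+})$ and therefore inherits the compact closed structure. So the relevant comparison is with the original JSV argument, which is exactly the direct algebraic verification whose skeleton you describe: check the category axioms (associativity being the long computation combining Vanishing~II, Dinaturality and Sliding), check functoriality of the tensor, and check the two zig-zag identities via Yanking.

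The genuine gap is that your organizing device is circular. You propose to read every composite as a diagram and then ``appeal to the coherence theorem for traced monoidal categories so that every required equation reduces to a planar isotopy'', and you explicitly use this to discharge the hardest step (``associativity is planar isotopy rather than a long algebraic computation''). But the direction of that coherence theorem you need --- diagrams equal implies morphisms equal --- is standardly proven \emph{via the Int construction}: one embeds the traced category $\mathcal{C}$ fully and faithfully into $\mathbf{Int}(\mathcal{C})$, which presupposes that $\mathbf{Int}(\mathcal{C})$ is compact closed, and then invokes Kelly--Laplaza coherence for compact closed categories. The same objection applies to your closing remark that ``coherence for compact closed categories makes [the zig-zag verifications] conclusive'': you cannot reason inside $\mathbf{Int}(\mathcal{C})$ with compact closed coherence before establishing that $\mathbf{Int}(\mathcal{C})$ is compact closed. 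Note that the paper itself respects this logical order: Proposition~\ref{prop:beh_eq} (diagrammatic reasoning) is derived as a \emph{consequence} of Proposition~\ref{prop:cpt}, not used to prove it. Once the coherence appeal is removed, what remains of your proposal are the algebraic sketches (``merge the two traces by Vanishing~II and rearrange by Dinaturality and Sliding until the expressions become literally equal''), which point in the right direction but are precisely where all the work of the JSV proof lies, and as written they are not carried out. A minor additional point: even where graphical reasoning is legitimate, ``planar isotopy'' is the wrong equivalence for a \emph{symmetric} traced calculus --- wires may cross, and the correct notion is isomorphism of diagrams, as in \cite{selinger2011}, not planarity.
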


\begin{corollary}
  The category $\mathbf{Mealy}$ is a compact closed category. The
  monoidal structure is given by $(\mathsf{I},\otimes)$, and the unit
  and the counit are given by $\mathsf{unit}_{\mathsf{X}}$ and
  $\mathsf{counit}_{\mathsf{X}}$ respectively.
\end{corollary}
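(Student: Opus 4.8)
The plan is to exhibit $\mathbf{Mealy}$ as the image of the $\mathbf{Int}$-construction of Joyal, Street and Verity \cite{jsv} applied to a suitable traced symmetric monoidal category, and then to invoke their theorem that $\mathbf{Int}(\mathcal{C})$ is compact closed whenever $\mathcal{C}$ is traced symmetric monoidal. The traced category in question is the wide subcategory $\mathbf{Mealy}_{+}$ of ``one-way'' Mealy machines, i.e.\ those between $\mathbf{Int}$-objects $\mathsf{X}$ with $X^{-} = \emptyset$; its objects are just measurable spaces and its arrows are behavioural equivalence classes of state-carrying transition functions of type $X^{+} \times \state{\mathsf{M}} \to Y^{+} \times \state{\mathsf{M}}$. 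Once $\mathbf{Mealy}$ is identified with $\mathbf{Int}(\mathbf{Mealy}_{+})$, compact closure, the dual $(X^{+},X^{-})^{\bot} = (X^{-},X^{+})$, and the identities as unit and counit all come out of the JSV theorem for free, and these identities are by definition $\mathsf{unit}_{\mathsf{X}}$ and $\mathsf{counit}_{\mathsf{X}}$.

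First I would supply the raw material: a trace operator on the base category $\mathbf{pMeas}$ of measurable spaces and partial measurable functions, monoidal with respect to the coproduct $(\emptyset,+)$. The key observation is that every hom-set $\mathbf{pMeas}(X,Y)$ is a pointed $\omega$-cpo under graph inclusion, with the empty partial function as bottom and composition continuous, so that tracing a wire amounts to iterating the transition and taking a least fixed point. Concretely $\mathbf{tr}^{Z}_{X,Y}(f)(x) = y$ holds exactly when following the finite interaction path $f(\hh,x) = (\mm,z_{1}), f(\mm,z_{1}) = (\mm,z_{2}), \ldots$ eventually exits as $(\hh,y)$. I would record that this trace is \emph{uniform} with respect to partial measurable functions, since uniformity is the lever that makes the remaining steps formal.

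Next I would transport this trace to $\mathbf{Mealy}_{+}$. The one subtlety is that a Mealy machine carries a state space $\state{\mathsf{M}}$, so its transition function lives over $(-) \times \state{\mathsf{M}}$; tracing it out therefore requires that the $\mathbf{pMeas}$-trace commute with the functor $\state{\mathsf{M}} \times (-)$. This is precisely the distribution law $W \times \mathbf{tr}^{Z}_{X,Y}(f) = \mathbf{tr}^{W \times Z}_{W \times X, W \times Y}(\mathrm{dst}^{-1} \circ (W \times f) \circ \mathrm{dst})$, which I derive from uniformity. With it, $\mathsf{Tr}^{\mathsf{Z}}_{\mathsf{X},\mathsf{Y}}$ is defined by tracing the state-augmented transition function while keeping the state space and initial state fixed, and the six trace axioms for $\mathsf{Tr}$ follow from those for $\mathbf{tr}$ (dinaturality additionally using the distribution law). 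Compatibility with behavioural equivalence is immediate here, because a map realizing $\mathsf{M} \preceq \mathsf{M}'$ acts only on the state coordinate, which tracing leaves untouched.

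The last step is the identification $\mathbf{Mealy} = \mathbf{Int}(\mathbf{Mealy}_{+})$: an $\mathbf{Int}$-object $(X^{+},X^{-})$ is a pair of $\mathbf{Mealy}_{+}$-objects, and a Mealy machine $\mathsf{X} \multimap \mathsf{Y}$, whose transition function has type $(X^{+} + Y^{-}) \times \state{\mathsf{M}} \to (Y^{+} + X^{-}) \times \state{\mathsf{M}}$, is exactly an $\mathbf{Int}(\mathbf{Mealy}_{+})$-arrow $X^{+} \otimes Y^{-} \to Y^{+} \otimes X^{-}$. I expect the main obstacle to be the only genuine verification required here, namely that the composition of Mealy machines defined in Section~\ref{sec:const} as a countable join over interaction paths coincides, modulo behavioural equivalence, with the $\mathbf{Int}$-construction composition phrased through the trace. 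But this is just the statement that the ``collection of all interaction paths'' equals the least fixed point of path-following, which is the unfolding of $\mathsf{Tr}$ established above; with state threaded correctly via the distribution law, the two descriptions agree. Everything else then reduces to the JSV theorem, completing the proof.
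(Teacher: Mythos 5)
Your proposal is correct and follows essentially the same route as the paper's own proof: a uniform trace operator on $(\mathbf{pMeas},\emptyset,+)$ built from the pointed $\omega$-cpo enrichment, the distribution law $W \times \mathbf{tr}^{Z}_{X,Y}(f) = \mathbf{tr}^{W\times Z}_{W\times X, W\times Y}(\mathrm{dst}^{-1}\circ(W\times f)\circ\mathrm{dst})$ to transport the trace to the wide subcategory $\mathbf{Mealy}_{+}$, and then the identification $\mathbf{Mealy}\cong\mathbf{Int}(\mathbf{Mealy}_{+})$ together with the Joyal--Street--Verity theorem. The only cosmetic difference is that the paper justifies compatibility of $\mathsf{Tr}$ with behavioural equivalence explicitly by uniformity, which is also the formal content behind your "tracing leaves the state coordinate untouched" remark.
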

\begin{proof}
  It is straightforward to check that
  $\mathbf{Mealy}$ is isomorphic to $\mathbf{Int}(\mathbf{Mealy}_{+})$,
  and the compact closed structure is given by
  data provided in Section~\ref{sec:mealy}.  
\end{proof}
}

\section{Mealy Machine Semantics for $\PCFSS$}
\label{sec:GoI2}

We interpret a type $\mathtt{A}$
as the $\mathbf{Int}$-object $\sem{\mathtt{A}}$ given by
\begin{equation*}
  \sem{\ttunit} = \mathsf{I},
  \quad
  \sem{\ttreal} = \mathsf{R},
  \quad
  \sem{\mathtt{A}\to\mathtt{B}}
  = \mathsf{S}
  \otimes \oc \sem{\mathtt{B}} \otimes \oc \sem{\mathtt{A}}^{\bot}.
\end{equation*}
We define interpretation of contexts by
\begin{equation*}
  \sem{\mathtt{x}:\mathtt{A},\ldots,
    \mathtt{y}:\mathtt{B}} = \sem{\mathtt{A}} \otimes
  \cdots \otimes \sem{\mathtt{B}}.
\end{equation*}
When $\mathtt{\Delta}$ is the empty sequence, we define
$\sem{\mathtt{\Delta}}$ to be $\mathsf{I}$.

\longv{ For interpreting conditional branching, we use the following
  proposition.
  \begin{proposition}\label{prop:emb}
    For any type $\mathtt{A}$, there is a partial measurable embedding
    $e \colon \mathbb{S} + \mathbb{N} \times \sem{\mathtt{A}}^{-} \to \mathbb{S}$.
  \end{proposition}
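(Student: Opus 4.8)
The plan is to prove, by induction on the type $\mathtt{A}$, the stronger statement that \emph{both} components $\sem{\mathtt{A}}^{+}$ and $\sem{\mathtt{A}}^{-}$ of the $\mathbf{Int}$-object $\sem{\mathtt{A}}$ admit a partial measurable embedding into $\mathbb{S}$, by which I mean an injective partial measurable function into $\mathbb{S}$ whose image is measurable and which is an isomorphism onto that image. First I would assemble a small toolkit of closure properties of $\mathbb{S}$. The key building block is the prepend map $\mathbb{R}\times\mathbb{S}\to\mathbb{S}$, $(a,u)\mapsto a\cons u$, which is a measurable embedding: it restricts on each length-$n$ block to the Borel isomorphism $\mathbb{R}\times\mathbb{R}^{n}\cong\mathbb{R}^{n+1}$, its image (the nonempty sequences) is measurable, and dropping the head is a measurable inverse. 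From this single map I obtain all the combinators I need: $\mathbb{N}\times\mathbb{S}\hookrightarrow\mathbb{S}$ by composing with $\mathbb{N}\hookrightarrow\mathbb{R}$ (using that $\mathbb{N}\subseteq\mathbb{R}$ is Borel); $\mathbb{S}+\mathbb{S}\hookrightarrow\mathbb{S}$ by tagging, $(\hh,u)\mapsto 0\cons u$ and $(\mm,u)\mapsto 1\cons u$, whose images lie in the disjoint measurable sets $\{v_{0}=0\}$ and $\{v_{0}=1\}$; and, since $\mathbb{T}$ is a measurable subspace of $\mathbb{S}$, also $\mathbb{R}_{\geq 0}\times\mathbb{T}\hookrightarrow\mathbb{R}\times\mathbb{S}\hookrightarrow\mathbb{S}$. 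Since measurable embeddings compose, and finite coproducts of embeddings with disjoint tagged images are again embeddings, the class of spaces embeddable into $\mathbb{S}$ is closed under $(-)+(-)$ and under $\mathbb{N}\times(-)$, and it contains $\emptyset$ (the empty function) and $\mathbb{S}$ itself (the identity).

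With the toolkit in place I would carry out the induction. For the base cases, $\sem{\ttunit}^{+}=\sem{\ttunit}^{-}=\emptyset$ and $\sem{\ttreal}^{+}=\sem{\ttreal}^{-}=\mathbb{S}$ are embeddable by the last two remarks. For the arrow case I unfold $\sem{\mathtt{A}\to\mathtt{B}}=\mathsf{S}\otimes\oc\sem{\mathtt{B}}\otimes\oc\sem{\mathtt{A}}^{\bot}$ using $\mathsf{S}=(\mathbb{R}_{\geq 0}\times\mathbb{T},\mathbb{R}_{\geq 0}\times\mathbb{T})$, $\oc\mathsf{X}=(\mathbb{N}\times X^{+},\mathbb{N}\times X^{-})$, $\mathsf{X}^{\bot}=(X^{-},X^{+})$ and the definition of $\otimes$. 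A routine computation then shows that, up to the canonical coproduct isomorphisms, $\sem{\mathtt{A}\to\mathtt{B}}^{+}$ is the finite coproduct of $\mathbb{R}_{\geq 0}\times\mathbb{T}$, $\mathbb{N}\times\sem{\mathtt{B}}^{+}$ and $\mathbb{N}\times\sem{\mathtt{A}}^{-}$, while $\sem{\mathtt{A}\to\mathtt{B}}^{-}$ is the finite coproduct of $\mathbb{N}\times\sem{\mathtt{A}}^{+}$, $\mathbb{N}\times\sem{\mathtt{B}}^{-}$ and $\mathbb{R}_{\geq 0}\times\mathbb{T}$. By the induction hypothesis $\sem{\mathtt{A}}^{+},\sem{\mathtt{A}}^{-},\sem{\mathtt{B}}^{+},\sem{\mathtt{B}}^{-}$ are all embeddable, so each summand is embeddable (closure under $\mathbb{N}\times(-)$ together with embeddability of $\mathbb{R}_{\geq 0}\times\mathbb{T}$), and hence so is the finite coproduct (closure under $+$). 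This closes the induction.

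Finally I would derive the statement as displayed: $\sem{\mathtt{A}}^{-}$ is embeddable, so $\mathbb{N}\times\sem{\mathtt{A}}^{-}$ is embeddable by closure under $\mathbb{N}\times(-)$, and then $\mathbb{S}+\mathbb{N}\times\sem{\mathtt{A}}^{-}$ embeds into $\mathbb{S}+\mathbb{S}$ and thence into $\mathbb{S}$, yielding the required $e$. The only genuinely delicate point — everything else being bookkeeping with coproduct isomorphisms — is verifying that the maps produced really are \emph{embeddings} and not merely injective measurable functions; this reduces, on each block $\mathbb{R}^{n}$, to the measurability of the images (cut out by conditions on finitely many coordinates, such as $v_{0}\in\mathbb{N}$ or $v_{0}\in\{0\}$) and of the coordinate projections realizing the inverses, all of which are Borel. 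I expect this verification, together with fixing one explicit tagging convention so that the iterated coproducts keep disjoint images, to be the part that demands the most care.
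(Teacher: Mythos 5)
Your proof is correct and follows essentially the same route as the paper's: an induction on the type structure using exactly the inclusions $\mathbb{N}\subseteq\mathbb{R}$, $\mathbb{R}\times\mathbb{S}\subseteq\mathbb{S}$ (prepend) and tagged coproducts $\mathbb{S}+\mathbb{S}\hookrightarrow\mathbb{S}$. The only cosmetic differences are that the paper shortcuts the induction by observing $\sem{\mathtt{A}}^{+}=\sem{\mathtt{A}}^{-}$, so that a single component suffices, and derives the final statement by noting $\mathbb{S}+\mathbb{N}\times\sem{\mathtt{A}}^{-}\subseteq\sem{\ttunit\to\mathtt{A}}^{-}$, whereas you track both components and conclude directly from your closure properties.
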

  \begin{proof}
    We first define an embedding from $\sem{\mathtt{A}}^{-}$ to
    $\mathbb{S}$ by induction on $\mathtt{A}$. We note that for any type
    $\mathtt{A}$, we have $\sem{\mathtt{A}}^{+} = \sem{\mathtt{A}}^{-}$.
    Base cases are easy. For induction step,
    \begin{align*}
      \sem{\mathtt{A} \to \mathtt{B}}^{-}
      = \mathbb{S} + \mathbb{N} \times \sem{\mathtt{B}}^{-} + \mathbb{N} \times \sem{\mathtt{A}}^{-}
      &\subseteq
        \mathbb{S} + \mathbb{N} \times \mathbb{S} + \mathbb{N} \times \mathbb{S}
        \tag{\textnormal{induction hypothesis}} \\
      &\subseteq
        \mathbb{S} + \mathbb{R} \times \mathbb{S} + \mathbb{R} \times \mathbb{S}
        \tag{$\mathbb{N} \subseteq \mathbb{R}$} \\
      &\subseteq
        \mathbb{S} + \mathbb{S} + \mathbb{S}
        \tag{$\mathbb{R} \times \mathbb{S} \subseteq \mathbb{S}$} \\
      &\cong
        \{0,1,2\} \times \mathbb{S} \\
      &\subseteq
        \mathbb{R} \times \mathbb{S} \\
      &\subseteq
        \mathbb{S}.
    \end{align*}
    The statement follows from
    $\mathbb{S} + \mathbb{N} \times \sem{\mathtt{A}}^{-}
    \subseteq \sem{\ttunit\to\mathtt{A}}^{-}$.
  \end{proof}
}

We interpret terms $\mathtt{\Delta} \vdash \mathtt{M} : \mathtt{A}$
and values $\mathtt{\Delta} \vdash \mathtt{V} : \mathtt{A}$ by
\begin{equation*}
  \sem{\mathtt{\Delta} \vdash \mathtt{M} : \mathtt{A}}
  \colon \oc \sem{\mathtt{\Delta}} \multimap
  \mathsf{S} \otimes \oc \sem{\mathtt{A}},
  \quad
  \psem{\mathtt{\Delta} \vdash \mathtt{V} : \mathtt{A}}
  \colon \oc \sem{\mathtt{\Delta}} \multimap \sem{\mathtt{A}}
\end{equation*}
inductively defined by diagrams in Figure~\ref{fig:int}. In these
definitions, when we can infer $\mathtt{\Delta}$ and $\mathtt{A}$, we
simply write $\sem{\mathtt{M}}$ and $\psem{\mathtt{V}}$ for
$\sem{\mathtt{\Delta} \vdash \mathtt{M} : \mathtt{A}}$ and
$\psem{\mathtt{\Delta} \vdash \mathtt{V} : \mathtt{A}}$ respectively,
and we often apply Convention~\ref{conv:oc} to these Mealy machines.
Extracting precise definitions from these diagrams would be
easy. \shortv{For interpretation of conditional branching,
  see \cite{dlh2019}.}

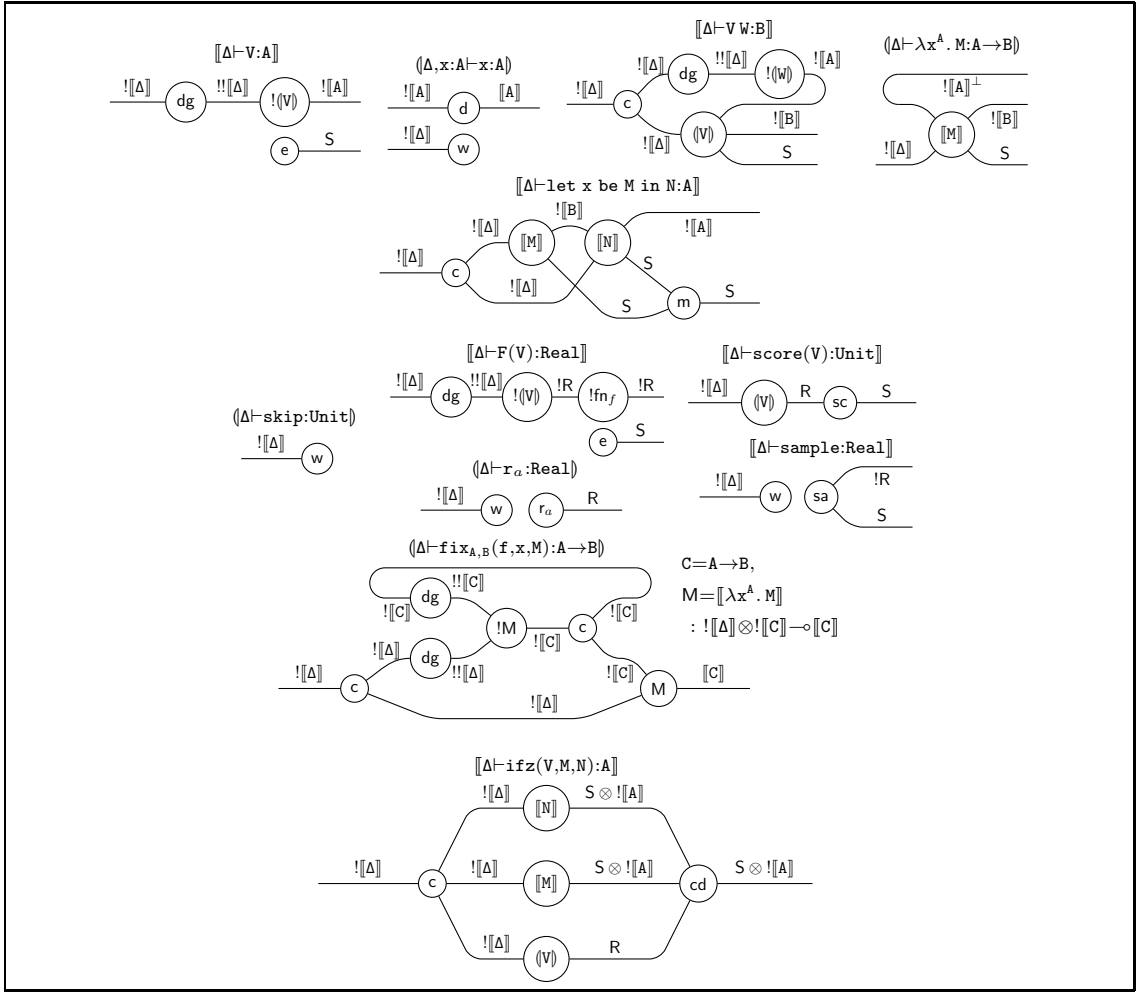
\begin{figure*}[t]
  \begin{center}
    \fbox{\begin{minipage}{.98\textwidth}
        \centering
        \begin{tikzpicture}[rounded corners]
          \node [above] at (0.8,0.4)
          {$\scriptstyle\sem{\mathtt{\Delta} \vdash \mathtt{V} : \mathtt{A}}$};
          \node (dg) [draw,circle,obj] at (0,0) {$\mathsf{dg}$};
          \node (v) [draw,circle,obj] at (1.3,0) {$\oc\psem{\mathtt{V}}$};
          \node (h) [draw,circle,obj] at (1.3,-0.65) {$\mathsf{e}$};
          \draw (dg) -- node[above,obj]{$\oc\sem{\mathtt{\Delta}}$}++(-1,0);
          \draw (dg) -- node[above,obj]{$\oc\oc\sem{\mathtt{\Delta}}$} (v);
          \draw (v) -- node[above,obj]{$\oc\sem{\mathtt{A}}$}++(1,0);
          \draw (h) -- node[above,obj]{$\mathsf{S}$}++(1,0);
        \end{tikzpicture}
        \hspace{3pt}
        \begin{tikzpicture}
          \node (bot) [draw,circle,obj] at (1,-0.15) {$\mathsf{w}$};
          \node (d)[draw,circle,obj] at (1,0.4) {$\mathsf{d}$};
          \draw (bot) -- node[above,obj]{$\oc\sem{\mathtt{\Delta}}$} ++(-1,0);
          \draw (d) -- node[above,obj]{$\oc\sem{\mathtt{A}}$} ++(-1,0);
          \draw (d) -- node[above,obj]{$\sem{\mathtt{A}}$} ++(1,0);
          \node[above=0.1cm] at (d.north) {$\scriptstyle\psem{\mathtt{\Delta}, \mathtt{x}
              : \mathtt{A} \vdash \mathtt{x} : \mathtt{A}}$};
        \end{tikzpicture}
        \hspace{3pt}
        \begin{tikzpicture}[rounded corners]
          \node (c) [draw,circle,obj] at (1,0) {$\mathsf{c}$};
          \node (v) [draw,circle,obj] at (2,-0.4) {$\psem{\mathtt{V}}$};
          \node (dg) [draw,circle,obj] at (1.8,0.4) {$\mathsf{dg}$};
          \node (w) [draw,circle,obj] at (3,0.4) {$\oc\psem{\mathtt{W}}$};
          \draw (dg) -- node[above,obj]{$\oc\oc\sem{\mathtt{\Delta}}$}(w);
          \draw (c) to[out=45,in=180] node[above,obj]{$\oc\sem{\mathtt{\Delta}}$}(dg);
          \draw (c) to[out=-45,in=180] node[below,obj]{$\oc\sem{\mathtt{\Delta}}$}(v);
          \draw (c) -- node[above,obj]{$\oc\sem{\mathtt{\Delta}}$}++(-0.8,0);
          \draw (v) -- ++(0.4,0.4) -- ++(1,0);
          \draw (3.4,0) arc(-90:90:0.2);
          \draw (3.4,0.4) -- node[above right,obj]{$\oc\sem{\mathtt{A}}$}(w);
          \draw (v) -- ++(0.7,0) -- node[above,obj]{$\oc\sem{\mathtt{B}}$}++(0.8,0);
          \draw (v) -- ++(0.4,-0.4) -- ++ (0.3,0) -- node[above,obj]{$\mathsf{S}$}++(0.8,0);
          \node at (2.4,1)
          {$\scriptstyle\sem{\mathtt{\Delta}\vdash \mathtt{V}\,\mathtt{W} :\mathtt{B}}$};
        \end{tikzpicture}
        \hspace{3pt}
        \begin{tikzpicture}[rounded corners]
          \node (m) [draw,circle,obj] at (1,0) {$\sem{\mathtt{M}}$};
          \draw (m) -- ++ (-0.4,-0.4) -- node[above,obj] {$\oc\sem{\mathtt{\Delta}}$} ++ (-0.6,0);
          \draw (m) -- ++ (-0.4,0.4) -- ++ (-0.3,0);
          \draw (0.3,0.4) arc(270:90:0.2);
          \draw (0.3,0.8) -- node[below=-0.05cm,obj]{$\oc\sem{\mathtt{A}}^{\bot}$}++(1.7,0);
          \draw (m) -- ++ (0.4,-0.4) -- node[above,obj] {$\mathsf{S}$} ++ (0.6,0);
          \draw (m) -- ++ (0.4,0.4) -- node[below,obj] {$\oc\sem{\mathtt{B}}$} ++ (0.6,0);
          \node at (1,1.2)
          {$\scriptstyle\psem{\mathtt{\Delta} \vdash \lambda \mathtt{x}^{\mathtt{A}}.\,
              \mathtt{M}: \mathtt{A}\to\mathtt{B}}$};
        \end{tikzpicture}
        \hspace{3pt}
        \begin{tikzpicture}[rounded corners]
          \node (c) [draw,circle,obj] at (1,0) {$\mathsf{c}$};
          \node (m) [draw,circle,obj] at (2,0.4) {$\sem{\mathtt{M}}$};
          \node (n) [draw,circle,obj] at (3,0.4) {$\sem{\mathtt{N}}$};
          \node (mul) [draw,circle,obj] at (4,-0.4) {$\mathsf{m}$};
          \draw (c) -- node[above,obj]{$\oc\sem{\mathtt{\Delta}}$} ++(-1,0);
          \draw (c) -- ++(0.4,0.4) node[above,obj]{$\oc\sem{\mathtt{\Delta}}$} -- (m);
          \draw (m) to[out=30,in=150] node[above,obj]{$\oc\sem{\mathtt{B}}$}(n);
          \draw (c) -- ++(0.4,-0.4) -- node[above,obj]{$\oc\sem{\mathtt{\Delta}}$} ++ (1,0) -- (n);
          \draw (m) -- ++(1,-1) -- node[above,obj]{$\mathsf{S}$} ++(0.5,0) -- (mul);
          \draw (n) -- node[above,obj]{$\mathsf{S}$} (mul);
          \draw (n) -- ++ (0.4,0.4) -- node[below,obj]{$\oc\sem{\mathtt{A}}$}++ (1.6,0);
          \draw (mul) -- node[above,obj]{$\mathsf{S}$} ++(1,0);
          \node [above=0.2cm] at (n.north)
          {$\scriptstyle\sem{\mathtt{\Delta} \vdash \letin{\mathtt{x}}{\mathtt{M}}{\mathtt{N}} :
              \mathtt{A}}$};
        \end{tikzpicture}
        \\[5pt]
        \begin{tikzpicture}
          \node (w) [draw,circle,obj] at (1,0) {$\mathsf{w}$};
          \node at (0,-0.8) {};
          \draw (w) -- node[above,obj]{$\oc\sem{\mathtt{\Delta}}$} ++(-1,0);
          \node [above] at (0.7,0.3)
          {$\scriptstyle\psem{\mathtt{\Delta} \vdash \mathtt{skip}:\ttunit}$};
        \end{tikzpicture}
        \hspace{2pt}
        \begin{tikzpicture}
          \begin{scope}[yshift=-1.5cm,xshift=-0.4cm]
            \node (w) [draw,circle,obj] at (1,0) {$\mathsf{w}$};
            \node (a) [draw,circle,obj] at (1.65,0) {$\mathsf{r}_{a}$};
            \draw (w) -- node[above,obj]{$\oc\sem{\mathtt{\Delta}}$} ++(-1,0);
            \draw (a) -- node[above,obj]{$\mathsf{R}$}++(1,0);
            \node [above] at (1.375,0.3)
            {$\scriptstyle\psem{\mathtt{\Delta} \vdash \mathtt{r}_{a}:\ttreal}$};
          \end{scope}
          \node (v)[draw,circle,obj] at (1,0) {$\oc\psem{\mathtt{V}}$};
          \node (f)[draw,circle,obj] at (2,0) {$\oc\mathsf{fn}_{f}$};
          \node (dg)[draw,circle,obj] at (0,0) {$\mathsf{dg}$};
          \node (e)[draw,circle,obj] at (2,-0.6) {$\mathsf{e}$};
          \draw (e)--node[above,obj]{$\mathsf{S}$}++(0.8,0);
          
          \draw (v) -- node[above,obj]{$\oc\oc\sem{\mathtt{\Delta}}$} (dg);
          \draw (dg) -- node[above,obj]{$\oc\sem{\mathtt{\Delta}}$} ++(-0.8,0);
          \draw (v) -- node[above,obj]{$\oc\mathsf{R}$} (f);
          \draw (f) -- node[above,obj]{$\oc\mathsf{R}$} ++(0.8,0);
          \node [above] at (v.north)
          {$\scriptstyle\sem{\mathtt{\Delta} \vdash \mathtt{F}(\mathtt{V}):\ttreal}$};
        \end{tikzpicture}
        \hspace{2pt}
        \begin{tikzpicture}[rounded corners]
          \begin{scope}[yshift=1.25cm,xshift=-0.75cm]
            \node (v) [draw,circle,obj] at (0,0) {$\psem{\mathtt{V}}$};
            \node (sc) [draw,circle,obj] at (1,0) {$\mathsf{sc}$};
            \draw (v) --node[above,obj]{$\oc\sem{\mathtt{\Delta}}$} ++(-1,0);
            \draw (v) --node[above,obj]{$\mathsf{R}$} (sc);
            \draw (sc) --node[above,obj]{$\mathsf{S}$} ++(1,0);
            \node [above] at (0.5,0.4)
            {$\scriptstyle\sem{\mathtt{\Delta} \vdash \mathtt{score}(\mathtt{V}) :\ttunit}$};
          \end{scope}
          \node (sa) [draw,circle,obj] at (0,0) {$\mathsf{sa}$};
          \node (w) [draw,circle,obj] at (-0.6,0) {$\mathsf{w}$};
          \draw (w) -- node[above,obj]{$\oc\sem{\mathtt{\Delta}}$}++(-1,0);
          \draw (sa) -- ++(0.4,0.4) -- node[below,obj]{$\oc\mathsf{R}$}++(0.8,0);
          \draw (sa) -- ++(0.4,-0.4) -- node[above,obj]{$\mathsf{S}$}++(0.8,0);
          \node [above] at (0,0.4) {$\scriptstyle\sem{\mathtt{\Delta} \vdash \mathtt{sample} : \ttreal}$};
        \end{tikzpicture}
        \hspace{2pt}
        \begin{tikzpicture}[rounded corners]
          \node (cd) [draw,circle,obj] at (0,0) {$\mathsf{c}$};
          \node (dd) [draw,circle,obj] at (1,0.4) {$\mathsf{dg}$};
          \node (dab) [draw,circle,obj] at (1,1.2) {$\mathsf{dg}$};
          \node (m) [draw,circle,obj] at (2,0.8) {$\oc\mathsf{M}$};
          \node (cab) [draw,circle,obj] at (3,0.8) {$\mathsf{c}$};
          \node (n) [draw,circle,obj] at (4,0) {$\mathsf{M}$};
          \draw (cd) -- ++ (1,-0.4) -- ++ (1,0) -- node[above,obj]{$\oc\sem{\mathtt{\Delta}}$}
          ++ (1,0) -- (n);
          \draw (cd) -- node[above,obj]{$\oc\sem{\mathtt{\Delta}}$} ++(-1,0);
          \draw (cd) to[out=35,in=180] node[above,obj]{$\oc\sem{\mathtt{\Delta}}$} (dd);
          \draw (dd) -- ++(0.5,0) node[below, obj]{$\oc\oc\sem{\mathtt{\Delta}}$} -- (m);
          \draw (dab) -- ++(0.5,0) node[above,obj]{$\oc\oc\sem{\mathtt{C}}$} -- (m);
          \draw (m) -- node[below,obj]{$\oc\sem{\mathtt{C}}$} (cab);
          \draw (cab) -- ++(0.35,-0.4) -- node[below,obj]{$\oc\sem{\mathtt{C}}$} ++ (0.3,0) -- (n);
          \draw (cab) -- ++(0.4,0.4) -- node[below,obj]{$\oc\sem{\mathtt{C}}$} ++ (0.3,0);
          \draw (3.7,1.2) arc(-90:90:0.2);
          \draw (3.7,1.6) -- ++ (-3.3,0);
          \draw (0.4,1.6) arc(90:270:0.2);
          \draw (0.4,1.2) -- node[below,obj]{$\oc\sem{\mathtt{C}}$}(dab);
          \draw (n) -- node[above,obj]{$\sem{\mathtt{C}}$}++(1.2,0);
          \node [above] at (2,1.6) {$\scriptstyle\psem{\mathtt{\Delta} \vdash
              \mathtt{fix}_{\mathtt{A},\mathtt{B}}(\mathtt{f},\mathtt{x},\mathtt{M})
              : \mathtt{A} \to \mathtt{B}}$};
          \node[right] at(4,1.25)
          {$\begin{array}{l}
              \scriptstyle
              \mathtt{C}=\mathtt{A}\to\mathtt{B}, \\
              \scriptstyle
              \mathsf{M}=\sem{\lambda \mathtt{x}^{\mathtt{A}}.\,\mathtt{M}} \\
              \scriptstyle
              \; \colon \oc\sem{\mathtt{\Delta}} \otimes
              \oc\sem{\mathtt{C}} \multimap \sem{\mathtt{C}}
            \end{array}$};
        \end{tikzpicture}
        \longv{
          \\[10pt]
          \begin{tikzpicture}[rounded corners]
            \node (M) [draw,circle,obj] at (0,0) {$\sem{\mathtt{M}}$};
            \node (N) [draw,circle,obj] at (0,1) {$\sem{\mathtt{N}}$};
            \node (V) [draw,circle,obj] at (0,-1) {$\psem{\mathtt{V}}$};
            \node (cd) [draw,circle,obj] at (2,0) {$\mathsf{cd}$};
            \node (c) [draw,circle,obj] at (-1.5,0) {$\mathsf{c}$};

            \draw (c) -- node[above,obj]{$\oc\sem{\mathtt{\Delta}}$} ++ (-1.5,0);
            \draw (c) -- node[above,obj]{$\oc\sem{\mathtt{\Delta}}$} (M);
            \draw (c) -- ++ (0.5,1) -- node[above,obj]{$\oc\sem{\mathtt{\Delta}}$} (N);
            \draw (c) -- ++ (0.5,-1) -- node[above,obj]{$\oc\sem{\mathtt{\Delta}}$} (V);

            \draw (cd) -- node[above,obj]{$\mathsf{S} \otimes \oc\sem{\mathtt{A}}$} ++ (1.5,0);
            \draw (cd) -- node[above,obj]{$\mathsf{S} \otimes \oc\sem{\mathtt{A}}$} (M);
            \draw (cd) -- ++ (-0.5,1) -- node[above,obj]{$\mathsf{S} \otimes \oc\sem{\mathtt{A}}$} (N);
            \draw (cd) -- ++ (-0.5,-1) -- node[above,obj]{$\mathsf{R}$} (V);
            
            \node[above] at (N.north) {$\scriptstyle
              \sem{\mathtt{\Delta}\vdash\ifterm{\mathtt{V}}{\mathtt{M}}{\mathtt{N}}:\mathtt{A}}$};
          \end{tikzpicture}
        }
      \end{minipage}}
  \end{center}
  \caption{Interpretation of Terms and Values}
  \label{fig:int}
\end{figure*}

\section{Adequacy Theorems}

Finally, we give our main results. In the proof of our adequacy theorems,
we use logical relations, diagrammatic reasoning of Mealy machines
(Proposition~\ref{prop:beh_eq}), the domain theoretic structure of
Mealy machines (Proposition~\ref{prop:wcpo}), and Fubini-Tonelli
theorem.

\subsection{Sampling-Based Operational Semantics}

For a closed term $\mathtt{M}:\ttreal$, we define a partial
measurable function
$\mathfrak{o}(\mathtt{M}) \colon \mathbb{R}_{\geq 0} \times \mathbb{T}
\to \mathbb{R}_{\geq 0} \times \mathbb{R}$ as follows:
\begin{varitemize}
\item for $(a,u) \in \mathbb{R}_{\geq 0} \times \mathbb{T}$,
  if there are $s,s' \in \state{\sem{\mathtt{M}}}$ such that
  \begin{align*}
    \tran{\sem{\mathtt{M}}}((\mm,(\mm,(a,u))), \init{\sem{\mathtt{M}}})
    &= ((\hh,(\hh,(a',\varepsilon))),s), \\
    \tran{\sem{\mathtt{M}}}((\mm,(\hh,(0,\varepsilon))),s)
    &= (\hh,(\mm,(0,b\mathbin{::}\varepsilon)),s'),
  \end{align*}
  i.e., if we have the following transitions:
  \begin{center}
    \begin{tikzpicture}[rounded corners]
      \node (M) [draw,circle,obj] at (0,0) {$\sem{\mathtt{M}}$};
      \draw (M) -- ++ (0.4,0.4) -- node[above,obj] {$\oc \mathsf{R}$}++ (1.5,0);
      \draw (M) -- ++ (0.4,-0.4) -- node[above=4,obj] {$\mathsf{S}$}++ (1.5,0);
      \node[obj,left] at (M.west) {$\init{\sem{\mathtt{M}}}/s$};
      \draw[thick,->] (1.9,-0.55) node[right,obj] {$(a,u)$}
      -- ++ (-1.55,0) -- ++(-0.4,0.4) -- ++ (0.2,0.2)
      -- ++ (0.3,-0.3) -- ++ (1.45,0) node[right,obj] {$(a',\varepsilon)$};
      \begin{scope}[xshift=4cm]
        \node (M) [draw,circle,obj] at (0,0) {$\sem{\mathtt{M}}$};
        \draw (M) -- ++ (0.4,0.4) -- node[above=4,obj] {$\oc \mathsf{R}$}++ (1.5,0);
        \draw (M) -- ++ (0.4,-0.4) -- node[above,obj] {$\mathsf{S}$}++ (1.5,0);
        \node[obj,left] at (M.west) {$s/s'$};
        \draw[thick,->] (1.9,0.55) node[right,obj] {$(0,\varepsilon)$}
        -- ++ (-1.55,0) -- ++(-0.4,-0.4) -- ++ (0.2,-0.2)
        -- ++ (0.3,0.3) -- ++ (1.45,0) node[right,obj] {$(0,b\mathbin{::}\varepsilon)$};
      \end{scope}
    \end{tikzpicture}
  \end{center}
  then we define $\mathfrak{o}(\mathtt{M})(a,u)$ to be $(a',b)$;
\item otherwise,
  $\mathfrak{o}(\mathtt{M})(a,u)$ is undefined.
\end{varitemize}

\begin{theorem}[Adequacy]\label{thm:adq}
  For any closed term $\vdash \mathtt{M} : \ttreal$ and for any
  $(a,u) \in \mathbb{R}_{\geq 0} \times \mathbb{T}$, we have
  \begin{equation*}
    \mathfrak{o}(\mathtt{M})(a,u) = (a',b)
    \iff
    (\mathtt{M},a,u)\to^{\ast}(b,a',\varepsilon).
  \end{equation*}
\end{theorem}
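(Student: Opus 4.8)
The plan is to establish the two implications of the biconditional separately, following the standard pattern for computational adequacy of geometry of interaction models. For the direction $(\mathtt{M},a,u)\to^{\ast}(\mathtt{r}_{b},a',\varepsilon)\implies\mathfrak{o}(\mathtt{M})(a,u)=(a',b)$, I would argue by induction on the length of the reduction sequence. The base case is $\mathtt{M}=\mathtt{r}_{b}$ with $u=\varepsilon$, for which the interpretation $\sem{\mathtt{r}_{b}}$ returns the state unchanged on its $\mathsf{S}$-port and answers $b$ on its value port, giving $\mathfrak{o}(\mathtt{r}_{b})(a,\varepsilon)=(a,b)$. For the inductive step I would show that each evaluation step is faithfully tracked by the Mealy machine semantics: a deterministic step $\mathtt{M}\detred\mathtt{N}$ yields $\sem{\mathtt{M}}\simeq\sem{\mathtt{N}}$, which by Proposition~\ref{prop:beh_eq} reduces to checking that the corresponding proof-structure rewrite preserves the diagrammatic presentation up to rearrangement of edges and nodes (together with the node-level behaviours of $\mathsf{fn}_{f}$ and the conditional machine for the $\mathtt{F}$- and $\mathtt{ifz}$-steps), whereas the $\mathtt{score}$ and $\mathtt{sample}$ steps are dispatched directly from the transition tables of $\mathsf{sc}$ and $\mathsf{sa}$ in Sections~\ref{sec:score} and~\ref{sec:sampling}. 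Since behavioural equivalence preserves the two probing transitions defining $\mathfrak{o}$, the state query $(a,u)$ threads through the run exactly as the configuration $(\mathtt{M},a,u)$ evolves, and the value query at index $0$ extracts the final real $b$.

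For the converse $\mathfrak{o}(\mathtt{M})(a,u)=(a',b)\implies(\mathtt{M},a,u)\to^{\ast}(\mathtt{r}_{b},a',\varepsilon)$ a bare induction fails, since the Mealy machine carries internal state and memoisation that are not visibly syntactic and could, a priori, fabricate outputs not justified operationally. I would therefore set up a type-indexed logical relation $\mathcal{R}_{\mathtt{A}}$: at base types it relates a closed value to the token machine interpreting it; at an arrow type $\mathtt{A}\to\mathtt{B}$ it is the usual clause sending $\mathcal{R}_{\mathtt{A}}$-related arguments to $\mathcal{R}_{\mathtt{B}}$-related results, threaded through the state object $\mathsf{S}$ and the modality $\oc$ exactly as in the interpretation $\sem{\mathtt{A}\to\mathtt{B}}=\mathsf{S}\otimes\oc\sem{\mathtt{B}}\otimes\oc\sem{\mathtt{A}}^{\bot}$. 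This is lifted to a computation relation demanding that a term's interpretation, when probed with state $(a,u)$ through its $\mathsf{S}$-port, mirror the operational evaluation of the associated configuration. The core of the argument is the \emph{fundamental lemma}: every derivable $\mathtt{\Delta}\vdash\mathtt{M}:\mathtt{A}$ is related to $\sem{\mathtt{\Delta}\vdash\mathtt{M}:\mathtt{A}}$ under $\mathcal{R}$-related closing substitutions, proved by induction on the typing derivation. Each case is handled by diagrammatic simplification (Proposition~\ref{prop:beh_eq}) of the composite machine built from the rules of Figure~\ref{fig:int}, and the base cases $\mathtt{sample}$ and $\mathtt{score}$ are verified against the operational rules using the explicit transitions of $\mathsf{sa}$ and $\mathsf{sc}$.

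The main obstacle is the $\mathtt{fix}$ case of the fundamental lemma, which forces $\mathcal{R}$ to be \emph{admissible}, i.e.\ closed under suprema of $\omega$-chains of behavioural-equivalence classes of Mealy machines. Here I would invoke Proposition~\ref{prop:wcpo}: the classes of machines from $\oc\sem{\mathtt{\Delta}}$ to $\mathsf{S}\otimes\oc\sem{\mathtt{A}}$ form a pointed $\omega$cpo, and $\sem{\mathtt{fix}_{\mathtt{A},\mathtt{B}}(\mathtt{f},\mathtt{x},\mathtt{M})}$ is the least fixed point of the continuous functional induced by $\sem{\lambda\mathtt{x}^{\mathtt{A}}.\,\mathtt{M}}$. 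The finite approximants of this least fixed point correspond to the bounded unfoldings of $\mathtt{fix}$, each of which reduces operationally in finitely many steps and is thus related by the inductive hypothesis; provided $\mathcal{R}$ is shown admissible, the limit is related to the fixed point. Once the fundamental lemma is available, instantiating it at type $\ttreal$ for a closed $\mathtt{M}$ yields precisely the converse implication, and combining the two directions gives the stated equivalence. I expect the delicate points to be the exact formulation of the higher-type clause so that state-threading composes correctly and the verification that admissibility survives the $\oc(-)$ and $\mathsf{S}\otimes(-)$ constructions.
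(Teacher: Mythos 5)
Your proposal follows essentially the same route as the paper's own proof: the forward direction by induction on the length of the reduction sequence, using invariance of the interpretation under deterministic reduction together with the explicit transitions of $\mathsf{sc}$ and $\mathsf{sa}$, and the converse by a type-indexed logical relation (the paper's $R_{\mathtt{A}}$, $R^{\top}_{\mathtt{A}}$, $\overline{R}_{\mathtt{A}}$, with the term relation obtained by quantifying over related evaluation contexts) whose fundamental lemma handles $\mathtt{fix}$ via the least-fixed-point approximants of Proposition~\ref{prop:wcpo} and closure of the relation under suprema of $\omega$-chains. The only cosmetic difference is that the paper relates the $\mathtt{fix}$ term itself to the machine-side iterates $\mathsf{iter}_{n}$ rather than to syntactic bounded unfoldings, but this does not change the substance of the argument.
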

\begin{corollary}
  For any closed term $\vdash \mathtt{M}:\ttreal$, partial functions
  $\mathtt{weight}(\mathtt{M}) \colon \mathbb{R}_{\geq 0} \times
  \mathbb{T} \to \mathbb{R}_{\geq 0}$ and
  $\mathtt{val}(\mathtt{M}) \colon \mathbb{R}_{\geq 0} \times
  \mathbb{T} \to \mathbb{R}$ given as follows \longshortv{
    \begin{equation*}
      (\mathtt{weight}(\mathtt{M})(a,u),
      \mathtt{val}(\mathtt{M})(a,u)) = \\
      \begin{cases}
        (a',b), & \textnormal{if } (\mathtt{M},a,u) \to^{\ast} (b,a',\varepsilon), \\
        \textnormal{undefined}, & \textnormal{otherwise}
      \end{cases}
    \end{equation*}
  }
  {
    \begin{multline*}
      (\mathtt{weight}(\mathtt{M})(a,u),
      \mathtt{val}(\mathtt{M})(a,u)) = \\
      \begin{cases}
        (a',b), & \textnormal{if } (\mathtt{M},a,u) \to^{\ast} (b,a',\varepsilon), \\
        \textnormal{undefined}, & \textnormal{otherwise}
      \end{cases}
    \end{multline*}} are partial measurable functions.
\end{corollary}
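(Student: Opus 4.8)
The plan is to observe that the corollary is essentially immediate once we combine Theorem~\ref{thm:adq} with the fact that $\mathfrak{o}(\mathtt{M})$ is, by its very construction, a partial measurable function. Indeed, by Theorem~\ref{thm:adq} the operationally-defined pair and $\mathfrak{o}(\mathtt{M})$ agree: for every $(a,u) \in \mathbb{R}_{\geq 0} \times \mathbb{T}$ we have $(\mathtt{weight}(\mathtt{M})(a,u), \mathtt{val}(\mathtt{M})(a,u)) = (a',b)$ iff $(\mathtt{M},a,u) \to^{\ast} (b,a',\varepsilon)$ iff $\mathfrak{o}(\mathtt{M})(a,u) = (a',b)$. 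Hence the pairing $\langle \mathtt{weight}(\mathtt{M}), \mathtt{val}(\mathtt{M})\rangle$ coincides, as a partial function $\mathbb{R}_{\geq 0}\times\mathbb{T} \to \mathbb{R}_{\geq 0}\times\mathbb{R}$, with $\mathfrak{o}(\mathtt{M})$; in particular, since $\mathfrak{o}(\mathtt{M})$ is single-valued, the defining clause really does specify partial functions. The two desired maps then arise as $\mathtt{weight}(\mathtt{M}) = \pi_{1} \circ \mathfrak{o}(\mathtt{M})$ and $\mathtt{val}(\mathtt{M}) = \pi_{2} \circ \mathfrak{o}(\mathtt{M})$, where $\pi_{1},\pi_{2}$ are the evident total measurable projections of $\mathbb{R}_{\geq 0}\times\mathbb{R}$.

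It therefore suffices to prove that $\mathfrak{o}(\mathtt{M})$ is partial measurable, and I would do this directly from its definition by exhibiting it as a finite composite of partial measurable functions, using that such functions are closed under composition and products (Section~\ref{sec:measure}). Writing $\tau = \tran{\sem{\mathtt{M}}}$, which is partial measurable by Definition~\ref{def:pmm}, the map $\mathfrak{o}(\mathtt{M})$ is built in two phases. First, the assignment $(a,u) \mapsto ((\mm,(\mm,(a,u))),\init{\sem{\mathtt{M}}})$ is total measurable (the canonical coproduct injections applied to $(a,u)$, constant in the state), and post-composing with $\tau$ yields a partial measurable map; we then keep only those inputs whose $\tau$-image has the shape $((\hh,(\hh,(a',\varepsilon))),s)$. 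This is restriction to the preimage of a measurable subset, since membership in a fixed coproduct summand is measurable and $\{\varepsilon\} = \mathbb{R}^{0}$ is measurable in $\mathbb{T}$ by the definition of $\Sigma_{\mathbb{S}}$, and extracting $(a',s)$ from this shape is measurable. Second, $s \mapsto ((\mm,(\hh,(0,\varepsilon))),s)$ is total measurable, $\tau$ is applied again, and we keep outputs of shape $((\hh,(\mm,(0,b\cons\varepsilon))),s')$---again a measurable subset, as $\{0\}\subseteq \mathbb{N}$ and the length-one sequences $\mathbb{R}^{1} \subseteq \mathbb{S}$ are measurable---and extract $b$ via the canonical measurable isomorphism $\mathbb{R}^{1} \cong \mathbb{R}$. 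Pairing the $a'$ produced in the first phase with the $b$ produced in the second presents $\mathfrak{o}(\mathtt{M})$ as a composite of partial measurable functions, hence partial measurable.

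The argument is thus structurally routine; the only point demanding care is the bookkeeping in the second paragraph, namely checking that each ``pattern match'' in the definition of $\mathfrak{o}(\mathtt{M})$ (landing in a prescribed summand of the nested coproducts such as $Y^{+} = \mathbb{R}_{\geq 0}\times\mathbb{T} + \mathbb{N}\times\mathbb{S}$, and having trace exactly $\varepsilon$ or exactly one element) really does cut out a measurable subset, and that the accompanying projection/extraction is measurable. All of these reduce to the definition of $\Sigma_{\mathbb{S}}$, under which each $\mathbb{R}^{n}$, and in particular $\{\varepsilon\} = \mathbb{R}^{0}$ and $\mathbb{R}^{1}$, is measurable. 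No appeal to the analytic content behind Theorem~\ref{thm:adq} (logical relations, Fubini-Tonelli) is needed here: adequacy is used only to identify the operationally-defined $\mathtt{weight}$ and $\mathtt{val}$ with the denotationally-defined $\mathfrak{o}(\mathtt{M})$, whose measurability is purely formal.
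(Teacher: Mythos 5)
Your proposal is correct and follows essentially the same route as the paper, which treats this statement as an immediate consequence of Theorem~\ref{thm:adq}: adequacy identifies $(\mathtt{weight}(\mathtt{M}),\mathtt{val}(\mathtt{M}))$ with $\mathfrak{o}(\mathtt{M})$, whose partial measurability is built into its definition. Your second paragraph merely fills in the routine bookkeeping (measurability of the pattern-match sets and extractions, using that $\{\varepsilon\}=\mathbb{R}^{0}$ and $\mathbb{R}^{1}$ are measurable in $\mathbb{S}$) that the paper leaves implicit when it declares $\mathfrak{o}(\mathtt{M})$ to be a partial measurable function.
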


\subsection{Distribution-Based Operational Semantics}

For a closed term $\mathtt{M}:\ttreal$, we define measurable functions
$\mathfrak{o}_{0}(\mathtt{M}) \colon \mathbb{T} \to \mathbb{R}_{\geq
  0}$ and
$\mathfrak{o}_{1}(\mathtt{M}) \colon \mathbb{T} \to \mathbb{R}$ by
\begin{align*}
  \mathfrak{o}_{0}(\mathtt{M})(u) &=
  \begin{cases}
    a, & \textnormal{if } \exists b \in \mathbb{R},\,
    \mathfrak{o}(\mathtt{M})(1,u) = (a,b), \\
    0, & \textnormal{otherwise}, \\
  \end{cases} \\
  \mathfrak{o}_{1}(\mathtt{M})(u) &=
  \begin{cases}
    b, & \textnormal{if }
    \exists a \in \mathbb{R}_{\geq 0},\,
    \mathfrak{o}(\mathtt{M})(1,u) = (a,b), \\
    0, & \textnormal{otherwise}.
  \end{cases}
\end{align*}
Then we define a measure
$\mathfrak{O}(\mathtt{M})$ on
$\mathbb{R}$ by:
\begin{equation*}
  \mathfrak{O}(\mathtt{M})(A) =
  \sum_{n \in \mathbb{N}}
  \int_{\mathbb{R}_{[0,1]}^{n}} \mathfrak{o}_{0}(\mathtt{M})(u)
  \,[\mathfrak{o}_{1}(\mathtt{M})(u) \in A] \;\mathrm{d}u.
\end{equation*}
\begin{theorem}[Adequacy]\label{thm:adq2}
  For any closed term $\vdash \mathtt{M} : \ttreal$,
  we have
  \begin{math}
    \mathtt{M} \Rightarrow_{\infty} \mathfrak{O}(\mathtt{M}).
  \end{math}
\end{theorem}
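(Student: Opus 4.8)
The plan is to reduce the statement to a purely operational comparison of the two semantics, leaning on the already-proved sampling adequacy (Theorem~\ref{thm:adq}). By that theorem the functions $\mathfrak{o}_{0}(\mathtt{M})$ and $\mathfrak{o}_{1}(\mathtt{M})$ are exactly the weight and returned value of the sampling run: if $(\mathtt{M},1,u)\to^{\ast}(\mathtt{r}_{b},a',\varepsilon)$ then $\mathfrak{o}_{0}(\mathtt{M})(u)=a'$ and $\mathfrak{o}_{1}(\mathtt{M})(u)=b$, and $\mathfrak{o}_{0}(\mathtt{M})(u)=0$ otherwise. Thus $\mathfrak{O}(\mathtt{M})(A)=\sum_{n}\int_{\mathbb{R}_{[0,1]}^{n}}\mathfrak{o}_{0}(\mathtt{M})(u)\,[\mathfrak{o}_{1}(\mathtt{M})(u)\in A]\,\mathrm{d}u$ is the distribution obtained by running the sampling semantics on every trace and integrating over the trace space; measurability of the integrands is supplied by the corollary to Theorem~\ref{thm:adq}. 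It therefore suffices to show that this integrated sampling distribution coincides with $\Rightarrow_{\infty}$.

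First I would record a homogeneity lemma: since the only rule of Figure~\ref{fig:sampling-based} that alters the weight is the one for $\mathtt{score}$, which multiplies the current weight by $|a|$ regardless of its value, the output weight is linear in the initial weight, i.e. $(\mathtt{M},c,u)\to^{\ast}(\mathtt{r}_{b},c\,a',\varepsilon)$ whenever $(\mathtt{M},1,u)\to^{\ast}(\mathtt{r}_{b},a',\varepsilon)$. Next I would observe that the rules of Figure~\ref{fig:distribution-based} and those of Figure~\ref{fig:sampling-based} are in step-for-step correspondence: each premise that decrements the index $n$ matches exactly one $\to$-step ($\detred$, $\mathtt{score}$, or the consumption of the head of the trace by $\mathtt{sample}$).

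The core is then the following claim, to be proved by induction on $n$ and, for fixed $n$, by the same case analysis on the shape of $\mathtt{M}$ (a value, or $\mathtt{E}[\mathtt{R}]$ for each redex $\mathtt{R}$) used in Lemma~\ref{lem:distribution-based}: if $\mathtt{M}\Rightarrow_{n}\mu_{n}$ then
\[
  \mu_{n}(A)=\sum_{k}\int_{\mathbb{R}_{[0,1]}^{k}}w_{n-1}(u)\,[v_{n-1}(u)\in A]\,\mathrm{d}u,
\]
where $w_{m}(u),v_{m}(u)$ are the weight and value of the run $(\mathtt{M},1,u)$ that reaches $(\mathtt{r}_{v},w,\varepsilon)$ within $m$ steps (with $w_{m}(u)=0$ if no such run exists), these functions being measurable as in the measurability corollary. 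The base cases ($n=0$, giving $\varnothing_{\mathbb{R}}$, and $\mathtt{M}=\mathtt{r}_{a}$, which terminates in $0$ steps only on the empty trace, giving $\delta_{a}$) are immediate. The $\detred$ case follows directly from the induction hypothesis, and the $\mathtt{score}$ case follows using the homogeneity lemma to extract the factor $|a|$. The delicate case is $\mathtt{sample}$: the distribution rule turns the sub-evaluations $\mathtt{E}[\mathtt{r}_{a}]\Rightarrow_{n}k(a,-)$ into $\int_{\mathbb{R}_{[0,1]}}k(a,-)\,\mathrm{d}a$, while on the sampling side one step $(\mathtt{E}[\mathtt{sample}],1,a\cons u')\to(\mathtt{E}[\mathtt{r}_{a}],1,u')$ peels off the head of the trace; matching the two requires splitting the integral over $\mathbb{R}_{[0,1]}^{k+1}$ into an outer integral over the head $a$ and an inner integral over the tail $u'\in\mathbb{R}_{[0,1]}^{k}$, and then swapping the sum over trace lengths with the outer integral. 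Both interchanges are instances of the Tonelli theorem, legitimate since all integrands are non-negative and measurable, and I expect this, together with the bookkeeping of the step and trace-length indices, to be the main obstacle.

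Finally I would let $n\to\infty$. As $m$ grows, the set of traces on which $(\mathtt{M},1,\cdot)$ terminates within $m$ steps increases to the set of all terminating traces, so $w_{m}(u)\,[v_{m}(u)\in A]$ increases pointwise to $\mathfrak{o}_{0}(\mathtt{M})(u)\,[\mathfrak{o}_{1}(\mathtt{M})(u)\in A]$. By the monotone convergence theorem $\sup_{n}\mu_{n}(A)=\mathfrak{O}(\mathtt{M})(A)$ for every Borel $A$, which by the definition of $\Rightarrow_{\infty}$ is exactly the assertion $\mathtt{M}\Rightarrow_{\infty}\mathfrak{O}(\mathtt{M})$.
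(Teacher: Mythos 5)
Your reduction is legitimate as far as it goes: by Theorem~\ref{thm:adq}, $\mathfrak{o}_{0}(\mathtt{M})$ and $\mathfrak{o}_{1}(\mathtt{M})$ coincide with $\mathtt{weight}(\mathtt{M})$ and $\mathtt{val}(\mathtt{M})$, so Theorem~\ref{thm:adq2} becomes the purely operational statement that $\Rightarrow_{\infty}$ is the integral over traces of the sampling semantics; your homogeneity lemma, the step-for-step correspondence, and the index bookkeeping in the inductive claim are all correct. The genuine gap is at the point you wave off as ``supplied by the measurability corollary'': that corollary only gives measurability of the \emph{untruncated} functions $\mathtt{weight}(\mathtt{M})$ and $\mathtt{val}(\mathtt{M})$ (they equal $\mathfrak{o}_{0},\mathfrak{o}_{1}$, which the GoI model makes measurable by construction), and Theorem~\ref{thm:adq} carries no step-count information that would transfer measurability to your truncated functions $w_{m},v_{m}$. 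Without that, neither the integrals in your inductive claim nor the final monotone-convergence step is well-formed. Worse, the obvious repair---induction on $m$ over closed terms---fails exactly in the $\mathtt{sample}$ case: the induction hypothesis gives measurability of $u'\mapsto w^{\mathtt{E}[\mathtt{r}_{a}]}_{n-1}(u')$ for each \emph{fixed} $a$, but your Tonelli swap needs \emph{joint} measurability of $(a,u')\mapsto w^{\mathtt{E}[\mathtt{r}_{a}]}_{n-1}(u') = w^{\mathtt{E}[\mathtt{sample}]}_{n}(a\cons u')$, and sectionwise measurability does not imply joint measurability. Fixing this forces you to strengthen the induction to terms with free real parameters, i.e.\ to kernel-valued statements exactly as in the proof of Lemma~\ref{lem:distribution-based}---which is essentially redoing the ``quite laborious'' operational argument of~\cite{bdlgs2016} that the paper explicitly set out to bypass.

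This is also where you diverge from the paper's own architecture: there, Corollary~\ref{coro:equivsem} is a \emph{consequence} of the two adequacy theorems, never an ingredient, whereas you prove (a truncated form of) that operational correspondence first and derive Theorem~\ref{thm:adq2} from it. The paper's proof---carried out in full for the s-finite variant (Theorem~\ref{thm:padq}, via Proposition~\ref{prop:soundp}, Lemma~\ref{lem:basicp} and Proposition~\ref{prop:iterp}) and announced with the same toolbox for Theorem~\ref{thm:adq2}---runs denotationally: a soundness induction on $n$ showing $\mu_{n}\leq\mathfrak{O}(\mathtt{M})$, where the $\mathtt{sample}$ case applies Fubini--Tonelli to the \emph{untruncated}, jointly measurable function $\mathfrak{o}_{0}(\mathtt{E}[\mathtt{sample}])(a\cons u')$ so no truncated measurability ever arises, together with the reverse inequality obtained from a type-indexed logical relation between terms and Mealy machines, with recursion handled by the approximation lemma and the $\omega$cpo structure of Proposition~\ref{prop:wcpo}. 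In that route every measurability obligation is discharged by the GoI model itself; your proposal forfeits precisely this benefit by pushing the argument back onto step-truncated operational runs.
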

It follows from our adequacy theorems that sampling-based operational
semantics induces distribution-based operational semantics.
\begin{corollary}\label{coro:equivsem}
  For any closed term $\vdash \mathtt{M}:\ttreal$,
  \begin{equation*}
    \mathtt{M} \Rightarrow_{\infty}
    \sum_{n \in \mathbb{N}}
    \int_{\mathbb{R}_{[0,1]}^{n}} \mathtt{weight}(\mathtt{M})(u)
    \,[\mathtt{val}(\mathtt{M})(u) \in A] \;\mathrm{d}u.
  \end{equation*}
\end{corollary}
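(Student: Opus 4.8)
The plan is to read the corollary off the two adequacy theorems by unfolding the definition of $\mathfrak{O}(\mathtt{M})$. By Theorem~\ref{thm:adq2} we already have $\mathtt{M} \Rightarrow_{\infty} \mathfrak{O}(\mathtt{M})$, and since $\Rightarrow_{\infty}$ is a function (as observed just after its definition), it suffices to show that the measure $\mathfrak{O}(\mathtt{M})$ coincides with the measure displayed in the statement. Throughout I read $\mathtt{weight}(\mathtt{M})(u)$ and $\mathtt{val}(\mathtt{M})(u)$ as the instances $\mathtt{weight}(\mathtt{M})(1,u)$ and $\mathtt{val}(\mathtt{M})(1,u)$ at initial weight $1$, matching the fact that $\mathfrak{o}_{0}(\mathtt{M})(u)$ and $\mathfrak{o}_{1}(\mathtt{M})(u)$ are also defined by fixing the first coordinate of $\mathfrak{o}(\mathtt{M})$ to be $1$.

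First I would compare the scalar functions appearing under the integral sign, namely $\mathfrak{o}_{0}(\mathtt{M}),\mathfrak{o}_{1}(\mathtt{M})$ against $\mathtt{weight}(\mathtt{M})(1,-),\mathtt{val}(\mathtt{M})(1,-)$. By Theorem~\ref{thm:adq}, $\mathfrak{o}(\mathtt{M})(1,u) = (a',b)$ holds exactly when $(\mathtt{M},1,u)\to^{\ast}(b,a',\varepsilon)$. Hence on the set $D = \{u \in \mathbb{T} : \mathfrak{o}(\mathtt{M})(1,u)\text{ is defined}\}$ the two pairs of functions agree pointwise: both return the weight $a'$ and the value $b$ of the unique terminating run consuming exactly the trace $u$. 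Off $D$, the functions $\mathfrak{o}_{0}(\mathtt{M})$ and $\mathfrak{o}_{1}(\mathtt{M})$ are set to $0$ by definition, whereas $\mathtt{weight}(\mathtt{M})(1,-)$ and $\mathtt{val}(\mathtt{M})(1,-)$ are simply undefined there.

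The remaining point is to check that this discrepancy on $\mathbb{T}\setminus D$ does not change either integral. Since $\mathfrak{o}_{0}(\mathtt{M})(u)=0$ forces the integrand $\mathfrak{o}_{0}(\mathtt{M})(u)\,[\mathfrak{o}_{1}(\mathtt{M})(u)\in A]$ to vanish off $D$, the $\mathfrak{O}$-side integral over $\mathbb{R}_{[0,1]}^{n}$ effectively ranges over $D\cap\mathbb{R}_{[0,1]}^{n}$ only. On the corollary's side, integrating the partial function $\mathtt{weight}(\mathtt{M})(1,-)$ means, by convention, integrating over its domain of definition $D$ — equivalently, extending it by $0$ outside $D$, which is legitimate because $\mathtt{weight}$ takes values in $\mathbb{R}_{\geq 0}$. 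Thus for each $n$ both integrands vanish off $D$ and coincide on $D$, so the two $n$-th terms are equal; summing over $n$ gives $\mathfrak{O}(\mathtt{M})(A)$ equal to the displayed sum, and combined with $\mathtt{M}\Rightarrow_{\infty}\mathfrak{O}(\mathtt{M})$ this proves the corollary.

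The only genuine subtlety, and the step I would treat most carefully, is the bookkeeping between the total functions $\mathfrak{o}_{0},\mathfrak{o}_{1}$ (which overwrite divergent or stuck traces with the value $0$) and the partial functions $\mathtt{weight},\mathtt{val}$ (which are undefined there); making the ``integrate over the domain, or extend by zero'' convention explicit, and invoking non-negativity of the weight, is exactly what upgrades ``equal on terminating traces'' to literal equality of the two measures.
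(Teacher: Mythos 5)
Your proof is correct and takes essentially the same route as the paper: the paper offers no separate argument beyond declaring the statement ``an easy corollary of our adequacy theorems,'' and the intended unfolding is exactly yours --- invoke Theorem~\ref{thm:adq2} to get $\mathtt{M} \Rightarrow_{\infty} \mathfrak{O}(\mathtt{M})$, then use Theorem~\ref{thm:adq} to identify $\mathfrak{o}_{0}(\mathtt{M})$ and $\mathfrak{o}_{1}(\mathtt{M})$ with $\mathtt{weight}(\mathtt{M})(1,-)$ and $\mathtt{val}(\mathtt{M})(1,-)$ on terminating traces, with the zero-extension observation making the two integrands agree everywhere. Your explicit bookkeeping of the total functions $\mathfrak{o}_{0},\mathfrak{o}_{1}$ versus the partial functions $\mathtt{weight},\mathtt{val}$ off the domain of definition is precisely the one detail the paper leaves implicit.
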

A result analogous to Corollary~\ref{coro:equivsem} has already been
proved by way of a purely operational (and quite laburious)
argument in an untyped setting where score is not available in
its full generality~\cite{bdlgs2016}. Here, it is just an easy corollary of
our adequacy theorems.
\longv{
  \section{Proof of Adequacy Theorems}
  \label{sec:proof_of_adq}

  \begin{lemma}
    For any term $\mathtt{\Delta},\mathtt{x}:\mathtt{A} \vdash \mathtt{M}:\mathtt{B}$
    and for any closed value $\vdash \mathtt{V}:\mathtt{A}$,
    \begin{equation*}
      \sem{\mathtt{M}} \circ (\sem{\mathtt{\Delta}} \otimes \oc \psem{\mathtt{V}})
      \simeq \sem{\mathtt{M}\{\mathtt{V}/\mathtt{x}\}}.
    \end{equation*}  
  \end{lemma}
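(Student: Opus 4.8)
The plan is to prove the statement by induction on the typing derivation, strengthening it so that it is established \emph{simultaneously} with its value-level counterpart: for every value $\mathtt{\Delta},\mathtt{x}:\mathtt{A}\vdash\mathtt{W}:\mathtt{B}$ one also shows $\psem{\mathtt{W}}\circ(\sem{\mathtt{\Delta}}\otimes\oc\psem{\mathtt{V}})\simeq\psem{\mathtt{W}\{\mathtt{V}/\mathtt{x}\}}$, where $\sem{\mathtt{\Delta}}$ abbreviates $\mathsf{id}_{\oc\sem{\mathtt{\Delta}}}$ and the source object is read as $\oc\sem{\mathtt{\Delta}}\otimes\oc\sem{\mathtt{A}}$ via Convention~\ref{conv:oc}. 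This pairing is forced by the mutual definition of $\sem{-}$ and $\psem{-}$ in Figure~\ref{fig:int}, since the term interpretation of a value is assembled from its value interpretation via $\mathsf{dg}$, $\oc(-)$ and the monad unit $\mathsf{e}$. Throughout I would replace equational reasoning on transition functions by diagrammatic reasoning, invoking Proposition~\ref{prop:beh_eq} to justify each rearrangement of nodes and edges up to behavioural equivalence.

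For the base cases the decisive point is how the plugged-in factor $\oc\psem{\mathtt{V}}$ interacts with the exponential structure on the context. When $\mathtt{M}$ is the substituted variable $\mathtt{x}$, its value interpretation is a dereliction $\mathsf{d}$ on the $\oc\sem{\mathtt{A}}$-component together with a weakening of the remaining context, so $\psem{\mathtt{x}}\circ(\sem{\mathtt{\Delta}}\otimes\oc\psem{\mathtt{V}})$ collapses, by $\mathsf{d}\circ\oc\psem{\mathtt{V}}\simeq\psem{\mathtt{V}}$ (Proposition~\ref{prop:d}) and the weakening law $\mathsf{w}\circ\oc\psem{\mathtt{V}}\simeq\mathsf{id}_{\mathsf{I}}$, exactly to the interpretation of $\mathtt{x}\{\mathtt{V}/\mathtt{x}\}=\mathtt{V}$. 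When $\mathtt{M}$ is a different variable $\mathtt{y}\neq\mathtt{x}$, or one of $\mathtt{skip}$, $\mathtt{r}_{a}$, $\mathtt{sample}$, the copy of $\psem{\mathtt{V}}$ is discarded by the same weakening law, leaving the interpretation of the unchanged term.

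The inductive cases all follow one template. I would unfold $\sem{\mathtt{M}}$ into its defining diagram, in which the context is distributed to the immediate subterms through contractions $\mathsf{c}$ and diggings $\mathsf{dg}$. Precomposing with $\sem{\mathtt{\Delta}}\otimes\oc\psem{\mathtt{V}}$ and using the naturality laws $\mathsf{c}\circ\oc\psem{\mathtt{V}}\simeq(\oc\psem{\mathtt{V}}\otimes\oc\psem{\mathtt{V}})\circ\mathsf{c}$ and $\mathsf{dg}\circ\oc\psem{\mathtt{V}}\simeq\oc\oc\psem{\mathtt{V}}\circ\mathsf{dg}$ (Proposition~\ref{prop:dg_and_con}), one pushes a copy of $\oc\psem{\mathtt{V}}$ onto each occurrence. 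For the $\mathtt{V}\,\mathtt{W}$, $\mathtt{let}$, conditional, $\mathtt{F}(-)$, $\mathtt{score}(-)$ and $\lambda$-abstraction cases the induction hypothesis then applies to the subterms; in the cases where a subterm sits under $\oc(-)$ (a value argument, or the body of a $\mathtt{fix}$) I additionally use functoriality $\oc(\mathsf{N}\circ\mathsf{M})\simeq\oc\mathsf{N}\circ\oc\mathsf{M}$ and compatibility with $\simeq$ (Proposition~\ref{prop:oc_f}) to move $\oc\psem{\mathtt{V}}$ inside and to transport the hypothesis across $\oc(-)$. Reassembling the resulting diagram reproduces the defining diagram of $\sem{\mathtt{M}\{\mathtt{V}/\mathtt{x}\}}$, closing each case by Proposition~\ref{prop:beh_eq}.

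The main obstacle I anticipate is not any single case but the coherence bookkeeping common to all of them: the contractions and diggings in Figure~\ref{fig:int} act on the whole context $\oc\sem{\mathtt{\Delta},\mathtt{x}:\mathtt{A}}$, so before the naturality laws of Proposition~\ref{prop:dg_and_con} apply I must split them along the decomposition $\oc(\sem{\mathtt{\Delta}}\otimes\sem{\mathtt{A}})\cong\oc\sem{\mathtt{\Delta}}\otimes\oc\sem{\mathtt{A}}$ and commute the $\sem{\mathtt{\Delta}}$-part of the interface past symmetries and associators. These are precisely the comonoid-on-tensor coherences of the exponential, each realised as a graph isomorphism and hence discharged by Proposition~\ref{prop:beh_eq}; the only places demanding genuine care are the $\mathtt{let}$ and $\mathtt{fix}$ cases, where a single context is shared between two subterms and must be duplicated by $\mathsf{c}$ before the hypothesis can be used twice.
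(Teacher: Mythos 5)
Your proposal is correct and matches the paper's proof, which is stated simply as ``By induction on $\mathtt{M}$'': you carry out exactly that induction, filling in the details with the paper's own lemmas (the dereliction, weakening, contraction/digging and functoriality laws for $\oc(-)$, together with diagrammatic reasoning via Proposition~\ref{prop:beh_eq}). The mutual strengthening with the value-level statement is implicit in the paper's one-line proof, since $\sem{-}$ and $\psem{-}$ are defined by mutual recursion, so your formulation is the natural reading of the same argument.
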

  \begin{proof}
    By induction on $\mathtt{M}$.
  \end{proof}
  \begin{lemma}\label{lem:detred}
    For all closed terms $\mathtt{M},\mathtt{N}:\mathtt{A}$,
    if $\mathtt{M} \detred \mathtt{N}$, then
    $\sem{\mathtt{M}} = \sem{\mathtt{N}}$.
  \end{lemma}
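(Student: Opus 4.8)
The plan is to prove the statement by case analysis on the five clauses defining $\detred$, in each case expanding the relevant diagrams of Figure~\ref{fig:int} and rewriting them into the diagram of the contractum. The two workhorses are the substitution lemma just above (which identifies $\sem{\mathtt{M}} \circ (\sem{\mathtt{\Delta}} \otimes \oc\psem{\mathtt{V}})$ with $\sem{\mathtt{M}\{\mathtt{V}/\mathtt{x}\}}$) and the fact that behavioural equivalence is insensitive to rearrangement of the underlying diagram, i.e. Proposition~\ref{prop:beh_eq}. Throughout I recall that $\sem{\mathtt{M}}$ is an arrow of $\mathbf{Mealy}$, hence a behavioural-equivalence class, so the asserted equality is precisely behavioural equivalence of representatives; and since $\mathtt{M},\mathtt{N}$ are closed, $\sem{\mathtt{\Delta}} = \mathsf{I}$, so the contraction/digging bookkeeping attached to the context is trivial in every case.

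For the $\beta$-rule $(\lambda \mathtt{x}^{\mathtt{A}}.\,\mathtt{M})\,\mathtt{V} \detred \mathtt{M}\{\mathtt{V}/\mathtt{x}\}$, I would first expand the application diagram $\sem{\mathtt{V}\,\mathtt{W}}$ and plug in the diagram for $\psem{\lambda \mathtt{x}^{\mathtt{A}}.\mathtt{M}}$. The abstraction diagram bends the $\oc\sem{\mathtt{A}}$-wire of $\sem{\mathtt{M}}$ into a $\oc\sem{\mathtt{A}}^{\bot}$-output (a unit wire of the compact closed structure), and the application diagram cuts this output against $\oc\psem{\mathtt{V}}$. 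By the snake equations of the compact closed category $\mathbf{Mealy}$ (Proposition~\ref{prop:cpt}), realized diagrammatically through Proposition~\ref{prop:beh_eq}, this cut is eliminated, leaving $\sem{\mathtt{M}}$ supplied with $\oc\psem{\mathtt{V}}$ on the $\mathtt{x}$-slot; the substitution lemma then yields $\sem{\mathtt{M}\{\mathtt{V}/\mathtt{x}\}}$. The $\mathtt{let}$-rule $\letin{\mathtt{x}}{\mathtt{V}}{\mathtt{M}} \detred \mathtt{M}\{\mathtt{V}/\mathtt{x}\}$ is the same argument on the state wire: expanding $\sem{\letin{\mathtt{x}}{\mathtt{V}}{\mathtt{M}}}$ and using that the interpretation of the value $\mathtt{V}$ as a term is $\oc\psem{\mathtt{V}}$ tensored with the monad unit $\mathsf{e}$, the multiplication $\mathsf{m}$ meets $\mathsf{e}$ and cancels by a unit law of the state monad $\mathsf{S}\otimes(-)$, after which the substitution lemma closes the case.

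The remaining three cases rely on the behaviour of the dedicated token machines. For the conditional, since the scrutinee is $\mathtt{r}_a$ and $\psem{\mathtt{r}_a} = \mathsf{r}_a$, the proposition of Section~\ref{sec:cond} gives $\mathsf{cd}\circ(\mathsf{r}_a \otimes \sem{\mathtt{M}} \otimes \sem{\mathtt{N}}) \simeq \sem{\mathtt{M}}$ when $a = 0$ and $\simeq \sem{\mathtt{N}}$ when $a \neq 0$, the discarded branch being absorbed by the realizing measurable function (weakening). For the function-identifier rule $\mathtt{F}(\mathtt{r}_{a},\ldots,\mathtt{r}_{b}) \detred \mathtt{r}_{\mathrm{fun}_{\mathtt{F}}(a,\ldots,b)}$, I would invoke the simulation of Section~\ref{sec:mf}, $\mathsf{fn}_{f} \circ (\mathsf{r}_{a}\otimes\cdots\otimes\mathsf{r}_{b}) \simeq \mathsf{r}_{f(a,\ldots,b)}$, push it under $\oc$ by functoriality of $\oc(-)$ (Proposition~\ref{prop:oc_f}), and observe that both $\sem{\mathtt{F}(\mathtt{r}_{a},\ldots,\mathtt{r}_{b})}$ and $\sem{\mathtt{r}_{\mathrm{fun}_{\mathtt{F}}(a,\ldots,b)}}$ carry the same trivial state effect $\mathsf{e}$, so the two interpretations coincide. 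The $\mathtt{fix}$-rule is handled by reading off from the fixed-point diagram that applying $\psem{\mathtt{fix}_{\mathtt{A},\mathtt{B}}(\mathtt{f},\mathtt{x},\mathtt{M})}$ unfolds, along the feedback wire and via Proposition~\ref{prop:oc_f}, into $\sem{\lambda\mathtt{x}^{\mathtt{A}}.\mathtt{M}}$ with $\mathtt{f}$ bound to a fresh copy of the fix-machine itself; combining this with the $\beta$-argument and applying the substitution lemma twice, for $\mathtt{f}$ and then $\mathtt{x}$, produces $\sem{\mathtt{M}\{\mathtt{fix}_{\mathtt{A},\mathtt{B}}(\mathtt{f},\mathtt{x},\mathtt{M})/\mathtt{f},\mathtt{V}/\mathtt{x}\}}$.

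I expect the $\mathtt{fix}$-case to be the main obstacle. The difficulty is to turn the informal ``unfolding through the feedback loop'' into an honest chain of behavioural equivalences: one must track the duplication of both the (here empty) context and the fix-value along the loop through the $\mathsf{dg}$- and $\mathsf{c}$-nodes, and correctly interleave this with the two uses of the substitution lemma. The $\beta$-case is the conceptual template, but the fixed point compounds it with the trace/feedback discipline of the compact closed structure, so most of the genuine work will concentrate there.
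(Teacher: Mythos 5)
Your overall structure --- case analysis over the five clauses of $\detred$, the substitution lemma, and diagrammatic reasoning for the $\beta$-, $\mathtt{let}$-, conditional and function-identifier cases --- matches the paper's (very terse) proof, which literally reads ``by case analysis'' and leaves those cases to the reader, referring only the recursion case to Proposition~\ref{prop:fix}. The gap is exactly where you predicted trouble: the $\mathtt{fix}$ case, and it is a genuine gap, not just hard bookkeeping. Writing $\mathsf{M} = \sem{\lambda\mathtt{x}^{\mathtt{A}}.\,\mathtt{M}}$, what this case needs is the semantic unfolding $\mathsf{M}^{\dagger} \simeq \mathsf{M} \circ \oc(\mathsf{M}^{\dagger})$, i.e.\ Proposition~\ref{prop:fix}; your plan is to obtain it by unfolding the feedback loop via Proposition~\ref{prop:oc_f}, naturality of $\mathsf{dg}$ and $\mathsf{c}$, and Proposition~\ref{prop:beh_eq}. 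But this cannot be completed as stated: each application of naturality of contraction/digging pushes one copy of $\oc\mathsf{M}$ out of the loop while recreating the same loop inside $\oc$, with re-indexed copies (Cantor pairing in $\mathsf{dg}$, even/odd splitting in $\mathsf{c}$), so the unfolding never terminates, and no finite chain of behavioural equivalences of the kind you describe can equate the two sides. Compact closed coherence (snake equations, graph rearrangement) is silent about equations that identify a loop with its own infinite unfolding.

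The paper closes this with an idea absent from your proposal: the approximation machinery of Section~\ref{sec:fix}. One introduces a parametrized modality $\oc_{\alpha}$ with index sets $\alpha_{n},\beta_{n}$ chosen so that contraction and digging interact with $\oc_{\alpha_{n+1}}$ and $\oc_{\beta_{n+1}}$ only up to the next stage (Lemma~\ref{lem:alphabeta}); one then proves the \emph{finite} unfoldings $\mathsf{M}^{\dagger,\alpha_{n+1}} \simeq \mathsf{M} \circ (\oc\mathsf{M})^{\dagger,\alpha_{n}}$ diagrammatically (Lemma~\ref{lem:expand}, Figure~\ref{fig:fix}) and $(\oc\mathsf{M})^{\dagger,\alpha_{n}} \simeq \oc(\mathsf{M}^{\dagger,\alpha_{n}})$ by induction on $n$ (Lemma~\ref{lem:phi'}); then one identifies $[\mathsf{M}^{\dagger}]$ with the least upper bound of the chain $[\mathsf{iter}_{n}(\mathsf{M})]$ (Proposition~\ref{prop:iter}), and finally uses continuity of composition and of $\oc(-)$ in the pointed $\omega$cpo of Proposition~\ref{prop:wcpo} to pass to the limit, giving $\mathsf{M} \circ \oc(\mathsf{M}^{\dagger}) \simeq \mathsf{M}^{\dagger}$. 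So diagrams are used only at each finite stage; the loop itself is tamed order-theoretically, not by the compact closed structure. Your proposal contains no approximants, no chain, and no continuity argument, and without these the recursion case --- which you yourself identify as carrying most of the work --- does not go through.
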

  \begin{proof}
    By case analysis. For the case of recursion, see
    Proposition~\ref{prop:fix} in Section~\ref{sec:fix}.
  \end{proof}
  We first prove soundness.
  \begin{proposition}\label{prop:sound}
    For any closed term $\mathtt{M}:\ttreal$ and for any
    $(a,u) \in \mathbb{R}_{\geq 0} \times \mathbb{T}$, if
    $(\mathtt{M},a,u) \to^{\ast} (b,a',\varepsilon)$, then
    $\mathfrak{o}(\mathtt{M})(a,u) = (a',b)$.
  \end{proposition}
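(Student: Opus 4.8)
The plan is to induct on the number $k$ of steps in the reduction $(\mathtt{M},a,u) \to^{\ast} (\mathtt{r}_{b},a',\varepsilon)$, reducing the statement to a base case together with a one-step preservation property for $\mathfrak{o}$. For $k=0$ we have $\mathtt{M} \equiv \mathtt{r}_{b}$, $a = a'$ and $u = \varepsilon$, and I would dispatch the claim $\mathfrak{o}(\mathtt{r}_{b})(a,\varepsilon) = (a,b)$ by unfolding $\sem{\mathtt{r}_{b}}$ directly. For the empty context the nodes $\mathsf{dg}$ and $\mathsf{w}$ act trivially, so $\sem{\mathtt{r}_{b}} \simeq \mathsf{e} \otimes \oc\mathsf{r}_{b} \colon \mathsf{I} \multimap \mathsf{S} \otimes \oc\mathsf{R}$. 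Feeding $(a,\varepsilon)$ to the $\mathsf{S}$-edge, the state unit $\mathsf{e}=\mathsf{unit}_{\mathsf{S}_{0}}$ forwards weight and trace unchanged, so the first transition in the definition of $\mathfrak{o}$ returns $(a,\varepsilon)$, i.e. $a'=a$; the subsequent query $(0,\varepsilon)$ on the $\oc\mathsf{R}$-edge is answered by $\oc\mathsf{r}_{b}$ with $(0,b \cons \varepsilon)$, i.e. with return value $b$. This is a short computation.

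For the inductive step it suffices to establish, for each single step $(\mathtt{M},a,u) \to (\mathtt{M}',a'',u'')$, that $\mathfrak{o}(\mathtt{M})(a,u) = \mathfrak{o}(\mathtt{M}')(a'',u'')$; combining this with the induction hypothesis applied to the $(k-1)$-step tail $(\mathtt{M}',a'',u'') \to^{\ast} (\mathtt{r}_{b},a',\varepsilon)$ gives the result. I would then split into the three rules. If the first step is a deterministic reduction, so that $\mathtt{M} = \mathtt{E}[\mathtt{R}]$ and $\mathtt{M}' = \mathtt{E}[\mathtt{R}']$ for a redex $\mathtt{R} \detred \mathtt{R}'$ (hence $a''=a$, $u''=u$), then Lemma~\ref{lem:detred} gives $\sem{\mathtt{R}} = \sem{\mathtt{R}'}$; since $\sem{-}$ is compatible with behavioural equivalence and with plugging into an evaluation context (by the substitution lemma and compositionality of the constructions of Section~\ref{sec:const}), we get $\sem{\mathtt{M}} \simeq \sem{\mathtt{M}'}$, and because $\mathfrak{o}$ is determined by the input/output behaviour of $\sem{-}$ it is invariant under $\simeq$, so $\mathfrak{o}(\mathtt{M}) = \mathfrak{o}(\mathtt{M}')$. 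If the step is the scoring rule $(\mathtt{E}[\mathtt{score}(\mathtt{r}_{c})],a,u) \to (\mathtt{E}[\mathtt{skip}],|c|\,a,u)$, the two denotations differ only by the $\mathsf{sc}$-node fed by $\mathsf{r}_{c}$ sitting on the state thread; by the behaviour of $\mathsf{sc}$ from Section~\ref{sec:score} this node turns an incoming state $(a,u)$ into $(|c|\,a,u)$ and leaves the rest of the interaction untouched, so diagrammatic reasoning (Proposition~\ref{prop:beh_eq}) identifies the run of $\sem{\mathtt{E}[\mathtt{score}(\mathtt{r}_{c})]}$ on $(a,u)$ with that of $\sem{\mathtt{E}[\mathtt{skip}]}$ on $(|c|\,a,u)$, yielding the desired equality.

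The remaining case, the sampling rule $(\mathtt{E}[\mathtt{sample}],a,c \cons u) \to (\mathtt{E}[\mathtt{r}_{c}],a,u)$, is where I expect the real difficulty. Here the relevant $\mathsf{sa}$-node, started in state $\ast$, receives the state thread $(a,c \cons u)$, pops the head $c$, forwards $(a,u)$ along $\mathsf{S}$, and memoises $c$ by moving to state $c$; thereafter every query on its $\oc\mathsf{R}$-edge is answered with $c$. I would isolate this as a sub-lemma: once $\mathsf{sa}$ has absorbed its trace element, its residual behaviour on the $\oc\mathsf{R}$-edge coincides with that of $\oc\mathsf{r}_{c}$, while its action on the state thread is exactly deletion of $c$; plugging this into the diagrammatic presentation of $\sem{\mathtt{E}[\mathtt{sample}]}$ and invoking Proposition~\ref{prop:beh_eq} would identify its run on $(a,c \cons u)$ with that of $\sem{\mathtt{E}[\mathtt{r}_{c}]}$ on $(a,u)$. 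The delicate point is that $\mathsf{r}_{c}$ is stateless whereas $\mathsf{sa}$ changes state, so this equivalence only holds \emph{after} the state thread has visited $\mathsf{sa}$. The main obstacle will therefore be to exploit the two-phase shape of the definition of $\mathfrak{o}$ (the weight/trace thread is run to completion, fixing the final weight $a'$, before the return value is ever queried) to guarantee that $\mathsf{sa}$ is already in state $c$ whenever the $\oc\mathsf{R}$-edge is interrogated, and to make this phase separation precise rather than relying on the informal token-passing picture of Section~\ref{sec:pn}.
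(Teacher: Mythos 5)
Your proposal is correct and takes essentially the same route as the paper's proof: induction on the length of the reduction sequence, with a case analysis on the first evaluation step, handling deterministic steps via Lemma~\ref{lem:detred} (plus compatibility of the interpretation with evaluation contexts) and the scoring/sampling steps via the definitions of $\mathsf{sc}$ and $\mathsf{sa}$. Your discussion of the sampling case, in particular the point that the state thread must visit $\mathsf{sa}$ before its $\oc\mathsf{R}$-edge can be queried, is simply a more explicit account of what the paper dispatches with a one-line appeal to the definition of $\mathsf{sa}$.
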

  \begin{proof}
    By induction on the length of $\to^{\ast}$.
    (Base case) Easy. (Induction step) By case analysis on the
    first evaluation step of
    $(\mathtt{M},a,u) \to^{\ast} (b,a',\varepsilon)$.
    \begin{itemize}
    \item If the first evaluation step is of the form
      $(\mathtt{E}[\mathtt{N}],a,u) \to
      (\mathtt{E}[\mathtt{L}],a',u')$ for some
      $\mathtt{N} \detred \mathtt{L}$, then by Lemma~\ref{lem:detred},
      we have $\mathtt{E}[\mathtt{N}] = \mathtt{E}[\mathtt{N}]$.
      Because
      $(\mathtt{E}[\mathtt{L}],a',u') \to^{\ast} (b,a',\varepsilon)$,
      by induction hypothesis, we obtain
      $\mathfrak{o}(\mathtt{E}[\mathtt{L}])(a',u') = (a',b)$. Hence,
      $\mathfrak{o}(\mathtt{E}[\mathtt{N}])(a,u) =
      \mathfrak{o}(\mathtt{E}[\mathtt{L}],a',u') = (a',b)$.
    \item If the first evaluation step is of the form
      $(\mathtt{E}[\mathtt{score}(\mathtt{r}_{c})],a,u) \to
      (\mathtt{E}[\mathtt{skip}],|c|\,a,u)$, then by induction
      hypothesis, we have $\mathfrak{o}(\mathtt{E}[\mathtt{skip}])(|c|\,a,u) =
      (a',b)$. Therefore, by the definition of the Mealy machine
      $\mathsf{sc}$, we see that
      $\mathfrak{o}(\mathtt{E}[\mathtt{score}(\mathtt{r}_{c})])(a,u)$ is
      $(a',b)$.
    \item If the first evaluation step is of the form
      $(\mathtt{E}[\mathtt{sample}],a,c \cons u)
      \to (\mathtt{E}[\mathtt{r}_{c}],a,u)$, then
      by induction hypothesis,
      $\mathfrak{o}(\mathtt{E}[\mathtt{r}_{c}])(a,u)$
      is $(a',b)$. Therefore, by the definition of the Mealy
      machine $\mathtt{sample}$, we see that
      $\mathfrak{o}(\mathtt{E}[\mathtt{sample}])(a,c \cons u)$
      is $(a',b)$.
    \end{itemize}
  \end{proof}

  It remains to prove that $\mathfrak{o}(\mathtt{M})(a,u) = (a',b)$
  implies that $(\mathtt{M},a,u) \to^{\ast} (b,a',\varepsilon)$.
  We use logical relations.
  We define a binary relation $O$ between closed terms of type $\ttreal$
  and Mealy machines from $\mathsf{I}$ to $\mathsf{S} \otimes \oc \mathsf{R}$ by
  \begin{equation*}
    (\mathtt{M},\mathsf{M}) \in O
    \iff
    \textnormal{if }o(\mathsf{M})(a,u) = (a',b),
    \textnormal{ then } (\mathtt{M},a,u) \to^{\ast} (b,a',\varepsilon)
  \end{equation*}
  where
  $o(\mathsf{M}) \colon \mathbb{R}_{\geq 0} \times \mathbb{T} \to
  \mathbb{R}_{\geq 0} \times \mathbb{R}$ is a partial measurable
  function given by: for each
  $(a,u) \in \mathbb{R}_{\geq 0} \times \mathbb{T}$,
  \begin{varitemize}
  \item if there are states $s,s' \in \state{\mathsf{M}}$ such that
    \begin{align*}
      \tran{\mathsf{M}}((\mm,(\mm,(a,u))), \init{\mathsf{M}})
      &= ((\hh,(\hh,(a',\varepsilon))),s), \\
      \tran{\mathsf{M}}((\mm,(\hh,(0,\varepsilon))),s)
      &= (\hh,(\mm,(0,b\mathbin{::}\varepsilon)),s'),
    \end{align*}
    i.e., if we have the following transitions:
    \begin{center}
      \begin{tikzpicture}[rounded corners]
        \node (M) [draw,circle,obj] at (0,0) {$\mathsf{M}$};
        \draw (M) -- ++ (0.4,0.4) -- node[above,obj] {$\oc \mathsf{R}$}++ (1.5,0);
        \draw (M) -- ++ (0.4,-0.4) -- node[above=4,obj] {$\mathsf{S}$}++ (1.5,0);
        \node[obj,left] at (M.west) {$\init{\mathsf{M}}/s$};
        \draw[thick,->] (1.9,-0.55) node[right,obj] {$(a,u)$}
        -- ++ (-1.55,0) -- ++(-0.4,0.4) -- ++ (0.2,0.2)
        -- ++ (0.3,-0.3) -- ++ (1.45,0) node[right,obj] {$(a',\varepsilon)$};
        \begin{scope}[xshift=4cm]
          \node (M) [draw,circle,obj] at (0,0) {$\mathsf{M}$};
          \draw (M) -- ++ (0.4,0.4) -- node[above=4,obj] {$\oc \mathsf{R}$}++ (1.5,0);
          \draw (M) -- ++ (0.4,-0.4) -- node[above,obj] {$\mathsf{S}$}++ (1.5,0);
          \node[obj,left] at (M.west) {$s/s'$};
          \draw[thick,->] (1.9,0.55) node[right,obj] {$(0,\varepsilon)$}
          -- ++ (-1.55,0) -- ++(-0.4,-0.4) -- ++ (0.2,-0.2)
          -- ++ (0.3,0.3) -- ++ (1.45,0) node[right,obj] {$(0,b\mathbin{::}\varepsilon)$};
        \end{scope}
      \end{tikzpicture}
    \end{center}
    then we define $o(\mathsf{M})(a,u)$ to be $(a',b)$;
  \item otherwise,
    $o(\mathsf{M})(a,u)$ is undefined.
  \end{varitemize}

  We then inductively define binary relations
  \begin{align*}
    R_{\mathtt{A}}
    &\subseteq \{\textnormal{closed values of type }\mathtt{A}\}
      \times \{\textnormal{Mealy machines from } \mathsf{I} \textnormal{ to } \sem{\mathtt{A}}\} \\
    R_{\mathtt{A}}^{\top}
    &\subseteq
      \{\textnormal{evaluation contexts } \mathtt{x}:\mathtt{A}
      \vdash \mathtt{E}[\mathtt{x}]:\ttreal \}
      \times \{\textnormal{Mealy machines from } \oc\sem{\mathtt{A}}
      \textnormal{ to } \mathsf{S} \otimes
      \oc \mathsf{R}\} \\
    \overline{R}_{\mathtt{A}}
    &\subseteq \{\textnormal{closed terms of type }\mathtt{A}\}
      \times \{\textnormal{Mealy machines from } \mathsf{I} \textnormal{ to }
      \mathsf{S} \otimes \oc \sem{\mathtt{A}}\}
  \end{align*}
  by
  \begin{align*}
    R_{\ttreal}
    &= \{(\mathtt{r}_{a},\mathsf{r}_{a}) : a \in \mathbb{R}\}, \\
    R_{\ttunit}
    &= \{(\mathtt{skip},\mathsf{id}_{\mathsf{I}})\}, \\
    R_{\mathtt{A}\to\mathtt{B}}
    &= \{(\mathtt{V},\mathsf{M}) :
      \forall (\mathtt{W},\mathtt{N}) \in R_{\mathtt{A}},\,
      (\mathtt{V}\,\mathtt{W},
      (\mathsf{S} \otimes \oc \sem{\mathtt{B}} \otimes \mathsf{counit}_{\oc\sem{\mathtt{A}}})
      \circ (\mathsf{M} \otimes \oc \mathsf{N})) \in \overline{R}_{\mathtt{B}}
      \}, \\
    R_{\mathtt{A}}^{\top}
    &= \{(\mathtt{E}[-],\mathsf{E}) :
      \forall (\mathtt{V},\mathsf{M}) \in R_{\mathtt{A}},\,
      (\mathtt{E}[\mathtt{V}],\mathsf{E} \circ \oc \mathsf{M}) \in O\}, \\
    \overline{R}_{\mathtt{A}}
    &= \{(\mathtt{M},\mathsf{M}):
      \forall (\mathtt{E}[-],\mathtt{E}) \in R_{\mathtt{A}}^{\top},\,
      (\mathtt{E}[\mathtt{M}],(\mathsf{m} \otimes \oc \mathsf{R})
      \circ (\mathsf{S} \otimes \mathsf{E}) \circ \mathsf{M}) \in O\}
  \end{align*}

  We list some properties of the logical relations.
  \begin{lemma}\label{lem:lr}
    Let $\mathtt{A}$ be a type.
    \begin{enumerate}
    \item If $(\mathtt{V},\mathtt{M}) \in R_{\mathtt{A}}$, then
      $(\mathtt{V},\mathsf{e} \otimes \oc \mathsf{M}) \in
      \overline{R}_{\mathtt{A}}$.
    \item If $(\mathtt{M},\mathsf{M}) \in \overline{R}_{\mathtt{A}}$ and
      $\mathtt{N} \detred \mathtt{M}$, then
      $(\mathtt{N},\mathsf{M}) \in \overline{R}_{\mathtt{A}}$.
    \item If $(\mathtt{M},\mathsf{M}) \in \overline{R}_{\mathtt{A}}$ and
      $\mathtt{M} \detred \mathtt{N}$, then
      $(\mathtt{N},\mathsf{M}) \in \overline{R}_{\mathtt{A}}$.
    \item If $(\mathtt{M},\mathsf{M}) \in \overline{R}_{\mathtt{A}}$ and
      $\mathsf{M} \simeq \mathsf{N}$, then
      $(\mathtt{M},\mathsf{N}) \in \overline{R}_{\mathtt{A}}$.
    \item For any closed term $\mathtt{M}:\mathtt{A}$,
      $(\mathtt{M},\mathsf{bot}_{\mathsf{S} \otimes \oc
        \sem{\mathtt{A}}}) \in \overline{R}_{\mathtt{A}}$ where
      $\mathsf{bot}_{\mathsf{X}} \colon \mathsf{I} \multimap \mathsf{X}$
      is a token machine whose transition function is the empty partial
      measurable function.
    \item For any closed value $\mathtt{V}:\mathtt{A}\to\mathtt{B}$,
      $(\mathtt{V},\mathsf{bot}_{\sem{\mathtt{A}\to\mathtt{B}}}) \in
      R_{\mathtt{A}\to\mathtt{B}}$.
    \item If $(\mathtt{M},\mathsf{M}_{i}) \in \overline{R}_{\mathtt{A}}$
      and $[\mathsf{M}_{1}] \leq [\mathsf{M}_{2}] \leq \cdots$, then
      $(\mathtt{M},\mathsf{N}) \in \overline{R}_{\mathtt{A}}$ where
      $[\mathsf{N}]$ is the least upper bound of the $\omega$-chain
      $[\mathsf{M}_{1}] \leq [\mathsf{M}_{2}] \leq \cdots$.
    \end{enumerate}
  \end{lemma}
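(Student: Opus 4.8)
The plan is to dispatch the seven clauses in an order that lets each reuse the earlier ones, leaning throughout on one basic observation. Reading off the definition, $o(\mathsf{M})(a,u)=(a',b)$ is computed from exactly two applications of $\tran{\mathsf{M}}$, and both $a'$ and $b$ are first components of transition outputs, i.e. values of $\beta_{\mathsf{M}}$. Hence $\mathsf{M}\simeq\mathsf{N}$ forces $o(\mathsf{M})=o(\mathsf{N})$ (by Lemma~\ref{lem:preceq->beta} and the characterisation $\mathsf{M}\simeq\mathsf{N}\iff\beta_{\mathsf{M}}=\beta_{\mathsf{N}}$), so membership in $O$, and therefore in $\overline{R}_{\mathtt{A}}$, is invariant under behavioural equivalence. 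This is exactly clause (4), and I would establish it first because the remaining clauses use it.

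For clause (1), I would unfold $\overline{R}_{\mathtt{A}}$: for a test $(\mathtt{E}[-],\mathsf{E})\in R_{\mathtt{A}}^{\top}$ the machine attached to $\mathtt{V}$ is $(\mathsf{m}\otimes\oc\mathsf{R})\circ(\mathsf{S}\otimes\mathsf{E})\circ(\mathsf{e}\otimes\oc\mathsf{M})$. Since $\mathsf{e}=\mathsf{unit}_{\mathsf{S}_{0}}$ and $\mathsf{m}=\mathsf{S}_{0}\otimes\mathsf{counit}_{\mathsf{S}_{0}}\otimes\mathsf{S}_{0}^{\bot}$, the composite $(\mathsf{m}\otimes\oc\mathsf{R})\circ(\mathsf{S}\otimes\mathsf{E})\circ(\mathsf{e}\otimes\oc\sem{\mathtt{A}})$ is a zigzag and collapses to $\mathsf{E}$ by the triangle identities of the compact closed structure (Proposition~\ref{prop:cpt}), which I would verify diagrammatically via Proposition~\ref{prop:beh_eq}; the machine is thus $\simeq\mathsf{E}\circ\oc\mathsf{M}$. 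By definition of $R_{\mathtt{A}}^{\top}$ applied to $(\mathtt{V},\mathsf{M})\in R_{\mathtt{A}}$ we have $(\mathtt{E}[\mathtt{V}],\mathsf{E}\circ\oc\mathsf{M})\in O$, and transporting along $\simeq$ by clause (4) gives $(\mathtt{V},\mathsf{e}\otimes\oc\mathsf{M})\in\overline{R}_{\mathtt{A}}$.

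Clauses (2), (3), (5) and (6) I would treat next. In (2) and (3) the machine component is the fixed $\mathsf{M}$ for both terms, so only the configurations change. For (2), $\mathtt{N}\detred\mathtt{M}$ lifts through the evaluation context to a single step $(\mathtt{E}[\mathtt{N}],a,u)\to(\mathtt{E}[\mathtt{M}],a,u)$, so prefixing this step to the run furnished by $(\mathtt{M},\mathsf{M})\in\overline{R}_{\mathtt{A}}$ yields $(\mathtt{E}[\mathtt{N}],a,u)\to^{\ast}(b,a',\varepsilon)$ whenever the relevant $o$ is defined. For (3) I would additionally use determinism of $\to$ on a fixed trace (the unique decomposition of a term as a value or $\mathtt{E}[\mathtt{R}]$): if the observation is defined then $(\mathtt{E}[\mathtt{M}],a,u)\to^{\ast}(b,a',\varepsilon)$, and as the forced first step is precisely $(\mathtt{E}[\mathtt{M}],a,u)\to(\mathtt{E}[\mathtt{N}],a,u)$ it may be stripped off. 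Clauses (5) and (6) hold vacuously: in each case the edges inspected by $o$ all emanate from the empty token machine $\mathsf{bot}$ (in (6) the copy $\oc\mathsf{N}$ is connected to the outside only through its cut with the dead $\mathsf{bot}$), so any output demanded by $o$ would require $\mathsf{bot}$ to fire; as $\tran{\mathsf{bot}}$ is empty, the observation is nowhere defined and the implication defining $O$ is trivially satisfied. For (6) this holds for every argument $(\mathtt{W},\mathsf{N})\in R_{\mathtt{A}}$, which is exactly $(\mathtt{V},\mathsf{bot})\in R_{\mathtt{A}\to\mathtt{B}}$.

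Clause (7) is the admissibility step and I expect it to be the principal difficulty. The idea is that the observation machine attached to $\mathsf{N}=\bigvee_{i}\mathsf{M}_{i}$ is the supremum of those attached to the $\mathsf{M}_{i}$, because $\mathsf{M}\mapsto(\mathsf{m}\otimes\oc\mathsf{R})\circ(\mathsf{S}\otimes\mathsf{E})\circ\mathsf{M}$ is $\omega$-continuous for $\leq$; working with the canonical representatives $(-)^{\#}$ of Proposition~\ref{prop:wcpo}, these share a common state space, their transition functions form an $\leq$-chain, and the join is computed pointwise (Theorem~\ref{thm:leq} and the construction of the $\omega$-cpo). The decisive point is finiteness: $o$ inspects only two transitions, so if $o$ of the join is defined at $(a,u)$ each of the two outputs already lies in some finite stage, and taking the larger index $j$ gives $o$ of the $\mathsf{M}_{j}$-machine equal to $(a',b)$; since $(\mathtt{M},\mathsf{M}_{j})\in\overline{R}_{\mathtt{A}}$ and the term is unchanged, $(\mathtt{E}[\mathtt{M}],a,u)\to^{\ast}(b,a',\varepsilon)$ follows. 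The delicate parts to nail down are the $\omega$-continuity of composition against the chain join (which rests on the join $\bigvee_{n}$ defining composition commuting with $\bigvee_{i}$) and the alignment of state spaces through $(-)^{\#}$ that makes ``defined in some finite stage'' literally meaningful.
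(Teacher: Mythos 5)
Your proposal is correct and takes essentially the same approach as the paper: the paper's own proof of this lemma is just the one-line remark that all items follow ``by unfolding the definition of $O$ and the logical relations,'' and your argument is exactly that unfolding carried out in detail (invariance of $o$ under behavioural equivalence via $\beta$, the diagrammatic collapse of the unit/multiplication zigzag for clause (1), vacuity of the observation for the $\mathsf{bot}$ clauses, and chain-continuity of composition together with the finite, two-transition nature of $o$ for admissibility). No gaps.
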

  \begin{proof}
    We can check these items by unfolding the definition of
    $O$ and the logical relations.
  \end{proof}

  \begin{lemma}[Basic Lemma]\label{lem:basic}
    Let
    $\mathtt{\Delta} = (\mathtt{x}:\mathtt{A}_{1},
    \ldots,\mathtt{x}_{n}:\mathtt{A}_{n})$ be a context.
    \begin{itemize}
    \item For any term $\mathtt{\Delta} \vdash \mathtt{M}:\mathtt{A}$
      and for any
      $(\mathtt{V}_{i},\mathtt{N}_{i}) \in R_{\mathtt{A}_{i}}$ for
      $i = 1,2,\ldots,n$, we have
      \begin{equation*}
        \left(
          \mathtt{M}\{\mathtt{V}_{1}/\mathtt{x}_{1},\ldots,
          \mathtt{V}_{n}/\mathtt{x}_{n}\},
          \sem{\mathtt{M}} \circ
          (\oc\mathsf{N}_{1} \otimes \cdots \otimes \oc \mathsf{N}_{n})
        \right)
        \in \overline{R}_{\mathtt{A}}.
      \end{equation*}
    \item  For any value
      $\mathtt{\Delta} \vdash \mathtt{V}:\mathtt{A}$ and for any
      $(\mathtt{V}_{i},\mathtt{N}_{i}) \in R_{\mathtt{A}_{i}}$ for
      $i = 1,2,\ldots,n$, we have
      \begin{equation*}
        \left(
          \mathtt{V}\{\mathtt{V}_{1}/\mathtt{x}_{1},\ldots,
          \mathtt{V}_{n}/\mathtt{x}_{n}\},
          \psem{\mathtt{M}} \circ
          (\oc\mathsf{N}_{1} \otimes \cdots \otimes \oc \mathsf{N}_{n})
        \right)
        \in R_{\mathtt{A}}.
      \end{equation*}
    \end{itemize}
  \end{lemma}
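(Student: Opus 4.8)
The plan is to prove both clauses simultaneously by induction on the derivation of the typing judgement $\mathtt{\Delta} \vdash \mathtt{M} : \mathtt{A}$ (respectively $\mathtt{\Delta} \vdash \mathtt{V} : \mathtt{A}$), with the two clauses mutually referring to one another: the value clause feeds the term clause through Lemma~\ref{lem:lr}(1), while the term clause is invoked inside the value clause whenever a value contains a term (as in $\lambda$-abstraction and fixed points). Throughout, I would unfold the interpretation of each construct from Figure~\ref{fig:int}, rewrite the resulting composite of Mealy machines up to behavioural equivalence using Proposition~\ref{prop:beh_eq}, and then verify membership by unfolding the definitions of $O$, $R_{\mathtt{A}}^{\top}$, $R_{\mathtt{A}}$ and $\overline{R}_{\mathtt{A}}$.

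First I would dispatch the easy cases. For a variable $\mathtt{x}_i$ the interpretation is $\mathsf{d}$ composed with copies of $\oc\psem{-}$ and weakenings, and Proposition~\ref{prop:d} together with the behaviour of contraction and digging (Proposition~\ref{prop:dg_and_con}) reduces the machine to $\mathsf{N}_i$; since the substituted term is $\mathtt{V}_i$, membership follows from $(\mathtt{V}_i,\mathsf{N}_i) \in R_{\mathtt{A}_i}$ and Lemma~\ref{lem:lr}(1). The constants $\mathtt{r}_a$ and $\mathtt{skip}$ are immediate from the definitions of $R_{\ttreal}$ and $R_{\ttunit}$. For $\lambda$-abstraction and application I would use the substitution lemma above, namely $\sem{\mathtt{M}} \circ (\sem{\mathtt{\Delta}} \otimes \oc\psem{\mathtt{V}}) \simeq \sem{\mathtt{M}\{\mathtt{V}/\mathtt{x}\}}$, so that $\beta$-redexes on the semantic side correspond exactly to substitutions syntactically; combined with Lemma~\ref{lem:lr}(2)--(4), which permit inserting deterministic reductions and replacing machines by behaviourally equivalent ones, this closes the functional cases. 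The measurable-function case $\mathtt{F}(\mathtt{V}_1,\ldots)$ is handled by matching the behaviour of $\mathsf{fn}_f$ against $\mathtt{F}(\mathtt{r}_a,\ldots) \detred \mathtt{r}_{\mathrm{fun}_{\mathtt{F}}(a,\ldots)}$ and Lemma~\ref{lem:detred}.

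The genuinely effectful cases exploit the biorthogonal shape of $\overline{R}_{\mathtt{A}}$: to place a term--machine pair in $\overline{R}_{\mathtt{A}}$ it suffices to test it against every context-handler pair in $R_{\mathtt{A}}^{\top}$, and membership in $O$ is precisely the statement that the observation $o(-)$ agrees with $\to^{\ast}$. For $\letin{\mathtt{x}}{\mathtt{M}}{\mathtt{N}}$ I would use the monad multiplication $\mathsf{m}$ to exhibit, for a given handler of the $\mathtt{let}$, a handler in $R_{\mathtt{B}}^{\top}$ suitable for $\mathtt{M}$ (namely the one that runs $\sem{\mathtt{N}}$ on the returned value and threads the state through $\mathsf{m}$), thereby reducing to the induction hypotheses for $\mathtt{M}$ and $\mathtt{N}$ and the substitution lemma. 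For $\mathtt{sample}$ and $\mathtt{score}(\mathtt{V})$ the explicit transitions of $\mathsf{sa}$ and $\mathsf{sc}$ reproduce, step for step, the evaluation rules $(\mathtt{E}[\mathtt{sample}],a,b\cons u) \to (\mathtt{E}[\mathtt{r}_b],a,u)$ and $(\mathtt{E}[\mathtt{score}(\mathtt{r}_a)],b,u) \to (\mathtt{E}[\mathtt{skip}],|a|\,b,u)$, so the required membership in $O$ follows directly; the conditional is analogous using $\mathsf{cd}$.

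The hard part will be the fixed point $\mathtt{fix}_{\mathtt{A},\mathtt{B}}(\mathtt{f},\mathtt{x},\mathtt{M})$, since its interpretation is a feedback loop rather than a finite composite. Here I would invoke the $\omega$-cpo structure on behavioural equivalence classes (Proposition~\ref{prop:wcpo}) to write $\psem{\mathtt{fix}_{\mathtt{A},\mathtt{B}}(\mathtt{f},\mathtt{x},\mathtt{M})}$ as the least upper bound of the chain of finite unfoldings obtained by iterating the functional $\mathsf{M} = \sem{\lambda\mathtt{x}^{\mathtt{A}}.\,\mathtt{M}}$ from the bottom machine. Lemma~\ref{lem:lr}(5),(6) supply the base of the induction (the bottom machine is related to every closed term, respectively value), each unfolding is related via the induction hypothesis for $\mathtt{M}$ together with the deterministic reduction of $\mathtt{fix}$ and Lemma~\ref{lem:lr}(2), and admissibility of the relation (Lemma~\ref{lem:lr}(7)) lets me pass to the supremum. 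The main subtlety is to verify that this order-theoretic least upper bound coincides, up to behavioural equivalence, with the machine produced by the trace/feedback definition of $\mathtt{fix}$ in Figure~\ref{fig:int}; this is exactly where continuity of the trace operator and the least-fixed-point characterisation of the fixed-point interpretation (Proposition~\ref{prop:fix}) are needed.
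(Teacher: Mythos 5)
Your proposal is correct and follows essentially the same route as the paper's proof: structural induction on terms and values, with most cases discharged by Lemma~\ref{lem:lr}, the $\mathtt{sample}$/$\mathtt{score}$ cases by unfolding $\mathsf{sa}$ and $\mathsf{sc}$ against a handler in $R^{\top}_{\ttreal}$, and the fixed point by relating the feedback interpretation to the chain of finite unfoldings via the bottom and admissibility clauses of Lemma~\ref{lem:lr}. The only nuance is that the coincidence of the feedback machine with the least upper bound of the unfoldings is exactly Proposition~\ref{prop:iter} (the approximation lemma), which is what the paper cites, rather than Proposition~\ref{prop:fix}, which is merely the fixed-point equation derived from it.
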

  \begin{proof}
    By induction on $\mathtt{M}$ and $\mathtt{V}$. Most cases follow
    from Lemma~\ref{lem:lr}. For $\mathtt{M}=\mathtt{sample}$ and
    $\mathtt{M}=\mathtt{score}(\mathtt{V})$, we check the statement by
    unfolding the definition of $\mathsf{sa}$ and $\mathsf{sc}$. Here,
    we only check for
    $\mathtt{M}=\mathtt{sample}$ and
    $\mathtt{M}=\mathtt{fix}_{\mathtt{A},\mathtt{B}}(\mathtt{f},
    \mathtt{x},\mathtt{N})$.
    \begin{itemize}
    \item When $\mathtt{M}=\mathtt{sample}$,
      for any $(\mathtt{E},\mathsf{E})$ in $R_{\ttreal}^{\top}$,
      if 
      \begin{equation*}
        \mathfrak{o}((\mathsf{m} \otimes \oc \mathsf{R})
        \circ (\mathsf{S} \otimes \mathsf{E}) \circ \mathsf{sa})(a,u)
        = (a',b),
      \end{equation*}
      then by the definition of $\mathsf{sa}$, we see that
      $u = c \cons u'$ for some $c \in \mathbb{R}_{[0,1]}$
      and $u' \in \mathbb{T}$ such that
      \begin{equation*}
        \mathfrak{o}(\mathsf{E} \circ \oc \mathsf{r}_{c})(a,u')
        = (a',b).
      \end{equation*}
      Because $(\mathtt{E},\mathsf{E}) \in R_{\ttreal}^{\top}$,
      we obtain $(\mathtt{E}[\mathtt{r}_{c}],a,u')
      \to^{\ast} (b,a',\varepsilon)$.
      Hence,
      \begin{equation*}
        (\mathtt{E}[\mathtt{sample}],a,u) \to^{\ast}
        (b,a',\varepsilon). 
      \end{equation*}
    \item When
      $\mathtt{M}=\mathtt{fix}_{\mathtt{A},\mathtt{B}}(\mathtt{f},\mathtt{x},\mathtt{N})$,
      for simplicity, we suppose that $\mathtt{M}$ is a closed term. By
      induction hypothesis, we can check that
      \begin{equation*}
        (\mathtt{M},
        \sem{\lambda \mathtt{x}^{\mathtt{A}}.\,\mathtt{N}}
        \circ \oc\sem{\lambda \mathtt{x}^{\mathtt{A}}.\,\mathtt{N}}
        \circ \cdots
        \circ \oc^{k}\sem{\lambda \mathtt{x}^{\mathtt{A}}.\,\mathtt{N}}
        \circ \mathsf{bot}_{\mathsf{I},\oc^{k}\sem{\mathtt{A}\to\mathtt{B}}})
        \in R_{\mathtt{A}\to\mathtt{B}}
      \end{equation*}
      by induction on $n$. Because $[\sem{\mathtt{M}}]$ is the least
      upper bound of
      $\sem{\lambda \mathtt{x}^{\mathtt{A}}.\,\mathtt{M}} \circ
      \oc\sem{\lambda \mathtt{x}^{\mathtt{A}}.\,\mathtt{N}} \circ
      \cdots \circ \oc^{k}\sem{\lambda
        \mathtt{x}^{\mathtt{A}}.\,\mathtt{N}} \circ
      \mathsf{bot}_{\mathsf{I},\oc^{k}\sem{\mathtt{A}\to\mathtt{B}}}$
      (Proposition~\ref{prop:iter}),
      we obtain
      $(\mathtt{M},\sem{\mathtt{M}}) \in R_{\mathtt{A}\to\mathtt{B}}$
      by Lemma~\ref{lem:lr}.
    \end{itemize}
  \end{proof}

  \begin{theorem}
    For any closed term $\vdash \mathtt{M} : \ttreal$ and for any
    $(a,u) \in \mathbb{R}_{\geq 0} \times \mathbb{T}$, we have
    \begin{equation*}
      \mathfrak{o}(\mathtt{M})(a,u) = (a',b)
      \iff
      (\mathtt{M},a,u)\to^{\ast}(b,a',\varepsilon).
    \end{equation*}
  \end{theorem}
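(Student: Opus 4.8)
The plan is to prove the two implications of the biconditional separately. For the direction $(\mathtt{M},a,u)\to^{\ast}(b,a',\varepsilon)\Rightarrow\mathfrak{o}(\mathtt{M})(a,u)=(a',b)$ (\emph{soundness}), I would induct on the length of the reduction sequence. The base case is the empty sequence, where $\mathtt{M}\equiv\mathtt{r}_{b}$ and $a=a'$, and $\mathfrak{o}(\mathtt{r}_{b})(a,u)=(a,b)$ is read directly off the token machine $\mathsf{r}_{b}$. For the inductive step I would case-split on the first reduction. If it is a deterministic step $\mathtt{E}[\mathtt{N}]\to\mathtt{E}[\mathtt{L}]$ with $\mathtt{N}\detred\mathtt{L}$, I invoke invariance of the denotation under $\detred$ (the substitution lemma together with $\sem{\mathtt{N}}=\sem{\mathtt{L}}$) to obtain $\mathfrak{o}(\mathtt{E}[\mathtt{N}])=\mathfrak{o}(\mathtt{E}[\mathtt{L}])$ and apply the induction hypothesis. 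If it fires $\mathtt{score}$ or $\mathtt{sample}$, I unfold the transition functions of $\mathsf{sc}$ and $\mathsf{sa}$ from Sections~\ref{sec:score} and~\ref{sec:sampling}; these mirror the operational rules $(\mathtt{E}[\mathtt{score}(\mathtt{r}_{c})],a,u)\to(\mathtt{E}[\mathtt{skip}],|c|\,a,u)$ and $(\mathtt{E}[\mathtt{sample}],a,c\cons u)\to(\mathtt{E}[\mathtt{r}_{c}],a,u)$ step for step, so the conclusion again follows inductively.

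For the converse $\mathfrak{o}(\mathtt{M})(a,u)=(a',b)\Rightarrow(\mathtt{M},a,u)\to^{\ast}(b,a',\varepsilon)$ (\emph{adequacy proper}), I would set up a logical relation. First define a relation $O$ pairing a closed term $\mathtt{M}:\ttreal$ with a machine $\mathsf{M}\colon\mathsf{I}\multimap\mathsf{S}\otimes\oc\mathsf{R}$ by declaring $(\mathtt{M},\mathsf{M})\in O$ exactly when every observation $o(\mathsf{M})(a,u)=(a',b)$ is matched by $(\mathtt{M},a,u)\to^{\ast}(b,a',\varepsilon)$. Then, by induction on the type $\mathtt{A}$, I would define a relation $R_{\mathtt{A}}$ on closed values, an orthogonality relation $R_{\mathtt{A}}^{\top}$ on evaluation contexts (those $\mathtt{E}[-]$ that behave well against all values in $R_{\mathtt{A}}$), and a term relation $\overline{R}_{\mathtt{A}}$ obtained by biorthogonality against $R_{\mathtt{A}}^{\top}$. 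The whole statement then reduces to a \emph{Fundamental Lemma} asserting that every typable term is related to its own interpretation, $(\mathtt{M},\sem{\mathtt{M}})\in\overline{R}_{\ttreal}$; instantiating with the empty evaluation context yields $(\mathtt{M},\sem{\mathtt{M}})\in O$, which is precisely the desired implication.

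The Fundamental Lemma I would prove by induction on the typing derivation, after recording the basic closure properties of the relations (stability under $\detred$ in both directions, stability under behavioural equivalence $\simeq$, existence of a least element, and admissibility), all of which come from unfolding the definition of $O$. The linear and functional cases are routine bookkeeping with the composition, $\otimes$, $\oc$, and the dereliction/contraction machinery, and the $\mathtt{sample}$ and $\mathtt{score}$ cases reuse the same inspection of $\mathsf{sa}$ and $\mathsf{sc}$ as above, now read from machine to term.

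The main obstacle will be the fixed-point case. Here I would use the $\omega$-cpo structure of Proposition~\ref{prop:wcpo}: the class $\sem{\mathtt{fix}_{\mathtt{A},\mathtt{B}}(\mathtt{f},\mathtt{x},\mathtt{M})}$ is the least upper bound, under $\leq$, of the finite unfoldings obtained by iterating $\sem{\lambda\mathtt{x}^{\mathtt{A}}.\,\mathtt{M}}$ from the empty machine. I would first show by an inner induction that each finite unfolding lies in $R_{\mathtt{A}\to\mathtt{B}}$ together with $\mathtt{fix}$, taking the empty machine $\mathsf{bot}$ at the base and the induction hypothesis at the step, and then lift this to the genuine fixed point via the admissibility clause, namely that $\overline{R}_{\mathtt{A}}$ is closed under least upper bounds of ascending $\omega$-chains of machines. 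Establishing that admissibility clause --- equivalently, that an observation of the supremum machine already occurs at some finite stage, so that $o(-)$ is suitably continuous --- is the delicate point; the remaining Fubini--Tonelli accounting alluded to in the paper enters only the companion distribution-based theorem and not this one.
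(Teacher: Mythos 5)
Your proposal matches the paper's proof essentially step for step: the soundness direction is proved by induction on the length of the reduction sequence with the same case analysis (deterministic steps via invariance of $\sem{-}$ under $\detred$, then $\mathtt{score}$ and $\mathtt{sample}$ by unfolding $\mathsf{sc}$ and $\mathsf{sa}$), and the converse is proved by exactly the biorthogonal logical relation you describe ($O$, $R_{\mathtt{A}}$, $R_{\mathtt{A}}^{\top}$, $\overline{R}_{\mathtt{A}}$), with the same closure properties, the same Basic Lemma by induction on terms, the same treatment of $\mathtt{fix}$ via finite unfoldings and admissibility under lubs of $\omega$-chains, and the same final instantiation at the empty evaluation context. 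Your side remark that Fubini--Tonelli is only needed for the distribution-based theorem is also consistent with the paper.
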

  \begin{proof}
    If $(\mathtt{M},a,u)\to^{\ast}(b,a',\varepsilon)$,
    then we have $\mathfrak{o}(\mathtt{M})(a,u) = (a',b)$
    by Proposition~\ref{prop:sound}. If
    $\mathfrak{o}(\mathtt{M})(a,u) = (a',b)$,
    then because $([-],\mathsf{e} \otimes \mathsf{id}_{\oc \mathsf{R}})$
    is an element of $R_{\ttreal}^{\top}$,
    we obtain $(\mathtt{M},a,u)\to^{\ast}(b,a',\varepsilon)$
    by Lemma~\ref{lem:basic}.
  \end{proof}
}

\longv{
  \section{Approximation Lemma}
\label{sec:fix}

Let $\mathsf{M} \colon \oc \mathsf{X} \multimap \mathsf{X}$ be a
Mealy machine. In this section, we give a way to
calculate a Mealy machine
$\mathsf{M}^{\dagger} \colon \mathsf{I} \to \oc\mathsf{X}$ given by
\begin{equation*}
  \mathsf{M}^{\dagger} =
  (\mathsf{M} \otimes \mathsf{counit}_{\oc\mathsf{X}})
  \circ ((\mathsf{c}_{\mathsf{X}} \circ \oc\mathsf{M}
  \circ \mathsf{dg}_{\mathsf{X}})
  \otimes \mathsf{id}_{\oc\mathsf{X}})
  \circ \mathsf{unit}_{\oc\mathsf{X}}
\end{equation*}
Diagrammatically, $\mathsf{M}^{\dagger}$ consists of digging, contraction
and a feed back loop:
\begin{center}
  \begin{tikzpicture}[rounded corners]
    \node (!m)[draw,circle,obj] at (0,0) {$\oc\mathsf{M}$};
    \node (dg)[draw,circle,obj] at (-1.5,0) {$\mathsf{dg}$};
    \node (c)[draw,circle,obj] at (1.5,0) {$\mathsf{c}$};
    \node (m)[draw,circle,obj] at (3,-0.4) {$\mathsf{M}$};

    \draw (dg) -- node[above,obj]{$\oc\oc\mathsf{X}$} (!m);
    \draw (!m) -- node[above,obj]{$\oc\mathsf{X}$} (c);
    \draw (c) -- ++ (0.4,-0.4) -- node[above,obj]{$\oc\mathsf{X}$}(m);
    \draw (m) --node[above,obj]{$\mathsf{X}$} ++(1.5,0);
    \draw (c) -- ++(0.4,0.4) -- node[below,obj]{$\oc\mathsf{X}$} ++(0.4,0);
    \draw (2.3,0.4) arc(-90:90:0.2);
    \draw (2.3,0.8) -- (-2.3,0.8);
    \draw (dg) -- node[above,obj]{$\oc\mathsf{X}$} ++(-0.8,0);
    \draw (-2.3,0) arc(270:90:0.4);
  \end{tikzpicture}
\end{center}
This construction already appeared in the interpretation of the fixed point
operator. In fact, for a term
$\mathtt{f}:\mathtt{A} \to \mathtt{B}, \mathtt{x}:\mathtt{A} \vdash
\mathtt{M}:\mathtt{B}$, we have
$\sem{\mathtt{fix}_{\mathtt{A},\mathtt{B}}(\mathtt{f},\mathtt{x},\mathtt{M})}
= \sem{\lambda \mathtt{x}^{\mathtt{A}}.\,\mathtt{M}}^{\dagger}$.

The goal of this section is to show that $\mathsf{M}^{\dagger}$ is a
fixed point of $\mathsf{M}$ and can be approximated by a family of
Mealy machines
\begin{equation*}
  \mathsf{M} \circ \mathsf{bot}_{\mathsf{I},\oc \mathsf{X}},\;
  \mathsf{M} \circ \oc(\mathsf{M} \circ
  \mathsf{bot}_{\mathsf{I},\oc \mathsf{X}}),\;
  \mathsf{M} \circ \oc(\mathsf{M} \circ \oc(
  \mathsf{M} \circ \mathsf{bot}_{\mathsf{I},\oc \mathsf{X}})),\ldots
  \colon \mathsf{I} \multimap \mathsf{X}.
\end{equation*}

\subsubsection{Parametrized Modal Operator and Parametrized Loop Operator}

We introduce parametrization of the modal operator $\oc$. For a subset
$\alpha \subseteq \mathbb{N}$ and for a Mealy machine
$\mathsf{M} \colon \mathsf{X} \multimap \mathsf{Y}$, we define a Mealy
machine
$\oc_{\alpha}\mathsf{M} \colon \oc \mathsf{X} \multimap \oc\mathsf{Y}$
by: the state space and the initial state of $\oc_{\alpha} \mathsf{M}$
are given by
\begin{equation*}
  \state{\oc_{\alpha}\mathsf{M}}
  = \state{\oc\mathsf{M}} = \state{\mathsf{M}}^{\mathbb{N}},
  \qquad
  \init{\oc_{\alpha}\mathsf{M}} = \init{\oc\mathsf{M}}
\end{equation*}
and $\tran{\oc_{\alpha}\mathsf{M}}$ is given by
\begin{multline*}
  \tran{\oc \mathsf{M}}((i,(n,z)),(s_{n})_{n \in \mathbb{N}})
  = \\
  \begin{cases}
    ((j,(n,w)),(s_{1},\ldots,s_{n-1},t,s_{n+1},\ldots)),
    & \textnormal{if } n \in \alpha
    \textnormal{ and } \tran{\mathsf{M}}((i,z),s_{n}) = ((j,w),t), \\
    \textnormal{undefined}, & \textnormal{otherwise}
  \end{cases}
\end{multline*}
where $i,j \in \{0,1\}$ and $z,w$ vary over the corresponding sets.
For example, if we have
\begin{center}
  \begin{tikzpicture}
    \node (m)[draw,circle,obj] at (0,0) {$\mathsf{M}$};
    \draw (m)--node[above,obj]{$\mathsf{Y}$}++(1.5,0);
    \draw (m)--node[above,obj]{$\mathsf{X}$}++(-1.5,0);
    \node [above,obj] at (m.north) {$s/s'$};
    \draw[thick,->] (-1.5,-0.15) node[left,obj]{$x$} --
    (1.5,-0.15) node[right,obj]{$y$};
  \end{tikzpicture}
  ,
\end{center}
then for any $t_{1},t_{2},\ldots \in \state{\mathsf{M}}$ and
for any $n \in \mathbb{N}$, we have
\begin{center}
  \begin{tikzpicture}
    \node (m)[draw,circle,obj] at (0,0) {$\oc\mathsf{M}$};
    \draw (m)--node[above,obj]{$\mathsf{Y}$}++(1.5,0);
    \draw (m)--node[above,obj]{$\mathsf{X}$}++(-1.5,0);
    \node [above,obj] at (m.north) {$(t_{1},\ldots, t_{n-1},s,t_{n+1},\ldots)
      /(t_{1},\ldots, t_{n-1},s',t_{n+1},\ldots)$};
    \draw[thick,->] (-1.5,-0.15) node[left,obj]{$(n,x)$} --
    (1.5,-0.15) node[right,obj]{$(n,y)$};
  \end{tikzpicture}
\end{center}
whenever $n \in \alpha$. When $n \notin \alpha$, there is no output
from $\oc_{\alpha}\mathsf{M}$. We can think $\oc_{\alpha}\mathsf{M}$
as a ``restriction'' of $\oc \mathsf{M}$ to $\alpha$. In fact,
$\oc \mathsf{M}$ is equal to 
$\oc_{\mathbb{N}} \mathsf{M}$.

We are interested in restrictions of $\oc$ to
subsets $\alpha_{n},\beta_{n} \subseteq \mathbb{N}$ inductively given
by
\begin{equation*}
  \alpha_{0} = \emptyset,
  \qquad
  \beta_{n} =
  \{\langle i,j \rangle :
  i \in \alpha_{n} \textnormal{ and } j \in \mathbb{N}\},
  \qquad
  \alpha_{n+1}
  = \{2i : i \in \mathbb{N}\} \cup
  \{2i+1 : i \in \beta_{n}\}.
\end{equation*}
The definition of $\alpha_{n}$ and $\beta_{n}$ are motivated by the
following lemma.
\begin{lemma}\label{lem:alphabeta}
  For any $n \in \mathbb{N}$ and for any
  $\mathsf{M} \colon \mathsf{X} \multimap \mathsf{Y}$, we have
  \begin{equation*}
    \mathsf{c}_{\mathsf{Y}} \circ \oc_{\alpha_{n+1}} \mathsf{M}
    \simeq
    (\oc \mathsf{M} \otimes \oc_{\beta_{n}} \mathsf{M})
    \circ \mathsf{c}_{\mathsf{X}},
    \qquad
    \mathsf{dg}_{\mathsf{Y}} \circ \oc_{\beta_{n+1}} \mathsf{M}
    \simeq
    \oc_{\alpha_{n}} \oc \mathsf{M} \circ \mathsf{dg}_{\mathsf{X}}.
  \end{equation*}
\end{lemma}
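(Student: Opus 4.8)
The plan is to establish each of the two equivalences separately by exhibiting an explicit measurable witness for $\simeq$. Both sides are, up to the trivial state of the intervening token machines, powers of the single state space $\state{\mathsf{M}}$, so the only real content of the claim is that the index bookkeeping performed by $\mathsf{c}$ and $\mathsf{dg}$ matches the defining recursion of $\alpha_n$ and $\beta_n$. Note that no induction on $n$ is needed inside this lemma: for fixed $n$ it is a plain equation between composite Mealy machines, and the inductive character of $\alpha_n,\beta_n$ enters only through the set-theoretic identities used below.

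First I would compute the state spaces. Since $\mathsf{c}$ and $\mathsf{dg}$ are token machines (state space $1$) and composition multiplies state spaces, the composite $\mathsf{c}_{\mathsf{Y}} \circ \oc_{\alpha_{n+1}}\mathsf{M}$ has state space $\state{\mathsf{M}}^{\mathbb{N}} \times 1 \cong \state{\mathsf{M}}^{\mathbb{N}}$, whereas $(\oc\mathsf{M} \otimes \oc_{\beta_n}\mathsf{M}) \circ \mathsf{c}_{\mathsf{X}}$ has state space $\state{\mathsf{M}}^{\mathbb{N}} \times \state{\mathsf{M}}^{\mathbb{N}}$. The natural witness is the de-interleaving bijection $\state{\mathsf{M}}^{\mathbb{N}} \xrightarrow{\cong} \state{\mathsf{M}}^{\mathbb{N}} \times \state{\mathsf{M}}^{\mathbb{N}}$ sending $(s_m)_{m}$ to $((s_{2i})_{i}, (s_{2i+1})_{i})$; it is measurable with measurable inverse and carries the constant family $(\init{\mathsf{M}})_{m}$ to the pair of constant families. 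For the second equivalence the analogous witness is the Cantor re-indexing $\state{\mathsf{M}}^{\mathbb{N}} \xrightarrow{\cong} (\state{\mathsf{M}}^{\mathbb{N}})^{\mathbb{N}}$, $(s_p)_p \mapsto (s_{\langle k,l\rangle})_{k,l}$, again preserving the constant initial state.

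The heart of the argument is checking that the transition squares from the definition of $\simeq$ commute under these witnesses. Here I would first argue that, because $\mathsf{c}$ and $\mathsf{dg}$ are stateless and deterministic, every interaction path in the countable join defining the composite transition function consists of exactly one step through the token machine together with one step of the modal machine; hence the join collapses to a single expected one-step transition on each side, which I can then write down explicitly. For the first equivalence the verification reduces to the arithmetic facts that an even index $2i$ always lies in $\alpha_{n+1}$ and is routed by $\mathsf{c}$ to the first copy at index $i$, while an odd index $2i+1$ lies in $\alpha_{n+1}$ exactly when $i \in \beta_n$ and is routed to the second copy at index $i$; matching these against the firing conditions of $\oc\mathsf{M}$ (all indices) and $\oc_{\beta_n}\mathsf{M}$ (indices in $\beta_n$) gives commutation. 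For the second equivalence the corresponding fact is that $\mathsf{dg}$ splits an index $p = \langle k,l\rangle$ into outer component $k$ and inner component $l$, and by the defining relation of the $\beta$-sets an index lies in the $\beta$-set on the left exactly when its outer Cantor component lies in the corresponding $\alpha$-set restricting the outer copy of $\oc\mathsf{M}$ on the right, so firing conditions again agree index by index.

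Finally, since in each case the witness is a measurable bijection with measurable inverse for which the transition square commutes, it realizes $\preceq$ in both directions, and both sides are behaviourally equivalent. The step I expect to be the main obstacle is the bookkeeping in the previous paragraph: confirming that the countable join defining composition really does reduce to a single-step transition, so that each composite is literally the machine its diagram suggests, and keeping the nested coproduct tags $\hh/\mm$ and the outer/inner index conventions of $\oc\oc$ straight. I do not anticipate any genuinely deep measure-theoretic difficulty, only careful combinatorial verification.
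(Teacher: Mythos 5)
Your overall strategy---explicit measurable witnesses (the de-interleaving bijection for $\mathsf{c}$, the Cantor re-indexing for $\mathsf{dg}$), together with the observation that the composition join collapses to short interaction paths because the token machines are stateless and never initiate exchanges back across the cut---is the natural one; the paper in fact states this lemma with no proof at all, and your verification of the first equivalence is correct as it stands: even indices $2i$ (always in $\alpha_{n+1}$) are routed to the unrestricted copy $\oc\mathsf{M}$ at index $i$, odd indices $2i+1$ (in $\alpha_{n+1}$ exactly when $i \in \beta_{n}$) to $\oc_{\beta_{n}}\mathsf{M}$ at index $i$.

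The second equivalence is where your argument fails, and it fails at precisely the step you delegated to ``the defining relation of the $\beta$-sets.'' That relation reads $\beta_{m} = \{\langle i,j\rangle : i \in \alpha_{m},\ j \in \mathbb{N}\}$, so $\langle k,l\rangle \in \beta_{n+1}$ iff $k \in \alpha_{n+1}$---not iff $k \in \alpha_{n}$, which is what the right-hand side $\oc_{\alpha_{n}}\oc\mathsf{M}\circ\mathsf{dg}_{\mathsf{X}}$ fires on. With the paper's definitions the statement as printed is actually false: for $n=0$ we have $\alpha_{0}=\emptyset$, so $\oc_{\alpha_{0}}\oc\mathsf{M}$ has the empty transition function and the right-hand side is behaviourally equivalent to the everywhere-undefined machine, whereas $\beta_{1} = \{\langle 2i,j\rangle : i,j \in \mathbb{N}\}$ is nonempty, so $\mathsf{dg}_{\mathsf{Y}}\circ\oc_{\beta_{1}}\mathsf{M}$ has nontrivial input/output behaviour whenever $\mathsf{M}$ does. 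The true (and intended) statement is the same-index version $\mathsf{dg}_{\mathsf{Y}} \circ \oc_{\beta_{n}} \mathsf{M} \simeq \oc_{\alpha_{n}} \oc \mathsf{M} \circ \mathsf{dg}_{\mathsf{X}}$, which is exactly the form invoked in passing from the third to the fourth diagram of Figure~\ref{fig:fix}; the printed lemma has an off-by-one typo. Your Cantor witness and routing analysis would go through verbatim for that corrected statement, but as written your proof asserts an index correspondence that contradicts the definitions; carrying out the ``careful combinatorial verification'' you postponed would have forced you either to find this counterexample or to prove the corrected statement and flag the discrepancy.
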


By means of $\oc_{\alpha}$, we can also parametrize the operator
$(-)^{\dagger}$. For $\alpha \subseteq \mathbb{N}$, and for
$\mathsf{M} \colon \oc\mathsf{X} \multimap \mathsf{X}$, we define
$\mathsf{M}^{\dagger,\alpha} \colon \mathsf{I} \to \oc\mathsf{X}$ by
\begin{equation*}
  \mathsf{M}^{\dagger,\alpha} =
  (\mathsf{M} \otimes \mathsf{counit}_{\oc\mathsf{X}})
  \circ ((\mathsf{c}_{\mathsf{X}} \circ
  \oc_{\alpha}\mathsf{M}) \circ \mathsf{dg}_{\mathsf{X}})
  \otimes \mathsf{id}_{\oc\mathsf{X}}
  \circ \mathsf{unit}_{\oc\mathsf{X}}.
\end{equation*}
It is easy to see that $\mathsf{M}^{\dagger}$ is equal to
$\mathsf{M}^{\dagger,\mathbb{N}}$.

\begin{lemma}\label{lem:expand}
  For any Mealy machine
  $\mathsf{M} \colon \oc \mathsf{X} \multimap \mathsf{X}$, we have
  \begin{equation*}
    \mathsf{M}^{\dagger,\alpha_{n + 1}}
    \simeq \mathsf{M} \circ (\oc \mathsf{M})^{\dagger,\alpha_{n}}.
  \end{equation*}
\end{lemma}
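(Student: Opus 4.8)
The plan is to read every feedback loop built from $\mathsf{unit}_{\oc\mathsf{X}}$ and $\mathsf{counit}_{\oc\mathsf{X}}$ as a categorical trace (legitimate since $\mathbf{Mealy}$ is compact closed, hence traced, by Proposition~\ref{prop:cpt}) and then rewrite the loop body by the two compatibility laws of Lemma~\ref{lem:alphabeta}. Concretely, the definition of $(-)^{\dagger,\alpha}$ unfolds to
\begin{equation*}
  \mathsf{M}^{\dagger,\alpha}
  \simeq
  \mathsf{Tr}^{\oc\mathsf{X}}\!\bigl(
    (\mathsf{M}\otimes\mathsf{id}_{\oc\mathsf{X}})
    \circ\mathsf{c}_{\mathsf{X}}
    \circ\oc_{\alpha}\mathsf{M}
    \circ\mathsf{dg}_{\mathsf{X}}
  \bigr),
\end{equation*}
and, reading the same definition for $\oc\mathsf{M}\colon\oc\oc\mathsf{X}\multimap\oc\mathsf{X}$ and then pulling $\mathsf{M}$ inside the trace by dinaturality,
\begin{equation*}
  \mathsf{M}\circ(\oc\mathsf{M})^{\dagger,\alpha_{n}}
  \simeq
  \mathsf{Tr}^{\oc\oc\mathsf{X}}\!\bigl(
    ((\mathsf{M}\circ\oc\mathsf{M})\otimes\mathsf{id}_{\oc\oc\mathsf{X}})
    \circ\mathsf{c}_{\oc\mathsf{X}}
    \circ\oc_{\alpha_{n}}(\oc\mathsf{M})
    \circ\mathsf{dg}_{\oc\mathsf{X}}
  \bigr).
\end{equation*}
It then suffices to transform the first trace into the second.

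First I would peel off the outermost call. Applying the contraction law of Lemma~\ref{lem:alphabeta} (instantiated at $\mathsf{M}\colon\oc\mathsf{X}\multimap\mathsf{X}$), namely $\mathsf{c}_{\mathsf{X}}\circ\oc_{\alpha_{n+1}}\mathsf{M}\simeq(\oc\mathsf{M}\otimes\oc_{\beta_{n}}\mathsf{M})\circ\mathsf{c}_{\oc\mathsf{X}}$, and using $(\mathsf{M}\otimes\mathsf{id})\circ(\oc\mathsf{M}\otimes\oc_{\beta_{n}}\mathsf{M})=(\mathsf{M}\circ\oc\mathsf{M})\otimes\oc_{\beta_{n}}\mathsf{M}$, the body of $\mathsf{M}^{\dagger,\alpha_{n+1}}$ becomes
\begin{equation*}
  \bigl((\mathsf{M}\circ\oc\mathsf{M})\otimes\oc_{\beta_{n}}\mathsf{M}\bigr)
  \circ\mathsf{c}_{\oc\mathsf{X}}\circ\mathsf{dg}_{\mathsf{X}}.
\end{equation*}
Here the full copy $\oc\mathsf{M}$ is routed into the outer $\mathsf{M}$ (the genuine result), while the restricted copy $\oc_{\beta_{n}}\mathsf{M}$ is precisely the wire that the trace closes off.

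Next I would slide the recursive copy $\oc_{\beta_{n}}\mathsf{M}\colon\oc\oc\mathsf{X}\multimap\oc\mathsf{X}$ around the loop using the sliding axiom of the trace; this both replaces the traced object $\oc\mathsf{X}$ by $\oc\oc\mathsf{X}$ and moves $\oc_{\beta_{n}}\mathsf{M}$ to the front of the body, creating the composite $\mathsf{dg}_{\mathsf{X}}\circ\oc_{\beta_{n}}\mathsf{M}$. Rewriting this by the digging law of Lemma~\ref{lem:alphabeta}, in the form $\mathsf{dg}_{\mathsf{X}}\circ\oc_{\beta_{n}}\mathsf{M}\simeq\oc_{\alpha_{n}}(\oc\mathsf{M})\circ\mathsf{dg}_{\oc\mathsf{X}}$, turns the body into exactly $((\mathsf{M}\circ\oc\mathsf{M})\otimes\mathsf{id}_{\oc\oc\mathsf{X}})\circ\mathsf{c}_{\oc\mathsf{X}}\circ\oc_{\alpha_{n}}(\oc\mathsf{M})\circ\mathsf{dg}_{\oc\mathsf{X}}$, whose trace over $\oc\oc\mathsf{X}$ is the second display above. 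Every step is either an instance of Lemma~\ref{lem:alphabeta}, a trace axiom, or a graph isomorphism (Proposition~\ref{prop:beh_eq}), each compatible with $\simeq$, so chaining them yields $\mathsf{M}^{\dagger,\alpha_{n+1}}\simeq\mathsf{M}\circ(\oc\mathsf{M})^{\dagger,\alpha_{n}}$.

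I expect the main obstacle to be the bookkeeping around the sliding step: one has to verify that the copy singled out by the contraction law (the $\beta_{n}$-indexed one) is exactly the trace-closed wire, so that sliding it reparametrizes the loop from $\oc\mathsf{X}$ to $\oc\oc\mathsf{X}$ and lands on the digging in precisely the shape the digging law consumes, with the index sets lining up ($\beta_{n}$ on the left, $\oc_{\alpha_{n}}(\oc\mathsf{M})$ on the right). Turning the ``$\mathsf{unit}/\mathsf{counit}$ loop $=$ $\mathsf{Tr}$'' dictionary and the passage of $\mathsf{M}$ through the inner trace into equalities that are genuinely compatible with behavioural equivalence is routine given Proposition~\ref{prop:cpt}, but it is where the diagrammatic picture must be discharged carefully rather than merely gestured at.
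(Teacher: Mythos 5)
Your proposal is correct and takes essentially the same route as the paper: the paper's proof is precisely Figure~\ref{fig:fix}, a chain of four diagrams that unfolds the feedback loop, applies the contraction half of Lemma~\ref{lem:alphabeta}, slides $\oc_{\beta_{n}}\mathsf{M}$ around the loop wire (justified there by diagrammatic rearrangement, Proposition~\ref{prop:beh_eq}, rather than by an explicit trace-sliding axiom), and finishes with the digging half, landing on the diagram of $\mathsf{M}\circ(\oc\mathsf{M})^{\dagger,\alpha_{n}}$. Your algebraic rendering via $\mathsf{Tr}$, dinaturality and sliding is the same argument in trace-axiom form, and your use of the digging law with indices $\beta_{n}/\alpha_{n}$ matches how the paper's figure actually applies it.
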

\begin{proof}
  See Figure~\ref{fig:fix}.
\end{proof}

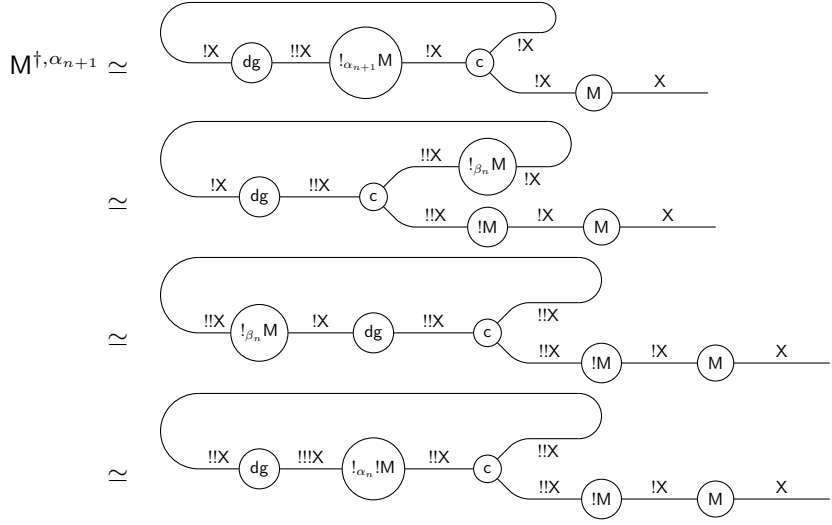
\begin{figure}
  \centering
  \begin{center}
    \begin{tabular}{rl}
      \raisebox{0.5cm}{$\mathsf{M}^{\dagger,\alpha_{n+1}} \simeq$}
      &
        \begin{tikzpicture}[rounded corners]
          \node (!m)[draw,circle,obj] at (0,0) {$\oc_{\alpha_{n+1}}\mathsf{M}$};
          \node (dg)[draw,circle,obj] at (-1.5,0) {$\mathsf{dg}$};
          \node (c)[draw,circle,obj] at (1.5,0) {$\mathsf{c}$};
          \node (m)[draw,circle,obj] at (3,-0.4) {$\mathsf{M}$};

          \draw (dg) -- node[above,obj]{$\oc\oc\mathsf{X}$} (!m);
          \draw (!m) -- node[above,obj]{$\oc\mathsf{X}$} (c);
          \draw (c) -- ++ (0.4,-0.4) -- node[above,obj]{$\oc\mathsf{X}$}(m);
          \draw (m) --node[above,obj]{$\mathsf{X}$} ++(1.5,0);
          \draw (c) -- ++(0.4,0.4) -- node[below,obj]{$\oc\mathsf{X}$} ++(0.4,0);
          \draw (2.3,0.4) arc(-90:90:0.2);
          \draw (2.3,0.8) -- (-2.3,0.8);
          \draw (dg) -- node[above,obj]{$\oc\mathsf{X}$} ++(-0.8,0);
          \draw (-2.3,0) arc(270:90:0.4);
        \end{tikzpicture} \\
      \raisebox{0.5cm}{$\simeq$}
      &
        \begin{tikzpicture}[rounded corners]
          \node (!bm)[draw,circle,obj] at (1.5,0.4) {$\oc_{\beta_{n}}\mathsf{M}$};
          \node (!m)[draw,circle,obj] at (1.5,-0.4) {$\oc\mathsf{M}$};
          \node (dg)[draw,circle,obj] at (-1.5,0) {$\mathsf{dg}$};
          \node (c)[draw,circle,obj] at (0,0) {$\mathsf{c}$};
          \node (m)[draw,circle,obj] at (3,-0.4) {$\mathsf{M}$};

          \draw (dg) -- node[above,obj]{$\oc\oc\mathsf{X}$} (c);
          \draw (c) -- ++ (0.4,-0.4) -- node[above,obj]{$\oc\oc\mathsf{X}$}(!m);
          \draw (c) -- ++ (0.4,0.4) -- node[above,obj]{$\oc\oc\mathsf{X}$}(!bm);
          \draw (!m) -- node[above,obj]{$\oc\mathsf{X}$} (m);
          \draw (m) --node[above,obj]{$\mathsf{X}$} ++(1.5,0);
          \draw (!bm) -- node[below,obj]{$\oc\mathsf{X}$} ++(0.8,0);
          \draw (2.3,0.4) arc(-90:90:0.3);
          \draw (2.3,1) -- (-2.3,1);
          \draw (dg) -- node[above,obj]{$\oc\mathsf{X}$} ++(-0.8,0);
          \draw (-2.3,0) arc(270:90:0.5);
        \end{tikzpicture} \\
      \raisebox{0.5cm}{$\simeq$}
      &
        \begin{tikzpicture}[rounded corners]
          \node (!bm)[draw,circle,obj] at (-3,0) {$\oc_{\beta_{n}}\mathsf{M}$};
          \node (!m)[draw,circle,obj] at (1.5,-0.4) {$\oc\mathsf{M}$};
          \node (dg)[draw,circle,obj] at (-1.5,0) {$\mathsf{dg}$};
          \node (c)[draw,circle,obj] at (0,0) {$\mathsf{c}$};
          \node (m)[draw,circle,obj] at (3,-0.4) {$\mathsf{M}$};

          \draw (dg) -- node[above,obj]{$\oc\oc\mathsf{X}$} (c);
          \draw (c) -- ++ (0.4,-0.4) -- node[above,obj]{$\oc\oc\mathsf{X}$}(!m);
          \draw (c) -- ++ (0.4,0.4) -- node[below,obj]{$\oc\oc\mathsf{X}$} ++(0.8,0);
          \draw (!m) -- node[above,obj]{$\oc\mathsf{X}$} (m);
          \draw (m) --node[above,obj]{$\mathsf{X}$} ++(1.5,0);
          \draw (1.2,0.4) arc(-90:90:0.3);
          \draw (1.2,1) -- (-3.8,1);
          \draw (dg) -- node[above,obj]{$\oc\mathsf{X}$} (!bm);
          \draw (!bm) -- node[above,obj]{$\oc\oc\mathsf{X}$} ++(-0.8,0);
          \draw (-3.8,0) arc(270:90:0.5);
        \end{tikzpicture} \\
      \raisebox{0.5cm}{$\simeq$}
      &
        \begin{tikzpicture}[rounded corners]
          \node (!bm)[draw,circle,obj] at (-1.5,0) {$\oc_{\alpha_{n}}\oc\mathsf{M}$};
          \node (!m)[draw,circle,obj] at (1.5,-0.4) {$\oc\mathsf{M}$};
          \node (dg)[draw,circle,obj] at (-3,0) {$\mathsf{dg}$};
          \node (c)[draw,circle,obj] at (0,0) {$\mathsf{c}$};
          \node (m)[draw,circle,obj] at (3,-0.4) {$\mathsf{M}$};

          \draw (!bm) -- node[above,obj]{$\oc\oc\mathsf{X}$} (c);
          \draw (c) -- ++ (0.4,-0.4) -- node[above,obj]{$\oc\oc\mathsf{X}$}(!m);
          \draw (c) -- ++ (0.4,0.4) -- node[below,obj]{$\oc\oc\mathsf{X}$} ++(0.8,0);
          \draw (!m) -- node[above,obj]{$\oc\mathsf{X}$} (m);
          \draw (m) --node[above,obj]{$\mathsf{X}$} ++(1.5,0);
          \draw (1.2,0.4) arc(-90:90:0.3);
          \draw (1.2,1) -- (-3.8,1);
          \draw (dg) -- node[above,obj]{$\oc\oc\oc\mathsf{X}$}(!bm);
          \draw (dg) -- node[above,obj]{$\oc\oc\mathsf{X}$} ++(-0.8,0);
          \draw (-3.8,0) arc(270:90:0.5);
        \end{tikzpicture}
    \end{tabular}
  \end{center}
  \caption{A Diagrammatic Proof of Lemma~\ref{lem:expand}}
  \label{fig:fix}
\end{figure}

\begin{lemma}\label{lem:phi'}
  For any Mealy machine
  $\mathsf{M} \colon \oc \mathsf{X} \multimap \mathsf{X}$ and for any
  $n \in \mathbb{N}$,
  \begin{equation*}
    (\oc \mathsf{M})^{\dagger,\alpha_{n}}
    \simeq
    \oc (\mathsf{M}^{\dagger,\alpha_{n}}).
  \end{equation*}
\end{lemma}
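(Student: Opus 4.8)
The plan is to prove the statement by induction on $n$, keeping it universally quantified over $\mathsf{M}$ so that the induction hypothesis may be instantiated at $\oc\mathsf{M}$ as well as at $\mathsf{M}$ itself. The whole argument is then driven by two facts established earlier: the approximation identity of Lemma~\ref{lem:expand}, namely $\mathsf{N}^{\dagger,\alpha_{k+1}}\simeq\mathsf{N}\circ(\oc\mathsf{N})^{\dagger,\alpha_k}$, and the functoriality of $\oc(-)$ together with its compatibility with behavioural equivalence (Proposition~\ref{prop:oc_f}), in particular $\oc\mathsf{N}\circ\oc\mathsf{P}\simeq\oc(\mathsf{N}\circ\mathsf{P})$ and $\oc\mathsf{I}=\mathsf{I}$.

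For the base case $n=0$ we have $\alpha_0=\emptyset$, so $\oc_{\alpha_0}\mathsf{N}$ is the empty-transition (bottom) machine for every $\mathsf{N}$. First I would record the auxiliary identity $\mathsf{N}^{\dagger,\emptyset}\simeq\mathsf{N}\circ\mathsf{bot}_{\mathsf{I},\oc\mathsf{Y}}$ for any $\mathsf{N}\colon\oc\mathsf{Y}\multimap\mathsf{Y}$: when the inner copy $\oc_{\emptyset}\mathsf{N}$ of the feedback loop is dead, every query that the outer $\mathsf{N}$ sends to its argument circulates into $\oc_{\emptyset}\mathsf{N}$ and is never answered, so the argument wire behaves exactly like a bottom machine. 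Granting this, on one side $(\oc\mathsf{M})^{\dagger,\alpha_0}\simeq\oc\mathsf{M}\circ\mathsf{bot}_{\mathsf{I},\oc\oc\mathsf{M}}$, while on the other side $\oc(\mathsf{M}^{\dagger,\alpha_0})\simeq\oc\bigl(\mathsf{M}\circ\mathsf{bot}_{\mathsf{I},\oc\mathsf{X}}\bigr)\simeq\oc\mathsf{M}\circ\oc\mathsf{bot}_{\mathsf{I},\oc\mathsf{X}}$ by Proposition~\ref{prop:oc_f}; since $\oc\mathsf{I}=\mathsf{I}$ and $\oc(-)$ preserves the empty transition, $\oc\mathsf{bot}_{\mathsf{I},\oc\mathsf{X}}\simeq\mathsf{bot}_{\mathsf{I},\oc\oc\mathsf{X}}$, so the two sides coincide.

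For the induction step, assuming the claim at $n$ for all machines, I would chain
\begin{align*}
  (\oc\mathsf{M})^{\dagger,\alpha_{n+1}}
  &\simeq \oc\mathsf{M}\circ(\oc\oc\mathsf{M})^{\dagger,\alpha_n} \\
  &\simeq \oc\mathsf{M}\circ\oc\bigl((\oc\mathsf{M})^{\dagger,\alpha_n}\bigr) \\
  &\simeq \oc\bigl(\mathsf{M}\circ(\oc\mathsf{M})^{\dagger,\alpha_n}\bigr) \\
  &\simeq \oc\bigl(\mathsf{M}^{\dagger,\alpha_{n+1}}\bigr).
\end{align*}
The first step is Lemma~\ref{lem:expand} applied to $\oc\mathsf{M}$; the second is the induction hypothesis instantiated at $\oc\mathsf{M}$, which gives $(\oc\oc\mathsf{M})^{\dagger,\alpha_n}\simeq\oc\bigl((\oc\mathsf{M})^{\dagger,\alpha_n}\bigr)$, combined with compatibility of composition with $\simeq$; the third is functoriality of $\oc(-)$ (Proposition~\ref{prop:oc_f}) with $\mathsf{P}=(\oc\mathsf{M})^{\dagger,\alpha_n}$; and the last is Lemma~\ref{lem:expand} applied to $\mathsf{M}$, transported under $\oc(-)$. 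This closes the induction.

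The main obstacle is the base case, and more precisely the auxiliary identity $\mathsf{N}^{\dagger,\emptyset}\simeq\mathsf{N}\circ\mathsf{bot}_{\mathsf{I},\oc\mathsf{Y}}$: this is the only part that is not pure formal manipulation of the two cited results, since composing with a bottom machine does not in general yield a bottom machine, so one must verify by a token-flow (diagrammatic) analysis that the dead inner loop really does feed a bottom argument into the outer $\mathsf{N}$. Once this is in place, the remainder — both the base case computation and the inductive chain — is routine bookkeeping with Lemma~\ref{lem:expand} and Proposition~\ref{prop:oc_f}.
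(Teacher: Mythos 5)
Your induction step is literally the paper's: the same four-link chain, justified by Lemma~\ref{lem:expand} applied to $\oc\mathsf{M}$, the induction hypothesis instantiated at $\oc\mathsf{M}$, functoriality of $\oc(-)$ (Proposition~\ref{prop:oc_f}), and Lemma~\ref{lem:expand} once more; keeping the statement universally quantified over $\mathsf{M}$ is also what the paper does implicitly. Where you genuinely diverge is the base case, and your version is the more careful one. The paper disposes of $n=0$ in one line by asserting that the transition functions of $(\oc\mathsf{M})^{\dagger,\alpha_{0}}$ and $\oc(\mathsf{M}^{\dagger,\alpha_{0}})$ are the empty partial function. That assertion is too strong: in $\mathsf{M}^{\dagger,\emptyset}$ only the copy of $\mathsf{M}$ \emph{inside} the feedback loop is deadened (it is $\oc_{\emptyset}\mathsf{M}$ whose transitions are empty), while the outermost $\mathsf{M}$ receives the external query on $X^{-}$ directly and may answer on $X^{+}$ without ever consulting its argument --- for instance, if $\mathsf{M}$ ignores its $\oc\mathsf{X}$-input, then $\mathsf{M}^{\dagger,\emptyset}$ answers every query, so its transition function is certainly not empty (the base case of Proposition~\ref{prop:iter} suffers from the same issue). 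What is true, under any reading of the contraction's re-indexing, is exactly your auxiliary identity $\mathsf{N}^{\dagger,\emptyset}\simeq\mathsf{N}\circ\mathsf{bot}_{\mathsf{I},\oc\mathsf{Y}}$, and your computation --- both sides reduce to $\oc\mathsf{M}\circ\mathsf{bot}$ via functoriality and $\oc\mathsf{bot}\simeq\mathsf{bot}$ --- is precisely what makes the $n=0$ instance of the lemma go through; the equivalence claimed by the lemma at $n=0$ is still correct, it is only the paper's justification that is defective. You are also right to single out the auxiliary identity as the one step requiring a genuine token-flow verification rather than formal manipulation (composition with a bottom machine does not in general yield a bottom machine); that verification is routine, but it is the actual content of the base case, and it is the part the paper's proof glosses over.
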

\begin{proof}
  We prove the statement by induction on $n$. The base case follows
  from that the transition functions of
  $(\oc \mathsf{M})^{\dagger,\alpha_{0}}$ and
  $\oc (\mathsf{M}^{\dagger,\alpha_{0}})$ are equal to the empty
  partial function.
  We next check the induction step. We have
  \begin{align*}
    (\oc \mathsf{M})^{\dagger,\alpha_{n+1}}
    &\simeq \oc \mathsf{M} \circ (\oc \oc \mathsf{M})^{\dagger,\alpha_{n}}
    \tag{Lemma~\ref{lem:expand}}\\
    &\simeq \oc \mathsf{M} \circ \oc (\oc \mathsf{M})^{\dagger,\alpha_{n}}
      \tag{Induction hypothesis} \\
    &\simeq \oc (\mathsf{M} \circ (\oc \mathsf{M})^{\dagger,\alpha_{n}})
      \tag{Functoriality} \\
    &\simeq \oc (\mathsf{M}^{\dagger,\alpha_{n+1}}).
      \tag{Lemma~\ref{lem:expand}}
  \end{align*}
\end{proof}

\begin{proposition}\label{prop:iter}
  For a Mealy machine
  $\mathsf{M} \colon \oc \mathsf{X} \multimap \mathsf{X}$, we
  inductively define
  $\mathsf{iter}_{n}(\mathsf{M}) \colon \mathsf{I} \multimap
  \mathsf{X}$ by
  \begin{equation*}
    \mathsf{iter}_{0}(\mathsf{M}) = \mathsf{bot}_{\mathsf{I},\mathsf{X}},
    \qquad
    \mathsf{iter}_{n+1}(\mathsf{M})
    = \mathsf{M} \circ \oc (\mathsf{iter}_{n}(\mathsf{M})).
  \end{equation*}
  For all $n \in \mathbb{N}$, we have
  \begin{equation*}
    \state{\mathsf{M}^{\dagger}} =
    \state{\mathsf{M}^{\dagger,\alpha_{n}}}, \qquad
    \init{\mathsf{M}^{\dagger}} =
    \init{\mathsf{M}^{\dagger,\alpha_{n}}}, \qquad
    \mathsf{M}^{\dagger,\alpha_{n}}
    \simeq
    \mathsf{iter}_{n}(\mathsf{M}),
  \end{equation*}
  and
  \begin{equation*}
    \tran{\mathsf{M}^{\dagger,\alpha_{0}}} \leq 
    \tran{\mathsf{M}^{\dagger,\alpha_{1}}} \leq
    \tran{\mathsf{M}^{\dagger,\alpha_{2}}} \leq
    \cdots, \qquad
    \tran{\mathsf{M}^{\dagger}} =
    \bigvee_{n \geq 0} \tran{\mathsf{M}^{\dagger,\alpha_{n}}}.
  \end{equation*}
  Hence, we have an ascending chain
  \begin{equation*}
    [\mathsf{M}^{\dagger,\alpha_{0}}]
    \leq
    [\mathsf{M}^{\dagger,\alpha_{1}}]
    \leq
    [\mathsf{M}^{\dagger,\alpha_{2}}]
    \leq \cdots,
  \end{equation*}
  and $[\mathsf{M}^{\dagger}]$ is the least
  upper bound of the ascending chain $[\mathsf{iter}_{n}(\mathsf{M})]$.
\end{proposition}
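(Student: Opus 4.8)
The plan is to prove the three displayed assertions in turn. The equalities $\state{\mathsf{M}^{\dagger}} = \state{\mathsf{M}^{\dagger,\alpha_{n}}}$ and $\init{\mathsf{M}^{\dagger}} = \init{\mathsf{M}^{\dagger,\alpha_{n}}}$ are immediate from the definitions: the parametrized operator $\oc_{\alpha}(-)$ only restricts the transition function and leaves the state space $\state{\mathsf{M}}^{\mathbb{N}}$ and the initial state $\init{\oc\mathsf{M}}$ untouched, and $(-)^{\dagger,\alpha}$ assembles $\mathsf{M}^{\dagger,\alpha}$ from $\oc_{\alpha}\mathsf{M}$ together with the fixed auxiliary machines $\mathsf{M}$, $\mathsf{dg}$, $\mathsf{c}$, $\mathsf{counit}$ and $\mathsf{unit}$ exactly as $(-)^{\dagger}=(-)^{\dagger,\mathbb{N}}$ does from $\oc\mathsf{M}=\oc_{\mathbb{N}}\mathsf{M}$. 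Consequently all the $\mathsf{M}^{\dagger,\alpha_{n}}$ and $\mathsf{M}^{\dagger}$ share a common state space and initial state, so the order $\leq$ between their transition functions is meaningful.

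Next I would establish $\mathsf{M}^{\dagger,\alpha_{n}} \simeq \mathsf{iter}_{n}(\mathsf{M})$ by induction on $n$. The base case holds because $\alpha_{0}=\emptyset$ forces $\tran{\oc_{\alpha_{0}}\mathsf{M}}$ to be everywhere undefined, whence $\mathsf{M}^{\dagger,\alpha_{0}} \simeq \mathsf{bot}_{\mathsf{I},\mathsf{X}} = \mathsf{iter}_{0}(\mathsf{M})$. For the step I chain the two approximation lemmas already available:
\begin{align*}
  \mathsf{M}^{\dagger,\alpha_{n+1}}
  &\simeq \mathsf{M} \circ (\oc\mathsf{M})^{\dagger,\alpha_{n}}
  \simeq \mathsf{M} \circ \oc(\mathsf{M}^{\dagger,\alpha_{n}}) \\
  &\simeq \mathsf{M} \circ \oc(\mathsf{iter}_{n}(\mathsf{M}))
  = \mathsf{iter}_{n+1}(\mathsf{M}),
\end{align*}
using Lemma~\ref{lem:expand}, Lemma~\ref{lem:phi'}, the induction hypothesis, and the compatibility of $\oc(-)$ and of composition with $\simeq$ (Proposition~\ref{prop:oc_f}).

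For the chain of transition functions I would first check $\alpha_{n} \subseteq \alpha_{n+1}$ by a short induction: $\alpha_{0}=\emptyset$, and $\alpha_{n}\subseteq\alpha_{n+1}$ gives $\beta_{n}\subseteq\beta_{n+1}$ and hence $\alpha_{n+1}\subseteq\alpha_{n+2}$ directly from the defining equations for $\alpha$ and $\beta$. Since $\oc_{\alpha}\mathsf{M}$ is literally the restriction of $\oc\mathsf{M}=\oc_{\mathbb{N}}\mathsf{M}$ to inputs whose index lies in $\alpha$, we get $\tran{\oc_{\alpha_{n}}\mathsf{M}} \leq \tran{\oc_{\alpha_{n+1}}\mathsf{M}}$, and because composition and trace in $(\mathbf{pMeas},\emptyset,+)$ are monotone (the category is $\omega\mathbf{Cppo}$-enriched, cf. Proposition~\ref{prop:tr}), the $(-)^{\dagger,\alpha}$ assembly propagates this to $\tran{\mathsf{M}^{\dagger,\alpha_{n}}} \leq \tran{\mathsf{M}^{\dagger,\alpha_{n+1}}}$. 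For the supremum I would verify the combinatorial fact $\bigcup_{n} \alpha_{n} = \mathbb{N}$, so that $\tran{\oc\mathsf{M}} = \bigvee_{n}\tran{\oc_{\alpha_{n}}\mathsf{M}}$; continuity of composition and trace then yields $\tran{\mathsf{M}^{\dagger}} = \tran{\mathsf{M}^{\dagger,\mathbb{N}}} = \bigvee_{n}\tran{\mathsf{M}^{\dagger,\alpha_{n}}}$.

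Finally, the order-theoretic statements follow formally. Because the $\mathsf{M}^{\dagger,\alpha_{n}}$ share state space and initial state and their transition functions form a graph-inclusion chain, the definition of $\leq$ on equivalence classes gives $[\mathsf{M}^{\dagger,\alpha_{0}}] \leq [\mathsf{M}^{\dagger,\alpha_{1}}] \leq \cdots$, and via $\mathsf{M}^{\dagger,\alpha_{n}} \simeq \mathsf{iter}_{n}(\mathsf{M})$ this is exactly the chain $[\mathsf{iter}_{n}(\mathsf{M})]$. Since least upper bounds of such chains are computed by joining transition functions (as in the proof that these classes form a pointed $\omega$cpo, Proposition~\ref{prop:wcpo}), the identity $\tran{\mathsf{M}^{\dagger}} = \bigvee_{n}\tran{\mathsf{M}^{\dagger,\alpha_{n}}}$ exhibits $[\mathsf{M}^{\dagger}]$ as that least upper bound. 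I expect the main obstacle to be the supremum step: verifying $\bigcup_{n}\alpha_{n}=\mathbb{N}$ through the Cantor-pairing bookkeeping, and, more importantly, securing continuity of the entire $(-)^{\dagger}$ construction at the level of transition functions, i.e. that composition and the trace used inside Mealy-machine composition preserve suprema of $\omega$-chains of partial measurable functions.
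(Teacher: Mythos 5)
Your proof is correct and follows essentially the same route as the paper's: state-space and initial-state equalities from the definitions, the transition-function chain and its supremum from the definition of $\oc_{\alpha_{n}}$ together with monotonicity/continuity of composition, and the induction for $\mathsf{M}^{\dagger,\alpha_{n}} \simeq \mathsf{iter}_{n}(\mathsf{M})$ chaining exactly Lemma~\ref{lem:expand} and Lemma~\ref{lem:phi'}. The only difference is that you make explicit the combinatorial facts ($\alpha_{n} \subseteq \alpha_{n+1}$ and $\bigcup_{n}\alpha_{n} = \mathbb{N}$) and the $\omega\mathbf{Cppo}$-enrichment that the paper compresses into ``it follows from the definition of $\oc_{\alpha_{n}}$'' and ``by the definition of the composition and the monoidal product,'' which is a welcome clarification rather than a deviation.
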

\begin{proof}
  It follows from the definition of $\oc_{\alpha_{n}}$, we have
  \begin{equation*}
    \state{\oc\mathsf{M}} = \state{\oc_{\alpha_{n}} \mathsf{M}}, \qquad
    \init{\oc\mathsf{M}} = \init{\oc_{\alpha_{n}} \mathsf{M}}
  \end{equation*}
  for all $n \in \mathbb{N}$, and
  \begin{equation*}
    \tran{\oc_{\alpha_{0}}\mathsf{M}}
    \leq \tran{\oc_{\alpha_{1}} \mathsf{M}}
    \leq \tran{\oc_{\alpha_{2}} \mathsf{M}} \leq \cdots,
    \qquad
    \tran{\oc\mathsf{M}} =
    \bigvee_{n \geq 0} \tran{\oc_{\alpha_{n}}\mathsf{M}}
  \end{equation*}
  Hence, by the definition of the composition and the monoidal
  product, we have
  \begin{equation*}
    \state{\mathsf{M}^{\dagger}} =
    \state{\mathsf{M}^{\dagger,\alpha_{n}}}, \qquad
    \init{\mathsf{M}^{\dagger}} =
    \init{\mathsf{M}^{\dagger,\alpha_{n}}}
  \end{equation*}
  for all $n \in \mathbb{N}$, and
  \begin{equation*}
    \tran{\mathsf{M}^{\dagger,\alpha_{0}}} \leq 
    \tran{\mathsf{M}^{\dagger,\alpha_{1}}} \leq
    \tran{\mathsf{M}^{\dagger,\alpha_{2}}} \leq
    \cdots, \qquad
    \tran{\mathsf{M}^{\dagger}} =
    \bigvee_{n \geq 0} \tran{\mathsf{M}^{\dagger,\alpha_{n}}}.
  \end{equation*}
  It remains to check
  $\mathsf{iter}_{n}(\mathsf{M}) \simeq
  \mathsf{M}^{\dagger,\alpha_{n}}$. We show this by induction on $n$.
  For the base case, we have
  $\mathsf{M}^{\dagger,\emptyset} \simeq
  \mathsf{iter}_{0}(\mathsf{M})$ because these Mealy machines
  $\mathsf{M}^{\dagger,\emptyset}$ and $\mathsf{iter}_{0}(\mathsf{M})$
  are behaviorally equivalent to
  $\mathsf{bot}_{\mathsf{I},\mathsf{X}}$. For the induction step,
  \begin{equation*}
    \mathsf{M} \circ \oc (\mathsf{iter}_{n}(\mathsf{M}))
    \overset{\textnormal{induction hypothesis}}{\simeq}
    \mathsf{M} \circ \oc (\mathsf{M}^{\dagger,\alpha_{n}})
    \overset{\textnormal{Lemma~\ref{lem:phi'}}}{\simeq}
    \mathsf{M} \circ (\oc \mathsf{M})^{\dagger,\alpha_{n}}
    \overset{\textnormal{Lemma~\ref{lem:expand}}}{\simeq}
    \mathsf{M}^{\dagger,\alpha_{n+1}}.
  \end{equation*}
  Because $\mathsf{iter}_{n+1}(\mathsf{M})$ is equal to
  $\mathsf{M} \circ \oc (\mathsf{iter}_{n}(\mathsf{M}))$,
  we obtain
  $\mathsf{iter}_{n+1}(\mathsf{M}) \simeq
  \mathsf{M}^{\dagger,\alpha_{n+1}}$.
\end{proof}

\begin{proposition}\label{prop:fix}
  For any Mealy machine
  $\mathsf{M} \colon \oc \mathsf{X} \multimap \mathsf{X}$,
  \begin{equation*}
    \mathsf{M} \circ \oc (\mathsf{M}^{\dagger})
    \simeq \mathsf{M}^{\dagger}.
  \end{equation*}
\end{proposition}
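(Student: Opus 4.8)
The plan is to recognise $\mathsf{M}^{\dagger}$ as the least fixed point of the endofunction
\[
  \Phi([\mathsf{N}]) = [\mathsf{M} \circ \oc \mathsf{N}]
\]
on the pointed $\omega$cpo of behavioural-equivalence classes of Mealy machines from $\mathsf{I}$ to $\mathsf{X}$ (Proposition~\ref{prop:wcpo}), and then to read off $\Phi([\mathsf{M}^{\dagger}]) = [\mathsf{M}^{\dagger}]$ as the fixed-point equation. First I would check that $\Phi$ is well defined and monotone: it is compatible with behavioural equivalence because $\oc(-)$ is (Proposition~\ref{prop:oc_f}) and composition is, and it is monotone for $\leq$ because both $\oc(-)$ and postcomposition with $\mathsf{M}$ send a transition function whose graph is contained in another's to one whose graph is contained in the image. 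By the definition of $\mathsf{iter}_{n}(\mathsf{M})$ we have $[\mathsf{iter}_{n+1}(\mathsf{M})] = \Phi([\mathsf{iter}_{n}(\mathsf{M})])$ and $[\mathsf{iter}_{0}(\mathsf{M})] = \bot$, so $[\mathsf{iter}_{n}(\mathsf{M})] = \Phi^{n}(\bot)$; Proposition~\ref{prop:iter} then tells us that $[\mathsf{M}^{\dagger}]$ is exactly $\bigvee_{n} \Phi^{n}(\bot)$.

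Granting that $\Phi$ is Scott-continuous, Kleene's fixed-point theorem finishes the argument in one line:
\[
  \Phi([\mathsf{M}^{\dagger}])
  = \Phi\Bigl(\bigvee_{n} \Phi^{n}(\bot)\Bigr)
  = \bigvee_{n} \Phi^{n+1}(\bot)
  = \bigvee_{n} \Phi^{n}(\bot)
  = [\mathsf{M}^{\dagger}],
\]
where the penultimate equality holds because an $\omega$-chain and its tail share the same least upper bound, and $\Phi([\mathsf{M}^{\dagger}]) = [\mathsf{M} \circ \oc \mathsf{M}^{\dagger}]$ by definition. Equivalently, one can stay at the level of the approximants: Lemmas~\ref{lem:expand} and \ref{lem:phi'} give $\mathsf{M}^{\dagger,\alpha_{n+1}} \simeq \mathsf{M} \circ \oc(\mathsf{M}^{\dagger,\alpha_{n}})$ for every $n$, and taking suprema over $n$ on both sides yields the claim, once the supremum is shown to commute with $\oc(-)$ and with composition.

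The main obstacle is therefore establishing the continuity of $\Phi$, i.e.\ that $\oc(-)$ and composition with $\mathsf{M}$ preserve least upper bounds of $\omega$-chains. Here I would work with the concrete representatives supplied by Proposition~\ref{prop:iter}: all the $\mathsf{M}^{\dagger,\alpha_{n}}$ and $\mathsf{M}^{\dagger}$ share one state space and one initial state, with $\tran{\mathsf{M}^{\dagger}} = \bigvee_{n} \tran{\mathsf{M}^{\dagger,\alpha_{n}}}$ as partial measurable functions on the common carrier. Since $\oc(-)$ is defined injection-wise by its compatibility square (each input $(n,x)$ is routed to the $n$th copy and acted on by $\tran{\mathsf{M}}$), it preserves this join, so $\oc \mathsf{M}^{\dagger}$ and the $\oc \mathsf{M}^{\dagger,\alpha_{n}}$ again share a carrier with $\tran{\oc \mathsf{M}^{\dagger}} = \bigvee_{n} \tran{\oc \mathsf{M}^{\dagger,\alpha_{n}}}$. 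Finally, the composition of Mealy machines is a countable join of the interaction-path functions $f_{A,B,C,D}$, each built from the transition functions by (monotone, Scott-continuous) restriction, composition and coproduct in $\mathbf{pMeas}$; hence the outer join $\bigvee_{n}$ commutes with forming these paths, giving $\tran{\mathsf{M} \circ \oc \mathsf{M}^{\dagger}} = \bigvee_{n} \tran{\mathsf{M} \circ \oc \mathsf{M}^{\dagger,\alpha_{n}}}$. Comparing this with $\bigvee_{n} \tran{\mathsf{M}^{\dagger,\alpha_{n+1}}} = \tran{\mathsf{M}^{\dagger}}$ closes the proof; the only genuinely fiddly point is verifying that these joins interact correctly with the index-shuffling isomorphisms hidden inside the composition, which is exactly where the explicit shared-carrier statement of Proposition~\ref{prop:iter} does the real work.
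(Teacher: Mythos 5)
Your proposal is correct and follows essentially the same route as the paper: the paper's proof is precisely your ``approximant-level'' chain $\mathsf{M} \circ \oc(\mathsf{M}^{\dagger}) \simeq \mathsf{M} \circ \bigvee_{n} \oc(\mathsf{M}^{\dagger,\alpha_{n}}) \simeq \bigvee_{n} \mathsf{iter}_{n+1}(\mathsf{M}) \simeq \bigvee_{n} \mathsf{M}^{\dagger,\alpha_{n+1}} \simeq \mathsf{M}^{\dagger}$, resting on Proposition~\ref{prop:iter} together with continuity of $\oc(-)$ and of composition. Your Kleene fixed-point packaging is a cosmetic reformulation of that computation, and your shared-carrier justification of the continuity claims simply fills in what the paper asserts in its opening clause.
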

\begin{proof}
  Because $\oc(-)$ and the composition
  of Mealy machines are continuous, we have
  \begin{equation*}
    \mathsf{M} \circ \oc (\mathsf{M}^{\dagger})
    \simeq
    \mathsf{M} \circ \bigvee_{n \in \mathbb{N}} \oc (\mathsf{M}^{\dagger,\alpha_{n}}) 
    \simeq
    \bigvee_{n \in \mathbb{N}} (\mathsf{M} \circ \oc (\mathsf{M}^{\dagger,\alpha_{n}})) 
    \simeq
    \bigvee_{n \in \mathbb{N}} \mathsf{iter}_{n+1}(\mathsf{M}) 
    \simeq
    \bigvee_{n \in \mathbb{N}} \mathsf{M}^{\dagger,\alpha_{n+1}} 
    \simeq \mathsf{M}^{\dagger}.
  \end{equation*}
\end{proof}
}

\section{How About S-Finite Kernels?}\label{sec:sfinite}
The reader experienced with the semantics of probabilistic programming
languages have probably already wondered whether a GoI model for
$\PCFSS$ could be given out of s-finite kernels instead of measurable
functions, following Staton's work on the semantics of a first-order
probabilistic programming language~\cite{staton2017}.

The answer is indeed positive: the kind of construction we have
presented in Section~\ref{sec:mealy} can in fact be adapted to the
category of measurable spaces \emph{and s-finite kernels}. The latter,
being traced monoidal, has all the necessary structure one needs
\cite{ahs2002}. What one obtains proceeding this way is indeed a
GoI model, but adequate only for the distribution-based operational
semantics.

The interpretation of any program in this alternative GoI can be seen
as structurally identical to the one from Section~\ref{sec:mealy} once
the sample and score operators are interpreted as usual, namely as
those s-finite kernels which actually perform sampling and scoring
\emph{internally}. \longshortv{Below, we first recall the definition
  of s-finite kernel, and then we introduce Mealy machines whose
  transition is described in terms of an s-finite kernel, and we give
  some basic Mealy machines. Finally, we give an adequate GoI model
  for the distribution-based operational semantics. }{More details on
  this alternate model is available in a longer version of this
  paper~\cite{dlh2019}. }

Being adequate for the distribution-based semantics directly (and not
by way of integration as in Theorem~\ref{thm:adq2}) has the pleasant consequence of validating a
number of useful program transformations, and in particular commutation
of sampling and scoring effects, see~\cite{bdlgs2016} for a thorough discussion
about this topic, and about how s-finite kernels are a particularly nice
way of achieving commutativity in presence of scoring.

\longv{
  \subsection{S-finite Kernels}

  Let $k \colon X \leadsto Y$ be a kernel. We say that $k$ is
  \emph{finite} when there is a real number $c > 0$ such that for all
  $x \in X$ and $A \in \Sigma_{Y}$, we have $k(x,A) < c$. An
  \emph{s-finite kernel} is a kernel $k \colon X \leadsto Y$ such that
  there is a countable family $\{k_{n} \colon X \leadsto Y\}$ of finite
  kernels such that $k(x,A) = \sum_{n \in \mathbb{N}} k_{n}(x,A)$ for
  all $x \in X$ and $A \in \Sigma_{Y}$. It is easy to see that s-finite
  kernels are closed under the pointwise addition. We write
  $\sum_{i \in I} k_{i} \colon X \leadsto Y$ for the pointwise addition
  of s-finite kernels $k_{i} \colon X \leadsto Y$. A
  \emph{(sub)probability kernel} is a kernel $k \colon X \leadsto Y$
  such that $k(x,-)$ is a (sub)probability measure on $X$ for all
  $x \in X$. Every (sub)probability kernel is a finite kernel.

  Every measurable function $f \colon X \to Y$ gives rise to a
  probability kernel $\hat{f} \colon X \leadsto Y$ given by
  \begin{equation*}
    \hat{f}(x,A) = [f(x) \in A].
  \end{equation*}
  We denote the probability kernel induced by the identity measurable
  function by $\id_{X} \colon X \leadsto X$. Concretely, this is given
  by $\id_{X}(x,A) = [x \in A]$.

  We recall two constructions of s-finite kernels.
  \begin{itemize}
  \item (Composition) For s-finite kernels $k \colon X \leadsto Y$ and
    $h \colon Y \leadsto Z$, we define an s-finite kernel
    $h \circ k \colon X \leadsto Z$ by
    \begin{equation*}
      (h \circ k)(x,C) =
      \int h(y, C) k(x,\mathrm{d}y).
    \end{equation*}
    The composition of s-finite kernels is associative and satisfies the
    unit laws, namely, we have $k \circ \id_{X} = k$ and $\id_{Y} \circ k = k$.
  \item (Tensor product) For s-finite kernels $k \colon X \leadsto Y$ and
    $h \colon Z \leadsto W$, we define an s-finite kernel
    $k \otimes h \colon X \times Z \leadsto Y \times W$ to be the unique
    s-finite kernel such that for all $(x,z) \in X \times Z$ and for all
    $A \in \Sigma_{Y}$ and $B \in \Sigma_{W}$,
    \begin{equation*}
      (k \otimes h)((x,z),A \times B) = k(x,A)h(z,B).
    \end{equation*}
  \end{itemize}
  The tensor product and the coproduct of s-finite kernels is
  functorial. This means that these constructors are compatible with the
  composition and preserve identities. The following proposition
  summarizes catagorical status of these structures.
  \begin{proposition}
    The category of measurable spaces and s-finite kernels with
    $\otimes$ forms a symmetric monoidal category where the unit object
    is $1$. The object $\emptyset$ is the zero object, and $X + Y$ with
    \begin{equation*}
      \widehat{\mathrm{inl}_{X,Y}} \colon X \leadsto X + Y,
      \qquad
      \widehat{\mathrm{inr}_{X,Y}} \colon Y \leadsto X + Y,
    \end{equation*}
    forms the coproduct of $X$ and $Y$ where
    $\mathrm{inl}_{X,Y} \colon X \to X + Y$ and
    $\mathrm{inr}_{X,Y} \colon X \to X + Y$ are the first and the second
    injections. Furthermore, the monoidal product distributes over the
    coproducts. Namely, the canonical s-finite kernel
    $\widehat{\mathrm{dst}_{X,Y,Z}} \colon X \times Z + Y \times Z
    \leadsto (X + Y) \times Z$ given by
    \begin{equation*}
      \mathrm{dst}_{X,Y,Z}(\hh,(x,z)) = ((\hh,x),z),
      \qquad
      \mathrm{dst}_{X,Y,Z}(\mm,(y,z)) = ((\mm,y),z)
    \end{equation*}
    is a natural isomorphism.
  \end{proposition}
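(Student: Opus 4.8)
The plan is to exhibit all the required structure through the identity-on-objects assignment $\widehat{(-)}$ that sends a measurable function $f \colon X \to Y$ to the probability kernel $\hat{f} \colon X \leadsto Y$ with $\hat{f}(x,A) = [f(x) \in A]$. This assignment is functorial, and it carries products to tensor products in the sense that $\widehat{f \times g} = \hat{f} \otimes \hat{g}$ and $\widehat{\id_{1}} = \id_{1}$; the former is immediate from the characterization of $\otimes$ by its values on measurable rectangles. I would therefore define the associator, the unitors and the braiding of the kernel category as $\widehat{(-)}$ applied to the corresponding measurable isomorphisms of $(\mathbf{Meas}, 1, \times)$. Because $\widehat{(-)}$ preserves composition and sends $\times$ to $\otimes$, the pentagon, triangle and hexagon equations in the kernel category are obtained simply by applying $\widehat{(-)}$ to the already-valid coherence equations in $\mathbf{Meas}$. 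The only genuinely new content is then the \emph{naturality} of these isomorphisms with respect to arbitrary kernels, which I treat last.

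First I would dispatch the easy parts. For the zero object, a kernel $\emptyset \leadsto Y$ is a function on $\emptyset \times \Sigma_{Y}$, hence the empty function, so $\emptyset$ is initial; and a kernel $X \leadsto \emptyset$ must assign $k(x,\emptyset) = 0$ by the measure axiom, so it is the unique zero kernel and $\emptyset$ is terminal. For the coproduct, given $k \colon X \leadsto Z$ and $h \colon Y \leadsto Z$, I define the cotupling $[k,h] \colon X + Y \leadsto Z$ by $[k,h]((\hh,x), A) = k(x,A)$ and $[k,h]((\mm,y), A) = h(y,A)$; this is s-finite since $k$ and $h$ are. As $\widehat{\mathrm{inl}_{X,Y}}(x,-) = \delta_{(\hh,x)}$ and $\widehat{\mathrm{inr}_{X,Y}}(y,-) = \delta_{(\mm,y)}$, the composites $[k,h] \circ \widehat{\mathrm{inl}_{X,Y}}$ and $[k,h] \circ \widehat{\mathrm{inr}_{X,Y}}$ reduce by integration against Dirac measures to $k$ and $h$. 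Uniqueness is the same computation read backwards: any $\ell \colon X + Y \leadsto Z$ with the two triangle identities must satisfy $\ell((\hh,x),-) = (\ell \circ \widehat{\mathrm{inl}_{X,Y}})(x,-) = k(x,-)$, and likewise on the right summand.

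The core of the argument is the naturality of the structural isomorphisms and of the distributivity kernel $\widehat{\mathrm{dst}_{X,Y,Z}}$ against arbitrary kernels. Here I would use that a kernel into a product $Y \times W$ is determined by its values on measurable rectangles $A \times B$, together with the defining property that $k \otimes h$ is the \emph{unique} kernel with $(k \otimes h)((x,z), A \times B) = k(x,A)\,h(z,B)$. To verify, say, naturality of the braiding, that $\hat{\sigma}_{Z,W} \circ (k \otimes h) = (h \otimes k) \circ \hat{\sigma}_{X,Y}$ for $k \colon X \leadsto Z$ and $h \colon Y \leadsto W$, I would evaluate both sides on a rectangle and reduce to the scalar identity $k(x,A)\,h(y,B) = h(y,B)\,k(x,A)$, then extend from the $\pi$-system of rectangles to all of $\Sigma_{W \times Z}$ by a monotone-class argument. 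Invertibility of $\widehat{\mathrm{dst}_{X,Y,Z}}$ is immediate from functoriality, since $\mathrm{dst}_{X,Y,Z}$ is a measurable isomorphism and $\widehat{\mathrm{dst}_{X,Y,Z}} \circ \widehat{\mathrm{dst}_{X,Y,Z}^{-1}} = \widehat{\id} = \id$; its naturality follows by the same rectangle-plus-monotone-class recipe.

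The main obstacle will be making the rectangle computations rigorous, because composition of kernels is defined by integration and one must interchange the order of integration together with the infinite summations witnessing s-finiteness. This is precisely where s-finiteness is essential: the Fubini--Tonelli theorem for s-finite kernels guarantees that the iterated integrals defining $\hat{g} \circ (k \otimes h)$ and $(k' \otimes h') \circ \hat{g}$ may be evaluated in either order and that countable sums commute with integration, so that agreement on rectangles genuinely propagates to agreement as kernels. Once this interchange is licensed, every naturality square collapses to an elementary algebraic identity between products of kernel values, and the proposition follows.
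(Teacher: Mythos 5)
Your handling of the zero object, the coproduct (cotupling via Dirac integration), and the transport of the pentagon/triangle/hexagon equations along $\widehat{(-)}$ is sound, and it is more self-contained than the paper's own proof, which outsources the analytic core to Staton (associativity of composition, functoriality of $\otimes$) and gets distributivity from the universal property of coproducts. The genuine gap is the step that carries all the weight in your naturality arguments: ``agreement on measurable rectangles extends to agreement on all of $\Sigma_{W \times Z}$ by a monotone-class argument.'' This inference is false for s-finite kernels. The $\pi$-$\lambda$ uniqueness theorem requires the measures to be $\sigma$-finite along the generating $\pi$-system, and s-finite measures need not be $\sigma$-finite. Concretely, on $[0,1]^{2}$ take $\mu = \sum_{n \in \mathbb{N}} \lambda^{2}$ (so $\mu(C) = \infty$ when $\lambda^{2}(C) > 0$ and $\mu(C) = 0$ otherwise) and $\nu = \mu + \Delta_{*}\lambda$, where $\Delta_{*}\lambda$ is the image of Lebesgue measure under $x \mapsto (x,x)$. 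Both are s-finite; they agree on every rectangle $A \times B$, since both equal $\infty$ when $\lambda(A)\lambda(B) > 0$, and otherwise both vanish because $\lambda(A \cap B) \leq \min(\lambda(A),\lambda(B)) = 0$; yet on the diagonal $D$ we have $\mu(D) = 0 \neq 1 = \nu(D)$. So an s-finite kernel is \emph{not} determined by its values on rectangles, and the same example shows that the ``unique s-finite kernel'' clause you lean on cannot serve as a characterization: the tensor must be taken as \emph{defined} by the explicit iterated-integral formula.

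The repair is to drop the rectangle-plus-monotone-class detour entirely: evaluate both composites of each naturality square on an \emph{arbitrary} measurable set $C$, using the iterated-integral formulas for $\otimes$ and for composition, and then invoke the Fubini--Tonelli theorem for s-finite kernels to interchange the two integrations. That interchange is exactly the commutativity result the paper cites from Staton; it is the real content here, and it cannot be obtained by transporting coherence from $\mathbf{Meas}$, because it concerns naturality against arbitrary kernels (naturality of the braiding is essentially equivalent to it). You already name Fubini--Tonelli in your last paragraph, but you deploy it only inside the rectangle computation, while the logical weight of your argument still rests on the invalid extension step. For the distributivity morphism specifically, you can also avoid analysis altogether as the paper does: $\widehat{\mathrm{dst}_{X,Y,Z}}$ is invertible by functoriality of $\widehat{(-)}$, and its naturality follows by precomposing with the two coproduct injections and appealing to bifunctoriality of $\otimes$.
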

  \begin{proof}
    For associativity of the composition, see
    \cite[Lemma~3]{staton2017}, and for functoriality of $\otimes$, see
    \cite[Proposition~5]{staton2017}. It is not difficult to check that
    the category of measurable spaces and s-finite kernels associated
    with $\otimes$ and $1$ forms a symmetric monoidal category. For
    s-finite kernels $f \colon X \leadsto Z$ and
    $g \colon Y \leadsto Z$, the cotupling
    $[f,g] \colon X + Y \leadsto Z$ is given by
    \begin{equation*}
      [f,g]((\hh,x),A) = f(x,A),
      \qquad
      [f,g]((\mm,y),A) = g(y,A).
    \end{equation*}
    It follows from universality of coproducts that
    $\widehat{\mathrm{dst}_{X,Y,Z}}$ is a natural isomorphism.
  \end{proof}
  For s-finite kernels $k \colon X \leadsto Y$ and
  $h \colon Z \leadsto W$, we define an s-finite kernel
  $k \oplus h \colon X + Z \leadsto Y + W$ by
  \begin{align*}
    (k \oplus h)((\hh,x),A)
    &= k(x,A_{Y}) \textnormal{ where } A_{Y} = \{y : (\hh,y) \in A\}, \\
    (k \oplus h)((\mm,z),A)
    &= k(x,A_{W}) \textnormal{ where } A_{W} = \{w : (\mm,w) \in A\}.
  \end{align*}
  This is the unique s-finite kernel satisfying
  \begin{equation*}
    (k \oplus h) \circ \mathrm{inl}_{X,Z} = \mathrm{inl}_{Y,W} \circ k,
    \qquad
    (k \oplus h) \circ \mathrm{inr}_{X,Z} = \mathrm{inr}_{Y,W} \circ h.
  \end{equation*}

  \subsection{Probabilistic Mealy Machine}
  \label{sec:ppmm}

  \begin{definition}\label{def:ppmm}
    For $\mathbf{Int}$-objects $\mathsf{X}$ and $\mathsf{Y}$, a
    \emph{probabilistic Mealy machine} $\mathsf{M}$ from $\mathsf{X}$ to
    $\mathsf{Y}$ consists of
    \begin{itemize}
    \item a measurable space $\state{\mathsf{M}}$ called the \emph{state
        space} of $\mathsf{M}$;
    \item an element $\init{\mathsf{M}} \in \state{\mathsf{M}}$
      called the \emph{initial state} of $\mathsf{M}$;
    \item an s-finite kernel
      $\tran{\mathsf{M}} \colon (X^{+} + Y^{-}) \times \state{\mathsf{M}}
      \leadsto (Y^{+} + X^{-}) \times \state{\mathsf{M}}$ called the
      \emph{transition relation}.
    \end{itemize}
    When $\mathsf{M}$ is a probabilistic Mealy machine from
    $\mathsf{X}$ to $\mathsf{Y}$, we write
    $\mathsf{M} \colon \mathsf{X} \rightarrowtriangle \mathsf{Y}$.
  \end{definition}

  We can regard a Mealy machine
  $\mathsf{M} \colon \mathsf{X} \multimap \mathsf{Y}$ as a
  probabilistic Mealy machine from $\mathsf{X}$ to $\mathsf{Y}$ by
  identifying the transition function
  $\tran{\mathsf{M}} \colon (X^{+} + Y^{-}) \times \state{\mathsf{M}}
  \to (Y^{+} + X^{-}) \times \state{\mathsf{M}}$ with the
  correspondnig s-finite kernel
  $\widehat{\tran{\mathsf{M}}} \colon (X^{+} + Y^{-}) \times
  \state{\mathsf{M}} \leadsto (Y^{+} + X^{-}) \times
  \state{\mathsf{M}}$. In the sequel, we confuse Mealy machines (and
  token machines) with corresponding probabilistic Mealy machines.

  Let $\mathsf{X}_{1},\ldots,\mathsf{X}_{n}$ and
  $\mathsf{Y}_{1},\ldots,\mathsf{Y}_{m}$ be $\mathbf{Int}$-object.
  Just like Mealy machines, we depict a probabilistic Mealy machine
  $\mathsf{M}$ from
  $\mathsf{X}_{1} \otimes \cdots \otimes \mathsf{X}_{n}$ to
  $\mathsf{Y}_{1} \otimes \cdots \otimes \mathsf{Y}_{m}$ as a box with
  edges labeled by $\mathsf{X}_{1},\ldots,\mathsf{X}_{n}$ on the left
  hand side and edges labeled by
  $\mathsf{Y}_{1},\ldots,\mathsf{Y}_{m}$ on the right hand side:
  \begin{center}
    \begin{tikzpicture}[rounded corners]
      \node (M) [draw,circle,obj] at (0,0) {$\mathsf{M}$};
      \draw (M) -- ++ (0.4,0.4) -- node[above,obj] {$\mathsf{Y}_{m}$} ++ (1,0);
      \node at (0.5,0.1) {$\vdots$};
      \draw (M) -- ++ (0.4,-0.4) -- node[above,obj] {$\mathsf{Y}_{1}$} ++ (1,0);
      \draw (M) -- ++ (-0.4,0.4) -- node[above,obj] {$\mathsf{X}_{n}$} ++ (-1,0);
      \node at (-0.5,0.1) {$\vdots$};
      \draw (M) -- ++ (-0.4,-0.4) -- node[above,obj] {$\mathsf{X}_{1}$} ++ (-1,0);
    \end{tikzpicture}
    ,
  \end{center}
  and we depict transitions as arrows. For example, when $n=m=1$, we depict
  \begin{equation*}
    \tran{\mathsf{M}}(((\mm,y),s),A) = 0.4 [((\mm,x),s_{1}) \in A],
  \end{equation*}
  for $y \in Y_{1}^{-}$, $x \in X_{1}^{-}$ and
  $s,s_{1},s_{2} \in \state{\mathsf{M}}$ as the following arrow
  \begin{center}
    \begin{tikzpicture}[rounded corners]
      \node (M) [draw,circle,obj] at (0,0) {$\mathsf{M}$};
      \node [above,obj] at (M.north) {$s/s_{1}$};
      \draw (M) -- node [above,obj] {$\mathsf{Y}_{1}$} ++ (2,0);
      \draw (M) -- node [above,obj] {$\mathsf{X}_{1}$} ++ (-2,0);
      \draw[->,thick] (2,-0.15) node[right,obj]{$y$} -- node[below,obj]{$0.4$} ++ (-4,0)
      node [left,obj]{$x$};
    \end{tikzpicture}
  \end{center}
  where the positive real on the arrow indicate probabilities of the
  transition. Below, we may omit states and probabilities of transitions when
  they are not important or are easy to infer.

  \subsection{Behavioral Equivalence}
  \label{sec:pbhe}

  We give an equivalence relation between probabilistic Mealy machines
  so as to identify probabilistic Mealy machines that behaves in the
  same way. Let $\mathsf{X}$ and $\mathsf{Y}$ be
  $\mathbf{Int}$-objects, and let $\mathsf{M}$ and $\mathsf{N}$ be
  probabilistic Mealy machines from $\mathsf{X}$ to $\mathsf{Y}$. We
  write $\mathsf{M} \sim_{\mathsf{X},\mathsf{Y}} \mathsf{N}$ when
  there is a measurable function
  $f \colon \state{\mathsf{M}} \to \state{\mathsf{N}}$ such that
  $f(\init{\mathsf{M}}) = \init{\mathsf{N}}$ and the following diagram
  commutes:
  \begin{equation*}
    \xymatrix@C=20mm{
      (X^{+} + Y^{-}) \times \state{\mathsf{M}}
      \ar@{~>}[r]^-{\id \otimes \widehat{f}}
      \ar@{~>}[d]_{\tran{\mathsf{M}}}
      & (X^{+} + Y^{-}) \times \state{\mathsf{N}}
      \ar@{~>}[d]^{\tran{\mathsf{N}}}
      \\
      (Y^{+} + X^{-}) \times \state{\mathsf{M}}
      \ar@{~>}[r]^-{\id \otimes \widehat{f}}
      &
      (Y^{+} + X^{-}) \times \state{\mathsf{N}}
      \nulldot
    }
  \end{equation*}
  We define an equivalence relation $\simeq_{\mathsf{X},\mathsf{Y}}$
  to be the symmetric transitive closure of
  $\sim_{\mathsf{X},\mathsf{Y}}$. A probabilistic Mealy machine
  $\mathsf{M} \colon \mathsf{X} \rightarrowtriangle \mathsf{Y}$ is
  \emph{behaviorally equivalent} to
  $\mathsf{N} \colon \mathsf{X} \rightarrowtriangle \mathsf{Y}$ when
  we have $\mathsf{M} \simeq_{\mathsf{X},\mathsf{Y}} \mathsf{N}$. When
  we can infer subscripts of $\simeq_{\mathsf{X},\mathsf{Y}}$, we omit
  them. We say that a measurable function
  $f \colon \state{\mathsf{M}} \to \state{\mathsf{N}}$ realizes a
  behavioral equivalence $\mathsf{M} \simeq \mathsf{N}$ (realizes
  $\mathsf{M} \sim \mathsf{N}$) when 
  $\mathsf{M} \sim \mathsf{N}$ is witnessed by $f$.
}
\longv{
  \subsection{Construction of probabilistic Mealy Machines}
  \label{sec:pconst}
  We introduce probabilistic Mealy machines and their constructions that
  are building blocks of our denotational semantics. Most of them are
  adoptation of Mealy machines in Section~\ref{sec:const}, and we just
  give their formal definitions.

  \subsubsection{Composition/Cut}

  For probabilistic Mealy machines
  $\mathsf{M} \colon \mathsf{X} \rightarrowtriangle \mathsf{Y}$ and
  $\mathsf{N} \colon \mathsf{Y} \rightarrowtriangle \mathsf{Z}$, we
  define the state space and the initial states of
  $\mathsf{N} \circ \mathsf{M}$ by
  $\state{\mathsf{N} \circ \mathsf{M}} = \state{\mathsf{M}} \times
  \state{\mathsf{N}}$,
  $\init{\mathsf{N} \circ \mathsf{M}} =
  (\init{\mathsf{M}},\init{\mathsf{N}})$ and we define the transition
  relation $\tran{\mathsf{N} \circ \mathsf{M}}$ by
  \begin{equation*}
    \tran{\mathsf{N} \circ \mathsf{M}} = k_{X^{+} , Z^{-},Z^{+} , X^{-}}
    \vee
    \bigvee_{n \in \mathbb{N}} k_{Y^{+} , Y^{-},Z^{+} , X^{-}} \circ
    k_{Y^{+} , Y^{-},Y^{+} , Y^{-}}^{n} \circ k_{X^{+} , Z^{-},Y^{+} ,
      Y^{-}}
  \end{equation*}
  where $k_{A,B,C,D} \colon (A + B) \times \state{\mathsf{N} \circ \mathsf{M}}
  \leadsto (C + D) \times \state{\mathsf{N} \circ \mathsf{M}}$ are restrictions of
  the following s-finite kernel
  \begin{equation*}
    \xymatrix@R=3mm{
      (X^{+} + Z^{-} + Y^{+} + Y^{-}) \times
      \state{\mathsf{N} \circ \mathsf{M}}
      \ar@{~>}[d]^-{\cong} \\
      (X^{+} + Y^{-}) \times \state{\mathsf{M}} \times \state{\mathsf{N}}
      +
      (Y^{+} + Z^{-}) \times \state{\mathsf{N}} \times \state{\mathsf{M}}
      \ar@{~>}[d]^-{(\tran{\mathsf{M}} \otimes \state{\mathsf{N}})
        \oplus (\tran{\mathsf{N}} \otimes \state{\mathsf{M}})} \\
      (Y^{+} + X^{-}) \times \state{\mathsf{M}} \times \state{\mathsf{N}}
      +
      (Z^{+} + Y^{-}) \times \state{\mathsf{N}} \times \state{\mathsf{M}}
      \ar@{~>}[d]^-{\cong} \\
      (Z^{+} + X^{-} + Y^{+} + Y^{-}) \times
      \state{\mathsf{N} \circ \mathsf{M}}
      \nullcomma
    }
  \end{equation*}
  namely, the s-finite kernels $k_{A,B,C,D}$ satisfies
  \begin{equation*}
    \xymatrix@C=17mm{
      A \times \state{\mathsf{N} \circ \mathsf{M}}
      \ar@{~>}[d]_{k_{A,B}}
      \ar@{~>}[r]^-{\mathrm{inj},\; \mathrm{dst}}
      &
      ((X^{+} + Y^{-}) \times \state{\mathsf{M}} \times \state{\mathsf{N}})
      + ((Y^{+} + Z^{-}) \times \state{\mathsf{N}} \times \state{\mathsf{M}})
      \ar@{~>}[d]^{(\tran{\mathsf{M}} \otimes \id_{\state{\mathsf{N}}})
        \oplus (\tran{\mathsf{N}} \otimes \id_{\state{\mathsf{M}}})} \\
      B \times \state{\mathsf{N} \circ \mathsf{M}}
      \ar@{~>}[r]_-{\mathrm{inj},\; \mathrm{dst}}
      &
      ((Y^{+} + X^{-}) \times \state{\mathsf{M}} \times \state{\mathsf{N}})
      + ((Z^{+} + Y^{-}) \times \state{\mathsf{N}} \times \state{\mathsf{M}})
      \nulldot \\
    }
  \end{equation*}
  Here, the horizontal arrows consists of the injection from $A$ into
  $X^{+} + Y^{-} + Y^{+} + Z^{-}$ followed by distributivity and
  symmetry. For example, when $A = X^{+} + Z^{-}$, the upper horizontal
  arrow is given by
  \begin{equation*}
    \xymatrix{
      (X^{+} + Z^{-}) \times \state{\mathsf{N} \circ \mathsf{M}}
      \ar@{~>}[d]^-{(\mathrm{inl}_{X^{+},Y^{-}} \oplus \mathrm{inr}_{Y^{+},Z^{-}})
        \otimes \mathrm{id}_{\state{\mathsf{N} \circ \mathsf{M}}}} \\
      ((X^{+} + Y^{-}) + (Y^{+} + Z^{-})) \times \state{\mathsf{N} \circ \mathsf{M}}
      \ar@{~>}[d]^-{\mathrm{dst}_{X^{+} + Y^{-},Y^{+} + Z^{-},\state{\mathsf{N} \circ \mathsf{M}}}} \\
      ((X^{+} + Y^{-}) \times \state{\mathsf{N} \circ \mathsf{M}})
      + ((Y^{+} + Z^{-}) \times \state{\mathsf{N} \circ \mathsf{M}})
      \ar@{~>}[d]^-{\mathrm{id}_{(X^{+} + Y^{-}) \times \state{\mathsf{N} \circ \mathsf{M}}}
        \oplus (\mathrm{id}_{Y^{+} + Z^{-}} \otimes \mathrm{sym}_{\state{\mathsf{N}},
          \state{\mathsf{M}}})} \\
      ((X^{+} + Y^{-}) \times \state{\mathsf{M}} \times \state{\mathsf{N}})
      + ((Y^{+} + Z^{-}) \times \state{\mathsf{N}} \times \state{\mathsf{M}})
      \nulldot
    }
  \end{equation*}
  Joins in the definition of the composition of probabilistic Mealy
  machines are the pointwise ordoer. We can check that the composition
  of probabilistic Mealy machines is compatible with behavioural
  equivalence and that $\mathbf{Int}$-objects and the composition of
  probabilistic Mealy machines is a category where the identity on an
  $\mathbf{Int}$-object $\mathsf{X}$ is
  $\mathsf{id}_{\mathsf{X}} \colon \mathsf{X} \rightarrowtriangle \mathsf{X}$
  (regarded as a probabilistic Mealy machine).

  \subsubsection{Monoidal Products}
  \label{sec:pmon}

  We give monoidal products of probabilistic Mealy machines.
  For probabilistic Mealy machines $\mathsf{M} \colon \mathsf{X} \rightarrowtriangle \mathsf{Z}$
  and $\mathsf{N} \colon \mathsf{Y} \rightarrowtriangle \mathsf{W}$, we define a
  probabilistic Mealy machine $\mathsf{M} \otimes \mathsf{N}\colon
  \mathsf{X} \otimes \mathsf{Y} \rightarrowtriangle \mathsf{Z} \otimes \mathsf{W}$ by:
  $\state{\mathsf{M} \otimes \mathsf{N}} = \state{\mathsf{M}} \times
  \state{\mathsf{N}}$,
  $\init{\mathsf{M} \otimes \mathsf{N}} =
  (\init{\mathsf{M}},\init{\mathsf{N}})$ and
  $\tran{\mathsf{M} \otimes \mathsf{N}}$ is given by
  \begin{equation*}
    \xymatrix@R=3mm{
      ((X^{+} + Y^{+}) + (W^{-} + Z^{-})) \times
      \state{\mathsf{M} \otimes \mathsf{N}}
      \ar@{~>}[d]^-{\cong} \\
      (X^{+} + Z^{-}) \times \state{\mathsf{M}} \times \state{\mathsf{N}}
      +
      (Y^{+} + W^{-}) \times \state{\mathsf{N}} \times \state{\mathsf{M}}
      \ar@{~>}[d]^-{(\tran{\mathsf{M}} \otimes \state{\mathsf{N}})
        \oplus (\tran{\mathsf{N}} \otimes \state{\mathsf{M}})} \\
      (Z^{+} + X^{-}) \times \state{\mathsf{M}} \times \state{\mathsf{N}}
      +
      (W^{+} + Y^{-}) \times \state{\mathsf{N}} \times \state{\mathsf{M}}
      \ar@{~>}[d]^-{\cong} \\
      ((Z^{+} + W^{+}) + (Y^{-} + X^{-})) \times
      \state{\mathsf{M} \otimes \mathsf{N}}
      \nulldot
    }
  \end{equation*}
  It is not difficult to check that the monoidal product is compatible
  with behavioural equivalence. We depict
  $\mathsf{M} \otimes \mathsf{N} \colon (\mathsf{X} \otimes
  \mathsf{Y}) \rightarrowtriangle (\mathsf{Z} \otimes \mathsf{W})$ as follows:
  \begin{center}
    \begin{tikzpicture}
      \node (M) [draw,circle,obj] at (0,0) {$\mathsf{M}$};
      \node (N) [draw,circle,obj] at (0,0.6) {$\mathsf{N}$};
      \draw (M) -- node[above,obj]{$\mathsf{Z}$} ++(1.5,0);
      \draw (M) -- node[above,obj]{$\mathsf{X}$} ++(-1.5,0);
      \draw (N) -- node[above,obj]{$\mathsf{W}$} ++(1.5,0);
      \draw (N) -- node[above,obj]{$\mathsf{Y}$} ++(-1.5,0);
    \end{tikzpicture}
  \end{center}
  For $\mathsf{unit}_{\mathsf{X}}, \mathsf{counit}_{\mathsf{X}}$ and
  $\mathsf{sym}_{\mathsf{X},\mathsf{Y}}$, we adopt the same
  diagrammatic presentation.

  \subsubsection{A Modal Operator}
  \label{sec:ocp}

  Let $\mathsf{M} \colon \mathsf{X} \rightarrowtriangle \mathsf{Y}$ be
  a probabilistic Mealy machine. We define a probabilistic Mealy
  machine
  $\oc \mathsf{M} \colon \oc \mathsf{X} \rightarrowtriangle \oc
  \mathsf{Y}$ by: the state space of $\oc \mathsf{M}$ is defined to be
  $|\mathsf{M}|^{\mathbb{N}}$ associated with the least
  $\sigma$-algebra such that for all
  $A_{1},A_{2},\ldots \in \Sigma_{\mathsf{M}}$,
  \begin{equation*}
    A_{1} \times A_{2} \times \cdots \in \Sigma_{\oc\mathsf{M}};
  \end{equation*}
  the initial state $\init{\oc\mathsf{M}}$ is
  $(\init{\mathsf{M}},\init{\mathsf{M}},\ldots)$; the transition
  function $\tran{\oc \mathsf{M}}$ is the unique
  partial measurable function satisfying
  \begin{equation*}
    \xymatrix@C=70pt{
      (X^{+} + Y^{-})
      \times \state{\mathsf{M}} \times \state{\mathsf{M}}^{\mathbb{N}}
      \ar@{~>}[r]^-{(\mathrm{inj}_{n} \oplus \mathrm{inj}_{n}) \otimes \mathrm{ins}_{n}}
      \ar@{~>}[d]_-{\tran{\mathsf{M}} \otimes \state{\mathsf{M}}^{\mathbb{N}}}
      &
      (\mathbb{N} \times X^{+} + \mathbb{N} \times Y^{-})
      \times \state{\mathsf{M}}^{\mathbb{N}}
      \ar@{~>}[d]^-{\tran{\oc\mathsf{M}}} \\
      (Y^{+} + X^{-})
      \times \state{\mathsf{M}} \times \state{\mathsf{M}}^{\mathbb{N}}
      \ar@{~>}[r]^-{(\mathrm{inj}_{n} \oplus \mathrm{inj}_{n}) \otimes \mathrm{ins}_{n}}
      &
      (\mathbb{N} \times Y^{+} + \mathbb{N} \times X^{-})
      \times \state{\mathsf{M}}^{\mathbb{N}}
    }
  \end{equation*}
  for all $n \in \mathbb{N}$. Here,
  $\mathrm{inj}_{n} \colon (-) \to \mathbb{N} \times (-)$ are the
  $n$th injections, and
  $\mathrm{ins}_{n} \colon \state{\mathsf{M}} \times
  \state{\mathsf{M}}^{\mathbb{N}} \to
  \state{\mathsf{M}}^{\mathbb{N}}$ sends
  $(s,\{s_{n}\}_{n \in \mathbb{N}})$ to
  $(s_{0},\ldots,s_{n-1},s,s_{n},s_{n+1},\ldots)$.

}
\longv{
  \subsubsection{Diagrammatic Reasoning on Probabilistic Mealy Machines}
  \label{sec:drp}
  
  Diagrammatic reasoning is valid also for probabilistic Mealy
  machines.

  \begin{proposition}\label{prop:cptp}
    The category $\mathbf{pMealy}$ of $\mathbf{Int}$-object and
    probabilistic Mealy machines (modulo behavioural equivalence) is a
    compact closed category. The dual of an $\mathbf{Int}$-object
    $\mathsf{X}$ is $\mathsf{X}^{\bot}$. The unit and the counit
    arrows are $\mathsf{unit}_{\mathsf{X}}$ and
    $\mathsf{counit}_{\mathsf{X}}$.
  \end{proposition}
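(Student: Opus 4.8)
The plan is to mirror the deterministic argument of Section~\ref{sec:proof_cpt} (which culminates in Proposition~\ref{prop:cpt}) verbatim, replacing partial measurable functions by s-finite kernels throughout, and to realise $\mathbf{pMealy}$ as the result of the $\mathbf{Int}$-construction \cite{jsv} applied to a traced symmetric monoidal category of ``one-way'' probabilistic Mealy machines. Concretely, I would first isolate the wide subcategory $\mathbf{pMealy}_{+}$ consisting of those $\mathbf{Int}$-objects $\mathsf{X}$ with $X^{-} = \emptyset$. Exactly as in the deterministic setting, when $X^{-} = Z^{-} = \emptyset$ the composition of probabilistic Mealy machines collapses to the evident sequential composition of the transition kernels over the product state space, so $(\mathbf{pMealy}_{+}, \mathsf{I}, \otimes)$ is readily seen to be a symmetric monoidal category whose composition and monoidal product are compatible with behavioural equivalence (the realiser of $\mathsf{N} \circ \mathsf{M} \simeq \mathsf{N}' \circ \mathsf{M}'$ being the product of the two realisers, as before).

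The monoidal base here is the category $\mathbf{sKern}$ of measurable spaces and s-finite kernels with $(\emptyset, +)$, which is traced symmetric monoidal \cite{ahs2002,staton2017}. Its trace sends an s-finite kernel $k \colon X + Z \leadsto Y + Z$, decomposed into components $k_{XY}, k_{XZ}, k_{ZY}, k_{ZZ}$, to the path-sum $k_{XY} + \sum_{n \geq 0} k_{ZY} \circ k_{ZZ}^{n} \circ k_{XZ}$, which is again s-finite because s-finite kernels are closed under countable sums. As in the deterministic proof, this trace is equivalently obtained from the pointwise order on hom-sets, which makes $\mathbf{sKern}$ enriched over pointed $\omega$-cpos, the iterator being the least fixed point of $g \mapsto [\id_{Y}, g] \circ f$.

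Next I would lift this trace to $\mathbf{pMealy}_{+}$ in the manner of Section~\ref{sec:category_of_mealy_machines}: given $\mathsf{M} \colon \mathsf{X} \otimes \mathsf{Z} \rightarrowtriangle \mathsf{Y} \otimes \mathsf{Z}$, the traced machine retains $\state{\mathsf{M}}$ and $\init{\mathsf{M}}$ and takes as its transition kernel the $\mathbf{sKern}$-trace of $\tran{\mathsf{M}}$ over the component $(Z^{+} + \emptyset) \times \state{\mathsf{M}}$. Well-definedness modulo behavioural equivalence follows from uniformity of the kernel trace, and the axioms sliding, vanishing~I, vanishing~II, superposing and yanking descend directly from those of $\mathbf{sKern}$. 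The one place needing real work is dinaturality, which rests on the s-finite analogue of Proposition~\ref{prop:distributes}, namely that $W \times (-)$ commutes with the kernel trace; I would prove this exactly as in the deterministic case, combining uniformity with the distributivity isomorphism $\mathrm{dst}$. Having established $(\mathbf{pMealy}_{+}, \mathsf{I}, \otimes, \mathbf{Tr})$ as a traced symmetric monoidal category, I would then invoke the JSV theorem \cite{jsv} to conclude that $\mathbf{Int}(\mathbf{pMealy}_{+})$ is compact closed, and check that the isomorphism $\mathbf{pMealy} \cong \mathbf{Int}(\mathbf{pMealy}_{+})$ goes through unchanged, identifying the dual of $\mathsf{X}$ as $\mathsf{X}^{\bot}$ and the unit and counit as $\mathsf{unit}_{\mathsf{X}}$ and $\mathsf{counit}_{\mathsf{X}}$.

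The main obstacle is the passage through the enrichment used to obtain the kernel trace: unlike partial measurable functions, s-finite kernels require one to verify that the hom-sets are genuinely closed under the relevant countable suprema (so that the least-fixed-point iterator exists and stays s-finite) and that the resulting trace is \emph{uniform}, since it is precisely uniformity that the distributivity lemma and hence dinaturality depend on. Once this measure-theoretic bookkeeping is in place, the remainder is a routine transcription of the deterministic proof, and no new categorical ideas are needed.
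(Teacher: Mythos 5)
Your proposal is correct and is essentially the paper's own proof: the paper handles Proposition~\ref{prop:cptp} precisely by ``replacing the category of partial measurable functions by the category of s-finite kernels in Section~\ref{sec:proof_cpt}'', which is exactly the transcription you carry out --- the wide subcategory $\mathbf{pMealy}_{+}$ of one-way machines, the trace lifted from the traced symmetric monoidal category of s-finite kernels under $(\emptyset,+)$, uniformity plus the distributivity lemma for dinaturality, and then the $\mathbf{Int}$-construction of \cite{jsv} with $\mathbf{pMealy} \cong \mathbf{Int}(\mathbf{pMealy}_{+})$. The suprema-closure obstacle you flag is moreover already settled by your explicit path-sum formula, since s-finite kernels are closed under countable sums and composition, so the iterated sups arising in the trace never leave the s-finite class.
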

  \begin{proposition}\label{prop:beh_eqp}
    If two probabilistic Mealy machines have the same diagrammatic
    presentation modulo some rearrangement of edges and nodes, then
    they are behaviourally equivalent.
  \end{proposition}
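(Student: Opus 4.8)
The plan is to mirror exactly the argument used for the non-probabilistic statement Proposition~\ref{prop:beh_eq}, exploiting the compact closed structure supplied by Proposition~\ref{prop:cptp}. First I would observe that the phrase ``same diagrammatic presentation modulo rearrangement of edges and nodes'' is precisely the notion of \emph{graph isomorphism} between string diagrams for compact closed categories in the sense of \cite{kl1980,selinger2011}. All the diagrammatic building blocks appearing in our presentations---composition, the monoidal product $\otimes$, the axiom and cut links $\mathsf{unit}_{\mathsf{X}}$ and $\mathsf{counit}_{\mathsf{X}}$, and the symmetry $\mathsf{sym}_{\mathsf{X},\mathsf{Y}}$---have already been shown to be compatible with behavioural equivalence, so each diagram denotes a well-defined $\mathbf{pMealy}$-arrow.

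Next I would invoke the coherence theorem for compact closed categories: in the free compact closed category generated by a set of object and morphism variables, two morphisms are equal if and only if their string-diagram representations are isomorphic as labelled graphs. Since Proposition~\ref{prop:cptp} establishes that $\mathbf{pMealy}$ is compact closed with dual $\mathsf{X}^{\bot}$ and unit and counit $\mathsf{unit}_{\mathsf{X}},\mathsf{counit}_{\mathsf{X}}$, any diagram built from the generators determines an interpretation functor from the relevant free compact closed category into $\mathbf{pMealy}$. Two graph-isomorphic diagrams thus denote the same morphism of this free category, and applying the interpretation functor yields equal $\mathbf{pMealy}$-arrows, that is, behaviourally equivalent probabilistic Mealy machines. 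This is the whole content of the proposition, and the argument is formally identical to the deterministic case.

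The main obstacle---and the only place where the probabilistic case genuinely differs from the deterministic one---is verifying Proposition~\ref{prop:cptp} itself, on which the present statement rests. Concretely, this requires checking that $\mathbf{pMealy}$ is isomorphic to $\mathbf{Int}(\mathcal{C})$ for an appropriate traced symmetric monoidal category $\mathcal{C}$ of ``one-way'' probabilistic Mealy machines, exactly as in the passage from $\mathbf{Mealy}_{+}$ to $\mathbf{Mealy}$ in Section~\ref{sec:proof_cpt}. The delicate point is that the trace operator on the underlying category of measurable spaces and s-finite kernels must be well-behaved---in particular uniform and compatible with the countable join used in the composition of transition relations. Once this traced structure is in hand, the $\mathbf{Int}$-construction automatically produces a compact closed category and the coherence-theoretic argument above applies verbatim; I expect the remaining checks to be routine adaptations of Section~\ref{sec:proof_cpt}, replacing partial measurable functions by s-finite kernels throughout.
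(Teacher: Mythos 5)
Your proposal coincides with the paper's own argument: the paper derives this proposition from Proposition~\ref{prop:cptp} via the coherence theorem for compact closed categories \cite{kl1980,selinger2011}, exactly as in the deterministic case (Proposition~\ref{prop:beh_eq}), and it discharges Proposition~\ref{prop:cptp} by replacing partial measurable functions with s-finite kernels throughout Section~\ref{sec:proof_cpt}, i.e.\ by exhibiting $\mathbf{pMealy}$ as the $\mathbf{Int}$-construction applied to a traced symmetric monoidal category of one-way probabilistic Mealy machines. Your identification of the uniform, join-compatible trace on the s-finite-kernel side as the only genuinely new verification is precisely where the paper places the burden as well.
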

  We can check Proposition~\ref{prop:cptp} by replacing the category
  of partial measurable functions by the category of s-finite kernels
  in Section~\ref{sec:proof_cpt}.
}
\longv{

  \subsubsection{A State Monad}

  We define an $\mathbf{Int}$-object $\mathsf{J}$ by $(1,1)$ and
  define an $\mathbf{Int}$-object $\mathsf{J}_{0}$ by $(1,\emptyset)$.
  Then $\mathsf{J} \otimes (-)$ is a state monad (on
  $\mathbf{pMealy}$), whose unit and multiplication are given by:
  \begin{equation*}
    \mathsf{j} \otimes \mathsf{X} \colon
    \mathsf{X} \rightarrowtriangle \mathsf{J} \otimes \mathsf{X},
    \qquad
    \mathsf{n} \otimes \mathsf{X} \colon
    \mathsf{J} \otimes \mathsf{J} \otimes \mathsf{X} \rightarrowtriangle \mathsf{J} \otimes \mathsf{X}
  \end{equation*}
  where $\mathsf{j} = \mathsf{unit}_{\mathsf{J}_{0}}$ and
  $\mathsf{n} = \mathsf{J}_{0} \otimes \mathsf{counit_{\mathsf{J}_{0}}}
  \otimes \mathsf{J}_{0}^{\bot}$.

  \subsubsection{Scoring}

  We construct a probabilistic Mealy machine
  $\mathsf{Sc} \colon \mathsf{R} \rightarrowtriangle \mathsf{J}$ by:
  \begin{equation*}
    \state{\mathsf{Sc}} = 1,
    \quad
    \init{\mathsf{Sc}} = \ast
  \end{equation*}
  and
  \begin{equation*}
    \tran{\mathsf{Sc}}(((\hh,u),\ast),A)
    = 
    \begin{cases}
      |a|\,[((\hh,\ast),\ast) \in A], & \textnormal{if } u = a \cons v, \\
      0, & \textnormal{otherwise},
    \end{cases}
    \qquad
    \tran{\mathsf{Sc}}(((\mm,\ast),\ast),A)
    = [((\mm, \varepsilon),\ast) \in A].
  \end{equation*}
  The probabilistic Mealy machine simulates scoring
  $\mathtt{score}(\mathtt{r}_{a})$ as follows:
  \begin{center}
    \begin{tikzpicture}
      \node (r) [draw,circle,obj] at (0,0) {$\mathsf{r}_{a}$};
      \node (sc) [draw,circle,obj] at (2.5,0) {$\mathsf{Sc}$};
      \draw (r) -- node[above=4,obj] {$\mathsf{R}$} ++(0.8,0) -- (sc);
      \draw (sc) -- node[above=4,obj] {$\mathsf{J}$\qquad$|a|$} ++ (1.5,0);
      \node (au) [obj] at (1.25,-0.15) {$\varepsilon$};
      \node (bau) [obj] at (1.25,0.15) {$a \mathbin{::} \varepsilon$};
      \draw[thick,<-] (au) -- (4,-0.15) node[right,obj] {$\ast$};
      \draw[thick,->] (bau) -- (4,0.15) node[right,obj] {$\ast$};
      \draw[thick] (au) -- ++(-1,0);
      \draw[thick,<-] (bau) -- ++(-1,0);
      \draw[thick] (0.25,0.15) arc(90:270:0.15);    
    \end{tikzpicture}
    .
  \end{center}

  \subsubsection{Sampling}
  \label{sec:samplingp}

  We define a Mealy machine $\mathsf{Sa} \colon \mathsf{I}
  \rightarrowtriangle \mathsf{J} \otimes \oc \mathsf{R}$ by: the state space
  $\state{\mathsf{Sa}}$ is defined to be $\{\ast\} \cup \mathbb{R}_{[0,1]}$, and
  the initial state $\init{\mathsf{Sa}}$ is $\ast$, and the
  transition function
  \begin{equation*}
    \tran{\mathsf{Sa}} \colon
    (\emptyset + (\mathbb{N} \times \mathbb{S} + 1))
    \times \state{\mathsf{Sa}}
    \leadsto \\
    ((1 + \mathbb{N} \times \mathbb{S}) + \emptyset)
    \times \state{\mathsf{Sa}}
  \end{equation*}
  is given by
  \begin{align*}
    \tran{\mathsf{Sa}}(((\mm,(\hh,(n,u))),s),A)
    &= [((\hh,(\mm,(n,s\mathbin{::}u))),s) \in A] \, [s \in \mathbb{R}], \\
    \tran{\mathsf{Sa}}(((\mm,(\mm,\ast)),s),A)
    &=
      [s = \ast]\, \mu_{\mathrm{Borel}}(\{a \in \mathbb{R}_{[0,1]} : ((\hh,(\hh,\ast)),a) \in A\}).
  \end{align*}
  The probabilistic Mealy machine behaves as follows:
  \begin{itemize}
  \item In the initial state $\ast$, given $\ast$ from the
    $\mathsf{J}$-edge, $\mathsf{Sa}$ draws a real number from the
    uniform distribution and stores the real number:
    \begin{center}
      \begin{tikzpicture}[rounded corners]
        \node (sa) [draw,circle,obj] at (0,0) {$\mathsf{Sa}$};
        \draw (sa) -- ++ (0.4,0.4) -- node[above,obj] {$\oc \mathsf{R}$}++ (1.5,0);
        \draw (sa) -- ++ (0.4,-0.4) -- node[above=4,obj] {$\mathsf{J}$}++ (1.5,0);
        \node[obj] at (0,0.4) {$\ast/a$};
        \draw[thick,->] (1.9,-0.55) node[right,obj] {$\ast$}
        -- ++ (-1.55,0) -- ++(-0.35,0.35) -- ++ (0.2,0.2)
        -- ++ (0.25,-0.25) -- ++ (1.45,0) node[right,obj] {$\ast$};
      \end{tikzpicture}
      .
    \end{center}
    For example, the probability of the state being a real number in $[0,0.3]$
    after this transition is $0.3$.
  \item After this transition, $\mathsf{Sa}$ returns $(n,a \cons u)$
    to each ``query'' $(n,u)$:
    \begin{center}
      \begin{tikzpicture}[rounded corners]
        \node (sa) [draw,circle,obj] at (0,0) {$\mathsf{Sa}$};
          \draw (sa) -- ++ (0.4,0.4) -- node[above=4,obj] {$\oc \mathsf{R}$}++ (1.5,0);
          \draw (sa) -- ++ (0.4,-0.4) -- node[above,obj] {$\mathsf{J}$}++ (1.5,0);
          \node[obj] at (-0.1,0.4) {$a/a$};
          \draw[thick,->] (1.9,0.55) node[right,obj] {$(n,u)$}
          -- ++ (-1.55,0) -- ++(-0.35,-0.35) -- ++ (0.2,-0.2)
          -- ++ (0.25,0.25) -- ++ (1.45,0) node[right,obj] {$(n,b\mathbin{::}u)$};
      \end{tikzpicture}
      .
    \end{center}
  \end{itemize}
}

\longv{
  \section{Probabilistic Mealy Machine Semantics for $\PCFSS$}
  \label{sec:pGoI2}

  We interpret a type $\mathtt{A}$
  as the $\mathbf{Int}$-object $\bsem{\mathtt{A}}$ given by
  \begin{equation*}
    \bsem{\ttunit} = \mathsf{I},
    \quad
    \bsem{\ttreal} = \mathsf{R},
    \quad
    \bsem{\mathtt{A}\to\mathtt{B}}
    = \mathsf{J}
    \otimes \oc \bsem{\mathtt{B}} \otimes \oc \bsem{\mathtt{A}}^{\bot}.
  \end{equation*}
  We define interpretation of contexts by
  \begin{equation*}
    \bsem{\mathtt{x}:\mathtt{A},\ldots,
      \mathtt{y}:\mathtt{B}} = \bsem{\mathtt{A}} \otimes
    \cdots \otimes \bsem{\mathtt{B}}.
  \end{equation*}
  When $\mathtt{\Delta}$ is the empty sequence, we define
  $\bsem{\mathtt{\Delta}}$ to be $\mathsf{I}$.

  We interpret terms $\mathtt{\Delta} \vdash \mathtt{M} : \mathtt{A}$
  and values $\mathtt{\Delta} \vdash \mathtt{V} : \mathtt{A}$ by
  \begin{equation*}
    \bsem{\mathtt{\Delta} \vdash \mathtt{M} : \mathtt{A}}
    \colon \oc \bsem{\mathtt{\Delta}} \rightarrowtriangle
    \mathsf{S} \otimes \oc \bsem{\mathtt{A}},
    \quad
    \Bsem{\mathtt{\Delta} \vdash \mathtt{V} : \mathtt{A}}
    \colon \oc \bsem{\mathtt{\Delta}} \rightarrowtriangle \bsem{\mathtt{A}}
  \end{equation*}
  inductively defined by diagrams in Figure~\ref{fig:intp} where
  Mealy machines are regarded as probabilistic Mealy machines
  in the obvious manner.
  \begin{figure*}[t]
    \begin{center}
      \fbox{\begin{minipage}{.98\textwidth}
          \centering
          \begin{tikzpicture}[rounded corners]
            \node [above] at (0.8,0.4)
            {$\scriptstyle\bsem{\mathtt{\Delta} \vdash \mathtt{V} : \mathtt{A}}$};
            \node (dg) [draw,circle,obj] at (0,0) {$\mathsf{dg}$};
            \node (v) [draw,circle,obj] at (1.3,0) {$\oc\Bsem{\mathtt{V}}$};
            \node (h) [draw,circle,obj] at (1.3,-0.65) {$\mathsf{j}$};
            \draw (dg) -- node[above,obj]{$\oc\bsem{\mathtt{\Delta}}$}++(-1,0);
            \draw (dg) -- node[above,obj]{$\oc\oc\bsem{\mathtt{\Delta}}$} (v);
            \draw (v) -- node[above,obj]{$\oc\bsem{\mathtt{A}}$}++(1,0);
            \draw (h) -- node[above,obj]{$\mathsf{J}$}++(1,0);
          \end{tikzpicture}
          \hspace{3pt}
          \begin{tikzpicture}
            \node (bot) [draw,circle,obj] at (1,-0.15) {$\mathsf{w}$};
            \node (d)[draw,circle,obj] at (1,0.4) {$\mathsf{d}$};
            \draw (bot) -- node[above,obj]{$\oc\bsem{\mathtt{\Delta}}$} ++(-1,0);
            \draw (d) -- node[above,obj]{$\oc\bsem{\mathtt{A}}$} ++(-1,0);
            \draw (d) -- node[above,obj]{$\bsem{\mathtt{A}}$} ++(1,0);
            \node[above=0.1cm] at (d.north) {$\scriptstyle\Bsem{\mathtt{\Delta}, \mathtt{x}
                : \mathtt{A} \vdash \mathtt{x} : \mathtt{A}}$};
          \end{tikzpicture}
          \hspace{3pt}
          \begin{tikzpicture}[rounded corners]
            \node (c) [draw,circle,obj] at (1,0) {$\mathsf{c}$};
            \node (v) [draw,circle,obj] at (2,-0.4) {$\Bsem{\mathtt{V}}$};
            \node (dg) [draw,circle,obj] at (1.8,0.4) {$\mathsf{dg}$};
            \node (w) [draw,circle,obj] at (3,0.4) {$\oc\Bsem{\mathtt{W}}$};
            \draw (dg) -- node[above,obj]{$\oc\oc\bsem{\mathtt{\Delta}}$}(w);
            \draw (c) to[out=45,in=180] node[above,obj]{$\oc\bsem{\mathtt{\Delta}}$}(dg);
            \draw (c) to[out=-45,in=180] node[below,obj]{$\oc\bsem{\mathtt{\Delta}}$}(v);
            \draw (c) -- node[above,obj]{$\oc\bsem{\mathtt{\Delta}}$}++(-0.8,0);
            \draw (v) -- ++(0.4,0.4) -- ++(1,0);
            \draw (3.4,0) arc(-90:90:0.2);
            \draw (3.4,0.4) -- node[above right,obj]{$\oc\bsem{\mathtt{A}}$}(w);
            \draw (v) -- ++(0.7,0) -- node[above,obj]{$\oc\bsem{\mathtt{B}}$}++(0.8,0);
            \draw (v) -- ++(0.4,-0.4) -- ++ (0.3,0) -- node[above,obj]{$\mathsf{J}$}++(0.8,0);
            \node at (2.4,1)
            {$\scriptstyle\bsem{\mathtt{\Delta}\vdash \mathtt{V}\,\mathtt{W} :\mathtt{B}}$};
          \end{tikzpicture}
          \hspace{3pt}
          \begin{tikzpicture}[rounded corners]
            \node (m) [draw,circle,obj] at (1,0) {$\bsem{\mathtt{M}}$};
            \draw (m) -- ++ (-0.4,-0.4) -- node[above,obj] {$\oc\bsem{\mathtt{\Delta}}$} ++ (-0.6,0);
            \draw (m) -- ++ (-0.4,0.4) -- ++ (-0.3,0);
            \draw (0.3,0.4) arc(270:90:0.2);
            \draw (0.3,0.8) -- node[below=-0.05cm,obj]{$\oc\bsem{\mathtt{A}}^{\bot}$}++(1.7,0);
            \draw (m) -- ++ (0.4,-0.4) -- node[above,obj] {$\mathsf{J}$} ++ (0.6,0);
            \draw (m) -- ++ (0.4,0.4) -- node[below,obj] {$\oc\bsem{\mathtt{B}}$} ++ (0.6,0);
            \node at (1,1.2)
            {$\scriptstyle\Bsem{\mathtt{\Delta} \vdash \lambda \mathtt{x}^{\mathtt{A}}.\,
                \mathtt{M}: \mathtt{A}\to\mathtt{B}}$};
          \end{tikzpicture}
          \hspace{3pt}
          \begin{tikzpicture}[rounded corners]
            \node (c) [draw,circle,obj] at (1,0) {$\mathsf{c}$};
            \node (m) [draw,circle,obj] at (2,0.4) {$\bsem{\mathtt{M}}$};
            \node (n) [draw,circle,obj] at (3,0.4) {$\bsem{\mathtt{N}}$};
            \node (mul) [draw,circle,obj] at (4,-0.4) {$\mathsf{n}$};
            \draw (c) -- node[above,obj]{$\oc\bsem{\mathtt{\Delta}}$} ++(-1,0);
            \draw (c) -- ++(0.4,0.4) node[above,obj]{$\oc\bsem{\mathtt{\Delta}}$} -- (m);
            \draw (m) to[out=30,in=150] node[above,obj]{$\oc\bsem{\mathtt{B}}$}(n);
            \draw (c) -- ++(0.4,-0.4) -- node[above,obj]{$\oc\bsem{\mathtt{\Delta}}$} ++ (1,0) -- (n);
            \draw (m) -- ++(1,-1) -- node[above,obj]{$\mathsf{J}$} ++(0.5,0) -- (mul);
            \draw (n) -- node[above,obj]{$\mathsf{J}$} (mul);
            \draw (n) -- ++ (0.4,0.4) -- node[below,obj]{$\oc\bsem{\mathtt{A}}$}++ (1.6,0);
            \draw (mul) -- node[above,obj]{$\mathsf{J}$} ++(1,0);
            \node [above=0.2cm] at (n.north)
            {$\scriptstyle\bsem{\mathtt{\Delta} \vdash \letin{\mathtt{x}}{\mathtt{M}}{\mathtt{N}} :
                \mathtt{A}}$};
          \end{tikzpicture}
          \\[5pt]
          \begin{tikzpicture}
            \node (w) [draw,circle,obj] at (1,0) {$\mathsf{w}$};
            \node at (0,-0.8) {};
            \draw (w) -- node[above,obj]{$\oc\bsem{\mathtt{\Delta}}$} ++(-1,0);
            \node [above] at (0.7,0.3)
            {$\scriptstyle\Bsem{\mathtt{\Delta} \vdash \mathtt{skip}:\ttunit}$};
          \end{tikzpicture}
          \hspace{2pt}
          \begin{tikzpicture}
            \begin{scope}[yshift=-1.5cm,xshift=-0.4cm]
              \node (w) [draw,circle,obj] at (1,0) {$\mathsf{w}$};
              \node (a) [draw,circle,obj] at (1.65,0) {$\mathsf{r}_{a}$};
              \draw (w) -- node[above,obj]{$\oc\bsem{\mathtt{\Delta}}$} ++(-1,0);
              \draw (a) -- node[above,obj]{$\mathsf{R}$}++(1,0);
              \node [above] at (1.375,0.3)
              {$\scriptstyle\Bsem{\mathtt{\Delta} \vdash \mathtt{r}_{a}:\ttreal}$};
            \end{scope}
            \node (v)[draw,circle,obj] at (1,0) {$\oc\Bsem{\mathtt{V}}$};
            \node (f)[draw,circle,obj] at (2,0) {$\oc\mathsf{fn}_{f}$};
            \node (dg)[draw,circle,obj] at (-0.5,0) {$\mathsf{dg}$};
            \node (e)[draw,circle,obj] at (2,-0.6) {$\mathsf{j}$};
            \draw (e)--node[above,obj]{$\mathsf{J}$}++(0.8,0);
            
            \draw (v) -- node[above,obj]{$\oc\oc\bsem{\mathtt{\Delta}}$} (dg);
            \draw (dg) -- node[above,obj]{$\oc\bsem{\mathtt{\Delta}}$} ++(-0.8,0);
            \draw (v) -- node[above,obj]{$\oc\mathsf{R}$} (f);
            \draw (f) -- node[above,obj]{$\oc\mathsf{R}$} ++(0.8,0);
            \node [above] at (v.north)
            {$\scriptstyle\bsem{\mathtt{\Delta} \vdash \mathtt{F}(\mathtt{V}):\ttreal}$};
          \end{tikzpicture}
          \hspace{2pt}
          \begin{tikzpicture}[rounded corners]
            \begin{scope}[yshift=1.25cm,xshift=-0.75cm]
              \node (v) [draw,circle,obj] at (0,0) {$\Bsem{\mathtt{V}}$};
              \node (sc) [draw,circle,obj] at (1,0) {$\mathsf{Sc}$};
              \draw (v) --node[above,obj]{$\oc\bsem{\mathtt{\Delta}}$} ++(-1,0);
              \draw (v) --node[above,obj]{$\mathsf{R}$} (sc);
              \draw (sc) --node[above,obj]{$\mathsf{J}$} ++(1,0);
              \node [above] at (0.5,0.4)
              {$\scriptstyle\bsem{\mathtt{\Delta} \vdash \mathtt{score}(\mathtt{V}) :\ttunit}$};
            \end{scope}
            \node (sa) [draw,circle,obj] at (0,0) {$\mathsf{Sa}$};
            \node (w) [draw,circle,obj] at (-0.6,0) {$\mathsf{w}$};
            \draw (w) -- node[above,obj]{$\oc\bsem{\mathtt{\Delta}}$}++(-1,0);
            \draw (sa) -- ++(0.4,0.4) -- node[below,obj]{$\oc\mathsf{R}$}++(0.8,0);
            \draw (sa) -- ++(0.4,-0.4) -- node[above,obj]{$\mathsf{J}$}++(0.8,0);
            \node [above] at (0,0.4) {$\scriptstyle\bsem{\mathtt{\Delta} \vdash \mathtt{sample} : \ttreal}$};
          \end{tikzpicture}
          \hspace{2pt}
          \begin{tikzpicture}[rounded corners]
            \node (cd) [draw,circle,obj] at (0,0) {$\mathsf{c}$};
            \node (dd) [draw,circle,obj] at (1,0.4) {$\mathsf{dg}$};
            \node (dab) [draw,circle,obj] at (1,1.2) {$\mathsf{dg}$};
            \node (m) [draw,circle,obj] at (2,0.8) {$\oc\mathsf{M}$};
            \node (cab) [draw,circle,obj] at (3,0.8) {$\mathsf{c}$};
            \node (n) [draw,circle,obj] at (4,0) {$\mathsf{M}$};
            \draw (cd) -- ++ (1,-0.4) -- ++ (1,0) -- node[above,obj]{$\oc\bsem{\mathtt{\Delta}}$}
            ++ (1,0) -- (n);
            \draw (cd) -- node[above,obj]{$\oc\bsem{\mathtt{\Delta}}$} ++(-1,0);
            \draw (cd) to[out=35,in=180] node[above,obj]{$\oc\bsem{\mathtt{\Delta}}$} (dd);
            \draw (dd) -- ++(0.5,0) node[below, obj]{$\oc\oc\bsem{\mathtt{\Delta}}$} -- (m);
            \draw (dab) -- ++(0.5,0) node[above,obj]{$\oc\oc\bsem{\mathtt{C}}$} -- (m);
            \draw (m) -- node[below,obj]{$\oc\bsem{\mathtt{C}}$} (cab);
            \draw (cab) -- ++(0.35,-0.4) -- node[below,obj]{$\oc\bsem{\mathtt{C}}$} ++ (0.3,0) -- (n);
            \draw (cab) -- ++(0.4,0.4) -- node[below,obj]{$\oc\bsem{\mathtt{C}}$} ++ (0.3,0);
            \draw (3.7,1.2) arc(-90:90:0.2);
            \draw (3.7,1.6) -- ++ (-3.3,0);
            \draw (0.4,1.6) arc(90:270:0.2);
            \draw (0.4,1.2) -- node[below,obj]{$\oc\bsem{\mathtt{C}}$}(dab);
            \draw (n) -- node[above,obj]{$\bsem{\mathtt{C}}$}++(1.2,0);
            \node [above] at (2,1.6) {$\scriptstyle\Bsem{\mathtt{\Delta} \vdash
                \mathtt{fix}_{\mathtt{A},\mathtt{B}}(\mathtt{f},\mathtt{x},\mathtt{M})
                : \mathtt{A} \to \mathtt{B}}$};
            \node[right] at(4,1.25)
            {$\begin{array}{l}
                \scriptstyle
                \mathtt{C}=\mathtt{A}\to\mathtt{B}, \\
                \scriptstyle
                \mathsf{M}=\bsem{\lambda \mathtt{x}^{\mathtt{A}}.\,\mathtt{M}} \\
                \scriptstyle
                \; \colon \oc\bsem{\mathtt{\Delta}} \otimes
                \oc\bsem{\mathtt{C}} \rightarrowtriangle \bsem{\mathtt{C}}
              \end{array}$};
          \end{tikzpicture}
          \\[10pt]
          \begin{tikzpicture}[rounded corners]
            \node (M) [draw,circle,obj] at (0,0) {$\bsem{\mathtt{M}}$};
            \node (N) [draw,circle,obj] at (0,1) {$\bsem{\mathtt{N}}$};
            \node (V) [draw,circle,obj] at (0,-1) {$\Bsem{\mathtt{V}}$};
            \node (cd) [draw,circle,obj] at (2,0) {$\mathsf{cd}$};
            \node (c) [draw,circle,obj] at (-1.5,0) {$\mathsf{c}$};

            \draw (c) -- node[above,obj]{$\oc\bsem{\mathtt{\Delta}}$} ++ (-1.5,0);
            \draw (c) -- node[above,obj]{$\oc\bsem{\mathtt{\Delta}}$} (M);
            \draw (c) -- ++ (0.5,1) -- node[above,obj]{$\oc\bsem{\mathtt{\Delta}}$} (N);
            \draw (c) -- ++ (0.5,-1) -- node[above,obj]{$\oc\bsem{\mathtt{\Delta}}$} (V);

            \draw (cd) -- node[above,obj]{$\mathsf{J} \otimes \oc\bsem{\mathtt{A}}$} ++ (1.5,0);
            \draw (cd) -- node[above,obj]{$\mathsf{J} \otimes \oc\bsem{\mathtt{A}}$} (M);
            \draw (cd) -- ++ (-0.5,1) -- node[above,obj]{$\mathsf{J} \otimes \oc\bsem{\mathtt{A}}$} (N);
            \draw (cd) -- ++ (-0.5,-1) -- node[above,obj]{$\mathsf{R}$} (V);
            
            \node[above] at (N.north) {$\scriptstyle
              \bsem{\mathtt{\Delta}\vdash\ifterm{\mathtt{V}}{\mathtt{M}}{\mathtt{N}}:\mathtt{A}}$};
          \end{tikzpicture}
        \end{minipage}}
    \end{center}
    \caption{Interpretation of Terms and Values}
    \label{fig:intp}
  \end{figure*}
}

\longv{
  \subsection{Soundness and Adequacy}

  \subsubsection{Observation}

  Let $\mathsf{M} \colon \mathsf{I} \rightarrowtriangle \mathsf{J} \otimes \oc \mathsf{R}$ be
  a probabilistic Mealy machine. We define s-finite kernels $t_{0} \colon 1
  \leadsto \state{\mathsf{M}}$ and $t_{1} \colon \state{\mathsf{M}} \leadsto
  \mathbb{R}$ by
  \begin{align*}
    t_{0}^{\mathsf{M}}(\ast,A) &= \tran{\mathsf{M}}(((\mm,(\mm,\ast)),\init{\mathsf{M}}),
                    \{((\hh,(\hh,\ast)),s) : s \in A\}), \\
    t_{1}^{\mathsf{M}}(s,A) &= \tran{\mathsf{M}}
                 (((\mm,(\hh,(0,\varepsilon))),s),
                 \{(\hh,(\mm,(0,a \cons \varepsilon))) : a \in A\}).
  \end{align*}
  Then we define a measure $\mathsf{obs}(\mathsf{M})$ on $\mathbb{R}$ to be
  $t_{1}^{\mathsf{M}} \circ t_{0}^{\mathsf{M}}(\ast,-)$.
  Intuitively, $\mathsf{obs}(\mathsf{M})$ is a measure that describes
  distribution of real numbers obtained by the following process:
  \begin{itemize}
  \item We first input $\ast$ to the $\mathsf{J}$-wire of $\mathsf{M}$.
  \item If $\mathsf{M}$ outputs $\ast$ to the $\mathsf{J}$-wire, then we input
    $(0,\varepsilon)$ to the $\oc \mathsf{R}$-wire of $\mathsf{M}$.
  \item We only observe outputs of the form $(0,a \cons \varepsilon)$ for some
    $a \in \mathbb{R}$.
  \end{itemize}
  For example, $\mathsf{obs}(\mathsf{sa})$ is the uniform distribution over $\mathbb{R}_{[0,1]}$.

  \begin{theorem}[Soundness and Adequacy]\label{thm:padq}
    For any closed term $\vdash \mathtt{M} : \ttreal$, if
    $\mathtt{M} \Rightarrow_{\infty} \mu$, then
    $\mathsf{obs}(\sem{\mathtt{M}}) = \mu$.
  \end{theorem}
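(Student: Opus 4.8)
The plan is to reduce the statement to a comparison with the measurable-function model, for which adequacy has already been established. Since $\Rightarrow_{\infty}$ is a function from closed terms of type $\ttreal$ to measures on $\mathbb{R}$, and Theorem~\ref{thm:adq2} gives $\mathtt{M} \Rightarrow_{\infty} \mathfrak{O}(\mathtt{M})$, the hypothesis $\mathtt{M} \Rightarrow_{\infty} \mu$ forces $\mu = \mathfrak{O}(\mathtt{M})$. Hence it suffices to prove
\[
  \mathsf{obs}(\bsem{\mathtt{M}}) = \mathfrak{O}(\mathtt{M}),
\]
and since $\mathfrak{O}(\mathtt{M})$ is by definition the trace integral of the deterministic observation, the goal becomes the \emph{trace-integration identity}
\[
  \mathsf{obs}(\bsem{\mathtt{M}})(A) = \sum_{n \in \mathbb{N}} \int_{\mathbb{R}_{[0,1]}^{n}} \mathfrak{o}_{0}(\mathtt{M})(u)\,[\mathfrak{o}_{1}(\mathtt{M})(u) \in A]\;\mathrm{d}u
\]
for every $A \in \Sigma_{\mathbb{R}}$.

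First I would exploit that $\bsem{\mathtt{M}}$ and $\sem{\mathtt{M}}$ are built from the very same proof structure, i.e.\ the same sequence of applications of the constructors in Figure~\ref{fig:intp} and Figure~\ref{fig:int}, differing only in that the state monad $\mathsf{J}$ replaces $\mathsf{S}$ and the probabilistic nodes $\mathsf{Sa},\mathsf{Sc}$ replace $\mathsf{sa},\mathsf{sc}$. Every basic machine has the same state space in both models (the scoring nodes are one-state, and $\mathsf{Sa}$ shares with $\mathsf{sa}$ the state space $\{\ast\} \cup \mathbb{R}_{[0,1]}$), and the constructors $\circ$, $\otimes$ and $\oc(-)$ build state spaces and initial states identically, so $\state{\bsem{\mathtt{M}}} = \state{\sem{\mathtt{M}}}$ with equal initial states. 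The central result I would establish is an \emph{internalization lemma}: the s-finite kernel $\tran{\bsem{\mathtt{M}}}$ is obtained from the partial measurable function $\tran{\sem{\mathtt{M}}}$ by discarding the trace and weight carried by the $\mathsf{S}$-tokens and re-introducing them through randomness and kernel mass. Concretely, each activation of a deterministic $\mathsf{sa}$-node that pops a value $b$ from the trace corresponds to the $\mathsf{Sa}$-node drawing $b$ from the uniform measure on $\mathbb{R}_{[0,1]}$, while each scoring step, which in the measurable-function model multiplies the weight on the $\mathsf{S}$-wire by the absolute value of the scored number, corresponds to the $\mathsf{Sc}$-node scaling the kernel mass by that same factor.

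Given the internalization lemma, the trace-integration identity follows by unfolding $\mathsf{obs}(\bsem{\mathtt{M}}) = t_{1}^{\bsem{\mathtt{M}}} \circ t_{0}^{\bsem{\mathtt{M}}}(\ast,-)$ and matching each complete probabilistic run against the corresponding deterministic run on a trace $u$. A run that fires $n$ sampling nodes integrates $n$ values independently and uniformly over $\mathbb{R}_{[0,1]}$; collecting runs by the number $n$ of samples and using the Fubini--Tonelli theorem to commute the countable sum and the $n$-fold integral with the kernel compositions yields exactly the right-hand side, the accumulated score appearing as $\mathfrak{o}_{0}(\mathtt{M})(u)$ and the returned value as $\mathfrak{o}_{1}(\mathtt{M})(u)$.

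The main obstacle will be the precise statement and proof of the internalization lemma, which I expect to carry out by structural induction on $\mathtt{M}$, matching each clause of Figure~\ref{fig:intp} against the corresponding one of Figure~\ref{fig:int}, or alternatively by a cross-model logical relation in the style of the Basic Lemma (Lemma~\ref{lem:basic}) used for Theorem~\ref{thm:adq}. The delicate points are the $\oc(-)$ construction, where countably many memoized copies of a sampling node may be consulted and one must check that the single uniform draw is shared exactly as the single trace entry is shared in the deterministic model, and the justification of the Fubini--Tonelli interchange together with the well-definedness of all kernel compositions, which rests on s-finiteness of the kernels involved. Aligning the trace consumed by the deterministic $\mathsf{sa}$-nodes, in the order imposed by the interactive token-passing, with the uniform draws of the probabilistic $\mathsf{Sa}$-nodes throughout an entire bidirectional run is the part requiring the most care.
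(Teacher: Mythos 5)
Your proposal takes a genuinely different route from the paper's own proof. The paper proves Theorem~\ref{thm:padq} entirely inside the s-finite model, never comparing the two denotational semantics: the inequality $\mu \leq \mathsf{obs}(\bsem{\mathtt{M}})$ for $\mathtt{M} \Rightarrow_{n} \mu$ is shown by induction on $n$ (Proposition~\ref{prop:soundp}), and the converse inequality by a logical relation between \emph{terms} and \emph{probabilistic Mealy machines} (the relations $O_{\mathrm{d}}$, $S_{\mathtt{A}}$, $S_{\mathtt{A}}^{\top}$, $\overline{S}_{\mathtt{A}}$ and the Basic Lemma, Lemma~\ref{lem:basicp}), mirroring the structure of the proof of Theorem~\ref{thm:adq}; Theorem~\ref{thm:adq2} is never invoked. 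You instead use Theorem~\ref{thm:adq2} together with functionality of $\Rightarrow_{\infty}$ to pin down $\mu = \mathfrak{O}(\mathtt{M})$ --- a legitimate and non-circular step, since Theorem~\ref{thm:adq2} concerns the measurable-function model and is established independently of the s-finite one --- and thereby reduce the theorem to the purely denotational identity $\mathsf{obs}(\bsem{\mathtt{M}}) = \mathfrak{O}(\mathtt{M})$, to be proved by a cross-model internalization relation. What your route buys: it reuses results already in place, and it isolates a statement of independent interest, namely that the s-finite GoI model is the trace-integral of the deterministic one (a denotational counterpart of Corollary~\ref{coro:equivsem}). What it costs: the internalization lemma carries essentially all of the technical weight, and it is not lighter than the paper's argument. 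A naive structural induction on $\mathtt{M}$ does not go through at type $\ttreal$ alone; as you anticipate, you need a coupling invariant at all types and for open terms, which must track (i) the projection sending state tokens $(a,u)$ to $\ast$ while converting accumulated weight into kernel mass and trace entries into independent uniform draws, (ii) the fact that $\mathsf{sc}$ stashes the weight and trace on the query stack sent to the scored value, so the deterministic and probabilistic runs' stacks agree only up to a suffix and you need a parametricity-in-the-stack-tail invariant, and (iii) the interaction of this invariant with $\oc(-)$ and with composition, where the interleaving of sampling events across sub-machines determines which trace position matches which draw. Your observation that $\state{\sem{\mathtt{M}}} = \state{\bsem{\mathtt{M}}}$ with equal initial states is correct and is what makes such a coupling formulable at all; but once the cross-model relation is spelled out and proved for every constructor, the effort is comparable to the paper's term-to-machine logical relation, so this is an alternative of equal difficulty rather than a shortcut.
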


  Below, we give a proof of Theorem~\ref{thm:padq}.

  \subsubsection{Proof of Adequacy Theorem}
  \label{sec:lr}

  \begin{lemma}
    For any term $\mathtt{\Delta},\mathtt{x}:\mathtt{A} \vdash \mathtt{M}:\mathtt{B}$
    and for any closed value $\vdash \mathtt{V}:\mathtt{A}$,
    \begin{equation*}
      \bsem{\mathtt{M}} \circ (\bsem{\mathtt{\Delta}} \otimes \oc \Bsem{\mathtt{V}})
      \simeq \bsem{\mathtt{M}\{\mathtt{V}/\mathtt{x}\}}.
    \end{equation*}  
  \end{lemma}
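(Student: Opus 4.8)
The plan is to prove the statement by structural induction on $\mathtt{M}$, carried out simultaneously with the analogous claim for values, namely $\Bsem{\mathtt{V}'} \circ (\mathsf{id}_{\oc\bsem{\mathtt{\Delta}}} \otimes \oc\Bsem{\mathtt{V}}) \simeq \Bsem{\mathtt{V}'\{\mathtt{V}/\mathtt{x}\}}$, so that the clauses for variables, $\lambda$-abstractions and $\mathtt{fix}$ can feed one another (the term-level claim for a value is then recovered from the value-level one through the ``return'' diagram of Figure~\ref{fig:intp}). The only tools I would use are the diagrammatic reasoning principle for probabilistic Mealy machines (Proposition~\ref{prop:cptp} and Proposition~\ref{prop:beh_eqp}) together with the comonad/comonoid laws for the modality. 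Those laws are stated in Proposition~\ref{prop:oc_f}, Proposition~\ref{prop:d} and Proposition~\ref{prop:dg_and_con} for plain Mealy machines, but they hold verbatim in the probabilistic setting because the token machines $\mathsf{d},\mathsf{c},\mathsf{dg},\mathsf{w}$ involved are deterministic and literally the same, regarded as probabilistic Mealy machines. Since $\mathtt{V}$ is a closed value, $\oc\Bsem{\mathtt{V}} \colon \mathsf{I} \rightarrowtriangle \oc\bsem{\mathtt{A}}$ is a genuine promotion and is therefore duplicable, weakenable and diggable in exactly the way these laws require.

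For the base cases: when $\mathtt{M} = \mathtt{x}$, the diagram for $\Bsem{\mathtt{x}}$ consists of a dereliction $\mathsf{d}$ on the $\mathtt{x}$-wire and weakenings $\mathsf{w}$ on the remaining context wires; precomposing with $\oc\Bsem{\mathtt{V}}$ and using $\mathsf{d} \circ \oc\Bsem{\mathtt{V}} \simeq \Bsem{\mathtt{V}}$ (Proposition~\ref{prop:d}) gives exactly $\Bsem{\mathtt{V}} = \Bsem{\mathtt{x}\{\mathtt{V}/\mathtt{x}\}}$. When $\mathtt{M}$ is $\mathtt{y}$ with $\mathtt{y} \neq \mathtt{x}$, or $\mathtt{r}_a$, $\mathtt{skip}$ or $\mathtt{sample}$, the $\mathtt{x}$-wire is discarded by a weakening, so $\mathsf{w} \circ \oc\Bsem{\mathtt{V}} \simeq \mathsf{id}_{\mathsf{I}}$ erases it and substitution acts as the identity on the term.

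For the inductive cases I would read off, from Figure~\ref{fig:intp}, how the input $\oc\bsem{\mathtt{\Delta}} \otimes \oc\bsem{\mathtt{A}}$ is dispatched to the immediate subterms through contractions $\mathsf{c}$ and diggings $\mathsf{dg}$, and then migrate $\oc\Bsem{\mathtt{V}}$ along those structural nodes: contraction duplicates it via $\mathsf{c} \circ \oc\Bsem{\mathtt{V}} \simeq \oc\Bsem{\mathtt{V}} \otimes \oc\Bsem{\mathtt{V}}$, digging promotes it via $\mathsf{dg} \circ \oc\Bsem{\mathtt{V}} \simeq \oc\oc\Bsem{\mathtt{V}} \circ \mathsf{dg}$ (Proposition~\ref{prop:dg_and_con}), and functoriality $\oc(\mathsf{N}\circ\mathsf{M}) \simeq \oc\mathsf{N}\circ\oc\mathsf{M}$ (Proposition~\ref{prop:oc_f}) lets me form $\oc\Bsem{\mathtt{V}}$-decorated copies inside each $\oc$-box. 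All these rearrangements are legitimized by Proposition~\ref{prop:beh_eqp}. After them, each immediate subterm appears precomposed with a copy of $\oc\Bsem{\mathtt{V}}$, the induction hypothesis rewrites it to the substituted subterm, and reassembling the diagram produces $\bsem{\mathtt{M}\{\mathtt{V}/\mathtt{x}\}}$ because capture-avoiding substitution commutes with every term constructor.

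The main obstacle will be the cases in which a subterm sits under the modality, namely application $\mathtt{V}'\,\mathtt{W}$ (where $\mathtt{W}$ enters as $\oc\Bsem{\mathtt{W}}$), the function-symbol rule $\mathtt{F}(\mathtt{V}')$, and above all $\mathtt{fix}_{\mathtt{A},\mathtt{B}}(\mathtt{f},\mathtt{x},\mathtt{N})$, whose interpretation contains a nested $\oc\bsem{\lambda\mathtt{x}^{\mathtt{A}}.\,\mathtt{N}}$ inside a feedback loop. There $\oc\Bsem{\mathtt{V}}$ must be pushed through an $\oc$-box using the identity recorded after Convention~\ref{conv:oc}, that is $\oc\mathsf{M}' \circ (\oc\Bsem{\mathtt{V}} \otimes \mathsf{id}) \simeq \oc(\mathsf{M}' \circ (\Bsem{\mathtt{V}} \otimes \mathsf{id}))$, before the hypothesis becomes applicable. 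Care is needed to keep the bound variable $\mathtt{x}$ of the $\mathtt{fix}$/$\lambda$ distinct from the substituted $\mathtt{x}$ (a harmless $\alpha$-renaming, automatic under capture-avoiding substitution) and to track the digging indices so that the duplicated copies of $\oc\Bsem{\mathtt{V}}$ line up with the correct copies of the subterm. Once this promotion step is treated uniformly, every remaining clause reduces to a routine diagram chase.
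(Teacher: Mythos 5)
Your proposal is correct and takes essentially the same approach as the paper: the paper's entire proof of this lemma is ``By induction on $\mathtt{M}$,'' and your plan is precisely that structural induction (carried out mutually on terms and values, as the interpretation itself is defined), fleshed out with the structural laws the paper provides for exactly this purpose --- dereliction (Proposition~\ref{prop:d}), contraction and digging (Proposition~\ref{prop:dg_and_con}), weakening, functoriality of $\oc$ (Proposition~\ref{prop:oc_f}), the promotion identity after Convention~\ref{conv:oc}, and diagrammatic reasoning via Proposition~\ref{prop:beh_eqp}. Your observation that these laws transfer verbatim to the probabilistic setting is also consistent with the paper, which explicitly identifies (token) Mealy machines with their probabilistic counterparts and reuses $\mathsf{d}$, $\mathsf{c}$, $\mathsf{dg}$, $\mathsf{w}$ unchanged in Figure~\ref{fig:intp}.
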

  \begin{proof}
    By induction on $\mathtt{M}$.
  \end{proof}
  \begin{lemma}\label{lem:detredp}
    For all closed terms $\mathtt{M},\mathtt{N}:\mathtt{A}$,
    if $\mathtt{M} \detred \mathtt{N}$, then
    $\sem{\mathtt{M}} = \sem{\mathtt{N}}$.
  \end{lemma}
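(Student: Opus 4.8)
The plan is to proceed by case analysis on the clause of $\detred$ witnessing $\mathtt{M} \detred \mathtt{N}$, mirroring the proof of its non-probabilistic counterpart, Lemma~\ref{lem:detred}. Since the interpretations $\bsem{-}$ and $\Bsem{-}$ are arrows of the category of probabilistic Mealy machines, the asserted equality is equality of arrows there, i.e.\ behavioural equivalence of chosen representatives. Throughout I would lean on diagrammatic reasoning (Proposition~\ref{prop:beh_eqp}), which permits replacing a diagram by any graph-isomorphic one, together with the behavioural equations recorded for the basic token machines in Section~\ref{sec:const}. In every case the recipe is identical: unfold the interpretation of the redex from Figure~\ref{fig:intp}, slide the denotation of the value that participates in the reduction through the relevant cut/axiom link, and recognise the result as the interpretation of the contractum.

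For the two substitution clauses, $(\lambda\mathtt{x}^{\mathtt{A}}.\,\mathtt{M})\,\mathtt{V} \detred \mathtt{M}\{\mathtt{V}/\mathtt{x}\}$ and $\letin{\mathtt{x}}{\mathtt{V}}{\mathtt{M}} \detred \mathtt{M}\{\mathtt{V}/\mathtt{x}\}$, I would combine the substitution lemma established immediately above (giving $\bsem{\mathtt{M}} \circ (\bsem{\mathtt{\Delta}} \otimes \oc\Bsem{\mathtt{V}}) \simeq \bsem{\mathtt{M}\{\mathtt{V}/\mathtt{x}\}}$) with the fact that the interpretation of application and of $\mathtt{let}$ is built so as to feed $\oc\Bsem{\mathtt{V}}$ into the bound-variable port of $\Bsem{\lambda\mathtt{x}^{\mathtt{A}}.\,\mathtt{M}}$ through a cut link. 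The compact-closed yanking equation for $\mathsf{unit}$ and $\mathsf{counit}$ turns this composite into the graph of $\bsem{\mathtt{M}}\circ(\bsem{\mathtt{\Delta}}\otimes\oc\Bsem{\mathtt{V}})$ up to isomorphism, and Proposition~\ref{prop:beh_eqp} discharges the remaining rearrangement of edges and nodes.

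The two arithmetic clauses are settled by behavioural equations already derived for the machines involved, now read inside the probabilistic category; these machines are images of ordinary Mealy machines, so their equations transport verbatim. For $\mathtt{F}(\mathtt{r}_{a}, \ldots, \mathtt{r}_{b}) \detred \mathtt{r}_{\mathrm{fun}_{\mathtt{F}}(a,\ldots,b)}$ I would invoke $\mathsf{fn}_{f}\circ(\mathsf{r}_{a}\otimes\cdots\otimes\mathsf{r}_{b}) \simeq \mathsf{r}_{f(a,\ldots,b)}$ from Section~\ref{sec:mf}. For $\ifterm{\mathtt{r}_{a}}{\mathtt{M}}{\mathtt{N}}$ I would use the equations $\mathsf{cd}\circ(\mathsf{r}_{a}\otimes\bsem{\mathtt{M}}\otimes\bsem{\mathtt{N}}) \simeq \bsem{\mathtt{M}}$ when $a=0$ and $\simeq \bsem{\mathtt{N}}$ when $a\neq 0$ from Section~\ref{sec:cond}, which match the two branches of the reduction rule exactly.

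The main obstacle is the fixpoint clause $\mathtt{fix}_{\mathtt{A},\mathtt{B}}(\mathtt{f},\mathtt{x},\mathtt{M})\,\mathtt{V} \detred \mathtt{M}\{\mathtt{fix}_{\mathtt{A},\mathtt{B}}(\mathtt{f},\mathtt{x},\mathtt{M})/\mathtt{f},\mathtt{V}/\mathtt{x}\}$, where the genuine content lies. Writing $\mathsf{M} = \bsem{\lambda\mathtt{x}^{\mathtt{A}}.\,\mathtt{M}}$, the interpretation $\Bsem{\mathtt{fix}\ldots}$ is the loop construction $\mathsf{M}^{\dagger}$, and unfolding the reduction requires the unrolling identity $\mathsf{M}^{\dagger} \simeq \mathsf{M}\circ\oc(\mathsf{M}^{\dagger})$. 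This is the probabilistic analogue of Proposition~\ref{prop:fix}, which I would obtain by rerunning the approximation argument of Section~\ref{sec:fix} with partial measurable functions replaced by s-finite kernels throughout, using that $\oc(-)$ and composition remain continuous for the $\omega$cpo structure (the probabilistic counterpart of Proposition~\ref{prop:wcpo}) so that $[\mathsf{M}^{\dagger}]$ is still the least upper bound of the finite approximants $[\mathsf{iter}_{n}(\mathsf{M})]$. Once the unrolling identity is available, one more application of the substitution lemma to $\mathsf{M}\circ\oc(\mathsf{M}^{\dagger})$ identifies it with $\bsem{\mathtt{M}\{\mathtt{fix}\ldots/\mathtt{f}\}}$, completing the case and hence the lemma.
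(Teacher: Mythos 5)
Your proposal matches the paper's proof in both structure and substance: the paper argues exactly by case analysis on the clauses of $\detred$ (with the substitution lemma handling the $\beta$/$\mathtt{let}$ redexes and the recorded behavioural equations handling the first-order cases), and delegates the only nontrivial case, recursion, to the probabilistic unrolling identity $\mathsf{M}\circ\oc(\mathsf{M}^{\dagger})\simeq\mathsf{M}^{\dagger}$ (Corollary~\ref{cor:fixp}), which is obtained precisely as you describe — by rerunning the parametrized approximation argument of Section~\ref{sec:fix} with s-finite kernels in place of partial measurable functions. Your reading of the stated equality as equality of arrows in the category of probabilistic Mealy machines (i.e.\ behavioural equivalence of representatives) is also the intended one.
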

  \begin{proof}
    By case analysis. For the case of recursion, see
    Corollary~\ref{cor:fixp}.
  \end{proof}
  We first prove soundness.
  \begin{proposition}\label{prop:soundp}
    For any closed term $\mathtt{M}:\ttreal$,
    if $\mathtt{M} \Rightarrow_{n} \mu$, then
    $\mu \leq \mathsf{obs}(\bsem{\mathtt{M}})$.
  \end{proposition}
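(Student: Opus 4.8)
The plan is to argue by induction on $n \in \mathbb{N}$, with a case analysis in the induction step that mirrors the four evaluation rules of Figure~\ref{fig:distribution-based}. The base case $n = 0$ is immediate: the only derivable judgement is $\mathtt{M} \Rightarrow_{0} \varnothing_{\mathbb{R}}$, and the zero measure $\varnothing_{\mathbb{R}}$ lies below every measure in the pointwise order, so $\varnothing_{\mathbb{R}} \leq \mathsf{obs}(\bsem{\mathtt{M}})$ holds trivially. For the induction step I would assume $\mathtt{M} \Rightarrow_{n+1} \mu$. If $\mathtt{M} = \mathtt{r}_{a}$, then $\mu = \delta_{a}$, and one checks directly from the definition of $\mathsf{obs}$ and of the token machine $\mathsf{r}_{a}$ that $\mathsf{obs}(\bsem{\mathtt{r}_{a}}) = \delta_{a}$, so even equality holds. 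Otherwise the derivation ends with one of the three context rules, and $\mathtt{M} = \mathtt{E}[\mathtt{R}]$ for an evaluation context $\mathtt{E}$ and a redex $\mathtt{R}$.

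The core of the proof is a trio of \emph{context-compatibility} properties of $\mathsf{obs} \circ \bsem{-}$, one for each rule, each combined with the induction hypothesis. For the deterministic-reduction rule, Lemma~\ref{lem:detredp} tells us that $\detred$ preserves the interpretation, so $\bsem{\mathtt{R}} \simeq \bsem{\mathtt{N}}$ when $\mathtt{R} \detred \mathtt{N}$; since $\bsem{\mathtt{E}[-]}$ depends on its hole only through composition and monoidal product of machines, and both constructions respect $\simeq$, diagrammatic reasoning (Proposition~\ref{prop:beh_eqp}) yields $\bsem{\mathtt{E}[\mathtt{R}]} \simeq \bsem{\mathtt{E}[\mathtt{N}]}$. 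As $\mathsf{obs}$ is defined purely from the transition behaviour and is therefore $\simeq$-invariant, we get $\mathsf{obs}(\bsem{\mathtt{E}[\mathtt{R}]}) = \mathsf{obs}(\bsem{\mathtt{E}[\mathtt{N}]})$, and the claim follows from the induction hypothesis applied to $\mathtt{E}[\mathtt{N}] \Rightarrow_{n} \mu$.

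For the scoring rule, unfolding the definition of $\mathsf{Sc}$ — whose transition multiplies the incoming weight by $|a|$ — shows $\mathsf{obs}(\bsem{\mathtt{E}[\mathtt{score}(\mathtt{r}_{a})]}) = |a|\,\mathsf{obs}(\bsem{\mathtt{E}[\mathtt{skip}]})$. Writing $\mathtt{E}[\mathtt{skip}] \Rightarrow_{n} \nu$ with $\mu = |a|\,\nu$, the induction hypothesis gives $\nu \leq \mathsf{obs}(\bsem{\mathtt{E}[\mathtt{skip}]})$, and monotonicity of scalar multiplication upgrades this to $\mu \leq \mathsf{obs}(\bsem{\mathtt{E}[\mathtt{score}(\mathtt{r}_{a})]})$. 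The remaining case, sampling, is where the real work lies: from $\mathtt{E}[\mathtt{r}_{a}] \Rightarrow_{n} k(a,-)$ for all $a$ we obtain $\mathtt{E}[\mathtt{sample}] \Rightarrow_{n+1} \int_{\mathbb{R}_{[0,1]}} k(a,-)\,\mathrm{d}a$, while the induction hypothesis gives $k(a,-) \leq \mathsf{obs}(\bsem{\mathtt{E}[\mathtt{r}_{a}]})$ pointwise in $a$. I would first prove the identity $\mathsf{obs}(\bsem{\mathtt{E}[\mathtt{sample}]}) = \int_{\mathbb{R}_{[0,1]}} \mathsf{obs}(\bsem{\mathtt{E}[\mathtt{r}_{a}]})\,\mathrm{d}a$ by opening up the observation of the $\mathsf{Sa}$ machine: since $\mathsf{Sa}$ first draws its value from the uniform distribution on $\mathbb{R}_{[0,1]}$ and thereafter behaves exactly as $\mathsf{r}_{a}$ for the drawn $a$, integrating over its initial transition and composing through $\mathtt{E}$ reproduces the external integral of the operational rule. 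Monotonicity of the integral then turns the pointwise bound into $\int_{\mathbb{R}_{[0,1]}} k(a,-)\,\mathrm{d}a \leq \int_{\mathbb{R}_{[0,1]}} \mathsf{obs}(\bsem{\mathtt{E}[\mathtt{r}_{a}]})\,\mathrm{d}a$, completing the case.

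The main obstacle is precisely this sampling identity. Justifying that the \emph{internal} uniform draw performed by $\mathsf{Sa}$, seen through the observation map and an arbitrary evaluation context, commutes with integration and coincides with the rule's \emph{external} integral requires a careful Fubini--Tonelli computation over the s-finite transition kernels defining $\mathsf{obs}(\bsem{\mathtt{E}[\mathtt{sample}]})$ as the composite $t_{1} \circ t_{0}$, together with a verification that the family $a \mapsto \mathsf{obs}(\bsem{\mathtt{E}[\mathtt{r}_{a}]})$ is jointly measurable in the sampled parameter. The availability of s-finiteness throughout is what makes this interchange of the internal sampling kernel with the observation composite legitimate; the routine but somewhat laborious part is tracking the parameter $a$ through the context $\mathtt{E}$ while keeping all intermediate kernels s-finite.
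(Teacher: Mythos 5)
Your proposal is correct and takes essentially the same route as the paper's proof: induction on $n$ with a case analysis over the evaluation rules, handling $\mathtt{r}_{a}$ via $\mathsf{obs}(\bsem{\mathtt{r}_{a}}) = \delta_{a}$, deterministic steps via Lemma~\ref{lem:detredp} together with compositionality of the interpretation, scoring by unfolding $\mathsf{obs}$ against $\mathsf{Sc}$, and sampling by showing that the internal uniform draw of $\mathsf{Sa}$, observed through the context, equals the external integral $\int_{\mathbb{R}_{[0,1]}} \mathsf{obs}(\bsem{\mathtt{E}[\mathtt{r}_{a}]})\,\mathrm{d}a$. The measurability concern you flag in the sampling case is resolved in the paper exactly as you anticipate: $t_{0}^{\bsem{\mathtt{E}[\mathtt{sample}]}}$ is factored as $\mu_{\mathrm{Borel}}$ followed by a kernel $h$ with $h(a,-) = t_{0}^{\bsem{\mathtt{E}[\mathtt{r}_{a}]}} \times \delta_{a}$, so the parametrized family is carried by the s-finite kernel structure itself.
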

  \begin{proof}
    By induction on $n$.
    (Base case) Easy. (Induction step) By case analysis.
    \begin{itemize}
    \item If $\mathtt{M} = \mathtt{r}_{a}$, then $\mathsf{obs}{\bsem{\mathtt{M}}}
      = \delta_{a}$.
    \item If $\mathtt{M} =\mathtt{E}[\mathtt{N}]$ and $\mathtt{N} \detred \mathtt{L}$,
      then $\mathsf{obs}(\bsem{\mathtt{M}}) = \mathsf{obs}(\bsem{\mathtt{E}[\mathtt{L}]})
      \geq \mu$.
    \item If $\mathtt{M} = \mathtt{E}[\mathtt{score}(\mathtt{r}_{a})]$
      and $\mathtt{E}[\mathtt{skip}] \Rightarrow_{n-1} \mu$, then by
      the definition of $\mathsf{obs}(-)$, we see that
      $\mathsf{obs}(\bsem{\mathtt{M}}) = |a| \,
      \mathsf{obs}(\bsem{\mathtt{E}[\mathtt{skip}]}) \geq |a|\, \mu$.
    \item If $\mathtt{M} = \mathtt{E}[\mathtt{sample}]$ and
      $\mathtt{E}[\mathtt{r}_{a}] \Rightarrow_{n-1} k(a,-)$ for some
      finite kernel $k$, then by the definition of $t_{0}$ and $t_{1}$,
      we see that
      \begin{align*}
        t_{0}^{\bsem{\mathtt{M}}}
        &=
          \xymatrix{
          1 \ar@{~>}[rr]^-{\mu_{\mathrm{Borel}}}
        &&
          \mathbb{R} \ar@{~>}[r]^-{h}
        &
          \state{\bsem{\mathtt{M}}} \times \mathbb{R} \ar@{~>}[rr]^-{\mathrm{injection}}
        &&
          \state{\bsem{\mathtt{M}}} \times (1 + \mathbb{R})
          },
        \\
        t_{1}^{\bsem{\mathtt{M}}}((s,a),-)
        &=
          t_{1}^{\bsem{\mathtt{E}[\mathtt{r}_{a}]}}(s,-), \\
        t_{1}^{\bsem{\mathtt{M}}}((s,0),-)
        &= 0
      \end{align*}
      for some $h$ such that
      $h(a,-) = t_{0}^{\bsem{\mathtt{E}[\mathtt{r}_{a}]}}
      \times \delta_{a}$. Hence,
      \begin{equation*}
        (t_{1}^{\bsem{\mathtt{M}}} \circ 
        t_{0}^{\bsem{\mathtt{M}}})(\ast,A)
        = \int_{\mathbb{R}_{[0,1]}}
        (t_{1}^{\bsem{\mathtt{E}[\mathtt{r}_{a}]}} \circ 
        t_{0}^{\bsem{\mathtt{E}[\mathtt{r}_{a}]}})(\ast,A) \; \mathrm{d}a
        \geq \int_{\mathbb{R}_{[0,1]}}
        k(a,A) \; \mathrm{d}a.
      \end{equation*}
    \end{itemize}
  \end{proof}

  It remains to prove that $\mathtt{M} \Rightarrow_{\infty} \mu$
  implies $\mathsf{obs}(\bsem{\mathtt{M}}) \leq \mu$. We use logical
  relations. We define a binary relation $O_{\mathrm{d}}$ between
  closed terms of type $\ttreal$ and probabilistic Mealy machines from $\mathsf{I}$
  to $\mathsf{J} \otimes \oc \mathsf{R}$ by
  \begin{equation*}
    (\mathtt{M},\mathsf{M}) \in O_{\mathrm{d}}
    \iff
    \textnormal{if } \mathtt{M}\Rightarrow_{\infty} \mu
    \textnormal{ then } \mathsf{obs}(\mathsf{M}) \leq \mu.
  \end{equation*}
  We then inductively define binary relations
  \begin{align*}
    S_{\mathtt{A}}
    &\subseteq \{\textnormal{closed values of type }\mathtt{A}\}
      \times \{\textnormal{Mealy machines from } \mathsf{I} \textnormal{ to } \sem{\mathtt{A}}\} \\
    S_{\mathtt{A}}^{\top}
    &\subseteq
      \{\textnormal{evaluation contexts } \mathtt{x}:\mathtt{A}
      \vdash \mathtt{E}[\mathtt{x}]:\ttreal \}
      \times \{\textnormal{Mealy machines from } \oc\sem{\mathtt{A}}
      \textnormal{ to } \mathsf{J} \otimes
      \oc \mathsf{R}\} \\
    \overline{S}_{\mathtt{A}}
    &\subseteq \{\textnormal{closed terms of type }\mathtt{A}\}
      \times \{\textnormal{Mealy machines from } \mathsf{I} \textnormal{ to }
      \mathsf{J} \otimes \oc \sem{\mathtt{A}}\}
  \end{align*}
  by
  \begin{align*}
    S_{\ttreal}
    &= \{(\mathtt{r}_{a},\mathsf{r}_{a}) : a \in \mathbb{R}\}, \\
    S_{\ttunit}
    &= \{(\mathtt{skip},\mathsf{id}_{\mathsf{I}})\}, \\
    S_{\mathtt{A}\to\mathtt{B}}
    &= \{(\mathtt{V},\mathsf{M}) :
      \forall (\mathtt{W},\mathtt{N}) \in S_{\mathtt{A}},\,
      (\mathtt{V}\,\mathtt{W},
      (\mathsf{J} \otimes \oc \bsem{\mathtt{B}} \otimes \mathsf{counit}_{\oc\bsem{\mathtt{A}}})
      \circ (\mathsf{M} \otimes \oc \mathsf{N})) \in \overline{S}_{\mathtt{B}}
      \}, \\
    S_{\mathtt{A}}^{\top}
    &= \{(\mathtt{E}[-],\mathsf{E}) :
      \forall (\mathtt{V},\mathsf{M}) \in S_{\mathtt{A}},\,
      (\mathtt{E}[\mathtt{V}],\mathsf{E} \circ \oc \mathsf{M}) \in O_{\mathrm{d}}\}, \\
    \overline{S}_{\mathtt{A}}
    &= \{(\mathtt{M},\mathsf{M}):
      \forall (\mathtt{E}[-],\mathtt{E}) \in S_{\mathtt{A}}^{\top},\,
      (\mathtt{E}[\mathtt{M}],(\mathsf{n} \otimes \oc \mathsf{R})
      \circ (\mathsf{J} \otimes \mathsf{E}) \circ \mathsf{M}) \in O_{\mathrm{d}}\}
  \end{align*}

  We list some properties of the logical relations.
  \begin{lemma}\label{lem:lrp}
    Let $\mathtt{A}$ be a type.
    \begin{enumerate}
    \item If $(\mathtt{V},\mathtt{M}) \in S_{\mathtt{A}}$, then
      $(\mathtt{V},\mathsf{j} \otimes \oc \mathsf{M}) \in
      \overline{S}_{\mathtt{A}}$.
    \item If $(\mathtt{M},\mathsf{M}) \in \overline{S}_{\mathtt{A}}$ and
      $\mathtt{N} \detred \mathtt{M}$, then
      $(\mathtt{N},\mathsf{M}) \in \overline{S}_{\mathtt{A}}$.
    \item If $(\mathtt{M},\mathsf{M}) \in \overline{S}_{\mathtt{A}}$ and
      $\mathtt{M} \detred \mathtt{N}$, then
      $(\mathtt{N},\mathsf{M}) \in \overline{S}_{\mathtt{A}}$.
    \item If $(\mathtt{M},\mathsf{M}) \in \overline{S}_{\mathtt{A}}$ and
      $\mathsf{M} \simeq \mathsf{N}$, then
      $(\mathtt{M},\mathsf{N}) \in \overline{S}_{\mathtt{A}}$.
    \item For any closed term $\mathtt{M}:\mathtt{A}$,
      $(\mathtt{M},\mathsf{bot}_{\mathsf{J} \otimes \oc
        \sem{\mathtt{A}}}) \in \overline{S}_{\mathtt{A}}$ where
      $\mathsf{bot}_{\mathsf{X}} \colon \mathsf{I} \rightarrowtriangle \mathsf{X}$
      is a token machine whose transition function is the zero kernel.
    \item For any closed value $\mathtt{V}:\mathtt{A}\to\mathtt{B}$,
      $(\mathtt{V},\mathsf{bot}_{\sem{\mathtt{A}\to\mathtt{B}}}) \in
      S_{\mathtt{A}\to\mathtt{B}}$.
    \item If $(\mathtt{M},\mathsf{M}_{i}) \in \overline{S}_{\mathtt{A}}$
      and $\state{\mathsf{M}_{1}} = \state{\mathsf{M}_{2}} = \cdots$ and
      $\init{\mathsf{M}_{1}} = \init{\mathsf{M}_{2}} = \cdots$ and
      $\tran{\mathsf{M}_{1}} \leq \tran{\mathsf{M}_{2}} \leq \cdots$, then
      $(\mathtt{M},\mathsf{N}) \in \overline{S}_{\mathtt{A}}$ where
      $\mathsf{N}$ is given by
      $\state{\mathsf{N}} = \state{\mathsf{M}_{1}}$,
      $\init{\mathsf{N}} = \init{\mathsf{M}_{1}}$,
      $\tran{\mathsf{N}} = \bigvee_{n} \tran{\mathsf{M}_{n}}$.
    \end{enumerate}
  \end{lemma}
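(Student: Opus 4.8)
The plan is to verify all seven items by unfolding the definitions of $O_{\mathrm{d}}$ and of the three logical relations, exactly in the style of the proof of Lemma~\ref{lem:lr}, but now paying attention to the fact that $O_{\mathrm{d}}$ asks for the \emph{inequality} $\mathsf{obs}(\mathsf{M}) \le \mu$ rather than an operational implication. Two facts will recur throughout: first, the bottom machine has the zero kernel as transition, so $\mathsf{obs}(\mathsf{bot}_{\mathsf{X}}) = \varnothing_{\mathbb{R}}$, which lies below every measure; second, $\mathsf{obs}(-)$, together with the composition, the monoidal product and $\oc(-)$ used to assemble context machines, is invariant under behavioural equivalence (Proposition~\ref{prop:beh_eqp}) and continuous with respect to the pointwise order on s-finite transition kernels. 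I would state these two facts as short auxiliary observations before attacking the list.

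For item (1) I would invoke the monad unit law: for $(\mathtt{E}[-],\mathsf{E}) \in S_{\mathtt{A}}^{\top}$ the machine $(\mathsf{n} \otimes \oc\mathsf{R}) \circ (\mathsf{J} \otimes \mathsf{E}) \circ (\mathsf{j} \otimes \oc\mathsf{M})$ is behaviourally equivalent to $\mathsf{E} \circ \oc\mathsf{M}$, so membership in $O_{\mathrm{d}}$ follows from $(\mathtt{V},\mathsf{M}) \in S_{\mathtt{A}}$ and $(\mathtt{E}[-],\mathsf{E}) \in S_{\mathtt{A}}^{\top}$. Items (2) and (3) are purely operational: in the $\overline{S}_{\mathtt{A}}$ condition the associated machine does not change when we pass between $\mathtt{N}$ and $\mathtt{M}$, and by the third evaluation rule of Figure~\ref{fig:distribution-based}, $\mathtt{N} \detred \mathtt{M}$ yields $\mathtt{E}[\mathtt{N}] \Rightarrow_{n+1} \mu \iff \mathtt{E}[\mathtt{M}] \Rightarrow_{n} \mu$, hence $\mathtt{E}[\mathtt{N}] \Rightarrow_{\infty} \mu \iff \mathtt{E}[\mathtt{M}] \Rightarrow_{\infty} \mu$; the $O_{\mathrm{d}}$ clause then transfers verbatim (Lemma~\ref{lem:detredp} is available if one prefers a semantic argument, but it is not needed since the machine is fixed). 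Item (4) is immediate from $\simeq$-invariance of the context-machine construction and of $\mathsf{obs}$, the latter checked by pushing the realizing state map through the definitions of $t_0^{\mathsf{M}}$ and $t_1^{\mathsf{M}}$.

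Items (5) and (6) exploit the zero kernel: any composite built from $\mathsf{bot}$ that the observation must traverse again carries the zero kernel on the observed input, so $\mathsf{obs} = \varnothing_{\mathbb{R}} \le \mu$. For (6) I would additionally unfold the application clause and note that $\mathsf{bot}_{\sem{\mathtt{A}\to\mathtt{B}}} \otimes \oc\mathsf{N}$ post-composed with $\mathsf{counit}$ produces no token along the edge carried by the bottom factor, so the first step of the observation already returns the zero kernel, reducing the claim to (5). Item (7) is where the genuine work lies: I would prove $\mathsf{obs}\bigl((\mathsf{n}\otimes\oc\mathsf{R})\circ(\mathsf{J}\otimes\mathsf{E})\circ\mathsf{N}\bigr) = \sup_{n} \mathsf{obs}\bigl((\mathsf{n}\otimes\oc\mathsf{R})\circ(\mathsf{J}\otimes\mathsf{E})\circ\mathsf{M}_n\bigr)$ using continuity of the categorical operations and of $\mathsf{obs}$, and then conclude from $\sup_n \mathsf{obs}(\cdots\mathsf{M}_n\cdots) \le \mu$.

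The main obstacle I anticipate is item (7): establishing that $\mathsf{obs}(-)$ commutes with suprema of $\omega$-chains of s-finite kernels. This requires that reading off $t_0$ and $t_1$ from the transition kernel and composing them is Scott-continuous (monotone convergence for the integral defining composition of s-finite kernels), and that the composition and $\oc(-)$ assembling the context machine preserve these suprema. This is the probabilistic analogue of the $\omega$cpo structure of Proposition~\ref{prop:wcpo}, and I would isolate it as a brief continuity lemma for $\mathsf{obs}$ before applying it.
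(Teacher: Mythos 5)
Your proposal is correct and follows essentially the same route as the paper: the paper's own proof is just the remark that all seven items follow by unfolding the definitions of $O_{\mathrm{d}}$ and the logical relations, and your argument is exactly that unfolding, carried out in detail (unit law of the state monad for item 1, transfer of $\Rightarrow_{\infty}$ along $\detred$ for items 2--3, $\simeq$-invariance of $\mathsf{obs}$ for item 4, the zero kernel for items 5--6, and continuity of $\mathsf{obs}$ under suprema of transition kernels for item 7). The auxiliary observations you isolate, in particular the continuity lemma for item 7, are precisely what the paper leaves implicit, so there is no gap and no genuinely different approach to report.
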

  \begin{proof}
    We can check these items by unfolding the definition of
    $O_{\mathrm{d}}$ and the logical relations.
  \end{proof}

  \begin{lemma}[Basic Lemma]\label{lem:basicp}
    Let
    $\mathtt{\Delta} = (\mathtt{x}:\mathtt{A}_{1},
    \ldots,\mathtt{x}_{n}:\mathtt{A}_{n})$ be a context.
    \begin{itemize}
    \item For any term $\mathtt{\Delta} \vdash \mathtt{M}:\mathtt{A}$
      and for any
      $(\mathtt{V}_{i},\mathtt{N}_{i}) \in S_{\mathtt{A}_{i}}$ for
      $i = 1,2,\ldots,n$, we have
      \begin{equation*}
        \left(
          \mathtt{M}\{\mathtt{V}_{1}/\mathtt{x}_{1},\ldots,
          \mathtt{V}_{n}/\mathtt{x}_{n}\},
          \bsem{\mathtt{M}} \circ
          (\oc\mathsf{N}_{1} \otimes \cdots \otimes \oc \mathsf{N}_{n})
        \right)
        \in \overline{S}_{\mathtt{A}}.
      \end{equation*}
    \item  For any value
      $\mathtt{\Delta} \vdash \mathtt{V}:\mathtt{A}$ and for any
      $(\mathtt{V}_{i},\mathtt{N}_{i}) \in S_{\mathtt{A}_{i}}$ for
      $i = 1,2,\ldots,n$, we have
      \begin{equation*}
        \left(
          \mathtt{V}\{\mathtt{V}_{1}/\mathtt{x}_{1},\ldots,
          \mathtt{V}_{n}/\mathtt{x}_{n}\},
          \Bsem{\mathtt{M}} \circ
          (\oc\mathsf{N}_{1} \otimes \cdots \otimes \oc \mathsf{N}_{n})
        \right)
        \in S_{\mathtt{A}}.
      \end{equation*}
    \end{itemize}
  \end{lemma}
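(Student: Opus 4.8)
The plan is to prove both statements simultaneously by induction on the structure (typing derivations) of the term $\mathtt{M}$ and the value $\mathtt{V}$, following the same pattern as the Basic Lemma for the sampling-based semantics (Lemma~\ref{lem:basic}) but now reading the distribution-based observation relation $O_{\mathrm{d}}$ and the measure $\mathsf{obs}(-)$ in place of $O$ and $\mathfrak{o}(-)$. First I would dispatch the base cases: a variable $\mathtt{x}_i$ is substituted by $\mathtt{V}_i$, and $\Bsem{\mathtt{x}_i} \circ (\oc\mathsf{N}_1 \otimes \cdots \otimes \oc\mathsf{N}_n)$ reduces, modulo behavioural equivalence, to $\mathsf{N}_i$ by the dereliction and weakening laws (Proposition~\ref{prop:d}), so $(\mathtt{V}_i,\mathsf{N}_i) \in S_{\mathtt{A}_i}$ delivers the claim; the cases $\mathtt{skip}$ and $\mathtt{r}_a$ are immediate from the definitions of $S_{\ttunit}$ and $S_{\ttreal}$. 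The purely functional cases --- application, $\letin{\mathtt{x}}{\mathtt{M}}{\mathtt{N}}$, $\lambda$-abstraction, function identifiers $\mathtt{F}$, and conditional branching --- then go through exactly as in the sampling-based proof, by combining a substitution lemma, invariance of the interpretation under $\detred$ (Lemma~\ref{lem:detredp}), item~(1) of Lemma~\ref{lem:lrp} to pass from $S_{\mathtt{A}}$ to $\overline{S}_{\mathtt{A}}$ through the monad unit, and the closure properties (2)--(4) of Lemma~\ref{lem:lrp} (closure under $\detred$ on either side and under behavioural equivalence).

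The genuinely new content is concentrated in the effectful constants $\mathtt{sample}$ and $\mathtt{score}(\mathtt{V})$ and in the fixed point. For $\mathtt{score}(\mathtt{V})$ I would unfold the transition kernel of $\mathsf{Sc}$: the observation process run through $\mathsf{Sc} \circ \Bsem{\mathtt{V}}$ first forwards the query to $\Bsem{\mathtt{V}}$, which by the value half of the induction (applied to $\mathtt{V}:\ttreal$) behaves as some $\mathsf{r}_a$, and then rescales the observed weight by $|a|$; comparing with the evaluation rule $\mathtt{E}[\mathtt{score}(\mathtt{r}_a)] \Rightarrow_{n+1} |a|\,\mu$ and using that $\mathsf{obs}(-)$ scales linearly in the weight, the bound recorded by $O_{\mathrm{d}}$ propagates from the continuation. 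For $\mathtt{sample}$ I would unfold $\mathsf{Sa}$: its first transition draws a real $c$ from $\mu_{\mathrm{Borel}}$ on $\mathbb{R}_{[0,1]}$ and stores it, after which it answers every query with $c$, so that conditioned on the stored value the full composite behaves like the same composite with $\mathsf{r}_c$ in place of $\mathsf{Sa}$. Expanding $\mathsf{obs}$ as $t_1 \circ t_0$ and factoring out the draw gives, schematically,
\begin{equation*}
  \mathsf{obs}(\mathsf{E} \circ \oc\mathsf{Sa})(A)
  = \int_{\mathbb{R}_{[0,1]}} \mathsf{obs}(\mathsf{E} \circ \oc\mathsf{r}_{c})(A)\;\mathrm{d}c ,
\end{equation*}
which matches the rule $\mathtt{E}[\mathtt{sample}] \Rightarrow_{n+1} \int_{\mathbb{R}_{[0,1]}} k(c,-)\,\mathrm{d}c$ once the induction hypothesis is applied pointwise in $c$ via $(\mathtt{r}_c,\mathsf{r}_c)\in S_{\ttreal}$.

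For $\mathtt{fix}_{\mathtt{A},\mathtt{B}}(\mathtt{f},\mathtt{x},\mathtt{M})$ I would reuse the approximation machinery: by the probabilistic analogue of Proposition~\ref{prop:iter}, the interpretation is the least upper bound (in the sense of ascending transition kernels) of the finite unfoldings $\mathsf{iter}_n$, each of which lies in $S_{\mathtt{A}\to\mathtt{B}}$ by an inner induction on $n$ using the induction hypothesis for $\lambda\mathtt{x}^{\mathtt{A}}.\,\mathtt{M}$ together with item~(6) of Lemma~\ref{lem:lrp} for the base case $\mathsf{bot}$; closure of $\overline{S}_{\mathtt{A}}$ under suprema of $\omega$-chains (item~(7) of Lemma~\ref{lem:lrp}) then lifts membership to the limit. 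Assembling all cases yields the lemma.

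I expect the main obstacle to be the $\mathtt{sample}$ case, specifically justifying the factorisation of the internal uniform draw through the two-phase observation $t_1 \circ t_0$. Unlike the deterministic sampling-based setting, the draw here is a genuine probabilistic transition buried inside $\mathsf{Sa}$, so establishing the displayed identity requires carefully tracking how the stored state of $\mathsf{Sa}$ mediates between $t_0$ and $t_1$ and invoking associativity of s-finite kernel composition together with Fubini--Tonelli to commute the $\mu_{\mathrm{Borel}}$-integral past the observation of the continuation. A secondary subtlety is that $O_{\mathrm{d}}$ records only the inequality $\mathsf{obs} \leq \mu$, so at each step I must check that the operation acting on observations (scaling, integration, or suprema) is monotone in order for the bound to propagate; the matching lower bound $\mu \leq \mathsf{obs}$ is already supplied by Proposition~\ref{prop:soundp}, and the two together give Theorem~\ref{thm:padq}.
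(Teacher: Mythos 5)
Your proposal follows essentially the same route as the paper's own proof: induction on the term/value structure with the routine cases discharged by the closure properties of Lemma~\ref{lem:lrp}, the $\mathtt{sample}$ case handled by factoring the observation $t_{1}\circ t_{0}$ through the internal Borel draw of $\mathsf{Sa}$ to obtain exactly the identity $\mathsf{obs}(\cdots\circ\mathsf{Sa})(A)=\int_{\mathbb{R}_{[0,1]}}\mathsf{obs}(\cdots\circ\mathsf{r}_{c})(A)\,\mathrm{d}c$, and the fixed point handled by the finite unfoldings of Proposition~\ref{prop:iterp} together with closure under $\omega$-chains. Your explicit attention to the monotonicity needed for the one-sided bound in $O_{\mathrm{d}}$, and to Fubini--Tonelli in the $\mathtt{sample}$ factorisation, matches the paper's (more tersely stated) argument, so the proposal is correct.
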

  \begin{proof}
    By induction on $\mathtt{M}$ and $\mathtt{V}$. Most cases follow
    from Lemma~\ref{lem:lrp}. For $\mathtt{M}=\mathtt{sample}$ and
    $\mathtt{M}=\mathtt{score}(\mathtt{V})$, we check the statement by
    unfolding the definition of $\mathsf{Sa}$ and $\mathsf{sc}$. Here,
    we only check for
    $\mathtt{M}=\mathtt{sample}$ and
    $\mathtt{M}=\mathtt{fix}_{\mathtt{A},\mathtt{B}}(\mathtt{f},
    \mathtt{x},\mathtt{N})$.
    \begin{itemize}
    \item When $\mathtt{M}=\mathtt{sample}$, let
      $(\mathtt{E},\mathsf{E})$ be a pair in $R_{\ttreal}^{\top}$. We
      write $\mathsf{N}$ for
      $(n \otimes \oc \mathsf{R}) \circ (\mathsf{J} \otimes \mathsf{E})
      \circ \mathsf{Sa}$. By the definition of $t_{0}$ and $t_{1}$, we
      have
      \begin{align*}
        &t_{0}^{\mathsf{N}}
        =
          \xymatrix{
          1 \ar@{~>}[rr]^-{\mu_{\mathrm{Borel}}}
        &&
          \mathbb{R} \ar@{~>}[r]^-{h}
        &
          \state{\mathsf{N}} \times \mathbb{R} \ar@{~>}[rr]^-{\mathrm{injection}}
        &&
          \state{\mathsf{N}} \times (1 + \mathbb{R})
          },
        \\
        &t_{1}^{\mathsf{N}}((s,a),-)
        =
          t_{1}^{\mathsf{E} \circ \mathsf{r}_{a}}(s,-), \\
        &t_{1}^{\mathsf{N}}((s,0),-)
        = 0
      \end{align*}
      for some $h$ such that
      $h(a,-) = t_{0}^{\mathsf{E} \circ \mathsf{r}_{a}}
      \times \delta_{a}$. Hence,
      \begin{equation*}
        (t_{1}^{\mathsf{N}} \circ 
        t_{0}^{\mathsf{N}})(\ast,A)
        = \int_{\mathbb{R}_{[0,1]}}
        (t_{1}^{\mathsf{E} \circ \mathsf{r}_{a}} \circ 
        t_{0}^{\mathsf{E} \circ \mathsf{r}_{a}})(\ast,A) \; \mathrm{d}a
        = \int_{\mathbb{R}_{[0,1]}}
        k(a,A) \; \mathrm{d}a.
      \end{equation*}
    \item When
      $\mathtt{M}=\mathtt{fix}_{\mathtt{A},\mathtt{B}}(\mathtt{f},\mathtt{x},\mathtt{N})$,
      for simplicity, we suppose that $\mathtt{M}$ is a closed term. By
      induction hypothesis, we can check that
      \begin{equation*}
        (\mathtt{M},
        \Bsem{\lambda \mathtt{x}^{\mathtt{A}}.\,\mathtt{N}}
        \circ \oc\Bsem{\lambda \mathtt{x}^{\mathtt{A}}.\,\mathtt{N}}
        \circ \cdots
        \circ \oc^{k}\Bsem{\lambda \mathtt{x}^{\mathtt{A}}.\,\mathtt{N}}
        \circ \mathsf{bot}_{\mathsf{I},\oc^{k}\bsem{\mathtt{A}\to\mathtt{B}}})
        \in R_{\mathtt{A}\to\mathtt{B}}
      \end{equation*}
      by induction on $n$. By 
      Lemma~\ref{lem:lrp} and Proposition~\ref{prop:iterp}, 
      we obtain
      $(\mathtt{M},\sem{\mathtt{M}}) \in R_{\mathtt{A}\to\mathtt{B}}$.
    \end{itemize}
  \end{proof}

  \begin{theorem}
    For any closed term $\vdash \mathtt{M} : \ttreal$, if
    $\mathtt{M} \Rightarrow_{\infty} \mu$, then
    \begin{equation*}
      \mathsf{obs}(\bsem{\mathtt{M}}) \leq \mu.
    \end{equation*}
  \end{theorem}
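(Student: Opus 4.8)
The plan is to run the standard logical-relations (adequacy) argument, instantiating the fundamental lemma at the empty evaluation context, exactly as in the proof of the sampling-based adequacy theorem. Note that $\mathtt{M} \Rightarrow_{\infty} \mu$ means $\mu = \sup_n \mu_n$ for the ascending chain with $\mathtt{M} \Rightarrow_n \mu_n$, so this inequality together with Proposition~\ref{prop:soundp} (which gives $\mu_n \leq \mathsf{obs}(\bsem{\mathtt{M}})$, hence $\mu \leq \mathsf{obs}(\bsem{\mathtt{M}})$) will yield the equality $\mathsf{obs}(\bsem{\mathtt{M}}) = \mu$ of Theorem~\ref{thm:padq}; here I only need the direction $\mathsf{obs}(\bsem{\mathtt{M}}) \leq \mu$.

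First I would show that the pair $([-], \mathsf{j} \otimes \mathsf{id}_{\oc \mathsf{R}})$, consisting of the empty context and the monad unit superposed with the identity on $\oc \mathsf{R}$, lies in the context relation $S_{\ttreal}^{\top}$. By definition of $S_{\ttreal}^{\top}$, and since $S_{\ttreal} = \{(\mathtt{r}_a, \mathsf{r}_a)\}$, this reduces to checking $(\mathtt{r}_a, (\mathsf{j} \otimes \mathsf{id}_{\oc \mathsf{R}}) \circ \oc \mathsf{r}_a) \in O_{\mathrm{d}}$ for every $a$. As $\mathtt{r}_a \Rightarrow_{\infty} \delta_a$, this amounts to the elementary computation $\mathsf{obs}((\mathsf{j} \otimes \mathsf{id}_{\oc \mathsf{R}}) \circ \oc \mathsf{r}_a) = \delta_a$, read off directly from the transition relations of $\mathsf{r}_a$, $\oc(-)$ and $\mathsf{j}$.

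Next I would invoke the Basic Lemma (Lemma~\ref{lem:basicp}) for the closed, substitution-free term $\vdash \mathtt{M} : \ttreal$, obtaining $(\mathtt{M}, \bsem{\mathtt{M}}) \in \overline{S}_{\ttreal}$. Unfolding the definition of $\overline{S}_{\ttreal}$ at the pair from the previous paragraph gives $(\mathtt{M}, \mathsf{N}) \in O_{\mathrm{d}}$, where $\mathsf{N} = (\mathsf{n} \otimes \oc \mathsf{R}) \circ (\mathsf{J} \otimes (\mathsf{j} \otimes \mathsf{id}_{\oc \mathsf{R}})) \circ \bsem{\mathtt{M}}$. It then remains to identify $\mathsf{N}$ with $\bsem{\mathtt{M}}$: the composite $(\mathsf{n} \otimes \oc \mathsf{R}) \circ (\mathsf{J} \otimes (\mathsf{j} \otimes \mathsf{id}_{\oc \mathsf{R}}))$ is the left unit law of the state monad $\mathsf{J} \otimes (-)$ and equals $\mathsf{id}_{\mathsf{J} \otimes \oc \mathsf{R}}$, which I would justify diagrammatically via Proposition~\ref{prop:beh_eqp}, using that $\mathsf{n}$ and $\mathsf{j}$ are assembled from $\mathsf{counit}$ and $\mathsf{unit}$. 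Hence $\mathsf{N} \simeq \bsem{\mathtt{M}}$; since $\mathsf{obs}(-)$ is defined purely from the input/output behaviour of a machine started in its initial state, it is invariant under $\simeq$, so $(\mathtt{M}, \bsem{\mathtt{M}}) \in O_{\mathrm{d}}$, which by definition of $O_{\mathrm{d}}$ is exactly the claim.

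The genuine content of the theorem is packaged inside the Basic Lemma, which I am assuming; within the outer argument above the only delicate points are the two I flagged. The likeliest source of friction is the bookkeeping in the unit-law simplification -- getting the tensor bracketings right so that Convention~\ref{conv:ox} applies and the $\mathsf{j}$ and $\mathsf{n}$ cancel cleanly -- together with spelling out, once and for all, that $\mathsf{obs}$ factors through behavioural equivalence. Neither step involves any measure-theoretic subtlety; the use of the Fubini-Tonelli theorem and of $\sigma$-finiteness of the Borel measure is confined to the sampling clause of the Basic Lemma and of Proposition~\ref{prop:soundp}.
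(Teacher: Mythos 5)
Your proof is correct and takes essentially the same route as the paper's: the paper's proof is precisely the observation that $([-],\mathsf{j} \otimes \mathsf{id}_{\oc \mathsf{R}})$ lies in $S_{\ttreal}^{\top}$, followed by an appeal to the Basic Lemma (Lemma~\ref{lem:basicp}) and unfolding of $O_{\mathrm{d}}$. The three bookkeeping steps you make explicit --- verifying the membership of the empty-context pair via $\mathsf{obs}((\mathsf{j}\otimes\mathsf{id}_{\oc\mathsf{R}})\circ\oc\mathsf{r}_{a})=\delta_{a}$, the monad unit-law simplification, and invariance of $\mathsf{obs}$ under behavioural equivalence --- are exactly what the paper leaves implicit.
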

  \begin{proof}
    By soundness, we have
    $\mu \leq \mathsf{obs}(\bsem{\mathtt{M}})$.
    On the other hand, because $([-],\mathsf{j} \otimes \mathsf{id}_{\oc \mathsf{R}})$
    is an element of $S_{\ttreal}^{\top}$,
    we obtain the other inequality
    by Lemma~\ref{lem:basicp}.
  \end{proof}

  \subsubsection{Induction step on recursion}
  \label{sec:expandp}

  \paragraph{Our Goal: Approximation Lemma}

  Let $\mathsf{M} \colon \oc \mathsf{X} \rightarrowtriangle \mathsf{X}$ be a
  Mealy machine. In this section, we show that
  a Mealy machine
  $\mathsf{M}^{\dagger} \colon \mathsf{I} \to \oc\mathsf{X}$ given by
  \begin{equation*}
    \mathsf{M}^{\dagger} =
    (\mathsf{counit}_{\oc\mathsf{X}} \otimes \mathsf{M})
    \circ ((\mathsf{con}_{\mathsf{X}} \circ \oc\mathsf{M} \circ \mathsf{dg}_{\mathsf{X}})
    \otimes \mathsf{id}_{\oc\mathsf{X}})
    \circ \mathsf{unit}_{\oc\mathsf{X}}.
  \end{equation*}
  is a ``least'' fixed point of $\mathsf{M}$. Diagrammatically,
  $\mathsf{M}^{\dagger}$ consists of digging, contraction and a feed
  back loop:
  \begin{center}
    \begin{tikzpicture}[rounded corners]
      \node [draw,circle,obj] (dg) at (0,0) {$\mathsf{dg}_{\mathsf{X}}$};
      \node [draw,circle,obj] (ocM) at (2,0) {$\oc\mathsf{M}$};
      \node [draw,circle,obj] (c) at (4,0) {$\mathsf{con}_{\mathsf{X}}$};
      \node [draw,circle,obj] (M) at (6,-0.5) {$\mathsf{M}$};

      \draw (dg) -- node[above,obj] {$\oc\oc \mathsf{X}$} (ocM);
      \draw (ocM) -- node[above,obj] {$\oc \mathsf{X}$} (c);
      \draw (c) -- ++ (1,-0.5) -- node[below,obj] {$\oc \mathsf{X}$} (M);
      \draw (M) -- node[below,obj] {$\mathsf{X}$} ++ (1,0);
      \draw (dg) -- node[above,obj] {$\oc \mathsf{X}$} ++ (-1,0);
      \draw (-1,0) arc(270:90:0.5);
      \draw (-1,1) -- ++ (6,0);
      \draw (c) -- ++ (0.5,0.5) -- node[below,obj] {$\oc \mathsf{X}$} ++ (0.5,0);
      \draw (5,0.5) arc(-90:90:0.25);
    \end{tikzpicture}
  \end{center}
  This construction already appeared in the interpretation of the fixed point
  operator. In fact, for a term
  $\mathtt{f}:\mathtt{A} \to \mathtt{B}, \mathtt{x}:\mathtt{A} \vdash
  \mathtt{M}:\mathtt{B}$, we have
  $\sem{\mathtt{fix}_{\mathtt{A},\mathtt{B}}(\mathtt{f},\mathtt{x},\mathtt{M})}
  = \sem{\lambda \mathtt{x}^{\mathtt{A}}.\,\mathtt{M}}^{\dagger}$.

  \paragraph{Parametrized Modal Operator and Parametrized Loop Operator}

  We introduce parametrization of the modal operator $\oc$ and the
  loop operator $(-)^{\dagger}$. For $\alpha \subseteq \mathbb{N}$,
  we define $\oc_{\alpha} \mathsf{M}$ by:
  the state space and the initial state of $\oc \mathsf{M}$ are given
  by
  \begin{equation*}
    \state{\oc_{\alpha}\mathsf{M}}
    = \state{\oc\mathsf{M}} = \state{\mathsf{M}}^{\mathbb{N}},
    \qquad
    \init{\oc_{\alpha}\mathsf{M}} = \init{\oc\mathsf{M}}
  \end{equation*}
  and $\tran{\oc_{\alpha}\mathsf{M}}$ is a unique s-finite kernel such
  that the following diagrams commute:
  \begin{itemize}
  \item for any $n \in \alpha$,
    \begin{equation*}
      \xymatrix@C=30mm{
        (X^{+} + Y^{-})
        \times \state{\mathsf{M}} \times \state{\oc_{\alpha}\mathsf{M}}
        \ar@{~>}[d]_{\tran{\mathsf{M}} \otimes \state{\oc_{\alpha}\mathsf{M}}}
        \ar@{~>}[r]^-{(\mathrm{inj}_{n} \oplus \mathrm{inj}_{n}) \otimes \mathrm{ins}_{n}}
        &
        ((\mathbb{N} \times X^{+}) + (\mathbb{N} \times Y^{-}))
        \times \state{\oc_{\alpha}\mathsf{M}}
        \ar@{~>}[d]^{\tran{\oc_{\alpha}\mathsf{M}}} \\
        (Y^{+} + X^{-}) \times \state{\mathsf{M}} \times \state{\oc_{\alpha}\mathsf{M}}
        \ar@{~>}[r]^-{(\mathrm{inj}_{n} \oplus \mathrm{inj}_{n}) \otimes \mathrm{ins}_{n}}
        &
        ((\mathbb{N} \times Y^{+}) + (\mathbb{N} \times X^{-}))
        \times \state{\oc_{\alpha}\mathsf{M}}
        \nullcomma
      }
    \end{equation*}
  \item for any $n \not\in \alpha$,
    \begin{equation*}
      \xymatrix@C=30mm{
        (X^{+} + Y^{-})
        \times \state{\mathsf{M}} \times \state{\oc_{\alpha}\mathsf{M}}
        \ar@{~>}[d]_{\tran{\mathsf{M}} \otimes \state{\oc_{\alpha}\mathsf{M}}}
        \ar@{~>}[r]^-{(\mathrm{inj}_{n} \oplus \mathrm{inj}_{n}) \otimes \mathrm{ins}_{n}}
        &
        ((\mathbb{N} \times X^{+}) + (\mathbb{N} \times Y^{-}))
        \times \state{\oc_{\alpha}\mathsf{M}}
        \ar@{~>}[d]^{\emptyset} \\
        (Y^{+} + X^{-}) \times \state{\mathsf{M}} \times \state{\oc_{\alpha}\mathsf{M}}
        \ar@{~>}[r]^-{(\mathrm{inj}_{n} \oplus \mathrm{inj}_{n}) \otimes \mathrm{ins}_{n}}
        &
        ((\mathbb{N} \times Y^{+}) + (\mathbb{N} \times X^{-}))
        \times \state{\oc_{\alpha}\mathsf{M}}
        \nullcomma
      }
    \end{equation*}
  \end{itemize}
  Let $\alpha_{n},\beta_{n} \subseteq \mathbb{N}$ be
  \begin{equation*}
    \alpha_{0} = \emptyset,
    \qquad
    \beta_{n} = \{\langle i,j \rangle : i \in \alpha_{n} \textnormal{ and } j \in \mathbb{N}\},
    \qquad
    \alpha_{n+1}
    = \{2i : i \in \mathbb{N}\} \cup
    \{2i+1 : i \in \beta_{n}\}.
  \end{equation*}
  The definition of $\alpha_{n}$ and $\beta_{n}$ are motivated by the following lemma.
  \begin{lemma}\label{lem:alphabetap}
    For any $n \in \mathbb{N}$ and for any
    $\mathsf{M} \colon \mathsf{X} \rightarrowtriangle \mathsf{Y}$, we have
    \begin{equation*}
      \mathsf{con}_{\mathsf{Y}} \circ \oc_{\alpha_{n+1}} \mathsf{M}
      \simeq
      (\oc \mathsf{M} \otimes \oc_{\beta_{n}} \mathsf{M}) \circ \mathsf{con}_{\mathsf{X}},
      \qquad
      \mathsf{dg}_{\mathsf{Y}} \circ \oc_{\beta_{n+1}} \mathsf{M}
      \simeq
      \oc_{\alpha_{n}} \oc \mathsf{M} \circ \mathsf{dg}_{\mathsf{X}}.
    \end{equation*}
  \end{lemma}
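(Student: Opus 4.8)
The plan is to prove each of the two behavioural equivalences directly from the definition of $\sim_{\mathsf{X},\mathsf{Y}}$ given in Section~\ref{sec:pbhe}: I would exhibit a measurable bijection $f$ between the two state spaces that preserves initial states and makes the defining transition square commute, and then pass to $\simeq$ as its symmetric--transitive closure. The key observation is that $\mathsf{con}$ and $\mathsf{dg}$ are \emph{stateless} token machines, so on both sides of each equivalence all state lives in the $\oc_{\bullet}\mathsf{M}$-factors, whose state space is $\state{\mathsf{M}}^{\mathbb{N}}$. The realizing function will therefore be nothing but a measurable \emph{reindexing} of these $\state{\mathsf{M}}^{\mathbb{N}}$-factors, induced by the same bijection on $\mathbb{N}$ that the token machine performs on \emph{indices}.

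Concretely, for the first equivalence the left-hand side has state space $\state{\mathsf{M}}^{\mathbb{N}}$ and the right-hand side has state space $\state{\mathsf{M}}^{\mathbb{N}} \times \state{\mathsf{M}}^{\mathbb{N}}$, and I would take $f$ to be the de-interleaving isomorphism $(s_k)_{k \in \mathbb{N}} \mapsto ((s_{2i})_{i}, (s_{2i+1})_{i})$, matching the even/odd index split that $\mathsf{con}$ effects (even indices routed to the first copy, odd indices to the second). For the second equivalence the state spaces are $\state{\mathsf{M}}^{\mathbb{N}}$ on the left and $\state{\mathsf{M}}^{\mathbb{N}\times\mathbb{N}}$ on the right, and $f$ is the Cantor-pairing reindexing $(s_k)_{k} \mapsto (s_{\langle p,q \rangle})_{(p,q)}$, matching the behaviour $\tran{\mathsf{dg}} \colon (n,(m,x)) \leftrightarrow (\langle n,m\rangle, x)$. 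In each case I would then reduce the commuting square to a per-index check: an environment query on a single port of $\oc\mathsf{Y}$ (or $\oc\mathsf{X}$) is, on both sides, either rejected or routed through exactly one invocation of $\tran{\mathsf{M}}$ on a single copy of $\mathsf{M}$, and $f$ identifies the copy used on the left with the copy used on the right. Because $\mathsf{M}$ is probabilistic, the square is an equality of s-finite kernels rather than of partial functions, but both sides are assembled from the \emph{one} kernel $\tran{\mathsf{M}}$ by the same injections, $\mathrm{ins}_n$-insertions and $\oplus/\otimes$-tuplings, which commute with the measurable bijection $\widehat{f}$ by the functoriality facts of Section~\ref{sec:pconst}; hence the verification is structurally identical to that of the deterministic Lemma~\ref{lem:alphabeta}, with measurable functions replaced by kernels throughout. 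The residual stateless rewiring is then absorbed by Proposition~\ref{prop:beh_eqp}.

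I expect the main obstacle to be purely combinatorial rather than measure-theoretic: namely checking that the ``accept/reject'' domains of the two sides coincide under the reindexing. This is exactly where the recursive definitions of $\alpha_n$ and $\beta_n$ are used. For the digging identity one must verify that testing membership of the merged index in the restriction set $\beta_{\bullet}$ after $\mathsf{dg}$ agrees with testing the \emph{first Cantor component} against $\alpha_{\bullet}$ after reindexing, which holds precisely because $\langle p,q\rangle \in \beta_m \iff p \in \alpha_m$; and for the contraction identity one must check that the odd indices admitted by $\oc_{\alpha_{n+1}}$ after $\mathsf{con}$ are exactly $\{2i+1 : i \in \beta_n\}$, so that the second copy becomes the $\beta_n$-restriction while the first copy (even indices, all admitted) becomes the full $\oc\mathsf{M}$. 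Once this index bookkeeping is pinned down, preservation of initial states, measurability of $f$, and the kernel equality all follow routinely.
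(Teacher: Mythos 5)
Your overall strategy is sound, and it is in fact the proof the paper leaves out: neither Lemma~\ref{lem:alphabeta} nor its probabilistic counterpart Lemma~\ref{lem:alphabetap} is proved in the text, and the reindexing maps you propose are exactly the ones the authors deploy one lemma later --- the realizer $u_{X}$ in Lemma~\ref{lem:expandp} is precisely your even/odd de-interleaving followed by your Cantor reindexing on the odd part, and Lemma~\ref{lem:phip} plays the same bookkeeping role for iterated $\oc$. Your handling of the probabilistic side is also adequate: the realizers are measurable bijections for the product $\sigma$-algebras in question, the all-$\init{\mathsf{M}}$ initial states are preserved, and since $\mathsf{con}$ and $\mathsf{dg}$ are stateless deterministic one-pass routers (they never send a token back into the $\oc_{\bullet}\mathsf{M}$-factor it just came from), the countable join in the definition of the composite transition kernel collapses, so the commuting square really does reduce to the per-index comparison of copies of $\tran{\mathsf{M}}$ that you describe. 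Your identification of the accept/reject combinatorics for the contraction identity (evens unrestricted, odds $2i+1$ admitted iff $i \in \beta_{n}$, matching $\alpha_{n+1} = \{2i : i \in \mathbb{N}\} \cup \{2i+1 : i \in \beta_{n}\}$) is correct.

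There is, however, one point you must confront rather than elide with the subscript ``$\bullet$''. Your key fact for the digging identity, $\langle p,q\rangle \in \beta_{m} \iff p \in \alpha_{m}$, yields the equivalence with the \emph{same} index on both sides, i.e.\ $\mathsf{dg}_{\mathsf{Y}} \circ \oc_{\beta_{m}}\mathsf{M} \simeq \oc_{\alpha_{m}}\oc\mathsf{M} \circ \mathsf{dg}_{\mathsf{X}}$. The lemma as printed pairs $\beta_{n+1}$ with $\alpha_{n}$, and that statement is false whenever $\tran{\mathsf{M}}$ is not the zero kernel: for $n=0$ the right-hand side is the empty restriction $\oc_{\alpha_{0}}\oc\mathsf{M} = \oc_{\emptyset}\oc\mathsf{M}$, whose composite with $\mathsf{dg}_{\mathsf{X}}$ never fires, while the left-hand side acts nontrivially on every index $\langle p,q\rangle$ with $p$ even, since $\beta_{1}$ consists of exactly those pairings. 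So the printed indices are an off-by-one typo --- the matched-index form is also the one actually invoked in Figure~\ref{fig:fix}, where $\mathsf{dg}\circ\oc_{\beta_{n}}\mathsf{M}$ is replaced by $\oc_{\alpha_{n}}\oc\mathsf{M}\circ\mathsf{dg}$ --- but your write-up claims to prove the statement as given while invoking a fact that contradicts it. You should say explicitly that you are proving the matched-index version, and that this is the version required downstream in Lemma~\ref{lem:expandp}; otherwise the final step of your ``index bookkeeping'' does not go through.
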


  By means of $\oc_{\alpha}$, we also parametrize the operator
  $(-)^{\dagger}$. For $\alpha \subseteq \mathbb{N}$, and for
  $\mathsf{M} \colon \oc\mathsf{X} \rightarrowtriangle \mathsf{X}$, we define
  $\mathsf{M}^{\dagger,\alpha} \colon \mathsf{I} \to \oc\mathsf{X}$ by
  \begin{equation*}
    \mathsf{M}^{\dagger,\alpha} =
    (\mathsf{counit}_{\oc\mathsf{X}} \otimes \mathsf{M})
    \circ ((\mathsf{con}_{\mathsf{X}} \circ \oc_{\alpha}\mathsf{M}) \circ \mathsf{dg}_{\mathsf{X}})
    \otimes \mathsf{id}_{\oc\mathsf{X}}
    \circ \mathsf{unit}_{\oc\mathsf{X}}.
  \end{equation*}
  Because $\oc_{\mathbb{N}} \mathsf{M} = \oc \mathsf{M}$,
  we have $\mathsf{M}^{\dagger} = \mathsf{M}^{\dagger,\mathbb{N}}$.
  \begin{lemma}
    For any $\alpha \subseteq \mathbb{N}$, we have
    \begin{equation*}
      \state{\mathsf{M}^{\dagger,\alpha}} =
      \state{\mathsf{M}^{\dagger}},
      \qquad
      \init{\mathsf{M}^{\dagger,\alpha}} =
      \init{\mathsf{M}^{\dagger}}.
    \end{equation*}
    Below, we write
    $h(\mathsf{M}) \colon \state{\mathsf{M}^{\dagger}} \to
    \state{\mathsf{M}} \times \state{\mathsf{M}}^{\mathbb{N}}$ for the
    isomorphism obtained by applying
    $1 \times (-) \cong (-)$ to $\state{\mathsf{M}}$.
  \end{lemma}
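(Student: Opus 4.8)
The plan is to reduce the claim to the elementary observation that the state space and the initial state of a Mealy machine obtained by composition and monoidal product depend only on the state spaces and initial states of its building blocks, and not at all on their transition kernels. First I would recall, directly from the definition of the parametrised modal operator in Section~\ref{sec:ocp}, that $\oc_{\alpha}\mathsf{M}$ and $\oc\mathsf{M}$ share both their state space and their initial state:
\begin{equation*}
  \state{\oc_{\alpha}\mathsf{M}} = \state{\mathsf{M}}^{\mathbb{N}}
  = \state{\oc\mathsf{M}},
  \qquad
  \init{\oc_{\alpha}\mathsf{M}} = \init{\oc\mathsf{M}}.
\end{equation*}
The two operators differ solely in their transition kernels, where $\oc_{\alpha}$ simply blocks every index lying outside $\alpha$.

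Next I would place the two defining expressions side by side. Comparing the definition of $\mathsf{M}^{\dagger}$ with that of $\mathsf{M}^{\dagger,\alpha}$, every constituent token machine---namely $\mathsf{unit}_{\oc\mathsf{X}}$, $\mathsf{dg}_{\mathsf{X}}$, $\mathsf{con}_{\mathsf{X}}$, $\mathsf{id}_{\oc\mathsf{X}}$ and $\mathsf{counit}_{\oc\mathsf{X}}$---occurs identically in both, as does the factor $\mathsf{M}$; the \emph{only} point of difference is that the inner $\oc\mathsf{M}$ of $\mathsf{M}^{\dagger}$ is replaced by $\oc_{\alpha}\mathsf{M}$ in $\mathsf{M}^{\dagger,\alpha}$. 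Thus the two are identical as formal composites of Mealy machines once one matches $\oc\mathsf{M}$ with $\oc_{\alpha}\mathsf{M}$, and it remains only to propagate the agreement of state spaces and initial states through $\circ$ and $\otimes$.

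Finally I would invoke the definitions of composition and monoidal product of probabilistic Mealy machines from Section~\ref{sec:pconst}: for any $\mathsf{N}\circ\mathsf{P}$ one has $\state{\mathsf{N}\circ\mathsf{P}} = \state{\mathsf{P}}\times\state{\mathsf{N}}$ and $\init{\mathsf{N}\circ\mathsf{P}} = (\init{\mathsf{P}},\init{\mathsf{N}})$, and likewise for $\mathsf{P}\otimes\mathsf{Q}$, so that these data are computed from the corresponding data of the operands alone. A straightforward structural induction over the two identically shaped expression trees, using the base equalities displayed above at the single $\oc\mathsf{M}$ / $\oc_{\alpha}\mathsf{M}$ leaf and plain reflexivity at every other leaf, then yields $\state{\mathsf{M}^{\dagger,\alpha}} = \state{\mathsf{M}^{\dagger}}$ and $\init{\mathsf{M}^{\dagger,\alpha}} = \init{\mathsf{M}^{\dagger}}$. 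There is essentially no obstacle here: the content is pure bookkeeping, and the only point requiring care is to confirm that the transition kernels---which \emph{do} differ between $\oc\mathsf{M}$ and $\oc_{\alpha}\mathsf{M}$---play no role whatsoever in determining the state space or the initial state of a composite, exactly as in the deterministic analogue recorded in Proposition~\ref{prop:iter}.
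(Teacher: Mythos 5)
Your proposal is correct and matches the paper's treatment: the paper states this lemma without proof, regarding it as immediate from the fact that $\state{\oc_{\alpha}\mathsf{M}} = \state{\oc\mathsf{M}}$ and $\init{\oc_{\alpha}\mathsf{M}} = \init{\oc\mathsf{M}}$ by definition, together with the observation that composition and monoidal product of (probabilistic) Mealy machines compute state spaces and initial states purely from those of their operands---exactly the bookkeeping argument you spell out, and the same one the paper uses explicitly in the proof of Proposition~\ref{prop:iterp}. Nothing is missing; your structural-induction phrasing just makes the implicit propagation step explicit.
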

  
  \begin{lemma}\label{lem:phip}
    There is a family of measurable functions
    $\phi_{X} \colon (X^{\mathbb{N}})^{\mathbb{N}} \to
    (X^{\mathbb{N}})^{\mathbb{N}}$ such that the following diagram
    commutes:
    \begin{equation*}
      \xymatrix@C=15mm{
        (X^{\mathbb{N}})^{\mathbb{N}}
        \ar[d]_-{u_{X^{\mathbb{N}}}}
        \ar[rrdd]^{\phi_{X}} & & \\
        (X^{\mathbb{N}})^{\mathbb{N}}
        \times ((X^{\mathbb{N}})^{\mathbb{N}})^{\mathbb{N}}
        \ar[d]_-{(X^{\mathbb{N}})^{\mathbb{N}} \times \phi_{X^{\mathbb{N}}}} 
        & & \\
        (X^{\mathbb{N}})^{\mathbb{N}} \times
        ((X^{\mathbb{N}})^{\mathbb{N}})^{\mathbb{N}}
        \ar[r]_-{\cong}
        &
        (X^{\mathbb{N}} \times (X^{\mathbb{N}})^{\mathbb{N}})^{\mathbb{N}}
        \ar[r]_-{(u_{X}^{-1})^{\mathbb{N}}}
        &
        (X^{\mathbb{N}})^{\mathbb{N}} \nulldot
      }
    \end{equation*}
    where
    $u_{X} \colon X^{\mathbb{N}} \to X^{\mathbb{N}} \times
    (X^{\mathbb{N}})^{\mathbb{N}}$ is a measurable isomorphism given by
    \begin{equation*}
      u_{X}(x_{n})_{n \in \mathbb{N}}
      = ((x_{2n})_{n \in \mathbb{N}},
      ((x_{2\langle m_{0},m_{1} \rangle+1}
      )_{m_{0} \in \mathbb{N}})_{m_{1} \in \mathbb{N}}).
    \end{equation*}
  \end{lemma}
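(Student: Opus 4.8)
The plan is to exhibit the whole family $\{\phi_X\}$ uniformly as a single \emph{reindexing} of coordinates. First I would observe that every arrow occurring in the diagram --- the isomorphism $u_{X^{\mathbb{N}}}$, the canonical ``zip'' isomorphism $(X^{\mathbb{N}})^{\mathbb{N}} \times ((X^{\mathbb{N}})^{\mathbb{N}})^{\mathbb{N}} \cong (X^{\mathbb{N}} \times (X^{\mathbb{N}})^{\mathbb{N}})^{\mathbb{N}}$, and $(u_X^{-1})^{\mathbb{N}}$ --- is induced by a fixed rearrangement of the countable coordinate set. Identifying $(X^{\mathbb{N}})^{\mathbb{N}}$ with $X^{\mathbb{N}\times\mathbb{N}}$ (the two $\sigma$-algebras agree, as both are generated by the coordinate projections), each of these maps has the form $\xi \mapsto \xi\circ r$ for a bijection $r$ of the relevant index set, hence is a measurable isomorphism. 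It therefore suffices to seek $\phi_X$ of the same shape, namely $\phi_Y := (\,\xi \mapsto \xi \circ \sigma\,)$ for a single $\sigma \colon \mathbb{N}\times\mathbb{N} \to \mathbb{N}\times\mathbb{N}$ that is \emph{independent of the space} $Y$. With this shape $\phi_{X^{\mathbb{N}}}$ reindexes the first two $\mathbb{N}$-coordinates by $\sigma$ and leaves the innermost $X$-coordinate untouched, which is exactly what makes the recursion below self-consistent.

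Second I would unfold the right-hand composite once, tracking a single output coordinate $(p,q)$ through $u_{X^{\mathbb{N}}}$, then $\mathrm{id}\times\phi_{X^{\mathbb{N}}}$, then the zip, then $(u_X^{-1})^{\mathbb{N}}$. Splitting $q$ into its even part and its odd part (written via the Cantor pairing as in the paper), the commutation requirement collapses to the self-referential system $\sigma(p,2n) = (2p,n)$ together with $\sigma(p, 2\langle m_0,m_1\rangle+1) = (2\langle t,s\rangle+1,\ m_0)$, where $(s,t) = \sigma(p,m_1)$. These two clauses are precisely the coordinatewise content of the diagram, so any total $\sigma$ solving them yields a $\phi_X$ that makes the diagram commute for every $X$ at once.

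Third I would solve this system by well-founded recursion on the second argument. The even clause terminates immediately; the odd clause reduces the value of $\sigma$ at $q = 2\langle m_0,m_1\rangle+1$ to its value at $m_1$, and since the Cantor pairing satisfies $\langle m_0,m_1\rangle \ge m_1$ we have $m_1 < q$, so the recursion is productive and determines $\sigma$ totally and uniquely. Measurability of $\phi_X$ is then immediate, since composing $\phi_X$ with the coordinate projection $\pi_{p,q}$ gives $\pi_{\sigma(p,q)}$, and a map into a countable product is measurable as soon as all its coordinate compositions are. Reading the unfolding of the first step backwards then confirms that this $\phi_X$ satisfies the required equation.

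The main obstacle I anticipate is bookkeeping rather than conceptual depth: one must be scrupulous about the outer/inner index conventions for the nested powers $(X^{\mathbb{N}})^{\mathbb{N}}$ and $((X^{\mathbb{N}})^{\mathbb{N}})^{\mathbb{N}}$ while chasing a coordinate through the even/odd split and the Cantor-pairing shuffle, and one must check that the \emph{same} $\sigma$ legitimately serves as $\phi_{X^{\mathbb{N}}}$ (naturality in the space), which is what guarantees that the odd clause closes up. The termination estimate $m_1 < 2\langle m_0,m_1\rangle + 1$ is the key quantitative input, and it is the only place where a property of the pairing beyond bijectivity is needed; note also that the lemma asks only for existence and measurability of $\phi_X$, so I need not verify that $\sigma$ is a bijection (though it is).
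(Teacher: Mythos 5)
Your proposal is correct and is essentially the paper's own argument: the paper defines $(\phi_{X}(x))(a,b)$ by the same even/odd Cantor-pairing recursion (its odd clause routes through $\phi_{X^{\mathbb{N}}}$ applied to a reindexed element $x'$, which unfolds to exactly your space-independent index map $\sigma$, up to the inner-first versus outer-first ordering convention), with well-foundedness justified by the same inequality $\langle m_{0},m_{1}\rangle \geq m_{1}$. Your repackaging of the whole family as precomposition with a single $\sigma \colon \mathbb{N}\times\mathbb{N} \to \mathbb{N}\times\mathbb{N}$ merely makes explicit the uniformity in $X$ and the coordinate-projection measurability argument that the paper leaves implicit (``it is straightforward to check'').
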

  \begin{proof}
    In this proof, for sets $N_{1},N_{2},\ldots,N_{k}$, we identify
    elements in $(((X^{N_{1}})^{N_{2}}) \cdots)^{N_{k}}$ with functions
    from $N_{1} \times N_{2} \times \cdots \times N_{k}$ to $X$. For
    $x \in (X^{\mathbb{N}})^{\mathbb{N}}$ and $(a,b) \in \mathbb{N}$, We
    define $(\phi_{X}(x))(a,b)$ by induction on $a$:
    \begin{equation*}
      (\phi_{X}(x))(a,b) =
      \begin{cases}
        x(a',2b)
        , & \textnormal{if } a = 2a'; \\
        (\phi_{X^{\mathbb{N}}}(x'))(a_{0},a_{1},b)
        , & \textnormal{if } a = 2 \langle a_{0},a_{1} \rangle + 1.
      \end{cases}
    \end{equation*}
    where $x' \in ((X^{\mathbb{N}})^{\mathbb{N}})^{\mathbb{N}}$ is given by
    \begin{equation*}
      x'(n,m,l) = x(n,2\langle m,l \rangle + 1).
    \end{equation*}
    We note that this definition makes sense
    because $a = 2 \langle a_{0},a_{1} \rangle + 1$ implies
    $a_{1} < a$. It is straightforward to check the family $\phi_{X}$
    makes the above diagram commute.
  \end{proof}

  \begin{lemma}\label{lem:expandp}
    For any Mealy machine $\mathsf{M} \colon \oc \mathsf{X}
    \rightarrowtriangle \mathsf{X}$, we have
    \begin{equation*}
      \mathsf{M}^{\dagger,\alpha_{n + 1}}
      \simeq \mathsf{M} \circ (\oc \mathsf{M})^{\dagger,\alpha_{n}},
    \end{equation*}
    which is realized by $u'_{\mathsf{M}}$ given by
    \begin{equation*}
      \state{\mathsf{M}^{\dagger,\alpha_{n+1}}}
      \cong \state{\mathsf{M}} \times \state{\mathsf{M}}^{\mathbb{N}}
      \xrightarrow{\state{\mathsf{M}} \times u_{\state{\mathsf{M}}}}
      \state{\mathsf{M}} \times \state{\mathsf{M}}^{\mathbb{N}}
      \times (\state{\mathsf{M}}^{\mathbb{N}})^{\mathbb{N}}
      \cong
      \state{\mathsf{M} \circ (\oc \mathsf{M})^{\dagger,\alpha_{n}}}
    \end{equation*}
    where the first and the last isomorphisms are
    obtained by applying
    canonical isomorphisms $1 \times (-) \cong (-)$ and
    $(-) \times 1 \cong (-)$.
  \end{lemma}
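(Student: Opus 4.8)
The plan is to transport the diagrammatic proof of the non-probabilistic counterpart (Lemma~\ref{lem:expand}, established through Figure~\ref{fig:fix}) into the category $\mathbf{pMealy}$, and then to read off the realizing map. This transport is legitimate because $\mathbf{pMealy}$ is compact closed (Proposition~\ref{prop:cptp}) and diagrammatic reasoning is sound there (Proposition~\ref{prop:beh_eqp}); moreover the probabilistic analogues $\mathsf{con}$, $\mathsf{dg}$, $\oc_{\alpha}(-)$ and the two commutation equivalences of Lemma~\ref{lem:alphabetap} are precisely the ingredients that drive Figure~\ref{fig:fix}. First I would unfold $\mathsf{M}^{\dagger,\alpha_{n+1}}$ according to its definition in Section~\ref{sec:expandp}, obtaining the diagram built from $\mathsf{dg}_{\mathsf{X}}$, the box $\oc_{\alpha_{n+1}}\mathsf{M}$, a contraction $\mathsf{con}_{\mathsf{X}}$, the box $\mathsf{M}$, and a feedback loop closed by $\mathsf{unit}_{\oc\mathsf{X}}$ and $\mathsf{counit}_{\oc\mathsf{X}}$.

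The chain of behavioural equivalences then proceeds in four steps, each the $\mathbf{pMealy}$-version of the corresponding line of Figure~\ref{fig:fix}. Step one applies the first equivalence of Lemma~\ref{lem:alphabetap}, $\mathsf{con}_{\mathsf{X}}\circ\oc_{\alpha_{n+1}}\mathsf{M}\simeq(\oc\mathsf{M}\otimes\oc_{\beta_n}\mathsf{M})\circ\mathsf{con}_{\mathsf{X}}$, splitting the single box into a ``full'' copy $\oc\mathsf{M}$ (the even indices) and a restricted copy $\oc_{\beta_n}\mathsf{M}$ (the odd indices). Step two slides $\oc_{\beta_n}\mathsf{M}$ around the feedback loop so that it sits in series with $\mathsf{dg}_{\mathsf{X}}$; this is a pure rewiring, justified by Proposition~\ref{prop:beh_eqp}. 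Step three applies the second equivalence of Lemma~\ref{lem:alphabetap}, $\mathsf{dg}_{\mathsf{X}}\circ\oc_{\beta_n}\mathsf{M}\simeq\oc_{\alpha_n}\oc\mathsf{M}\circ\mathsf{dg}_{\oc\mathsf{X}}$, turning the $\mathsf{dg}$--$\oc_{\beta_n}\mathsf{M}$ block into a digging followed by $\oc_{\alpha_n}\oc\mathsf{M}$. Step four recognizes, again by Proposition~\ref{prop:beh_eqp}, that the resulting diagram is literally the definition of $\mathsf{M}\circ(\oc\mathsf{M})^{\dagger,\alpha_n}$.

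The content specific to this probabilistic statement, and the part I would spend most effort on, is that the whole chain is realized by the single measurable map $u'_{\mathsf{M}}$ exhibited in the statement. Each rewiring step is realized by a canonical isomorphism permuting the $1$- and state-factors of the composite, hence acts on state as one of the declared isomorphisms $1\times(-)\cong(-)$, $(-)\times1\cong(-)$; each use of Lemma~\ref{lem:alphabetap} is realized by the measurable bijection its proof supplies. Composing these realizers, the even/odd index split performed by $\mathsf{con}_{\mathsf{X}}$ together with the Cantor re-pairing performed by $\mathsf{dg}$ combine, on the state factor $\state{\mathsf{M}}^{\mathbb{N}}$ of $\oc_{\alpha_{n+1}}\mathsf{M}$, into exactly the isomorphism $u_{\state{\mathsf{M}}}\colon\state{\mathsf{M}}^{\mathbb{N}}\to\state{\mathsf{M}}^{\mathbb{N}}\times(\state{\mathsf{M}}^{\mathbb{N}})^{\mathbb{N}}$ of Lemma~\ref{lem:phip}, while the $\state{\mathsf{M}}$ factor contributed by the outermost $\mathsf{M}$ is carried through unchanged; this is precisely $\state{\mathsf{M}}\times u_{\state{\mathsf{M}}}$ up to the canonical $1$-strippings, that is, $u'_{\mathsf{M}}$. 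It then remains to verify that $u'_{\mathsf{M}}$ is a genuine functional bisimulation in the sense of Section~\ref{sec:pbhe}: that it preserves initial states and that the transition-kernel square commutes as an identity of s-finite kernels.

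I expect the hard part to be this last verification, namely matching the concrete index arithmetic. On one side $\mathsf{con}$ uses the even/odd decomposition $n\mapsto n/2$ and $n\mapsto(n-1)/2$, and $\mathsf{dg}$ uses the Cantor pairing $\langle\cdot,\cdot\rangle$; on the other side $u_X$ is defined by exactly these splittings. Reconciling the nested occurrences — the $\oc_{\alpha_n}\oc\mathsf{M}$ copy is itself an iterated $\oc$, so its state tuples are re-indexed a second time — is where the coherence map $\phi_X$ of Lemma~\ref{lem:phip} is needed, and I would use that lemma, together with functoriality of $\oc$ in $\mathbf{pMealy}$, to push the verification through by induction on $n$ rather than by a direct hand computation. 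Since $u'_{\mathsf{M}}$ is deterministic and the structural machines $\mathsf{con}$, $\mathsf{dg}$, $\mathsf{unit}$, $\mathsf{counit}$ carry no probability, the kernel square collapses to the same combinatorial identity as in the deterministic case, so no genuinely measure-theoretic difficulty arises beyond the measurability of $u'_{\mathsf{M}}$, which is immediate from that of $u_{\state{\mathsf{M}}}$.
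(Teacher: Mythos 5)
Your proposal is correct and takes essentially the same route as the paper, whose proof of this lemma is precisely the diagrammatic chain of Figure~\ref{fig:fix}: unfold the definition, split $\oc_{\alpha_{n+1}}\mathsf{M}$ by the first equivalence of Lemma~\ref{lem:alphabetap}, slide $\oc_{\beta_{n}}\mathsf{M}$ around the feedback loop, apply the second equivalence, and recognize the result as $\mathsf{M} \circ (\oc \mathsf{M})^{\dagger,\alpha_{n}}$, with the realizer $u'_{\mathsf{M}}$ obtained by composing the (isomorphism) realizers of the individual steps. The one inessential inaccuracy is in your final paragraph: the coherence map $\phi_{X}$ of Lemma~\ref{lem:phip} and an induction on $n$ are not needed here, because the even/odd split performed by $\mathsf{con}$ and the Cantor unpairing performed by $\mathsf{dg}$ are exactly what $u_{\state{\mathsf{M}}}$ encodes, including the nesting into $(\state{\mathsf{M}}^{\mathbb{N}})^{\mathbb{N}}$; the map $\phi_{X}$ is only required for Lemma~\ref{lem:phi'p}, whose proof in turn uses the present lemma.
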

  \begin{proof}
    See Figure~\ref{fig:fix}.
  \end{proof}
  
  \begin{lemma}\label{lem:phi'p}
    For any Mealy machine
    $\mathsf{M} \colon \oc \mathsf{X} \rightarrowtriangle \mathsf{X}$ and for any
    $n \in \mathbb{N}$,
    \begin{equation*}
      (\oc \mathsf{M})^{\dagger,\alpha_{n}}
      \simeq
      \oc (\mathsf{M}^{\dagger,\alpha_{n}})
    \end{equation*}
    which is realized by $\phi'_{\mathsf{M}}$ given by
    \begin{equation*}
      \xymatrix@C=15mm{
        \makebox[0pt][r]{$\state{(\oc \mathsf{M})^{\dagger,\alpha_{n}}}
          \cong \;$}
        \state{\mathsf{M}}^{\mathbb{N}}
        \times (\state{\mathsf{M}}^{\mathbb{N}})^{\mathbb{N}}
        \ar[r]^-{\state{\mathsf{M}}^{\mathbb{N}} \times
          \phi_{\state{\mathsf{M}}}}
        &
        \state{\mathsf{M}}^{\mathbb{N}}
        \times (\state{\mathsf{M}}^{\mathbb{N}})^{\mathbb{N}}
        \ar[r]^-{\cong}
        &
        (\state{\mathsf{M}}
        \times
        \state{\mathsf{M}}^{\mathbb{N}})^{\mathbb{N}}
        \makebox[0pt][l]{$ \; \cong
          \state{\oc (\mathsf{M}^{\dagger,\alpha_{n}})}$}
      }
    \end{equation*}
    where the first and the last isomorphisms are
    obtained by applying
    canonical isomorphisms $1 \times (-) \cong (-)$ and
    $(-) \times 1 \cong (-)$.
  \end{lemma}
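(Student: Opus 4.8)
The plan is to argue by induction on $n \in \mathbb{N}$, mirroring the diagrammatic proof of its deterministic counterpart Lemma~\ref{lem:phi'}, but now keeping track of the measurable functions realizing each behavioural equivalence so as to identify their composite with the claimed realizer $\phi'_{\mathsf{M}}$. For the base case $n = 0$, since $\alpha_{0} = \emptyset$ the transition kernels of both $(\oc \mathsf{M})^{\dagger,\alpha_{0}}$ and $\oc(\mathsf{M}^{\dagger,\alpha_{0}})$ are the zero kernel; hence any measurable function matching initial states realizes $\sim$, and I would only need to check that $\phi'_{\mathsf{M}}$ sends $\init{(\oc \mathsf{M})^{\dagger,\alpha_{0}}}$ to $\init{\oc(\mathsf{M}^{\dagger,\alpha_{0}})}$, which is immediate from the definition of $\phi_{\state{\mathsf{M}}}$.

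For the induction step I would chain the four equivalences
\begin{align*}
  (\oc \mathsf{M})^{\dagger,\alpha_{n+1}}
  &\simeq \oc \mathsf{M} \circ (\oc \oc \mathsf{M})^{\dagger,\alpha_{n}}
    \tag{Lemma~\ref{lem:expandp}} \\
  &\simeq \oc \mathsf{M} \circ \oc\bigl((\oc \mathsf{M})^{\dagger,\alpha_{n}}\bigr)
    \tag{induction hypothesis} \\
  &\simeq \oc\bigl(\mathsf{M} \circ (\oc \mathsf{M})^{\dagger,\alpha_{n}}\bigr)
    \tag{functoriality of $\oc$} \\
  &\simeq \oc\bigl(\mathsf{M}^{\dagger,\alpha_{n+1}}\bigr)
    \tag{Lemma~\ref{lem:expandp}}
\end{align*}
and record a realizer for each line: $u'_{\oc \mathsf{M}}$ for the first (Lemma~\ref{lem:expandp} applied to $\oc \mathsf{M}$); the induction-hypothesis realizer $\phi'_{\oc \mathsf{M}}$ transported through the composition for the second (composition of probabilistic Mealy machines respects $\sim$, Section~\ref{sec:pconst}); the canonical functoriality isomorphism for the third; and $\bigl((u'_{\mathsf{M}})^{-1}\bigr)^{\mathbb{N}}$ for the last, using that $u'_{\mathsf{M}}$ is a composite of measurable isomorphisms and that applying $\oc$ to a realizer $g$ yields $g^{\mathbb{N}}$ on state spaces. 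Since each of these operations preserves $\sim$, the composite of the four realizers realizes the whole chain, and it then remains to identify this composite with $\phi'_{\mathsf{M}}$.

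This last identification is the crux, and it is exactly what Lemma~\ref{lem:phip} was engineered to provide. Setting $X = \state{\mathsf{M}}$ and unfolding the definitions of $u'_{\oc \mathsf{M}}$ and $u'_{\mathsf{M}}$ from Lemma~\ref{lem:expandp}, the first realizer contributes the factor $u_{\state{\mathsf{M}}^{\mathbb{N}}}$, the induction-hypothesis realizer contributes $\phi_{\state{\mathsf{M}}^{\mathbb{N}}}$, the functoriality isomorphism accounts for the canonical rearrangement $\state{\mathsf{M}}^{\mathbb{N}} \times (\state{\mathsf{M}}^{\mathbb{N}})^{\mathbb{N}} \cong (\state{\mathsf{M}} \times \state{\mathsf{M}}^{\mathbb{N}})^{\mathbb{N}}$, and the final realizer contributes $\bigl(u_{\state{\mathsf{M}}}^{-1}\bigr)^{\mathbb{N}}$, while a leading identity on the spare $\state{\mathsf{M}}^{\mathbb{N}}$ factor is threaded through all four steps. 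These are precisely the four edges of the commuting triangle of Lemma~\ref{lem:phip}, whose diagonal is $\phi_{\state{\mathsf{M}}}$, so the composite collapses to $\state{\mathsf{M}}^{\mathbb{N}} \times \phi_{\state{\mathsf{M}}}$ flanked by the canonical isomorphisms $1 \times (-) \cong (-)$, which is by definition $\phi'_{\mathsf{M}}$. I expect the main obstacle to be purely this bookkeeping of realizers and their state spaces rather than anything conceptual: the only real subtlety is to align the nested powers $\state{\mathsf{M}}^{\mathbb{N}}$ and $(\state{\mathsf{M}}^{\mathbb{N}})^{\mathbb{N}}$ correctly, so that the chained realizer matches the diagonal of Lemma~\ref{lem:phip} on the nose.
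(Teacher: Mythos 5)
Your proposal matches the paper's proof essentially step for step: the same induction, the same base case via zero kernels, the same four-equivalence chain (Lemma~\ref{lem:expandp}, induction hypothesis, functoriality of $\oc$, Lemma~\ref{lem:expandp} again), and the same final appeal to Lemma~\ref{lem:phip} to identify the composite realizer with $\phi'_{\mathsf{M}}$. In fact you spell out the realizer bookkeeping more explicitly than the paper does, which simply states that the chained equivalence is realized by $\phi'_{\mathsf{M}}$ by Lemma~\ref{lem:phip}.
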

  \begin{proof}
    We prove the statement by induction on $n$. The base case follows
    from that the transition relations of
    $(\oc \mathsf{M})^{\dagger,\alpha_{0}}$ and
    $\oc (\mathsf{M}^{\dagger,\alpha_{0}})$ are the zero kernels.
    We next check the induction step. We have
    \begin{align*}
      (\oc \mathsf{M})^{\dagger,\alpha_{n+1}}
      &\simeq \oc \mathsf{M} \circ (\oc \oc \mathsf{M})^{\dagger,\alpha_{n}}
        \tag{Lemma~\ref{lem:expandp}}\\
      &\simeq \oc \mathsf{M} \circ \oc (\oc \mathsf{M})^{\dagger,\alpha_{n}}
        \tag{Induction hypothesis} \\
      &\simeq \oc (\mathsf{M} \circ (\oc \mathsf{M})^{\dagger,\alpha_{n}})
        \tag{Proposition~\ref{prop:oc_f}} \\
      &\simeq \oc (\mathsf{M}^{\dagger,\alpha_{n+1}}).
        \tag{Lemma~\ref{lem:expandp}}
    \end{align*}
    By Lemma~\ref{lem:phip}, this behavioral equivalence
    is realized by $\phi'_{\mathsf{M}}$.
  \end{proof}

  \begin{proposition}\label{prop:iterp}
    For a Mealy machine
    $\mathsf{M} \colon \oc \mathsf{X} \rightarrowtriangle \mathsf{X}$, we
    inductively define
    $\mathsf{iter}_{n}(\mathsf{M}) \colon \mathsf{I} \rightarrowtriangle
    \mathsf{X}$ by
    \begin{equation*}
      \mathsf{iter}_{0}(\mathsf{M}) = \mathsf{bot}_{\mathsf{I},\mathsf{X}},
      \qquad
      \mathsf{iter}_{n+1}(\mathsf{M})
      = \mathsf{M} \circ \oc (\mathsf{iter}_{n}(\mathsf{M})).
    \end{equation*}
    For all $n \in \mathbb{N}$, we have
    \begin{equation*}
      \state{\mathsf{M}^{\dagger}} =
      \state{\mathsf{M}^{\dagger,\alpha_{n}}}, \qquad
      \init{\mathsf{M}^{\dagger}} =
      \init{\mathsf{M}^{\dagger,\alpha_{n}}}, \qquad
      \mathsf{M}^{\dagger,\alpha_{n}}
      \simeq
      \mathsf{iter}_{n}(\mathsf{M}),
    \end{equation*}
    and
    \begin{equation*}
      \tran{\mathsf{M}^{\dagger,\alpha_{0}}} \leq 
      \tran{\mathsf{M}^{\dagger,\alpha_{1}}} \leq
      \tran{\mathsf{M}^{\dagger,\alpha_{2}}} \leq
      \cdots, \qquad
      \tran{\mathsf{M}^{\dagger}} =
      \bigvee_{n \geq 0} \tran{\mathsf{M}^{\dagger,\alpha_{n}}}.
    \end{equation*}
  \end{proposition}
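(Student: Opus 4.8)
The plan is to transcribe the proof of the deterministic counterpart, Proposition~\ref{prop:iter}, into the s-finite setting, replacing partial measurable functions by s-finite kernels and the $\omega\mathbf{Cppo}$-continuity of the deterministic trace by Scott-continuity of kernel composition with respect to the pointwise supremum order. The three assertions to establish are, in order: the coincidence of state spaces and initial states across all $\mathsf{M}^{\dagger,\alpha_n}$ and $\mathsf{M}^\dagger$; the behavioural equivalences $\mathsf{M}^{\dagger,\alpha_n}\simeq\mathsf{iter}_n(\mathsf{M})$; and the chain-plus-supremum description of the transition relations.

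First I would record the structural facts about the parametrized modality. Exactly as in the deterministic case, the definition of $\oc_{\alpha}\mathsf{M}$ in Section~\ref{sec:expandp} gives every $\oc_{\alpha_n}\mathsf{M}$ the same state space $\state{\mathsf{M}}^{\mathbb{N}}$ and initial state as $\oc\mathsf{M}$, with transition relation obtained from that of $\oc\mathsf{M}$ by zeroing the kernel on the copies whose index lies outside $\alpha_n$. Since the $\alpha_n$ form an increasing chain exhausting $\mathbb{N}$, this yields directly
\begin{equation*}
  \tran{\oc_{\alpha_0}\mathsf{M}} \leq \tran{\oc_{\alpha_1}\mathsf{M}} \leq \cdots, \qquad \tran{\oc\mathsf{M}} = \bigvee_{n \geq 0} \tran{\oc_{\alpha_n}\mathsf{M}},
\end{equation*}
the join being the pointwise supremum of s-finite kernels. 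I would then transport these facts through the fixed wiring defining $(-)^{\dagger,\alpha}$: the machine $\mathsf{M}^{\dagger,\alpha_n}$ arises from $\oc_{\alpha_n}\mathsf{M}$ by a composite of kernel compositions, monoidal products and a trace, with the building blocks $\mathsf{dg}_{\mathsf{X}}$, $\mathsf{con}_{\mathsf{X}}$, $\mathsf{counit}_{\oc\mathsf{X}}$, $\mathsf{unit}_{\oc\mathsf{X}}$ and $\mathsf{M}$ held fixed. None of these alter state spaces beyond taking products with fixed spaces, so $\state{\mathsf{M}^{\dagger,\alpha_n}} = \state{\mathsf{M}^\dagger}$ and $\init{\mathsf{M}^{\dagger,\alpha_n}} = \init{\mathsf{M}^\dagger}$ for all $n$; and because composition, tensor and trace of s-finite kernels are monotone and continuous in each argument, applying the wiring to the chain above produces $\tran{\mathsf{M}^{\dagger,\alpha_0}}\leq\tran{\mathsf{M}^{\dagger,\alpha_1}}\leq\cdots$ and, using $\oc\mathsf{M}=\oc_{\mathbb{N}}\mathsf{M}$, the identity $\tran{\mathsf{M}^\dagger}=\bigvee_{n\geq 0}\tran{\mathsf{M}^{\dagger,\alpha_n}}$.

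The behavioural equivalences $\mathsf{M}^{\dagger,\alpha_n}\simeq\mathsf{iter}_n(\mathsf{M})$ I would prove by induction on $n$, mirroring the deterministic argument verbatim. The base case holds because both $\mathsf{M}^{\dagger,\alpha_0}$ and $\mathsf{iter}_0(\mathsf{M})=\mathsf{bot}_{\mathsf{I},\mathsf{X}}$ carry the zero kernel as transition. For the inductive step,
\begin{equation*}
  \mathsf{M} \circ \oc(\mathsf{iter}_n(\mathsf{M})) \simeq \mathsf{M} \circ \oc(\mathsf{M}^{\dagger,\alpha_n}) \simeq \mathsf{M} \circ (\oc\mathsf{M})^{\dagger,\alpha_n} \simeq \mathsf{M}^{\dagger,\alpha_{n+1}},
\end{equation*}
applying the induction hypothesis together with compatibility of $\oc(-)$ and of composition with $\simeq$, then Lemma~\ref{lem:phi'p}, then Lemma~\ref{lem:expandp}; since $\mathsf{iter}_{n+1}(\mathsf{M}) = \mathsf{M}\circ\oc(\mathsf{iter}_n(\mathsf{M}))$, this closes the induction.

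I expect the main obstacle to be the continuity step in the second paragraph. One must verify that taking the pointwise supremum of a monotone chain of s-finite kernels commutes with the trace that implements the feedback loop of $(-)^{\dagger}$ — an iterated countable sum over interaction paths, each path an integral against the relevant kernels. This is exactly the point at which s-finite kernels behave less rigidly than partial functions, and the interchange rests on the monotone convergence theorem, which lets the supremum pass through both the integrals defining kernel composition and the countable sums defining the trace; establishing this Scott-continuity cleanly is the technical heart of the argument, whereas the remaining combinatorics about the index sets $\alpha_n$ and $\beta_n$ carry over unchanged from the deterministic development.
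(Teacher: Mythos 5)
Your proposal is correct and follows essentially the same route as the paper's own proof: first the structural facts about $\oc_{\alpha_{n}}$ (shared state spaces and initial states, increasing chain of transition kernels with supremum $\tran{\oc\mathsf{M}}$), then transport through the $(-)^{\dagger,\alpha}$ wiring by continuity of composition, coproducts and monoidal products of s-finite kernels, and finally the induction $\mathsf{iter}_{n}(\mathsf{M}) \simeq \mathsf{M}^{\dagger,\alpha_{n}}$ with zero-kernel base case and inductive step via Lemma~\ref{lem:phi'p} and Lemma~\ref{lem:expandp}. If anything, you are slightly more explicit than the paper, which cites only Lemma~\ref{lem:expandp} in the induction step and leaves the monotone-convergence justification of continuity implicit.
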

  \begin{proof}
    It follows from the definition of $\oc_{\alpha_{n}}$, we have
    \begin{equation*}
      \state{\oc\mathsf{M}} = \state{\oc_{\alpha_{n}} \mathsf{M}}, \qquad
      \init{\oc\mathsf{M}} = \init{\oc_{\alpha_{n}} \mathsf{M}}
    \end{equation*}
    for all $n \in \mathbb{N}$, and
    \begin{equation*}
      \tran{\oc_{\alpha_{0}}\mathsf{M}}
      \leq \tran{\oc_{\alpha_{1}} \mathsf{M}}
      \leq \tran{\oc_{\alpha_{2}} \mathsf{M}} \leq \cdots,
      \qquad
      \tran{\oc\mathsf{M}} =
      \bigvee_{n \geq 0} \tran{\oc_{\alpha_{n}}\mathsf{M}}
    \end{equation*}
    Hence, by continuity of the composition, the coproduct and the
    monodal product of s-finite kernels, we have
    \begin{equation*}
      \state{\mathsf{M}^{\dagger}} =
      \state{\mathsf{M}^{\dagger,\alpha_{n}}}, \qquad
      \init{\mathsf{M}^{\dagger}} =
      \init{\mathsf{M}^{\dagger,\alpha_{n}}}
    \end{equation*}
    for all $n \in \mathbb{N}$, and
    \begin{equation*}
      \tran{\mathsf{M}^{\dagger,\alpha_{0}}} \leq 
      \tran{\mathsf{M}^{\dagger,\alpha_{1}}} \leq
      \tran{\mathsf{M}^{\dagger,\alpha_{2}}} \leq
      \cdots, \qquad
      \tran{\mathsf{M}^{\dagger}} =
      \bigvee_{n \geq 0} \tran{\mathsf{M}^{\dagger,\alpha_{n}}}.
    \end{equation*}
    By induction on $n \in \mathbb{N}$, we show that we have
    \begin{equation*}
      \mathsf{iter}_{n}(\mathsf{M})
      \simeq \mathsf{M}^{\dagger,\alpha_{n}},
    \end{equation*}
    which is realized by $f_{n}$. For the base case, we have
    $\mathsf{M}^{\dagger,\emptyset} \simeq
    \mathsf{iter}_{0}(\mathsf{M})$ because these Mealy machines
    $\mathsf{M}^{\dagger,\emptyset}$ and $\mathsf{iter}_{0}(\mathsf{M})$
    are behaviorally equiavalent to
    $\mathsf{emp}_{\mathsf{I},\mathsf{X}}$. The induction step
    follows from Lemma~\ref{lem:expandp}.
  \end{proof}

  \begin{corollary}\label{cor:fixp}
    For any Mealy machine
    $\mathsf{M} \colon \oc \mathsf{X} \rightarrowtriangle \mathsf{X}$,
    \begin{equation*}
      \mathsf{M} \circ \oc \mathsf{M}^{\dagger}
      \simeq \mathsf{M}^{\dagger}.
    \end{equation*}
  \end{corollary}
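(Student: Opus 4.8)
The plan is to replay, in the probabilistic setting, the argument that established Proposition~\ref{prop:fix}, now drawing the relevant approximation data from Proposition~\ref{prop:iterp} rather than from its deterministic counterpart. The two ingredients I rely on are: (i) that the modal operator $\oc(-)$ and the composition of probabilistic Mealy machines are compatible with behavioural equivalence and $\omega$-continuous with respect to the order $\leq$ on equivalence classes; and (ii) the identities furnished by Proposition~\ref{prop:iterp}, namely $\mathsf{M}^{\dagger,\alpha_{n}} \simeq \mathsf{iter}_{n}(\mathsf{M})$ together with $\tran{\mathsf{M}^{\dagger}} = \bigvee_{n \geq 0} \tran{\mathsf{M}^{\dagger,\alpha_{n}}}$ along the ascending chain $\tran{\mathsf{M}^{\dagger,\alpha_{0}}} \leq \tran{\mathsf{M}^{\dagger,\alpha_{1}}} \leq \cdots$, which exhibit $\mathsf{M}^{\dagger}$ (up to behavioural equivalence) as the least upper bound of the chain $\{\mathsf{M}^{\dagger,\alpha_{n}}\}_{n \in \mathbb{N}}$.

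Granting these, I would compute the left-hand side as a supremum and push $\mathsf{M} \circ \oc(-)$ through it:
\begin{equation*}
  \mathsf{M} \circ \oc \mathsf{M}^{\dagger}
  \simeq
  \mathsf{M} \circ \oc \Bigl( \bigvee_{n \in \mathbb{N}} \mathsf{M}^{\dagger,\alpha_{n}} \Bigr)
  \simeq
  \mathsf{M} \circ \bigvee_{n \in \mathbb{N}} \oc \mathsf{M}^{\dagger,\alpha_{n}}
  \simeq
  \bigvee_{n \in \mathbb{N}} \bigl( \mathsf{M} \circ \oc \mathsf{M}^{\dagger,\alpha_{n}} \bigr),
\end{equation*}
the second equivalence by continuity of $\oc(-)$ and the third by continuity of composition. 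Using $\mathsf{M}^{\dagger,\alpha_{n}} \simeq \mathsf{iter}_{n}(\mathsf{M})$, compatibility of $\oc(-)$ and composition with $\simeq$, and the defining equation $\mathsf{iter}_{n+1}(\mathsf{M}) = \mathsf{M} \circ \oc(\mathsf{iter}_{n}(\mathsf{M}))$, each summand satisfies $\mathsf{M} \circ \oc \mathsf{M}^{\dagger,\alpha_{n}} \simeq \mathsf{iter}_{n+1}(\mathsf{M}) \simeq \mathsf{M}^{\dagger,\alpha_{n+1}}$, so the supremum equals $\bigvee_{n \in \mathbb{N}} \mathsf{M}^{\dagger,\alpha_{n+1}}$. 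Finally, since $\alpha_{0} = \emptyset$ makes $\tran{\mathsf{M}^{\dagger,\alpha_{0}}}$ the least element of the chain, discarding the initial term leaves the least upper bound of an ascending chain unchanged, whence $\bigvee_{n} \mathsf{M}^{\dagger,\alpha_{n+1}} \simeq \bigvee_{n} \mathsf{M}^{\dagger,\alpha_{n}} \simeq \mathsf{M}^{\dagger}$, which is the assertion.

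The step I expect to demand the most care is the continuity of $\oc(-)$ and of composition, that is, the claim that these operations preserve least upper bounds of ascending $\omega$-chains of behavioural equivalence classes. In the deterministic development this is underwritten by the pointed $\omega$cpo structure of Proposition~\ref{prop:wcpo}; here the analogous fact must be secured at the level of s-finite kernels, relying on the observation that the pointwise supremum of an ascending chain of s-finite kernels is again s-finite and that the kernel-level composition, the tensor, and the countable-power construction defining $\oc$ all commute with such suprema. These are precisely the continuity properties already invoked in the proof of Proposition~\ref{prop:iterp}, so once that infrastructure is in place the chain of equivalences above is routine bookkeeping.
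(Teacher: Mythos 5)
Your proposal transplants the proof of Proposition~\ref{prop:fix} into the probabilistic setting, but the paper's own proof of Corollary~\ref{cor:fixp} deliberately does not do this, and the reason it does not marks a genuine gap in your argument. The kernel-level continuity facts you cite (suprema of ascending chains of s-finite kernels commute with composition, tensor, and the countable-power construction underlying $\oc$) are true, and they do justify your first equivalences: by Proposition~\ref{prop:iterp} all the machines $\mathsf{M}^{\dagger,\alpha_{n}}$ share one state space and one initial state, so the suprema involved are honest concrete machines and one indeed has $\tran{\mathsf{M} \circ \oc(\mathsf{M}^{\dagger})} = \bigvee_{n} \tran{\mathsf{M} \circ \oc(\mathsf{M}^{\dagger,\alpha_{n}})}$. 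The problem is the step where you pass from the termwise equivalences $\mathsf{M} \circ \oc\, \mathsf{M}^{\dagger,\alpha_{n}} \simeq \mathsf{M}^{\dagger,\alpha_{n+1}}$ to behavioural equivalence of the two suprema. Behavioural equivalence is witnessed by realizers (measurable functions between state spaces intertwining the transition kernels), and termwise equivalence of two ascending chains, with possibly different realizers at each stage, does not by itself produce a realizer between the limit machines. In the deterministic development this inference is licensed by the order theory of equivalence classes --- Proposition~\ref{prop:wcpo} and Theorem~\ref{thm:leq}, which rest on the canonical-representative construction $(-)^{\#}$ built from the behaviour function $\beta_{\mathsf{M}}$ --- but that machinery is specific to partial functions, and the paper never develops (nor claims) its analogue for probabilistic Mealy machines; indeed there is no definition of the order $\leq$ on equivalence classes of probabilistic machines at all, and the probabilistic Proposition~\ref{prop:iterp}, unlike its deterministic counterpart, pointedly omits the concluding clause identifying $[\mathsf{M}^{\dagger}]$ as a least upper bound of equivalence classes.

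The paper closes this hole differently: the probabilistic Lemmas~\ref{lem:expandp} and~\ref{lem:phi'p} are stated so as to provide explicit realizers $u'_{\mathsf{M}}$ and $\phi'_{\mathsf{M}}$ that are \emph{independent of} $n$, and the proof of Corollary~\ref{cor:fixp} then checks that the single measurable function $u'_{\mathsf{M}} \circ (\state{\mathsf{M}} \times \phi'_{\mathsf{M}})$ makes the square between $\bigvee_{n} \tran{\mathsf{M} \circ \oc(\mathsf{M}^{\dagger,\alpha_{n+1}})}$ and $\bigvee_{n} \tran{\mathsf{M}^{\dagger,\alpha_{n+1}}}$ commute --- which is legitimate precisely because one and the same realizer works at every finite stage, and composition with the induced kernel commutes with monotone suprema. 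To repair your proof you must either exhibit such uniform realizers (at which point you have reproduced the paper's argument), or first construct the $\omega$-cpo of behavioural-equivalence classes of probabilistic Mealy machines together with continuity of composition and $\oc$ on classes; the latter is a nontrivial piece of infrastructure requiring a kernel analogue of the $(-)^{\#}$ machinery, not routine bookkeeping.
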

  \begin{proof}
    By Proposition~\ref{prop:iter}, we have
    \begin{equation*}
      \state{\mathsf{M} \circ \oc (\mathsf{M}^{\dagger})}
      = \state{\mathsf{M} \circ \oc (\mathsf{M}^{\dagger,\alpha_{n}})},
      \qquad
      \init{\mathsf{M} \circ \oc (\mathsf{M}^{\dagger})}
      = \init{\mathsf{M} \circ \oc (\mathsf{M}^{\dagger,\alpha_{n}})},
    \end{equation*}
    and
    \begin{equation*}
      \tran{\mathsf{M} \circ \oc (\mathsf{M}^{\dagger})}
      = \bigvee_{n \in \mathbb{N}}
      \tran{\mathsf{M} \circ \oc (\mathsf{M}^{\dagger,\alpha_{n}})}.
    \end{equation*}
    Therefore, by Lemma~\ref{lem:expandp} and Lemma~\ref{lem:phi'p}, the
    following diagram commutes:
    \begin{equation*}
      \xymatrix@C=30mm{
        X^{-} \times \state{\mathsf{M}}^{\mathbb{N}}
        \ar@{~>}[r]^{X^{+} \otimes
          (u_{\mathsf{M}}' \circ (\state{\mathsf{M}}
          \times \phi'_{\mathsf{M}}))^{\wedge}}
        \ar@{~>}[d]_-{\bigvee_{n \in \mathbb{N}}
          \tran{\mathsf{M} \circ \oc (\mathsf{M}^{\dagger,\alpha_{n + 1}})}}
        &
        X^{-} \times \state{\mathsf{M}}^{\mathbb{N}}
        \ar@{~>}[d]^-{\bigvee_{n \in \mathbb{N}}
          \tran{\mathsf{M}^{\dagger,\alpha_{n + 1}}}} \\
        X^{-} \times \state{\mathsf{M}}^{\mathbb{N}}
        \ar@{~>}[r]_-{X^{+} \otimes
          (u_{\mathsf{M}}' \circ (\state{\mathsf{M}}
          \times \phi'_{\mathsf{M}}))^{\wedge}}
        &
        X^{-} \times \state{\mathsf{M}}^{\mathbb{N}}
        \nulldot
      }
    \end{equation*}
    It is easy to see that
    $u_{\mathsf{M}}' \circ (\state{\mathsf{M}} \times
    \phi'_{\mathsf{M}})$ preserves the initial states.
    Hence, $\mathsf{M} \circ \oc (\mathsf{M}^{\dagger})
    \simeq \mathsf{M}^{\dagger}$.
  \end{proof}

  \subsubsection{Commutativity Modulo Observational Equivalence}

  \begin{definition}
    For terms $\vdash \mathtt{M},\mathtt{N} : \mathtt{A}$,
    we say that $\mathtt{M}$ is \emph{observationally equivalent} to
    $\mathtt{N}$ when for all context $\mathtt{C}[-]$,
    if $\mathtt{C}[\mathtt{M}] \Rightarrow_{\infty} \mu$, then
    $\mathtt{C}[\mathtt{N}] \Rightarrow_{\infty} \mu$.
  \end{definition}
  In this sectin, as an application of our GoI semantics,
  we show that for all
  $\vdash \mathtt{M} : \mathtt{A}$, $\vdash \mathtt{N} : \mathtt{B}$
  and
  $\mathtt{x}:\mathtt{A},\mathtt{y}:\mathtt{B} \vdash \mathtt{L} :
  \mathtt{C}$,
  \begin{equation*}
    \letin{\mathtt{x}}{\mathtt{M}}{
      \letin{\mathtt{y}}{\mathtt{N}}{\mathtt{L}}}
  \end{equation*}
  is observationally equivalent to
  \begin{equation*}
    \letin{\mathtt{y}}{\mathtt{N}}{
      \letin{\mathtt{x}}{\mathtt{M}}{\mathtt{L}}}.
  \end{equation*}
  To prove this equivalence, let $O_{\mathrm{d}}'$ be
  a binary relation between closed terms of type $\ttreal$
  and probabilistic Mealy machines from $\mathsf{I}$ to $\mathsf{J}
  \otimes \oc \mathsf{R}$ by
  \begin{equation*}
    (\mathtt{M},\mathsf{M}) \in O_{\mathrm{d}}'
    \iff (\mathtt{M},\mathsf{M}) \in O_{\mathrm{d}}
    \text{ and } (\mathrm{Condition 1}) \text{ and }
    (\mathrm{Condition 2})
  \end{equation*}
  where (Condition 1) is: for any
  $A \in \Sigma_{((1 + \mathbb{N} \times \mathbb{S}) +
    \emptyset) \times \state{\mathsf{M}}}$ such that
  \begin{equation*}
    A = \{((\hh,(\mm,(n,u))),s) \mid ((\hh,(\mm,(n,u))),s) \in A\},
  \end{equation*}
  and for any $s \in \state{\mathsf{M}}$, 
  \begin{math}
    \tran{\mathsf{M}}(((\mm,(\mm,\ast)),s),A) = 0;
  \end{math}
  (Condition 2) is: for any
  $A \in \Sigma_{((1 + \mathbb{N} \times \mathbb{S}) +
    \emptyset) \times \state{\mathsf{M}}}$ such that
  \begin{equation*}
    A = \{((\hh,(\hh,\ast)),s) \mid ((\hh,(\hh,\ast)),s) \in A\},
  \end{equation*}
  and for any $(n,u) \in \mathbb{N} \times \mathbb{S}$,
  for any $s \in \state{\mathsf{M}}$,
  \begin{math}
    \tran{\mathsf{M}}(((\mm,(\hh,(n,u))),s),A) = 0.
  \end{math}
  We then inductively define binary relations
  \begin{align*}
    T_{\mathtt{A}}
    &\subseteq \{\textnormal{closed values of type }\mathtt{A}\}
      \times \{\textnormal{Mealy machines from } \mathsf{I} \textnormal{ to } \sem{\mathtt{A}}\} \\
    T_{\mathtt{A}}^{\top}
    &\subseteq
      \{\textnormal{evaluation contexts } \mathtt{x}:\mathtt{A}
      \vdash \mathtt{E}[\mathtt{x}]:\ttreal \}
      \times \{\textnormal{Mealy machines from } \oc\sem{\mathtt{A}}
      \textnormal{ to } \mathsf{J} \otimes
      \oc \mathsf{R}\} \\
    \overline{T}_{\mathtt{A}}
    &\subseteq \{\textnormal{closed terms of type }\mathtt{A}\}
      \times \{\textnormal{Mealy machines from } \mathsf{I} \textnormal{ to }
      \mathsf{J} \otimes \oc \sem{\mathtt{A}}\}
  \end{align*}
  by replacing $O_{\mathrm{d}}$ in the definition of
  $S_{\mathtt{A}}$, $S_{\mathtt{A}}^{\top}$ and $\overline{S}_{\mathtt{A}}$
  with $O_{\mathtt{d}}'$. Then we can prove basic lemma for this logical relation.
  \begin{lemma}[Basic Lemma]\label{lem:basicp2}
    Let
    $\mathtt{\Delta} = (\mathtt{x}:\mathtt{A}_{1},
    \ldots,\mathtt{x}_{n}:\mathtt{A}_{n})$ be a context.
    \begin{itemize}
    \item For any term $\mathtt{\Delta} \vdash \mathtt{M}:\mathtt{A}$
      and for any
      $(\mathtt{V}_{i},\mathtt{N}_{i}) \in T_{\mathtt{A}_{i}}$ for
      $i = 1,2,\ldots,n$, we have
      \begin{equation*}
        \left(
          \mathtt{M}\{\mathtt{V}_{1}/\mathtt{x}_{1},\ldots,
          \mathtt{V}_{n}/\mathtt{x}_{n}\},
          \bsem{\mathtt{M}} \circ
          (\oc\mathsf{N}_{1} \otimes \cdots \otimes \oc \mathsf{N}_{n})
        \right)
        \in \overline{T}_{\mathtt{A}}.
      \end{equation*}
    \item  For any value
      $\mathtt{\Delta} \vdash \mathtt{V}:\mathtt{A}$ and for any
      $(\mathtt{V}_{i},\mathtt{N}_{i}) \in T_{\mathtt{A}_{i}}$ for
      $i = 1,2,\ldots,n$, we have
      \begin{equation*}
        \left(
          \mathtt{V}\{\mathtt{V}_{1}/\mathtt{x}_{1},\ldots,
          \mathtt{V}_{n}/\mathtt{x}_{n}\},
          \Bsem{\mathtt{M}} \circ
          (\oc\mathsf{N}_{1} \otimes \cdots \otimes \oc \mathsf{N}_{n})
        \right)
        \in T_{\mathtt{A}}.
      \end{equation*}
    \end{itemize}
  \end{lemma}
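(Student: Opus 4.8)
The plan is to prove Lemma~\ref{lem:basicp2} by the same simultaneous induction on the structure of $\mathtt{M}$ and $\mathtt{V}$ that underlies Lemma~\ref{lem:basicp}, exploiting that the only new content lies in the two side conditions defining $O_{\mathrm{d}}'$. Since $O_{\mathrm{d}}' \subseteq O_{\mathrm{d}}$, we have $T_{\mathtt{A}} \subseteq S_{\mathtt{A}}$, $T_{\mathtt{A}}^{\top} \subseteq S_{\mathtt{A}}^{\top}$ and $\overline{T}_{\mathtt{A}} \subseteq \overline{S}_{\mathtt{A}}$ at every type, and the witnessing Mealy machines appearing in the two displayed memberships are \emph{literally the same} as those used in Lemma~\ref{lem:basicp}. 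Consequently the $O_{\mathrm{d}}$-part of every obligation is already discharged by Lemma~\ref{lem:basicp}, and the whole proof reduces to checking that each closed-off machine $\mathsf{N}\colon\mathsf{I}\rightarrowtriangle\mathsf{J}\otimes\oc\mathsf{R}$ produced along the induction also satisfies Condition~1 and Condition~2.

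First I would redevelop the $T$-analogues of Lemma~\ref{lem:lrp}. Here the key point is that Conditions~1 and~2 are stable under the operations generating the logical relation: they are invariant under deterministic reduction (because $\bsem{\mathtt{M}} = \bsem{\mathtt{N}}$ whenever $\mathtt{M} \detred \mathtt{N}$ by Lemma~\ref{lem:detredp}), invariant under behavioural equivalence (the conditions only mention the graph of $\tran{\mathsf{N}}$ modulo the witnessing state map, which touches just the state component of the forbidden sets), they hold vacuously for the bottom machine $\mathsf{bot}$ (its transition is the zero kernel), and they are preserved by suprema of $\omega$-chains $\tran{\mathsf{M}_1} \leq \tran{\mathsf{M}_2}\leq\cdots$, since a pointwise supremum of kernels each assigning mass $0$ to a forbidden set again assigns mass $0$.

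The hard part will be the base cases, and in particular showing that Conditions~1 and~2 survive the cuts performed in the application, let- and sampling/scoring clauses of Figure~\ref{fig:intp}. I intend to isolate a \emph{separation invariant} on the composite transition relations: sort each atomic port occurring in the interpretation as either \emph{state} (the ports of $\mathsf{S}$, $\mathsf{J}$) or \emph{value} (the ports of $\mathsf{R}$, $\oc\mathsf{R}$), and track, along each maximal interaction path of a composite machine, which boundary wire the path originated from. A direct inspection of the generators shows that the purely structural token machines ($\mathsf{dg}$, $\mathsf{c}$, $\mathsf{w}$, $\mathsf{d}$, $\mathsf{unit}$, $\mathsf{counit}$, $\mathsf{sym}$, $\mathsf{j}$, $\mathsf{n}$) and the value nodes ($\mathsf{r}_a$, $\mathsf{fn}_f$) never mix the two sorts, while the genuinely sort-crossing nodes $\mathsf{Sc}$ and $\mathsf{Sa}$ couple a state token with a value token only on \emph{internal} edges: the argument wire of $\mathsf{Sc}$ always terminates at an $\mathsf{r}_a$ or $\mathsf{fn}_f$ node, and the $\oc\mathsf{R}$-datum consumed by $\mathsf{Sa}$ comes from the state thread (the trace), never from the boundary $\oc\mathsf{R}$-wire. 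The main obstacle is therefore to verify that any interaction path entering from the external $\mathsf{J}$-wire can borrow the value sort only transiently—through such a $\mathsf{Sc}$/$\mathsf{Sa}$ coupling—and must return to the state wire before it can reach the boundary again; since cut and trace glue only ports of equal sort and this sort is conserved stepwise along every path, a path started by a $\mathsf{J}$-query exits (if at all) as $(\hh,(\hh,\ast))$ rather than on $\oc\mathsf{R}$, which is exactly Condition~1, and Condition~2 is the symmetric statement.

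Finally I would run the induction itself. For every constructor other than $\mathtt{fix}$ the witness machine coincides with the one from Lemma~\ref{lem:basicp}, so the $O_{\mathrm{d}}$-membership is inherited and it remains to invoke the separation argument of the previous paragraph together with the $T$-closure properties to conclude Conditions~1 and~2; the clauses for $\mathtt{sample}$ and $\mathtt{score}(\mathtt{V})$ are checked by unfolding $\tran{\mathsf{Sa}}$ and $\tran{\mathsf{Sc}}$ exactly as in Lemma~\ref{lem:basicp}, now additionally reading off that the two forbidden cross-wire transitions have mass $0$. For $\mathtt{fix}_{\mathtt{A},\mathtt{B}}(\mathtt{f},\mathtt{x},\mathtt{N})$ I would, as in the $O_{\mathrm{d}}$ case, present $\bsem{\mathtt{M}}$ as the supremum of the finite approximants $\mathsf{iter}_n$ via Proposition~\ref{prop:iterp}; each approximant satisfies Conditions~1 and~2 by the induction hypothesis and the closure lemma, whence the supremum does too by the $\omega$-chain stability established above.
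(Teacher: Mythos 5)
You have the right skeleton---the same induction on $\mathtt{M}$ and $\mathtt{V}$, a $T$-analogue of Lemma~\ref{lem:lrp}, unfolding of $\mathsf{Sa}$/$\mathsf{Sc}$ for the effectful cases, and Proposition~\ref{prop:iterp} for $\mathtt{fix}$---but the two devices you introduce to avoid simply re-running that induction both have genuine gaps. First, the inclusions $T_{\mathtt{A}} \subseteq S_{\mathtt{A}}$, $T_{\mathtt{A}}^{\top} \subseteq S_{\mathtt{A}}^{\top}$, $\overline{T}_{\mathtt{A}} \subseteq \overline{S}_{\mathtt{A}}$ do \emph{not} follow from $O_{\mathrm{d}}' \subseteq O_{\mathrm{d}}$, because the relations are defined with mixed variance. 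At arrow type, $(\mathtt{V},\mathsf{M}) \in T_{\mathtt{A}\to\mathtt{B}}$ only constrains applications to arguments in $T_{\mathtt{A}}$, whereas membership in $S_{\mathtt{A}\to\mathtt{B}}$ demands the conclusion for all arguments in the (in general larger, inductively incomparable) set $S_{\mathtt{A}}$; and $\overline{T}_{\mathtt{A}} \subseteq \overline{S}_{\mathtt{A}}$ would require $S^{\top}_{\mathtt{A}} \subseteq T^{\top}_{\mathtt{A}}$, i.e.\ that every $S^{\top}$-context spontaneously satisfies Conditions~1 and~2, which is false. For the same reason you cannot invoke Lemma~\ref{lem:basicp} as a black box to discharge the $O_{\mathrm{d}}$-half of the obligations: its hypotheses ask for $(\mathtt{V}_i,\mathsf{N}_i) \in S_{\mathtt{A}_i}$ while you are given $T_{\mathtt{A}_i}$, and its conclusion only speaks of $S^{\top}$-contexts while you must handle $T^{\top}$-contexts. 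The paper's proof is ``almost equivalent'' to that of Lemma~\ref{lem:basicp} in a different sense: the \emph{same computations} are re-run inside the new relation, proving the observation part and the two conditions simultaneously.

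Second, the separation invariant cannot carry Conditions~1 and~2, because the composites whose $O_{\mathrm{d}}'$-membership must be established contain the machines $\mathsf{N}_i$ and $\mathsf{E}$, and these are \emph{arbitrary} members of $T_{\mathtt{A}_i}$ and $T^{\top}_{\mathtt{A}}$: they are constrained only semantically, not built from the generators of Figure~\ref{fig:intp}, so ``direct inspection of the generators'' says nothing about interaction paths passing through them. (Even on the generated part the sort dichotomy is not clean at higher types: $\bsem{\mathtt{A}\to\mathtt{B}}$ contains a $\mathsf{J}$-component, and $\Bsem{\lambda\mathtt{x}^{\mathtt{A}}.\,\mathtt{M}}$ contains $\mathsf{Sc}$/$\mathsf{Sa}$ nodes, so sort-crossing occurs inside value denotations too.) What works instead---and is what the paper's one-line proof implicitly does---is to propagate the two conditions \emph{semantically through the relation}, exactly as the observation part is propagated: in each inductive case the forbidden cross-wire transitions of the full composite are expressed in terms of transitions of sub-composites of the form $\mathsf{E} \circ \oc\mathsf{N}$ with $(\mathtt{N},\mathsf{N}) \in T_{\mathtt{A}}$, and these vanish by the very definition of $T^{\top}_{\mathtt{A}}$. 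For instance, in the $\mathtt{sample}$ case, once $\mathsf{Sa}$ has answered the $\mathsf{J}$-query and stored $a$, its $\oc\mathsf{R}$-face behaves exactly like $\oc\mathsf{r}_a$, so the composite's response to a $\mathsf{J}$-query is an integral over $a \in \mathbb{R}_{[0,1]}$ of responses of $\mathsf{E} \circ \oc\mathsf{r}_a$, to which Condition~1 for $\mathsf{E} \circ \oc\mathsf{r}_a$ applies pointwise; Condition~2 is symmetric. Your closure properties (zero kernel, $\omega$-suprema, behavioural equivalence) and your treatment of $\mathtt{fix}$ are fine and coincide with the paper's once the induction is set up this way.
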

  \begin{proof}
    Almost equivalent to the proof of Lemma~\ref{lem:basicp}.
  \end{proof}
  \begin{corollary}\label{cor:sem}
    For any $\vdash \mathtt{M} : \mathtt{A}$,
    \begin{itemize}
    \item for any
      $A \in \Sigma_{((1 + \mathbb{N} \times \bsem{\mathtt{A}}^{+}) +
        \emptyset) \times \state{\mathsf{M}}}$ such that
      \begin{equation*}
        A = \{((\hh,(\mm,(n,a))),s) \mid ((\hh,(\mm,(n,a))),s) \in A\},
      \end{equation*}
      and for any $s \in \state{\mathsf{M}}$,
      \begin{math}
        \tran{\mathsf{M}}(((\mm,(\mm,\ast)),s),A) = 0;
      \end{math}
    \item for any
      $A \in \Sigma_{((1 + \mathbb{N} \times \bsem{\mathtt{A}}^{+}) +
        \emptyset) \times \state{\mathsf{M}}}$ such that
      \begin{equation*}
        A = \{((\hh,(\hh,\ast)),s) \mid ((\hh,(\hh,\ast)),s) \in A\},
      \end{equation*}
      and for any $(n,a) \in \mathbb{N} \times \bsem{\mathtt{A}}^{-}$,
      for any $s \in \state{\mathsf{M}}$,
      \begin{math}
        \tran{\mathsf{M}}(((\mm,(\hh,(n,a))),s),A) = 0.
      \end{math}
    \end{itemize}
  \end{corollary}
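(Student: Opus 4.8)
The plan is to read the two displayed vanishing identities as \emph{polarity constraints} on the denotation $\bsem{\mathtt{M}} \colon \mathsf{I} \rightarrowtriangle \mathsf{J} \otimes \oc\bsem{\mathtt{A}}$. The first condition says that feeding the initial state token $\ast$ into the $\mathsf{J}$-port (input $(\mm,(\mm,\ast))$) never yields, in a single transition, a value token on the $\oc\bsem{\mathtt{A}}$-port (an output of the form $(\hh,(\mm,(n,a)))$), while the second says that answering a value query $(\mm,(\hh,(n,a)))$ never yields the terminal state token $\ast$ on the $\mathsf{J}$-port (an output $(\hh,(\hh,\ast))$). Both express that $\bsem{\mathtt{M}}$ behaves as a Kleisli morphism for the state monad $\mathsf{J} \otimes (-)$, keeping the state thread and the value-query thread from crossing within one step.

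First I would state the constraints uniformly for \emph{every} type, for both interpretations $\psem{\cdot}$ and $\bsem{\cdot}$, and for open judgements, so that they can be carried along the structural induction underlying the Basic Lemma (Lemma~\ref{lem:basicp2}); these are precisely the extra requirements that $O_{\mathrm{d}}'$ and its lift $T_{\mathtt{A}},T_{\mathtt{A}}^{\top},\overline{T}_{\mathtt{A}}$ were introduced to track at type $\ttreal$. Concretely I would show (a) that each generator occurring in Figure~\ref{fig:intp} — the token machines $\mathsf{r}_a$ and $\mathsf{fn}_f$, the structural morphisms $\mathsf{d},\mathsf{dg},\mathsf{c},\mathsf{w},\mathsf{unit},\mathsf{counit}$, the monad data $\mathsf{j}=\mathsf{unit}_{\mathsf{J}_0}$ and $\mathsf{n}$, the conditional $\mathsf{cd}$, and the effectful machines $\mathsf{Sa}$ and $\mathsf{Sc}$ — satisfies the constraint of its declared type, and (b) that the constraint is preserved by composition $\circ$, monoidal product $\otimes$, the modal operator $\oc(-)$, and the fixed-point construction $(-)^{\dagger}$. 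Granting (a) and (b), the induction is identical to the one proving Lemma~\ref{lem:basicp2}, so the corollary is obtained by instantiating the strengthened statement at the empty context, where $\bsem{\mathtt{M}}$ is exhibited directly.

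The base-case checks in (a) are immediate computations from the explicit transition kernels: $\mathsf{Sa}$ sends the initial state input to an output on $J^{+}$ and each value query to a value output, so it meets both conditions, and $\mathsf{j}$, $\mathsf{n}$ are pure forwarders built from $\mathsf{unit}/\mathsf{counit}$ that never mix the two polarities. The hard part will be step (b), specifically preservation under composition and under the trace hidden inside $(-)^{\dagger}$. Since $\tran{\mathsf{N}\circ\mathsf{M}}$ is a countable join over interaction paths of arbitrary length between $\mathsf{M}$ and $\mathsf{N}$, I cannot reason one step at a time; instead I would establish a global \emph{polarity-alternation invariant}, namely that along every interaction path a token entering on a state port remains on state ports (or descends into an argument's value-query port) until it exits on a state port, and a token entering on a value-query port exits on a value port. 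Feeding this invariant into the composition formula forces a state-polarity input never to be matched with a value-polarity output, and conversely, which is exactly the pair of vanishing identities for the composite.

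For the fixed point I would reuse Proposition~\ref{prop:iterp}: the constraint holds for each finite approximant $\mathsf{iter}_n(\mathsf{M})$ by the induction hypothesis together with the already-established closure under $\circ$, $\otimes$ and $\oc$, and it passes to the supremum $\mathsf{M}^{\dagger}$ because the two conditions are manifestly closed under the pointwise join $\bigvee_n \tran{\mathsf{iter}_n(\mathsf{M})}$ of transition kernels. I expect the polarity-alternation invariant for composition to be the only genuinely delicate point, the remaining cases being routine once that invariant is in hand.
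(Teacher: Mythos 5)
Your reading of the statement is correct: it says that the denotation of a closed term keeps its $\mathsf{J}$-thread and its $\oc\bsem{\mathtt{A}}$-thread from crossing in a single transition, and your plan (check the generators of Figure~\ref{fig:intp}, then close under $\circ$, $\otimes$, $\oc$ and $(-)^{\dagger}$) is a reasonable reconstruction. It is, however, not the paper's route. The paper never formulates a port-level invariant at all: it defines the strengthened observation relation $O_{\mathrm{d}}'$, whose definition \emph{is} the pair of vanishing conditions (Condition 1 and Condition 2) together with $O_{\mathrm{d}}$, lifts it to the logical relations $T_{\mathtt{A}}$, $T_{\mathtt{A}}^{\top}$, $\overline{T}_{\mathtt{A}}$, and re-runs the adequacy induction to get the Basic Lemma (Lemma~\ref{lem:basicp2}); the corollary is then read off. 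The benefit of that route is that higher-order types and recursion are absorbed by machinery already in place — quantification over related arguments and contexts, and closure under suprema as in Lemma~\ref{lem:lrp} — so the delicate interaction analysis you propose to do by hand is never made explicit. Your route has to make it explicit, and that is exactly where your proposal has two genuine gaps.

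First, the fixed-point case as written does not go through. The join $\bigvee_{n}\tran{\mathsf{iter}_{n}(\mathsf{M})}$ is not meaningful: the machines $\mathsf{iter}_{n}(\mathsf{M})$ have \emph{different} state spaces (each application of $\oc(-)$ and of composition changes the state space), so their kernels cannot be joined pointwise. What Proposition~\ref{prop:iterp} actually provides is $\tran{\mathsf{M}^{\dagger}}=\bigvee_{n}\tran{\mathsf{M}^{\dagger,\alpha_{n}}}$, a join of kernels on one common state space, and $\mathsf{M}^{\dagger,\alpha_{n}}$ is only \emph{behaviourally equivalent} to $\mathsf{iter}_{n}(\mathsf{M})$. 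Your two conditions are kernel-level statements quantified over \emph{all} states, and they are not invariants of behavioural equivalence: a realizer $f$ of $\mathsf{M}\sim\mathsf{N}$ only gives $\tran{\mathsf{N}}((z,f(s)),-)=\tran{\mathsf{M}}((z,s),(\id\times f)^{-1}(-))$, which constrains $\mathsf{N}$ only on states in the image of $f$, and $\simeq$ is moreover a symmetric-transitive closure of this asymmetric relation. So the conditions cannot be transported from $\mathsf{iter}_{n}(\mathsf{M})$ to $\mathsf{M}^{\dagger,\alpha_{n}}$; you must run the polarity argument directly on the explicit plumbing ($\mathsf{dg}$, $\oc_{\alpha_{n}}$, contraction, $\mathsf{unit}/\mathsf{counit}$ and the feedback loop) defining $\mathsf{M}^{\dagger,\alpha_{n}}$. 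The same caveat applies at every point of your induction where you would silently replace a machine by a behaviourally equivalent one.

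Second, the ``polarity-alternation invariant'' that carries your composition case is named but never defined, and defining it is the real content of the proof. For an open term the denotation has type $\oc\bsem{\mathtt{\Delta}}\rightarrowtriangle\mathsf{J}\otimes\oc\bsem{\mathtt{A}}$, and at a function type $\bsem{\mathtt{C}\to\mathtt{B}}=\mathsf{J}\otimes\oc\bsem{\mathtt{B}}\otimes\oc\bsem{\mathtt{C}}^{\bot}$ the dual $(-)^{\bot}$ flips positive and negative parts, so the partition of ports into a ``state thread'' and ``value threads'' must be defined by induction on types, and the invariant must say what happens when an answer returns from the context interface and when a value thread of the result itself queries the context — your clause ``descends into an argument's value-query port'' does not determine this. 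One then has to prove this type-indexed invariant stable under $\circ$ (countable interaction paths), $\otimes$, $\oc$ and the trace before any of the cases of Figure~\ref{fig:intp} can be discharged. This is a well-bracketing-style argument of the sort game semantics formalizes; it is plausible, but until it is stated and proved the crux of your induction — which you yourself flag as the only delicate point — remains open, whereas the paper's logical-relations detour is designed precisely to avoid having to prove it.
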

  By Corollary~\ref{cor:sem} and
  by the definition of composition of probabilistic Mealy machines,
  we see that if
  \begin{equation}\label{eq:symmon}
    \xymatrix{
      S_{1} \otimes S_{2}
      \ar[r]^-{k \otimes S_{2}}
      &
      S_{1} \otimes S_{2}
      \ar[r]^-{\cong}
      &
      S_{2} \otimes S_{1}
      \ar[r]^-{h \otimes S_{1}}
      &
      S_{2} \otimes S_{1}
      \ar[r]^-{\cong}
      &
      S_{1} \otimes S_{2}
    }
    = \xymatrix{
      S_{1} \otimes S_{2}
      \ar[r]^-{k \otimes h}
      &
      S_{1} \otimes S_{2}
    }
  \end{equation}
  then
  \begin{equation}\label{eq:MNNM}
    \bsem{\letin{\mathtt{x}}{\mathtt{M}}{
        \letin{\mathtt{y}}{\mathtt{N}}{\mathtt{L}}}}
    =
    \bsem{\letin{\mathtt{y}}{\mathtt{N}}{
      \letin{\mathtt{x}}{\mathtt{M}}{\mathtt{L}}}}
  \end{equation}
  where $k$ and $h$ are s-finite kernels given by
  restricting the domain and the codomain of
  $\tran{\bsem{\mathtt{M}}}$ and
  $\tran{\bsem{\mathtt{N}}}$ respectively.
  Because of commuativity for s-finite kernels \cite{staton2017},
  the equality \eqref{eq:symmon} is true.
  Hence, \eqref{eq:MNNM} holds. Then by adequacy,
  we see that
  \begin{equation*}
    \letin{\mathtt{x}}{\mathtt{M}}{
        \letin{\mathtt{y}}{\mathtt{N}}{\mathtt{L}}}
  \end{equation*}
  is observationally equivalent to
  \begin{equation*}
    \letin{\mathtt{y}}{\mathtt{N}}{
      \letin{\mathtt{x}}{\mathtt{M}}{\mathtt{L}}}.
  \end{equation*}

}

\section{Conclusion}\label{sec:conclusion}
We introduced a denotational semantics for $\PCFSS$, a
higher-order functional language with sampling from a uniform
continuous distribution and scoring.
Following~\cite{bdlgs2016}, we considered two operational semantics,
namely a \emph{distribution-based} operational semantics, which
associates terms with distributions over real numbers, and a
\emph{sampling-based} operational semantics, which associates each
term with a weight along every probabilistic branch. Our main results are
adequacy theorems for both kinds of operational semantics, and it
follows from these theorems that sampling-based operational semantics
is essentially equivalent to distribution-based operational semantics.
Another consequence of adequacy theorems is the possibility of
diagrammatic reasoning for observational equivalence of programs. It
follows from the observation in Section~\ref{sec:dr} and the adequacy theorems,
that diagrammatic equivalence for denotation of terms implies
observational equivalence. It would be interesting to explore
possible connections between our work and other works on diagrammatic
reasoning for probabilistic computation, such as \cite{cj2017,jzk2018}.

At this point, our language does not support normalisation mechanism
as a first class operator, and we are negative about extending our
semantics to capture normalisation mechanism. However, capturing
sampling algorithms such as the Metropolis-Hastings algorithm
\cite{mrrtt1953,hastings1970}, which consists of a number of
interactions between programs and their environment seems plausible.
Exploring the relationships between ``idealised'' normalisation
mechanisms and such ``approximating'' normalisation mechanisms from the
point of view of GoI is an interesting topic for future work.

\section*{Acknowledgment}

The authors are partially supported by the INRIA/JSPS project
``CRECOGI'', and would like to thank Michele Pagani for many fruitful
discussions about an earlier version of this work. Naohiko Hoshino is
supported by JST ERATO HASUO Metamathematics for Systems Design
Project (No. JPMJER1603).

\newpage



\bibliographystyle{IEEEtran}
\bibliography{./ref}

\begin{thebibliography}{10}
\providecommand{\url}[1]{#1}
\csname url@samestyle\endcsname
\providecommand{\newblock}{\relax}
\providecommand{\bibinfo}[2]{#2}
\providecommand{\BIBentrySTDinterwordspacing}{\spaceskip=0pt\relax}
\providecommand{\BIBentryALTinterwordstretchfactor}{4}
\providecommand{\BIBentryALTinterwordspacing}{\spaceskip=\fontdimen2\font plus
\BIBentryALTinterwordstretchfactor\fontdimen3\font minus
  \fontdimen4\font\relax}
\providecommand{\BIBforeignlanguage}[2]{{%
\expandafter\ifx\csname l@#1\endcsname\relax
\typeout{** WARNING: IEEEtran.bst: No hyphenation pattern has been}%
\typeout{** loaded for the language `#1'. Using the pattern for}%
\typeout{** the default language instead.}%
\else
\language=\csname l@#1\endcsname
\fi
#2}}
\providecommand{\BIBdecl}{\relax}
\BIBdecl

\bibitem{Miller1976}
G.~L. Miller, ``Riemann's hypothesis and tests for primality,'' \emph{J.
  Comput. Syst. Sci.}, vol.~13, no.~3, pp. 300--317, 1976.

\bibitem{Rabin1980}
M.~O. Rabin, ``Probabilistic algorithm for testing primality,'' \emph{Journal
  of Number Theory}, vol.~12, no.~1, pp. 128 -- 138, 1980.

\bibitem{AKS2002}
M.~Agrawal, N.~Kayal, and N.~Saxena, ``{PRIMES} is in {P},'' \emph{Ann. of
  Math}, vol.~2, pp. 781--793, 2002.

\bibitem{WMM2014}
F.~D. Wood, J.~van~de Meent, and V.~Mansinghka, ``A new approach to
  probabilistic programming inference,'' in \emph{{AISTATS} 2014}, 2014, pp.
  1024--1032.

\bibitem{GMRBT08}
N.~D. Goodman, V.~K. Mansinghka, D.~M. Roy, K.~Bonawitz, and J.~B. Tenenbaum,
  ``Church: a language for generative models,'' in \emph{{UAI} 2008}, 2008, pp.
  220--229.

\bibitem{jones1990}
C.~Jones, ``Probabilistic non-determinism,'' Ph.D. dissertation, University of
  Edinburgh, 1990.

\bibitem{DH2002}
V.~Danos and R.~Harmer, ``Probabilistic game semantics,'' \emph{{ACM} Trans.
  Comput. Log.}, vol.~3, no.~3, pp. 359--382, 2002.

\bibitem{EPT2018}
T.~Ehrhard, M.~Pagani, and C.~Tasson, ``Full abstraction for probabilistic
  {PCF},'' \emph{J. {ACM}}, vol.~65, no.~4, pp. 23:1--23:44, 2018.

\bibitem{DLG2017}
U.~D. Lago and C.~Grellois, ``Probabilistic termination by monadic affine sized
  typing,'' in \emph{{ESOP} 2017}, 2017, pp. 393--419.

\bibitem{BDL2018}
F.~Breuvart and U.~Dal~Lago, ``On intersection types and probabilistic lambda
  calculi,'' in \emph{{PPDP} 2018}, 2018, pp. 8:1--8:13.

\bibitem{DLSA2014}
U.~D. Lago, D.~Sangiorgi, and M.~Alberti, ``On coinductive equivalences for
  higher-order probabilistic functional programs,'' in \emph{{POPL} 2014},
  2014, pp. 297--308.

\bibitem{CDL2014}
R.~Crubill{\'{e}} and U.~D. Lago, ``On probabilistic applicative bisimulation
  and call-by-value {\(\lambda\)}-calculi,'' in \emph{{ESOP} 2014}, 2014, pp.
  209--228.

\bibitem{SABGGH2019}
T.~Sato, A.~Aguirre, G.~Barthe, M.~Gaboardi, D.~Garg, and J.~Hsu, ``Formal
  verification of higher-order probabilistic programs: reasoning about
  approximation, convergence, bayesian inference, and optimization,''
  \emph{{PACMPL}}, vol.~3, pp. 38:1--38:30, 2019.

\bibitem{JT1998}
A.~Jung and R.~Tix, ``The troublesome probabilistic powerdomain,''
  \emph{Electr. Notes Theor. Comput. Sci.}, vol.~13, pp. 70--91, 1998.

\bibitem{hksy2017}
C.~Heunen, O.~Kammar, S.~Staton, and H.~Yang, ``A convenient category for
  higher-order probability theory,'' in \emph{32nd Annual {ACM/IEEE} Symposium
  on Logic in Computer Science, {LICS} 2017, Reykjavik, Iceland, June 20-23,
  2017}.\hskip 1em plus 0.5em minus 0.4em\relax {IEEE} Computer Society, 2017,
  pp. 1--12.

\bibitem{VKS2019}
M.~V\'{a}k\'{a}r, O.~Kammar, and S.~Staton, ``A domain theory for statistical
  probabilistic programming,'' \emph{Proc. ACM Program. Lang.}, vol.~3, no.
  POPL, pp. 36:1--36:29, 2019.

\bibitem{DE2011}
V.~Danos and T.~Ehrhard, ``Probabilistic coherence spaces as a model of
  higher-order probabilistic computation,'' \emph{Inf. Comput.}, vol. 209,
  no.~6, pp. 966--991, 2011.

\bibitem{EPT2018POPL}
T.~Ehrhard, M.~Pagani, and C.~Tasson, ``Measurable cones and stable, measurable
  functions: a model for probabilistic higher-order programming,''
  \emph{{PACMPL}}, vol.~2, no. {POPL}, pp. 59:1--59:28, 2018.

\bibitem{HO2000}
J.~M.~E. Hyland and C.~L. Ong, ``On full abstraction for {PCF:} i, ii, and
  {III},'' \emph{Inf. Comput.}, vol. 163, no.~2, pp. 285--408, 2000.

\bibitem{AJM2000}
S.~Abramsky, R.~Jagadeesan, and P.~Malacaria, ``Full abstraction for {PCF},''
  \emph{Inf. Comput.}, vol. 163, no.~2, pp. 409--470, 2000.

\bibitem{girard1989}
J.-Y. Girard, ``Geometry of interaction 1: Interpretation of system {F},'' in
  \emph{Logic Colloquium '88}, ser. Studies in Logic and the Foundations of
  Mathematics, R.~Ferro, C.~Bonotto, S.~Valentini, and A.~Zanardo, Eds.\hskip
  1em plus 0.5em minus 0.4em\relax Elsevier, 1989, vol. 127, pp. 221--260.

\bibitem{DLFVY2017}
U.~D. Lago, C.~Faggian, B.~Valiron, and A.~Yoshimizu, ``The geometry of
  parallelism: classical, probabilistic, and quantum effects,'' in \emph{{POPL}
  2017}, 2017, pp. 833--845.

\bibitem{CCPW2018}
S.~Castellan, P.~Clairambault, H.~Paquet, and G.~Winskel, ``The concurrent game
  semantics of probabilistic {PCF},'' in \emph{{LICS} 2018}, 2018, pp.
  215--224.

\bibitem{CP2018}
P.~Clairambault and H.~Paquet, ``Fully abstract models of the probabilistic
  lambda-calculus,'' in \emph{{CSL} 2018}, 2018, pp. 16:1--16:17.

\bibitem{PW2018}
H.~Paquet and G.~Winskel, ``Continuous probability distributions in concurrent
  games,'' \emph{Electr. Notes Theor. Comput. Sci.}, vol. 341, pp. 321--344,
  2018.

\bibitem{jsv}
A.~Joyal, R.~Street, and D.~Verity, ``Traced monoidal categories,''
  \emph{Mathematical Proceedings of the Cambridge Philosophical Society}, vol.
  119, no.~3, pp. 447–--468, 1996.

\bibitem{ahs2002}
S.~Abramsky, E.~Haghverdi, and P.~Scott, ``Geometry of interaction and linear
  combinatory algebras,'' \emph{Mathematical. Structures in Comp. Sci.},
  vol.~12, no.~5, pp. 625--665, oct 2002.

\bibitem{bdlgs2016}
J.~Borgstr\"{o}m, U.~Dal~Lago, A.~D. Gordon, and M.~Szymczak, ``A
  lambda-calculus foundation for universal probabilistic programming,'' in
  \emph{Proceedings of the 21st ACM SIGPLAN International Conference on
  Functional Programming}, ser. ICFP 2016.\hskip 1em plus 0.5em minus
  0.4em\relax New York, NY, USA: ACM, 2016, pp. 33--46.

\bibitem{hmh2014}
N.~Hoshino, K.~Muroya, and I.~Hasuo, ``Memoryful geometry of interaction: From
  coalgebraic components to algebraic effects,'' in \emph{Proceedings of the
  Joint Meeting of the Twenty-Third EACSL Annual Conference on Computer Science
  Logic (CSL) and the Twenty-Ninth Annual ACM/IEEE Symposium on Logic in
  Computer Science (LICS)}, ser. CSL-LICS '14.\hskip 1em plus 0.5em minus
  0.4em\relax New York, NY, USA: ACM, 2014, pp. 52:1--52:10.

\bibitem{mhh2016}
K.~Muroya, N.~Hoshino, and I.~Hasuo, ``Memoryful geometry of interaction ii:
  Recursion and adequacy,'' in \emph{Proceedings of the 43rd Annual ACM
  SIGPLAN-SIGACT Symposium on Principles of Programming Languages}, ser. POPL
  '16.\hskip 1em plus 0.5em minus 0.4em\relax New York, NY, USA: ACM, 2016, pp.
  748--760.

\bibitem{billingsley1986}
P.~Billingsley, \emph{Probability and Measure}, 2nd~ed.\hskip 1em plus 0.5em
  minus 0.4em\relax John Wiley and Sons, 1986.

\bibitem{staton2017}
S.~Staton, ``Commutative semantics for probabilistic programming,'' in
  \emph{Proceedings of the 26th European Symposium on Programming Languages and
  Systems - Volume 10201}.\hskip 1em plus 0.5em minus 0.4em\relax
  Springer-Verlag New York, Inc., 2017, pp. 855--879.

\bibitem{wcgc2018}
M.~Wand, R.~Culpepper, T.~Giannakopoulos, and A.~Cobb, ``Contextual equivalence
  for a probabilistic language with continuous random variables and
  recursion,'' \emph{Proc. ACM Program. Lang.}, vol.~2, no. ICFP, pp.
  87:1--87:30, 2018.

\bibitem{moggi1991}
E.~Moggi, ``Notions of computation and monads,'' \emph{Information and
  Computation}, vol.~93, no.~1, pp. 55--92, 1991, selections from 1989 IEEE
  Symposium on Logic in Computer Science.

\bibitem{girard1987}
J.-Y. Girard, ``Linear logic,'' \emph{Theoretical Computer Science}, vol.~50,
  no.~1, pp. 1--101, 1987.

\bibitem{hs2003}
M.~Hyland and A.~Schalk, ``Glueing and orthogonality for models of linear
  logic,'' \emph{Theoretical Computer Science}, vol. 294, no.~1, pp. 183--231,
  2003, category Theory and Computer Science.

\bibitem{lafont1995}
Y.~Lafont, \emph{From proof nets to interaction nets}, ser. London Mathematical
  Society Lecture Note Series.\hskip 1em plus 0.5em minus 0.4em\relax Cambridge
  University Press, 1995, pp. 225--248.

\bibitem{selinger2011}
P.~Selinger, \emph{A Survey of Graphical Languages for Monoidal
  Categories}.\hskip 1em plus 0.5em minus 0.4em\relax Berlin, Heidelberg:
  Springer Berlin Heidelberg, 2011, pp. 289--355.

\bibitem{laurent2001}
O.~Laurent, ``A token machine for full geometry of interaction,'' in
  \emph{Proceedings of the 5th International Conference on Typed Lambda Calculi
  and Applications}, ser. TLCA'01.\hskip 1em plus 0.5em minus 0.4em\relax
  Berlin, Heidelberg: Springer-Verlag, 2001, pp. 283--297.

\bibitem{am1996}
S.~Abramsky and G.~McCusker, ``Linearity, sharing and state: a fully abstract
  game semantics for idealized algol with active expressions: Extended
  abstract,'' \emph{Electronic Notes in Theoretical Computer Science}, vol.~3,
  pp. 2--14, 1996, linear Logic 96 Tokyo Meeting.

\bibitem{mackie1995}
I.~Mackie, ``The geometry of interaction machine,'' in \emph{Conference Record
  of POPL'95: 22nd {ACM} {SIGPLAN-SIGACT} Symposium on Principles of
  Programming Languages, San Francisco, California, USA, January 23-25, 1995},
  R.~K. Cytron and P.~Lee, Eds.\hskip 1em plus 0.5em minus 0.4em\relax {ACM}
  Press, 1995, pp. 198--208.

\bibitem{jacobs2016}
B.~Jacobs, \emph{Introduction to Coalgebra: Towards Mathematics of States and
  Observation}, ser. Cambridge Tracts in Theoretical Computer Science.\hskip
  1em plus 0.5em minus 0.4em\relax Cambridge University Press, 2016.

\bibitem{dlh2019}
U.~D. Lago and N.~Hoshino, ``The geometry of bayesian programming,'' available
  at \url{http://www.cs.unibo.it/~dallago/GBP.pdf}.

\bibitem{kl1980}
G.~Kelly and M.~Laplaza, ``Coherence for compact closed categories,''
  \emph{Journal of Pure and Applied Algebra}, vol.~19, pp. 193--213, 1980.

\bibitem{maclane1998}
S.~M. Lane, \emph{Categories for the Working Mathematicain}, 2nd~ed.\hskip 1em
  plus 0.5em minus 0.4em\relax Springer, 1998.

\bibitem{hasegawa2003}
M.~Hasegawa, ``The uniformity principle on traced monoidal categories,''
  \emph{Electronic Notes in Theoretical Computer Science}, vol.~69, pp.
  137--155, 2003, cTCS'02, Category Theory and Computer Science.

\bibitem{hoshino2014}
N.~Hoshino, K.~Muroya, and I.~Hasuo, ``Memoryful geometry of interaction: From
  coalgebraic components to algebraic effects,'' available at
  \url{http://www.kurims.kyoto-u.ac.jp/~naophiko/paper/ResGoI.pdf}.

\bibitem{cj2017}
K.~Cho and B.~Jacobs, ``Disintegration and bayesian inversion, both abstractly
  and concretely,'' 2017, to appear in Mathematical Structures in Computer
  Science.

\bibitem{jzk2018}
B.~Jacobs, F.~Zanasi, and A.~Kissinger, ``Causal inference by string diagram
  surgery,'' 2018, arXiv:1811.08338 [cs.LO].

\bibitem{mrrtt1953}
N.~Metropolis, A.~W. Rosenbluth, M.~N. Rosenbluth, A.~H. Teller, and E.~Teller,
  ``Equation of state calculations by fast computing machines,'' \emph{The
  Journal of Chemical Physics}, vol.~21, no.~6, pp. 1087--1092, 1953.

\bibitem{hastings1970}
W.~K. Hastings, ``Monte carlo sampling methods using markov chains and their
  applications,'' \emph{Biometrika}, vol.~57, no.~1, pp. 97--109, 1970.

\end{thebibliography}
%



\end{document}

